\documentclass[10pt, a4paper,reqno]{amsart}

\addtolength{\textheight}{2cm}
\addtolength{\textwidth}{2cm}
\addtolength{\oddsidemargin}{-1cm}
\addtolength{\evensidemargin}{-1cm}
\addtolength{\topmargin}{-1cm}

%%% Definition de la couleur des liens hypertextes

\usepackage{color} \definecolor{bleu_sombre}{rgb}{0,0,0.6}  \definecolor{rouge_sombre}{rgb}{0.8,0,0}\definecolor{vert_sombre}{rgb}{0,0.6,0}
\usepackage[plainpages=false,colorlinks,linkcolor=bleu_sombre,citecolor=rouge_sombre,urlcolor=vert_sombre,breaklinks]{hyperref}

\usepackage[T1]{fontenc}	% Utiliser des fontes incluant les accents
\usepackage[latin1]{inputenc}	% Utiliser des caract\`ere isolatin1 (qui contient les caract\`eres accentu\'es)
\usepackage[english]{babel}

\usepackage{amsmath,amssymb,amsthm,amsfonts}
\usepackage{url,color}
\usepackage{nicefrac,enumerate,stmaryrd,dsfont}
% dsfont : Permet de faire des fonctions indicatrices par \mathds{1}

% 
% 
% \usepackage[active]{srcltx}

% \usepackage[notcite,notref]{showkeys}

% \pagestyle{headings}

 \newenvironment{bsmallmatrix}{\left[\begin{smallmatrix}}{\end{smallmatrix}\right]}

\theoremstyle{plain}
\newtheorem{theorem}{{Theorem}}[section] %\sc{Th\'eor\`eme} pour avoir des petites capitales (mais ce n'est plus en gras)
\newtheorem*{theorem*}{{Theorem}}
\newtheorem{proposition}[theorem]{Proposition}
\newtheorem*{proposition*}{Proposition}
\newtheorem{corollary}[theorem]{Corollary}
\newtheorem*{corollary*}{Corollary}
\newtheorem{lemma}[theorem]{Lemma}
\newtheorem*{lemma*}{Lemma}

\theoremstyle{definition}
\newtheorem{definition}[theorem]{Definition}
\newtheorem*{definition*}{Definition}

\theoremstyle{remark}
\newtheorem*{remarque*}{Remarque}

\newtheorem{remark}[theorem]{Remark}

\newtheorem*{example*}{Example}

\newtheorem*{examples*}{Examples}

\newcommand{\commm}[1]{{}}

%\newcommand{\commperenne}[1]{{}}

% Num\'erotation sections, th\'eor\`emes et \'equations
% \renewcommand{\thesection}{\arabic{section}}
% \renewcommand{\thetheoreme}{\thechapter.\arabic{theoreme}}
\makeatletter

\@addtoreset{equation}{section}  % Pour que le compteur des \'equations reparte \`a 0 en changeant de section
\makeatother

% Changements cosm\'etiques.
\renewcommand{\leq}{\leqslant}	\renewcommand{\geq}{\geqslant}
\renewcommand{\bar}[1]{\overline{#1}}
\renewcommand\over[2]{{\,\buildrel #1\over#2\,}}

% Raccourcis
\newcommand{\ie}{{\it{i.e. }}}

\newcommand{\inv}{^{-1}}

%  limites, min, max, inf, sup
\newcommand {\limt}[2]{\xrightarrow[#1 \to #2]{}}

%Topologie, structure hilbertienne
\newcommand{\abs}[1]{\left\vert #1\right\vert}        % valeur absolue
\newcommand{\nr}[1]{\left\Vert #1\right\Vert}         % norme
\newcommand{\innp}[2]{\left< #1 , #2 \right>}         % produit scalaire (inner product)  

% Op\'erateurs

\newcommand{\Dom}{\Dc}			% Domaine d'un op\'erateur
	% Op\'erateurs de Hilbert-Schmidt
	% Op\'erateurs \`a trace
\newcommand{\Four}{\mathcal{F}}		% Transform\'ee de Fourier

\newcommand{\Opw}{{\mathop{\rm{Op}}}_h^w}

% \newcommand{\Opw}{\Op_h^w}		% Quantification de Weyl

    	% Ensemble front d'onde
\newcommand{\pppg}[1] {\left< #1 \right>} 	% <x> = \sqrt{1+x^2}
\newcommand{\symbor}{C_b^\infty}		% Espace des symboles dont toutes les d\'eriv\'ees sont born\'ees
\newcommand{\symb} {S}		% Espace de symboles

% Comparaison

\newcommand{\bigo}[2]{\mathop{O}\limits_{#1 \to #2}}

% Ensembles

\newcommand{\singl}[1]{\left\{ #1 \right\}}		% Singleton --> en fait, n'importe quel ensemble
\newcommand{\Ii}[2]{\llbracket #1,#2 \rrbracket}	% intervalle d'entiers.
\newcommand{\R}{\mathbb{R}}		\newcommand{\C}{\mathbb{C}}
\newcommand{\N}{\mathbb{N}}

\newcommand{\1}[1]{\mathds 1 _{#1}}

\newcommand{\st}{\,:\,}					% ``tel que'' dans la d\'efinition d'un ensemble 

% Suites
\newcommand{\seq}[2]{\left({#1}_{#2}\right)_{#2 \in\N}} % suite
 % suite indic\'ee \`a partir de 1

\newcommand{\restr}[2]{\left.#1\right|_{#2}}         % #1 restreint \`a #2
\renewcommand{\Re}{\mathop{\rm{Re}}\nolimits}        % partie r\'eelle
\renewcommand{\Im}{\mathop{\rm{Im}}\nolimits}        % partie imaginaire, Image
	
                      % fonction "signe"
                        % Fonctions hyperbolique

                        % Fonctions hyperbolique

                      % indice
			% Hessienne 
				% Jacobien 
    
                        % trace
                        % spectre
                      % matrice d'un endo.
                        % rang d'un endo.
                    % diag(l1,..,ln) 
 % transpos\'ee

                      % ker
\DeclareMathOperator{\Id}{Id}                        % identit\'e
 
\DeclareMathOperator{\supp}{supp}                    % support

                  % "implique"
               % "equivaut"

% Alphabet grec
\renewcommand{\a}{\alpha}\renewcommand{\b}{\beta}\newcommand{\g}{\gamma}\renewcommand{\d}{\delta}\newcommand{\D}{\Delta}\newcommand{\e}{\varepsilon}\newcommand{\z}{\zeta} \renewcommand{\th}{\theta}\newcommand{\Th}{\Theta}\renewcommand{\l}{\lambda}\renewcommand{\L}{\Lambda}\newcommand{\m}{\mu}\newcommand{\n}{\nu}\newcommand{\x}{\xi}\newcommand{\s}{\sigma}\renewcommand{\t}{\tau}\newcommand{\f}{\varphi}\newcommand{\vf}{\phi}\newcommand{\h}{\chi}\newcommand{\p}{\psi}\renewcommand{\O}{\Omega}

% Lettres caligraphiques
\newcommand{\Ac}{{\mathcal A}}\newcommand{\Cc}{{\mathcal C}}\newcommand{\Dc}{{\mathcal D}}\newcommand{\Hc}{{\mathcal H}}\newcommand{\Kc}{{\mathcal K}}\newcommand{\Lc}{{\mathcal L}}\newcommand{\Nc}{{\mathcal N}}\newcommand{\Rc}{{\mathcal R}}\newcommand{\Sc}{{\mathcal S}}\newcommand{\Tc}{{\mathcal T}}\newcommand{\Uc}{{\mathcal U}}\newcommand{\Vc}{{\mathcal V}}\newcommand{\Wc}{{\mathcal W}}

\newcommand{\ad}{{\rm{ad}}}

\setcounter{tocdepth}{2}
\setcounter{secnumdepth}{3}

\newcommand{\divg}{\mathop{\rm{div}}\nolimits}

\newcounter{stepproof}
\newcommand{\stepp}{\stepcounter{stepproof} \noindent {\bf $\bullet$}\quad }

\newcommand{\newl} {{\l}}

\newcommand{\Poo}{P_0}  \newcommand{\Pol}{{P_z}}   % op\'erateurs avec petits coefficients
\newcommand{\Hz}{H_z}          
\newcommand{\Hl}{H_\l}      \newcommand{\Hul}{H_\l^1}   
\newcommand{\Ho}{H_0}

\newcommand{\hhs}{H_h^\s}
\newcommand{\rehz} {{\Re(\Hz)}}

\newcommand{\RP}{R_\iota}
\newcommand{\RPz}{\RP(z)}  
\newcommand{\tRtau}{\tilde R_\iota(z)}
\newcommand{\tRz}{\tilde R(z)}

\newcommand{\tThiota}{\tilde \Th}

\newcommand{\opinsert}{\Cc}

\newcommand{\Ptau}{{ \tilde P_{z}}} \newcommand{\Ptauo}{ \tilde P_{z}^0} \newcommand{\Ptaur}{ \Re ( \tilde P_{z})}

 %{\big(\Ptau - \hat z^2\big)\inv}

\begin{document}

\title {Local Energy Decay for the Damped Wave Equation}

\author{Jean-Marc Bouclet and Julien Royer}
\curraddr{Institut de Math\'ematiques de Toulouse \\
118 route de Narbonne, 31062 Toulouse C\'edex 9}
\email[J.-M. Bouclet] {bouclet@math.univ-toulouse.fr}
\email[J. Royer] {julien.royer@math.univ-toulouse.fr}

% \author{  Jean-Marc Bouclet and Julien Royer \\
% % \\ Universit\'e  de Lille  1 \\ 
% Institut de Math\'ematiques de Toulouse \\ 
% 118 route de Narbonne 
% \\ 
% F-31062 Toulouse Cedex 9 \\
% e-mail: jean-marc.bouclet@math.univ-toulouse.fr}
% %}

\begin{abstract}
We prove local energy decay for the damped wave equation on $\R^d$. The problem which we consider is given by a long range metric perturbation of the Euclidean Laplacian with a short range absorption index. Under a geometric control assumption on the dissipation we obtain an almost optimal polynomial decay for the energy in suitable weighted spaces. The proof relies on uniform estimates for the corresponding ``resolvent'', both for low and high frequencies. These estimates are given by an improved dissipative version of Mourre's commutators method.
\end{abstract}

\maketitle

\tableofcontents

\section{Introduction} \label{sec-intro}

We consider on $\R^d$, $d\geq 3$, the damped wave equation:
\begin{equation} \label{wave-lap}
 \begin{cases}
\partial_t^2 u(t,x) + \Ho u(t,x) + a(x) \partial_t u(t,x)= 0  & \text{for  } (t,x) \in  \R_+ \times \R^d, \\
u(0,x) = u_0(x), \quad \partial_t u (0,x) = u_1(x) &  \text{for } x \in \R^d.
 \end{cases}
\end{equation}
Here $\Ho$ is an operator in divergence form
\[
\Ho = - \divg (G(x) \nabla),
\]
where $G(x)$ is a positive symmetric matrix with smooth entries, which is a long range perturbation of the identity (see \eqref{dec-metric-a}).
Laplace-Beltrami operators will be considered as well, but the case of operators in divergence form captures all the difficulties.  The operator $\Ho$ is self-adjoint and non-negative on $L^2(\R^d)$ with domain $H^2(\R^d)$. 
The function $a \in C^\infty(\R^d)$ is the absorption index. It takes non-negative values and is a short range potential.
%
% $g(x) = (g_{ij}(x))_{1 \leq i,j\leq d}$ is a  and $\D_g$ is the corresponding Laplace-Beltrami operator. It is given by
% \[
% \Ho  := -  \D_g  = - \frac 1 {\sqrt{\det g}} \sum_{j=1}^d   \frac {\partial }{\partial x_j} {\sqrt{\det g}} \sum_{k=1}^d  g^{j,k} \frac {\partial }{\partial x_k} ,
% \]
% where for all $x \in \R^d$ the matrix $(g^{j,k}(x))_{1\leq j,k\leq d}$ is the inverse of $g(x)$. 
%
%We assume that the metric $g$ and the absorption index $a$ are smooth on $\R^d$. Moreover $g$ is a long range perturbation of the Euclidean metric and 
More precisely we assume that there exists $\rho > 0$ such that for $j,k\in\Ii 1 d$, $\a \in \N^d$  and $x \in \R^d$ we have
\begin{equation} \label{dec-metric-a}
\abs{\partial^\a (G_{j,k}(x) - \d_{j,k})} \leq c_\a \pppg x ^{-\rho - \abs \a} \quad \text{and} \quad \abs{\partial^\a a (x)} \leq c_{\a} \pppg x ^{-1 -\rho - \abs \a},
\end{equation}
where $\pppg x = \big( 1 + \abs x^2 \big)^{\frac 12}$, $\d_{j,k}$ is the Kronecker delta and $\N$ is the set of non negative integers.\\ %Then $G$ is also a long-range perturbation of $\Id$.\\

% (for any $k \in \N$ we denote by $H^k_g$ the set of functions in $L^2_g$ whose derivatives up to order $k$ belong to $L^2_g$).\\
%
%Let $\nabla_g = g(x)\inv \nabla$ be the gradient associated to $g$, and l
Let $\Hc$ be the Hilbert completion of ${\mathcal S}(\R^d) \times {\mathcal S}(\R^d)$ for the norm
\begin{equation}\label{normemixte}
 \nr {(u,v)}_\Hc^2 = \nr {\Ho^{1/2} u}_{L^2}^2   + \nr {v}_{L^2}^2. 
\end{equation}
Here we use the square root $ \Ho^{1/2} $ of the self-adjoint operator $ H_0 $ but the corresponding term in the above energy can also be written  $ \innp{G(x)\nabla u}{ \nabla u}_{L^2}$.
Then $\Hc = \dot H^1 \times L^2$, $\dot H^1$ being the usual homogeneous Sobolev space on $\R^d$. We consider on $\Hc$ the operator
\begin{equation} \label{def-Ac}
 \Ac = \begin{pmatrix} 0 & I \\ \Ho & -i a \end{pmatrix}
\end{equation}
with domain 
\begin{equation} \label{dom-A}
\Dom(\Ac) = \singl{(u,v) \in \Hc \st (v , \Ho u) \in \Hc},
\end{equation}
where  $ \Ho u $ is taken in the temperate distributions sense. We refer to Section \ref{sec-diss} for more details on $ {\mathcal H}, {\mathcal D}({\mathcal A}) $ and $ {\mathcal A} $ which we can omit in this introduction. 
If we next let $(u_0,u_1) \in \Dom(\Ac)$, then $u$ is a solution to the problem \eqref{wave-lap} if and only if $\Uc = (u,i \partial_t u)$ is a solution to
\begin{equation} \label{wave-A}
\begin{cases}
 (\partial_t  + i \Ac  ) \Uc(t) = 0,\\
\Uc(0) = \Uc_0,
\end{cases}
\end{equation}
where $\Uc_0 = (u_0,i u_1)$.\\

We are going to prove that the operator $\Ac$ is maximal dissipative on $\Hc$ (in the sense of Definition \ref{def-diss}, see Proposition \ref{prop-A-diss}). According to the Hille-Yosida Theorem, this implies in particular that it generates on $\Hc$ a contractions semigroup $t \mapsto e^{-it {\mathcal A}}$, $t\geq 0$. Therefore the problem \eqref{wave-A} has a unique solution $\Uc \in C^0 (\R_+, \Dom(\Ac)) \cap C^1 (\R_+, \Hc)$ for any $\Uc_0 = (u_0,iu_1) \in \Dom(\Ac)$. The first component of $\Uc$ is the solution to \eqref{wave-lap}, while the second is its time derivative.
Moreover the energy function
\[
  t \mapsto \nr {\Uc(t)}_\Hc^2 = \nr{\Ho^{1/2} \, u(t)}_{L^2}^2 + \nr{\partial_t u(t)}_{L^2}^2
\]
is non-increasing, the decay being due to the absorption index $a$:
\begin{equation} \label{global-decay}
 \forall t \geq 0, \quad \frac {d}{dt}\nr {\Uc(t)}_\Hc^2 = - 2 \int_{\R^d} a(x) \abs {\partial_t u(t,x)}^2 \, dx \leq 0.
\end{equation}

An important question about the long time behavior of the solution to a wave equation is the local energy decay. This has been widely studied in the self-adjoint case (\ie without the damping term $a\partial_t u$ in \eqref{wave-lap}). Let us mention \cite{laxmp63}, where the free wave equation outside a star-shaped obstacle (with Dirichlet boundary conditions) is considered. An exponential decay for the local energy is obtained in odd dimensions, using a polynomial decay from \cite{morawetz61} and the theory of Lax-Phillips \cite{lax-phillips}. This has been generalized to non-trapping obstacles in \cite{morawetzrs77} and \cite{melrose79} (using the results about propagation of singularities given in \cite{melroses78}). Note that in all these papers the obstacle has to be bounded.

J. Ralston (\cite{ralston69}) proved that the non-trapping assumption is necessary to obtain uniform local energy decay, as was conjectured in \cite{lax-phillips}. However N. Burq (\cite{burq98}) has proved logarithmic decay with loss of regularity without non-trapping assumption, by proving that there are no resonances in a region close to the real axis. As in the previous works, the obstacle is bounded and initial conditions have to be compactly supported.

In contexts close to ours, results about local energy decay for long range perturbations of various evolution equations have been obtained in \cite{bonyh12} and \cite{bouclet11}.
% More precisely, long-range metric-perturbations of the laplacian and initial conditions in suitable weighted spaces are considered.
 Both of these papers prove polynomial decay by mean of Mourre estimates.
 %, which gives uniform estimates for the resolvent and its powers. 
 To our knowledge, the best estimates known so far on local energy decay for the wave equation with long range perturbations have been obtained in \cite{tataru13} in three dimensions and in \cite{guillarmouhs} in odd dimension $d$. Both obtain a decay of order $t^{-d}$. 
%In particular the main difficulty is to prove uniform estimates for low frequencies (see also \cite{bonyh10b} and \cite{bouclet11b}).

All these papers deal with the self-adjoint case. The local energy decay for the dissipative wave equation on an exterior domain has been studied by L. Aloui and M. Khenissi in \cite{alouik02}. They obtain exponential decay in odd dimension using the theory of Lax-Phillips and the contradiction argument with semiclassical measures of G. Lebeau \cite{lebeau96}. In this setting, the non-trapping assumption is replaced by a condition of exterior geometric control (see {\it e.g.} \cite{raucht74,bardoslr92} for more on this condition): every (generalized) geodesic has to leave any fixed bounded region or meet the damping region in (uniform) finite time. In this work the problem is a compact perturbation of the free case. \\
%Let us recall that the relationship between the geometric control condition and the time decay of the wave equation on compact domains is well %known (see \cite{raucht74,bardoslr92}).

In the present paper we prove a polynomial time decay of the local energy for some asymptotically vanishing and dissipative perturbations of the free wave equation. As is well known the main difficulties are in the low frequency and high frequency regimes. For the latter we use a damping assumption on the classical flow which is similar to the condition of exterior geometric control. This will be explicited in \eqref{hyp-amort}. Here we simply record that when $a = 0$ this assumption is the usual non-trapping condition, and when $a$ is positive everywhere it is automatically satisfied.\\

Let us denote by $L^{2,\d}$ the weighted space $L^2 \big(\pppg x ^{2\d} \, dx\big)$, while $H^{k,\d}$ for $k\in \N$ is the corresponding weighted Sobolev space, with norm 
\[
\nr u_{H^{k,\d}} =  \nr { \pppg {H_0}^{\frac k 2} u}_{L^{2,\d}}.
\]
This norm is equivalent to the more standard one (in term of $\nr{  \langle x \rangle^{\delta} \partial^{\alpha} u}_{L^2} $) since $\pppg {H_0}^{\frac k 2}$ is an elliptic pseudo-differential operator of order $k$.\\

The purpose of this paper is to prove the following theorem:

\begin{theorem}  [Local energy decay]\label{th-loc-decay}
Assume that every bounded trajectory of the classical flow goes through the damping region (see \eqref{hyp-amort}). Let $\d > d + \frac 12$ and $\e > 0$. Then there exists $C \geq 0$ such that for $(u_0,u_1) \in H^{2,\d} \times H^{1,\d}$ and $t \geq 0$ we have
\[
\nr{\nabla  u(t)}_{L^{2,-\d}} + \nr{\partial_t u(t)}_{L^{2,-\d}} \leq C \pppg t ^{-(d-\e)} \left( \nr {u_0} _{H^{2,\d}} + \nr {u_1}_ {H^{1,\d}}\right),
\]
where $u$ is the solution to the damped wave equation \eqref{wave-lap}.
\end{theorem}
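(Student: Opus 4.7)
The plan is to express the solution through a spectral representation of $\Ac$ and reduce to uniform estimates on its weighted resolvent $(\Ac-z)^{-1}$, then extract temporal decay by repeated integration by parts along the real axis. Since $\Ac$ is maximal dissipative, its spectrum lies in the closed lower half plane and the resolvent is holomorphic above it; contour deformation gives the representation
\[
\Uc(t) = \frac{1}{2\pi i} \int_{\R} e^{-it\lambda} \big[(\Ac-\lambda-i0)^{-1} - (\Ac-\lambda+i0)^{-1}\big] \Uc_0 \, d\lambda,
\]
with $\Uc_0 = (u_0, iu_1)$. Each integration by parts in $\lambda$ buys a factor $t^{-1}$ at the cost of one derivative of the boundary resolvent, so the objective is to show that the boundary map $\lambda \mapsto (\Ac-\lambda-i0)^{-1}$ is $d$ times differentiable with integrable operator norms between suitable weighted energy spaces.

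The resolvent analysis splits into three frequency regimes. For \textbf{high frequencies} one sets $h = |\Re z|^{-1}$ and works semiclassically: a dissipative version of Mourre's commutators method, with a conjugate operator quantizing an escape function for the Hamiltonian flow of the principal symbol of $\Ho$, should exploit the geometric control hypothesis \eqref{hyp-amort} together with the dissipative sign of $a$ to produce a positive commutator estimate and hence a bound $\|\partial_z^k (\Ac-z)^{-1}\| = O(|\Re z|^{-k-1})$ between appropriate weighted spaces, integrable in $\lambda$ for $k \geq 1$. For \textbf{intermediate frequencies}, a compactness argument combined with the absence of real eigenvalues of $\Ac$ (a consequence of unique continuation together with $a \geq 0$) yields boundedness of the resolvent up to the real axis. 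For \textbf{low frequencies}, the plan is to derive a Jensen--Kato type asymptotic expansion of $(\Ac-z)^{-1}$ as $z \to 0$: since $-ia$ is short range, the leading behavior is governed by the free wave generator, whose resolvent is built from $(\Ho-\lambda^2)^{-1}$ and inherits the classical low-energy expansion on $L^{2,\d} \to L^{2,-\d}$ with effective regularity of order $|z|^{d-1}$, modulo the usual logarithmic factors depending on the parity of $d$.

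Combining the three regimes, the boundary map has $d$ derivatives whose operator norms are integrable in $\lambda$, up to an arbitrarily small loss at $\lambda = 0$ coming from the endpoint of the low-frequency expansion. Integrating by parts $d$ times in the contour integral, and using that $(u_0,u_1)\in H^{2,\d} \times H^{1,\d}$ places $\Uc_0$ in a weighted analogue of the domain of $\Ac$ (which gives the extra decay in $\lambda$ needed to make iterated resolvent applications uniformly integrable), one obtains the claimed decay $\langle t\rangle^{-(d-\varepsilon)}$. The weight $\d > d + \frac 12$ is dictated on one side by the Jensen--Kato expansion, which requires enough weight to define the coefficients up to order $d$, and on the other by the integrability of the limiting absorption principle at high frequency after sufficiently many differentiations.

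The main difficulty should be the low frequency analysis. The operator $\Ac$ is not self-adjoint, and its threshold behavior is not of Schr\"odinger type but is controlled by the free wave generator; one must carefully track how the short-range dissipation perturbs the zero-energy structure without creating a zero resonance or a threshold spectral singularity, and this is where both the $\varepsilon$ loss and the weight restriction enter. A secondary obstacle is developing a Mourre theory adapted to the non-self-adjoint semiclassical $\Ac$: the virial identity and the differential inequality for the resolvent must both be modified to accommodate the antisymmetric part $-ia$ in a way that preserves rather than destroys the positive commutator gain.
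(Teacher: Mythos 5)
Your plan shares the overall spirit of the paper (pass to a spectral representation, split into low / intermediate / high frequency regimes, use a dissipative Mourre theory), but there are two genuine errors in the reduction to resolvent bounds, either of which would make the argument fail as written.

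First, the spectral representation is incorrect. You write a Stone-type formula involving both boundary values $(\Ac-\lambda\pm i0)^{-1}$, but $\Ac$ is only maximal dissipative, not self-adjoint: its resolvent exists for $\Im z > 0$, and the boundary value $(\Ac-(\lambda-i0))^{-1}$ from below generally does not exist since the spectrum occupies the closed lower half plane. Stone's formula is unavailable here. The correct representation is the one-sided Fourier--Laplace transform of the semigroup (multiply by $e^{it(\tau+i\mu)}\mathbf 1_{\R_+}(t)$ and integrate), which involves only $R(\tau+i\mu)$ with $\mu>0$ and requires a careful limiting absorption step as $\mu\to 0^+$. This is a structural difference, not a cosmetic one: it is exactly why the paper works with the regularized functions $u_\mu = e^{-t\mu}\mathbf 1_{\R_+}u$. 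Relatedly, the paper deliberately avoids $(\Ac-z)^{-1}$ and works instead with the scalar resolvent $R(z)=(\Ho-iza-z^2)^{-1}$ of the quadratic pencil; the remark after \eqref{Four-tildeu} explicitly notes the matrix version would be more complicated.

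Second, the claimed high-frequency bound $\|\partial_z^k(\Ac-z)^{-1}\|=O(|\Re z|^{-k-1})$ is false, and with it the whole "integrate by parts $d$ times" strategy at high frequency. Theorem \ref{th-high-freq} shows that in weighted spaces $\|\langle x\rangle^{-\d}R^{(n)}(z)\langle x\rangle^{-\d}\|\lesssim |z|^{-1}$ for every $n$: each $z$-derivative produces one extra power of the resolvent but also one extra factor $z$, and the two cancel. So integration by parts in $\tau$ does not make the high-frequency tail integrable, and no spectral cutoff $\chi(Q_h)$ can be inserted because there is no functional calculus for the non-self-adjoint pencil. Resolving this requires the dyadic decomposition together with the almost-orthogonality argument with spectral cutoffs $\tilde\chi_j(\Ho^{1/2})$ of Propositions \ref{prop-Uc1}--\ref{prop-Uc23}, and it is precisely this argument that forces the loss of one derivative (initial data in $H^{2,\d}\times H^{1,\d}$ rather than $H^{1,\d}\times L^{2,\d}$). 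Your proof would not exhibit this loss, which is another symptom that the high-frequency contribution has not actually been controlled.

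Finally, the low-frequency part is under-specified. You propose a Jensen--Kato asymptotic expansion, but that theory is developed for a linear spectral parameter and a self-adjoint operator; here the pencil is quadratic in $z$, the operator $\Ho-iza$ is non-self-adjoint and $z$-dependent, and there is no functional calculus to localize at threshold. The paper's route (rescaling $x\to x/|z|$ to bring the spectral parameter to a fixed positive energy, then a parameter-dependent dissipative Mourre estimate with inserted factors) is specifically designed to sidestep this obstruction. A Jensen--Kato-type argument does appear in the paper, but only in Section~\ref{sec-laplacien} as a comparison between $-\Delta_g$ and $H_0$, after the hard estimates on $R(z)$ are already in hand. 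Without a concrete substitute for the threshold analysis of the non-self-adjoint pencil, the low-frequency claim in your plan is not yet a proof.
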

In Theorem \ref{low-freq-Laplacien}, we obtain the same result when $ H_0 $ is replaced by the Laplace-Beltrami operator $ - \Delta_g $ associated to a metric $g$ which is a long range perturbation of the flat one. As we will see in Section \ref{sec-laplacien}, this result is essentially reduced to the one on $ H_0 $ (more precisely on the related resolvent estimates) \emph{via} a fairly simple perturbation argument.\\

As in \cite{bonyh12}, we  obtain a $t^{\e-d}$ decay for all $ \e > 0$. Let us note that the $ t^{-d} $ decay shown in \cite{tataru13} and \cite{guillarmouhs} is obtained under special long range assumptions. %(the long range perturbation is radial in \cite{tataru}, and \cite{guillarmouhs} considers the special case of scattering metrics).
More precisely, in \cite{guillarmouhs} the authors obtain asymptotic expansions for the resolvent (hence for the spectral measure) which allow to compute asymptotics of the wave kernel and infer the corresponding decay. This is a strong result but relies on strong assumptions on the metric which has to be of scattering type. In \cite{tataru13}, the author is also able to give an asymptotic expansion of its resolvent, assuming in particular that the metric and its perturbation are radial at infinity (modulo short range terms).

% In the non dissipative case, and when the metric is non trapping, we also recover the same time decay rate  as the one obtained in \cite{bonyh12}.

An important new feature here is that we allow non self-adjoint perturbations. We discuss the related problems below (once we have introduced the relevant resolvent). In the self-adjoint case, we recover the time decay proved in \cite{bonyh12} with similar long-range assumptions. However our analysis also provides resolvent estimates which are new both in the self-adjoint and non self-adjoint cases.

Let us note that, in odd dimension, if $a$ and $G-I$ decay fast enough, one may expect a time decay proportional to their spatial decay rate, as is proved by Bony-H\"afner \cite{bonyh13} when $a=0$. For exponentially decaying perturbations, one may also expect an exponential decay by using a suitable theory of resonances. On the other hand, in even dimension, the behaviour of the free wave equation suggests one cannot expect such improvements.

We do not know whether the short-range assumption on the absorption index is sharp or not. However previous results obtained in \cite{shibata83, dans95, ikehataty13}, where the absorption index is constant or at least bounded from below by $c_0 \pppg  x \inv$, $c_0 > 0$, provide estimates of order $t^{-\frac d 2}$. This is related to the ``overdamping'' phenomenon: when the absorption is too strong, the equation tends to behave as a heat equation at low frequencies.\\

% 
% We also note that Theorem \ref{th-loc-decay} (as well as Theorem \ref{low-freq-Laplacien}) seems to be sharp with respect to the short range assumption on $a$ (see (\ref{dec-metric-a})). Indeed, recent results (see \cite{ikehataty13}) have shown that for certain long range dissipative perturbations of the wave equation, the decay rate in time may be weaker. 
% This is also related to the works \cite{shibata83} and \cite{dans95}, where the wave equation with constant absorption index is considered in an exterior domain. Indeed, despite the global energy decay observed in \eqref{global-decay},  the local energy in \cite{shibata83} and \cite{dans95} decays slower than in the free case. This phenomenon is usually called ``overdamping''.\\

We are going to prove Theorem \ref{th-loc-decay} by a spectral approach.   After a Fourier transform, the solution $u(t)$ can be written as an integral over frequencies $\t = \Re z$ of
\begin{equation}  \label{def-Rz} 
R(z) = \big(\Hz -z^2 \big) \inv,  
\quad \text{where} \quad  \Hz  = \Ho - i z a(x).
\end{equation}
Everywhere in the sequel, $R(z)$ will be called resolvent although it is not a resolvent in the usual sense since the operator $\Hz$ depends on the spectral parameter.
We will see in Proposition \ref{prop-R-diss} that it is well-defined for every $z \in \C_+$, where 
\[
\C_+ = \singl {z \in \C \st \Im z > 0}.
\]
The problem is thus reduced to proving uniform estimates for the resolvent $R(z)$ and its derivatives when $\Im z \searrow 0$. Once such a limiting absorption principle is proved, we have to control the dependence on $\Re z$. The difficulties arise when $| \Re z |$ goes to 0 (low frequencies) and $+ \infty$ (high frequencies).
\\

The main new difficulty, compared to the situations considered in \cite{bouclet11,bonyh12}, is the non self-adjoint character of the operator which leads to several new problems. A deep one is due to the absence of suitable functional calculus for $ H_0 - i z a $ allowing sharp spectral localizations. Another problem follows from the fact that derivatives of the resolvent cannot be expressed by pure powers thereof.
% Notice that an important difference with the self-adjoint case is that, in the dissipative case, the derivatives of $R$ are not its powers, even up to a scalar factor. 
Here, there may be some factors $a(x)$ inserted between the resolvents $R(z)$; for instance, we have
\begin{equation} \label{expr-der-R}
R'(z) = i R(z)a(x) R(z) + 2z R(z)^2
\end{equation}
(see Proposition \ref{prop-der-R2} for the general case). Due to these inserted factors (and of course to the non self-adjointness of the operator), the estimates on $ R^{\prime}(z) $ (and higher order derivatives) require  a significant review of multiple commutators estimates in this framework (see Section \ref{sec-Mourre}).
\\

%We will see that we can still prove estimates with the commutators method in this situation. 

%One of our observations in this paper is that one can adapt the techniques of \cite{bouclet11}  to handle such non self-adjoint %perturbations.   %The  approach of \cite{bonyh12} by mean of spectral localization does not clearly apply here. 

 Theorem \ref{th-loc-decay} follows essentially from a series of estimates on the resolvent $ R (z) $ and its derivatives, which are the core of this paper. Estimates at high frequency and intermediate frequency are unavoidably technical but are quite well understood in this dissipative situation (see \cite{art-mourre}). The biggest issue will be about low frequency. \\

Let us begin with the statement about intermediate frequencies:

 %To our knowledge, the low frequency estimates obtained in this paper for long range metrics do not see have been , even in the self-adjoint case, we don't kni

%Here are the resolvent estimates obtained in this paper. 
% For intermediate frequencies we have the following result:

\begin{theorem} [Resolvent estimates for intermediate frequencies] \label{th-inter-freq}
 Let $K$ be a compact subset of $\C \setminus \singl 0$. Let $n \in \N$ and $\d > n + \frac 12$. Then there exists $C \geq 0$ such that for all $z \in \C_{+} \cap K$ we have
\[
 \nr{\pppg x ^{-\d} R^{(n)}(z) \pppg x ^{-\d} }_{\Lc(L^2)} + \nr{\pppg x ^{-\d} \nabla R^{(n)}(z) \pppg x ^{-\d} }_{\Lc(L^2)}  \leq C.
\]
\end{theorem}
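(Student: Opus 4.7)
The plan is to deduce the theorem from a dissipative version of Mourre's commutator method applied to $H_z - z^2$, combined with the explicit formula for the derivatives of $R(z)$. I would use as conjugate operator the generator of dilations $A = \frac{1}{2}(x\cdot\nabla - \nabla\cdot x)$, for which a direct computation gives $[H_0, iA] = 2 H_0 + $ (long-range remainder), yielding a strict Mourre estimate for $H_0$ on any compact subinterval of $(0,+\infty)$. Since $-iza$ is bounded and relatively compact with respect to $H_0$ for $z$ in a compact set (by the short-range decay of $a$ in \eqref{dec-metric-a}), this Mourre estimate persists for $H_z - z^2$ on the dissipative side $K\cap\C_+$. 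Applying the dissipative Mourre machinery (the theorem cited as \cite{art-mourre}), one gets uniform bounds for $z \in K \cap \C_+$
\[
\nr{\pppg x^{-s} R(z)^k \pppg x^{-s}}_{\Lc(L^2)} \leq C_{k,s}, \qquad s > k - \tfrac 12, \quad k\geq 1.
\]

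Next I would reduce the derivative estimates to the previous bounds via the formula for $R^{(n)}(z)$. Starting from \eqref{expr-der-R} and differentiating by induction, one obtains a finite expansion
\[
R^{(n)}(z) = \sum_{\alpha} c_\alpha\, z^{e_\alpha}\, R(z)\, B_{\alpha,1}\, R(z)\, B_{\alpha,2}\, \cdots\, B_{\alpha,j_\alpha - 1}\, R(z),
\]
with $j_\alpha \leq n+1$ factors of $R(z)$ and each $B_{\alpha,i}$ equal to $I$ or $a(x)$ (up to harmless scalars). To bound $\pppg x^{-\d}\,\text{(term)}\,\pppg x^{-\d}$, I would distribute the weight $\pppg x^{-2\d}$ across the factors: between two $R(z)$'s separated by $a(x)$, the short-range decay $|a(x)|\lesssim \pppg x^{-1-\rho}$ absorbs weight $1+\rho$, leaving each adjacent $R(z)$ only needing an outer weight $>\frac12$; between two $R(z)$'s not separated by such a factor one must apply the bound on the power $R(z)^k$ from the first step, which asks for weight $>k-\frac12$ on each side. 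The critical term is the pure power $R(z)^{n+1}$ arising from the highest power of $z$ in the expansion, and it is precisely this term that forces the threshold $\d > n+\frac12$.

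For the gradient estimate, I would use the uniform ellipticity of $G$, which gives the Dirichlet-form bound $\nr{\nabla u}_{L^2}^2 \leq c\, \langle H_0 u, u\rangle$, hence (after commuting $\pppg x^{-\d}$ with $\nabla$ up to a bounded error)
\[
\nr{\pppg x^{-\d}\nabla R^{(n)}(z)\f}_{L^2}^2 \lesssim \nr{\pppg x^{-\d} H_0^{1/2} R^{(n)}(z) \f}_{L^2}^2 + \text{lower order}.
\]
Using $H_0 R(z) = I + (z^2 - iza)R(z)$ iteratively, $H_0 R^{(n)}(z)$ can be rewritten as a linear combination of terms of the same form as $R^{(n)}(z)$, so the previous weight-distribution argument applies without change.

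The main obstacle, and where the heart of the argument sits, is step one: establishing a Mourre-type estimate in the non self-adjoint setting and chaining it up to powers of $R(z)$. The usual proof in the self-adjoint case relies on functional calculus to localize spectrally near the real energy $\Re z$, and this is not directly available for $H_z$. Once this input from \cite{art-mourre} is taken for granted, the rest of the argument is essentially combinatorial bookkeeping of weights along the products appearing in the derivatives of $R(z)$, with the short-range character of $a$ playing the role of a weight-killer to allow the threshold $\d > n + \frac12$ to be attained.
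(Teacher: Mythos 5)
Your overall plan — conjugate operator given by the generator of dilations, positive commutator estimate, derivative of $R(z)$ expanded as in \eqref{expr-der-R}, followed by a reduction to weighted resolvent estimates — is the paper's strategy. There is, however, a genuine gap in the step where you distribute the weight $\langle x\rangle^{-2\delta}$ across the factors $R(z)\,B_{\alpha,1}\,R(z)\cdots R(z)$, and it is precisely the point that the paper's Section \ref{sec-Mourre} is designed to circumvent.

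Your scheme treats each maximal run of $k$ consecutive $R(z)$'s (no $a$ in between) as a block needing weight $>k-\tfrac12$ on each side, and relies on the decay $a(x)=O(\langle x\rangle^{-1-\rho})$ to supply the weight separating neighbouring blocks. But the inserted $a$ between a block of size $k_L$ and one of size $k_R$ must then swallow weight $>(k_L-\tfrac12)+(k_R-\tfrac12)=k_L+k_R-1$, while $a$ can only pay $1+\rho$. So you would need $k_L+k_R<2+\rho$. The decomposition of Proposition \ref{prop-der-R2} does produce terms with uneven runs, e.g.\ $z\,R(z)\,a\,R(z)^2$ at $n=2$, for which $k_L+k_R=3$; your scheme then requires $\rho>1$, but the theorem is asserted (and true) for every $\rho>0$. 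Your assertion that the pure power $R(z)^{n+1}$ is the critical term is therefore mistaken: for the weight-splitting strategy the mixed terms are worse, and they break it for small $\rho$. An additional, acknowledged but understated, issue is that the bound $\|\langle x\rangle^{-s}R(z)^{k}\langle x\rangle^{-s}\|\le C$ for $s>k-\tfrac12$ is not available from the reference \cite{art-mourre}, which only treats $k=1$; obtaining it for $k\ge2$ in the dissipative setting is the content of Theorems \ref{thordren} and \ref{th-mourre-puiss}, and is itself a substantial extension of Jensen's arguments.

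The paper's route avoids relying on the decay of $a$ at intermediate frequencies altogether. Instead of redistributing spatial weight across the inserted factors, it extends the multiple-commutator Mourre machinery (Lemma \ref{lemjen2.1}, Theorems \ref{th-estim-insert} and \ref{th-estim-insert2}) so that bounded operators $\Phi$ with bounded iterated commutators $\mathrm{ad}_A^k(\Phi)$ can be inserted between the resolvents without loss; the multiplication by $a$ qualifies merely because $a$ and $(x\cdot\nabla)^m a$ are bounded, irrespective of their decay rate. This gives the weighted estimate with $\langle A\rangle^{-\delta}$ weights, and a separate argument (Step 5 of Proposition \ref{prop-inter-freq}, via the operators $\Theta_{j;\alpha_j,\ldots,\alpha_k}(z)$, which serve as a substitute for the unavailable spectral cutoff $\chi(H_z)$) converts these to $\langle x\rangle^{-\delta}$ weights. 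Finally, a small remark: as written your conjugate operator $A=\tfrac12(x\cdot\nabla-\nabla\cdot x)$ reduces to a constant, since $\nabla\cdot x=x\cdot\nabla+d$; the intended operator is $A=-\tfrac{i}{2}(x\cdot\nabla+\nabla\cdot x)$, as in \eqref{def-A}.
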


Here we have denoted by $\Lc(L^2)$ the space of bounded operators on $L^2$, while $R^{(n)}(z)$ is as usual the $n$-{th} derivative of $R(z)$ with respect to $z$. 
\\

As is well known, even for the resolvent of the free Laplacian, such estimates cannot hold uniformly when $z$ goes to 0 if $n$ is too large. This explains the restriction on the rate of decay in Theorem \ref{th-loc-decay}. The following statement is the main technical result of this paper:

\begin{theorem}[Resolvent estimates for low frequencies] \label{th-low-freq}
Let $ \e > 0$ and $n \in \N$. Let $\d$ be greater than $n + \frac 12$ if $n \geq \frac d 2$ and greater than $n + 1$ otherwise. Then there exist a neighborhood $\Uc$ of 0 in $\C$ and $C\geq 0$ such that for all $z \in \Uc \cap \C_{+}$ we have 
\[
\nr{\pppg x ^{-\d}  R^{(n)}(z) \pppg x ^{-\d} }_{\Lc(L^2)}  \leq C \left( 1 + \abs z ^{d-2- n- \e} \right)
\]
and
\[
\nr{\pppg x ^{-\d} \nabla   R^{(n)}(z) \pppg x ^{-\d} }_{\Lc(L^2)}  \leq C \left( 1 + \abs z ^{d-1- n- \e} \right).
\]
\end{theorem}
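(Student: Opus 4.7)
The plan is to establish the estimates by treating $R(z)$ as a perturbation of the self-adjoint resolvent $R_0(z) := (H_0 - z^2)^{-1}$, whose corresponding low-frequency bounds
\[
\nr{\pppg x^{-\delta} R_0^{(n)}(z) \pppg x^{-\delta}}_{\Lc(L^2)} \leq C\bigl(1 + |z|^{d-2-n-\e}\bigr)
\]
I would prove first, by adapting the low-frequency Mourre commutator method of \cite{bouclet11, bonyh12} with a (rescaled) generator of dilations $A = (x\cdot D + D\cdot x)/2$ as conjugate operator. The transition to $R(z)$ then uses the resolvent identity
\[
R(z) = R_0(z) + iz R_0(z)\, a\, R(z), \qquad \text{equivalently} \qquad R(z) = (I - iz R_0(z) a)^{-1} R_0(z).
\]
The short-range character of $a$ together with the LAP bound $R_0(z) : L^{2,\delta_0} \to L^{2,-\delta_0}$ (for any $\delta_0 > 1/2$) makes $R_0(z)\, a$ uniformly bounded on $L^{2,-\delta_0}$ for small $z$, so the prefactor $iz$ makes $iz R_0(z)\, a$ a contraction as $z \to 0$. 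Consequently the Neumann series
\[
R(z) = \sum_{k=0}^{\infty} (iz)^k R_0(z) \bigl(a R_0(z)\bigr)^k
\]
converges in weighted operator norm and yields the $n=0$ case of the theorem. The gradient estimate follows in the same way, with the corresponding LAP bound on $\nabla R_0(z)$.

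For higher derivatives I would differentiate this Neumann series term by term. The Leibniz rule expresses the $n$-th derivative of the $k$-th term as a sum of products of the form
\[
z^{k-j}\, R_0^{(n_0)}(z)\, a\, R_0^{(n_1)}(z)\, a\, \cdots\, a\, R_0^{(n_k)}(z), \qquad n_0 + \cdots + n_k + j = n,
\]
with $0\leq j \leq k$. Each such product is to be controlled in weighted norm via the free-resolvent estimates quoted above, combined with the fact that each $a$-factor provides a weight $\pppg x^{-1-\rho}$ which can be split between the two adjacent $R_0$-resolvents. Summing over $k$ converges for small $|z|$ by the base-case contraction, and a careful accounting of the resulting powers of $z$ produces the singularity $|z|^{d-2-n-\e}$ required by the theorem.

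The main obstacle is the weight bookkeeping at large $n$: the required weight $\delta$ in the theorem grows linearly with $n$, whereas each $a$-factor provides only a single unit of $\pppg x^{-1-\rho}$ decay. As a consequence, the inner factors $R_0^{(n_i)}(z)$ in the expansion must be estimated with \emph{asymmetric} weights, in the form $\nr{\pppg x^{-s_1} R_0^{(m)}(z) \pppg x^{-s_2}} \leq C(1 + |z|^{d-2-m-\e})$ whenever $s_1 + s_2$ exceeds a threshold depending on $m$ and on the dimension $d$. Proving these asymmetric bounds is the heart of the low-frequency analysis and requires the multiple-commutators version of the dissipative Mourre apparatus developed elsewhere in this paper; the $\e$-loss in the final exponent absorbs the logarithmic corrections in even dimensions as well as the non-sharpness of the multiple-commutators method under the long-range perturbations \eqref{dec-metric-a} of the metric.
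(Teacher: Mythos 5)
Your proposal takes a genuinely different route from the paper, and the route contains a gap that I do not see how to close without introducing the paper's own key ideas.

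The paper does not expand $R(z)$ in a Neumann series around the self-adjoint resolvent $R_0(z)=(H_0-z^2)^{-1}$. Instead it rescales by $|z|$, conjugating by $e^{iA\ln|z|}$ so that the new spectral parameter $\hat z^2=z^2/|z|^2$ stays near $1$, and it applies the dissipative Mourre machinery (Section~\ref{sec-Mourre}) \emph{directly} to the rescaled operator $\tilde P_z$. The low-frequency singularity is then captured not through the weights $\pppg x^{-\d}$ in the usual way, but through the scaling identity $\nr{e^{iA\ln|z|}}_{L^q\to L^q}=|z|^{d/2-d/q}$ combined with Sobolev embeddings; the weight $\pppg x^{-\d}$ serves to pass from $L^q$ to $L^2$ or to cancel the $|z|^\d$ generated by $\pppg A^\d e^{-iA\ln|z|}$, but never both simultaneously. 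Finally, the compactly supported remainder $K_z$ in $H_z=P_z+K_z$ is handled by a perturbation argument (Propositions~\ref{prop-BS}, \ref{prop-b589}), and its compact support is what lets it absorb arbitrary weights.

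The gap in your proposal is the weight bookkeeping in the Neumann series, which you already flag as "the main obstacle." After differentiating the series $n$ times, you must estimate products $z^{k-j}R_0^{(n_0)}(z)\,a\,R_0^{(n_1)}(z)\,a\cdots a\,R_0^{(n_k)}(z)$ with $n_0+\cdots+n_k=n-j<n$. Between two adjacent resolvents, the only weight available is the decay $\pppg x^{-1-\rho}$ of the single factor $a$. Each $R_0^{(n_i)}$ requires weights exceeding $n_i+\tfrac12$ on \emph{both} sides; so an interior factor with $n_i\geq 1$ needs a budget of at least $n_{i-1}+n_i+1$ in the gap to its left, while $a$ supplies only $1+\rho$. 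For $\rho$ small (the theorem allows any $\rho>0$) this simply cannot close. Your hoped-for remedy — an asymmetric bound $\nr{\pppg x^{-s_1}R_0^{(m)}(z)\pppg x^{-s_2}}\lesssim 1+|z|^{d-2-m-\e}$ requiring only $s_1+s_2$ above a threshold — does not hold: one sees this already for the free resolvent squared in $d=3$, whose kernel tends to the constant $(8\pi i z)^{-1}$ as $z\to 0$, so that the weighted operator becomes rank-one and requires $s_1,s_2>\tfrac32$ individually (not merely $s_1+s_2>3$). The paper's own Theorem~\ref{th-low-freq-bis} confirms this: it imposes both $\d_1,\d_2>n+\tfrac12$ and a condition on $\d_1+\d_2$; the individual constraints cannot be dropped. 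The paper escapes this trap by rescaling: after conjugation, the inserted factor $a_\iota(\cdot/|z|)$ is no longer used as a source of spatial decay but as a map $H^{s+1}\to H^s$ of norm $O(|z|)$ (Proposition~\ref{prop-dec-sob}), i.e.\ it trades a Sobolev derivative for a power of $|z|$, and the lost derivative is recovered from the elliptic smoothing of $(\tilde P_z+1)^{-1}$. Nothing of this kind is available if one insists on paying with the static weights $\pppg x^{-1-\rho}$, and this is precisely why the Neumann-series route stalls at higher $n$.
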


Unless $d$ is even and $n = d-2$ we can in fact remove $\e$ in the first estimate:

\begin{theorem}[Sharp resolvent estimates for low frequencies] \label{th-low-freq-bis}
Let $n \in \N$. Let $\d_1,\d_2 > n + \frac 12$ be such that $\d_1 + \d_2 > \min(2(n+1),d)$. If $d$ is odd or $n \neq d-2$ then there exist a neighborhood $\Uc$ of 0 in $\C$ and $C\geq 0$ such that for all $z \in \Uc \cap \C_+$ we have
\[
\nr{\pppg x ^{-\d_1}  R^{(n)}(z) \pppg x ^{-\d_2} }_{\Lc(L^2)}  \leq C \left( 1 + \abs z ^{d-2- n} \right).
\]
\end{theorem}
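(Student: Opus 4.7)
The plan is to reduce the sharp bound on $R^{(n)}(z)$ to the corresponding sharp bound on the self-adjoint resolvent $R_0(z) := (\Ho - z^2)^{-1}$, via a finite Neumann-type expansion that treats the dissipative perturbation $-iza$ as a lower-order term. The exclusion ``$d$ odd or $n \neq d-2$'' enters only in the self-adjoint step: in the remaining, excluded case, the low-frequency expansion of $R_0$ contains a term $z^{d-2}\log z$ whose $(d-2)$-th derivative is not uniformly bounded in any weighted space.

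\emph{Step 1: sharp estimate for $R_0$.} First I would establish
\[
\nr{\pppg x^{-\d_1} R_0^{(n)}(z) \pppg x^{-\d_2}}_{\Lc(L^2)} \leq C\,\bigl(1 + |z|^{d-2-n}\bigr)
\]
under the same hypotheses on $\d_1,\d_2$. This follows from a Jensen--Kato/Murata-type low-frequency expansion of $R_0$ in weighted $L^2$-spaces,
\[
R_0(z) = \sum_{k=0}^{N} z^k\, T_k \;+\; (\text{log term in even dimension, starting at order } d-2) \;+\; O(|z|^{N+1-\e}),
\]
where each $T_k$ is bounded from $L^{2,\d_1}$ to $L^{2,-\d_2}$ precisely when $\d_i > k+\tfrac12$ and $\d_1+\d_2 > \min(2(k+1),d)$. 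Under the hypothesis, differentiating $n$ times does not expose a logarithmic singularity and the claimed bound follows. This step refines the Mourre machinery already developed in the paper for $\Ho$ near $0$, tracking precisely the leading asymptotic coefficients in weighted spaces, in the spirit of the self-adjoint analysis of Bony--H\"afner.

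\emph{Step 2: perturbation to $R$.} From $R_0^{-1} - R^{-1} = iza$ one has $R(z) = R_0(z) + iz\,R_0(z)\,a\,R(z)$; iterating $N$ times yields
\[
R(z) = \sum_{k=0}^{N-1} (iz)^k R_0(z)\bigl(aR_0(z)\bigr)^k \;+\; (iz)^N R_0(z)\bigl(aR_0(z)\bigr)^{N-1} a R(z).
\]
Differentiating $n$ times by Leibniz, each term in the sum becomes a linear combination of products $z^{k-m}\,R_0^{(p_0)}(z)\,a\,R_0^{(p_1)}(z)\cdots a\,R_0^{(p_k)}(z)$ with $m \leq k$ and $p_0+\cdots+p_k = n-m$, while the remainder contains exactly one factor $R^{(q)}$ with $q\leq n$. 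Splitting each $a = \pppg x^{-\alpha_i}\tilde a_i\,\pppg x^{-\beta_i}$ with $\alpha_i+\beta_i \leq 1+\rho$ transfers weights to the flanking resolvents; Step 1 applied to each $R_0^{(p_i)}$ combined with the prefactor $z^{k-m}$ gives for every term with $k\geq 1$ a bound $O(|z|^{(d-2-n)+k(d-1)})$, hence strictly smaller than $|z|^{d-2-n}$ for small $|z|$ since $d\geq 3$. The remainder, estimated using Theorem \ref{th-low-freq} with its $\e$-loss, is absorbed by the prefactor $|z|^N$ once $N$ is chosen large enough (depending on $n$, $d$ and $\e$).

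The main technical obstacle is Step 1, where one must extract the sharp Agmon exponent $z^{d-2-n}$ from the asymptotic expansion of $R_0$ in weighted spaces, and carefully separate the non-logarithmic case from the logarithmic one. Step 2 is then essentially a bookkeeping argument: each iteration provides an extra factor of $|z|$ and additional spatial decay $\pppg x^{-1-\rho}$ from $a$, which together render the perturbation terms strictly subdominant. A consistent distribution of weights among the intermediate resolvents, satisfying both $\d_i > p_i+\tfrac12$ and the $\min(2(p_i+1),d)$ condition at each factor, can be achieved for $N$ sufficiently large thanks to the surplus $\d_1+\d_2 > \min(2(n+1),d)$ given by the hypothesis and the short-range assumption on $a$.
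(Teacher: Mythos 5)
There are two genuine gaps in your proposal, both of which the paper's actual proof is specifically engineered to avoid.

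\textbf{Step 1 is not established, and is not what the paper proves.} You assert that the sharp bound on $R_0^{(n)}(z)=\partial_z^n(\Ho-z^2)^{-1}$ follows from a Jensen--Kato/Murata-type low-frequency expansion in weighted spaces. Such expansions are available for compactly supported or fast-decaying perturbations, but proving one for a general \emph{long-range} metric perturbation (only $\rho>0$, no scattering-type assumption) is precisely the hard problem that \cite{guillarmouhs,tataru13} address under much stronger structural hypotheses and that this paper deliberately avoids. The paper never establishes, nor uses, an expansion of $R_0$. Its ``Step 1'' is Theorem~\ref{th-low-freq-R0}, which gives the sharp bound not for $R_0$ but for $R_\iota(z)=(P_0-iza_\iota-z^2)^{-1}$, where $P_0$ is obtained from $H_0$ by \emph{cutting out a compactly supported piece} so that the resulting perturbation of $-\Delta$ is globally small (see \eqref{def-Nc}), and the bound is obtained by the rescaling $\tilde P_z=|z|^{-2}e^{-iA\ln|z|}P_ze^{iA\ln|z|}$ plus the dissipative Mourre theory of Section~\ref{sec-Mourre}, with the $\e$-loss removed in Proposition~\ref{prop-estim-tildeRc2}(ii) by a careful choice of the auxiliary parameter $\s$ to avoid the critical Sobolev index (this is where the restriction ``$d$ odd or $n\neq d-2$'' enters, not through a $\log$ term in an $R_0$-expansion).

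\textbf{Step 2 fails for small $\rho$.} After Leibniz differentiation you must estimate $z^{k-m}R_0^{(p_0)}(z)\,a\,R_0^{(p_1)}(z)\cdots a\,R_0^{(p_k)}(z)$ by applying the Step-1 estimate to each factor $R_0^{(p_i)}$ separately, which requires a weight $\geq p_i+\tfrac12$ on \emph{each} side of $R_0^{(p_i)}$. Between two consecutive factors the only source of weight is the single factor of $a$, which has decay $\pppg x^{-1-\rho}$ with $\rho>0$ arbitrarily small. So splitting $a=\pppg x^{-\a}\tilde a\pppg x^{-\b}$ with $\a+\b\leq 1+\rho$ forces $\a>p_i+\tfrac12$ and $\b>p_{i+1}+\tfrac12$, hence $p_i+p_{i+1}<\rho$; as soon as one interior $p_i\geq 1$ this is violated. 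This is a structural obstruction, not a bookkeeping issue. The paper circumvents it in two ways: first, the object that \emph{is} perturbed away in a Neumann-type series (Section~\ref{sec-non-small}) is $K_z=K_0-iza_0$, which has \emph{compactly supported} coefficients, so that $\pppg x^M S_\psi(z)\pppg x^{-\d}$ admits arbitrary powers $\pppg x^M$ (Propositions~\ref{prop-BS} and~\ref{prop-b589}); second, within the Mourre argument for $\tilde R_\iota$, products of resolvents with inserted factors are estimated via Theorem~\ref{th-estim-insert2}, which requires only outer weights $\pppg A^{-\d}$ and a bounded-commutator condition on the inserted factors — no spatial weights between factors are needed. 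Your argument has neither device available, so the iteration cannot be closed.

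In short, the correct strategy is the paper's: split $H_0$ into a globally small long-range perturbation of $-\Delta$ plus a compactly supported remainder, prove the sharp bound for the former by scaling plus Mourre theory (absorbing the inserted factors $a_\iota$ as commutator-controlled insertions, not by spatial weights), and then handle the compactly supported remainder by the Neumann argument of Section~\ref{sec-non-small}. Your decomposition $H_0\leftrightarrow H_z$ and your weight bookkeeping do not achieve this.
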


Theorems \ref{th-low-freq} and \ref{th-low-freq-bis} are proved at the end of Section \ref{sec-small}, after Proposition \ref{prop-b589}. Note that we obtain estimates at all orders $n$, but only those given in Theorem \ref{th-low-freq} for $n \leq d$ will contribute to the proof of Theorem \ref{th-loc-decay}.

Note also that in Theorem \ref{th-low-freq-bis} the condition on $\d_1 + \d_2$ is automatically fulfilled when $n \geq \frac d 2$. We need a weight $\pppg x ^{-n-\frac 12-0^+}$ on both sides to have a uniform estimate for $R^{(n)}(z)$ when $\Im z$ goes to 0 (see Theorem \ref{th-inter-freq}) and an additionnal $\pppg x \inv$ arbitrarily distributed among the left and the right to improve the dependence on $\Re z$ and get a uniform estimate when $n$ is not too large. This fact has already been emphazised in \cite{bonyh10b} when $n = 0$. We also remark that in passing we have improved the weights used in \cite{bouclet11}.\\

For high frequencies, we know that the propagation along classical rays is in some sense (made rigorous by semiclassical analysis) a good approximation for the propagation of the wave. This is why geometric assumptions are crucial for the local energy decay: the energy of the wave (at least the contribution of high frequencies) escapes at infinity if the classical rays go to infinity. In the dissipative case, we know that this non-trapping assumption can be replaced by a damping condition on bounded classical trajectories: the energy which does not escape at infinity has to be dissipated by the medium (see \cite{alouik07, art-mourre} for resolvent estimates of the dissipative Schr\"odinger operator on $\R^d$, see also \cite{datchevv,christiansonsvw} for the correspondence between non-trapped and damped trajectories).

\begin{theorem}[Resolvent estimates for high frequencies] \label{th-high-freq}
Let $n \in \N$ and $\d > n+ \frac 12$. Under the damping condition \eqref{hyp-amort} on the classical flow, there exists $C \geq 0$ such that for all $z \in \C_{+}$ with $\abs z \geq C$ we have
\[
\nr {\pppg x ^{-\d} R^{(n)}(z) \pppg x ^{-\d}} _{\Lc(L^2)} \leq \frac C { \abs z }
\]
and
\[
\nr {\pppg x ^{-\d} \nabla R^{(n)}(z) \pppg x ^{-\d}} _{\Lc(L^2)} \leq  C .
\]
\end{theorem}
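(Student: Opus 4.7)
The plan is to perform a semiclassical Mourre analysis adapted to the dissipative setting. Rescaling by $h = 1/\abs z$ and $\tl = hz$ (so that $\abs \tl = 1$, $\Im \tl \geq 0$), the operator
\[
\Pc(h,\tl) := h^2 \Ho - ih\tl\, a(x) - \tl^2
\]
satisfies $R(z) = h^2 \Pc(h,\tl)\inv$ and has principal symbol $p_\tl(x,\xi) = \innp{G(x)\xi}{\xi} - \tl^2$, the damping entering as an imaginary subprincipal term of order $h$. The target bound for $n=0$ is then equivalent to proving $\nr{\pppg x^{-\d}\Pc(h,\tl)\inv\pppg x^{-\d}}_{\Lc(L^2)} \leq C/h$ uniformly for $h \in (0,h_0]$ and $\tl$ in a compact neighborhood of $\singl{\abs\tl = 1, \Im \tl \geq 0}$.

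To obtain this estimate, I would construct a conjugate operator $A = \Opw(\mathtt G)$ out of a symbolic escape function $\mathtt G \in C^\infty(T^*\R^d)$ with the property that on the characteristic set $\singl{p_\tl = 0}$, the Poisson bracket $\singl{p_\tl, \mathtt G}$ is bounded below by a positive constant outside a neighborhood of $\singl{a > 0}$. The damping hypothesis \eqref{hyp-amort}---every bounded bicharacteristic of $p_\tl$ visits $\singl{a > 0}$---is precisely what allows such a $\mathtt G$ to be built globally, combining a non-trapping escape function at infinity (where the flow is essentially geodesic thanks to the long-range decay of $G - \Idm$) with a compactly supported correction in the trapped region where the damping takes over. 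Symbolic calculus then yields, modulo $\Oc(h^{1+\eta})$ errors and microlocalized near $\singl{p_\tl = 0}$, a positive commutator inequality of the form
\[
\Im\bigl(A\, \Pc(h,\tl)\bigr) \geq \a h\, \chi_E - Ch\, a(x),
\]
with $\chi_E$ a pseudodifferential energy cutoff. Combined with the dissipative contribution of $-ih\tl a(x)$ already present in $\Pc$, the Mourre argument of \cite{art-mourre} delivers the weighted bound. The $\nabla R(z)$ estimate at order $n=0$ follows by pairing $\Pc u = f$ with $u$, taking real parts to extract $\nr{h\nabla u}_{L^2}^2$ via the ellipticity of $G$, and dividing by $h^2$.

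For higher derivatives, starting from $RP = \Idm$ with $P(z) = \Ho - iza - z^2$ (so $P' = -ia - 2z$, $P'' = -2$, $P^{(k)} = 0$ for $k\geq 3$), the Leibniz identity yields the recursion
\[
R^{(n)}(z) = n R^{(n-1)}(z)\, (ia(x) + 2z)\, R(z) + n(n-1) R^{(n-2)}(z)\, R(z) \qquad (n \geq 2),
\]
from which $R^{(n)}(z)$ is expressed as a finite sum of terms of the shape $c_{j,k}\, z^k\, R(z) B_1 R(z) \cdots B_j R(z)$ with $j+1 \leq n+1$ resolvents, $B_\ell \in \singl{a(x), \Idm}$, and $k \leq j$ (easily verified by induction), the leading term being proportional to $z^n R(z)^{n+1}$. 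Each such term is controlled by a multi-resolvent weighted bound $\nr{\pppg x^{-\d} R(z)^{j+1}\pppg x^{-\d}}_{\Lc(L^2)} \leq C/\abs z^{j+1}$, valid for $\d > j+\frac 12$---automatic since $\d > n+\frac 12$---and derived from the same Mourre scheme via multiple commutators of $\Pc$ with $A$ (see \cite{art-mourre}); the insertions $B_\ell = a$ supply extra decay $\pppg x^{-1-\rho}$ that comfortably absorbs any intermediate weighting. Multiplication by $\abs z^k$ then yields $\Oc(\abs z^{k-j-1}) = \Oc(1/\abs z)$. The gradient estimate at order $n$ is treated analogously, invoking the $n=0$ gradient bound on the outermost resolvent of each term.

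The main obstacle lies in the Mourre step: the escape function $\mathtt G$ must simultaneously cope with the long-range geometry of $G$ and with the damping region, while the positive commutator estimate must accommodate the non-self-adjoint subprincipal term $-ih\tl a$ without losing the sharp power of $h$. The dissipative commutator framework developed in \cite{art-mourre} is tailored precisely for this situation, so that once the case $n=0$ is in place, propagating it to derivatives via multiple commutators reduces to algebraic bookkeeping.
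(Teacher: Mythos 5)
Your proposal follows essentially the same route as the paper: the same semiclassical rescaling $h = 1/\abs z$, $\s = hz$; the same escape-function construction combining a Gérard--Martinez-type escape function at infinity (using the long-range structure of $G$) with a compactly supported correction built from the damping assumption on semi-bounded trajectories; the same generalized Mourre inequality with the $\b a$ term; and the same recursion $R^{(n)} = nR^{(n-1)}(ia+2z)R + n(n-1)R^{(n-2)}R$ giving the structure described in Proposition \ref{prop-der-R2}. The only tactical departure is your treatment of the gradient bound at $n=0$: you pair $\Pc u = f$ against a weighted copy of $u$ and use ellipticity, whereas the paper treats $D^\a$ directly as an inserted factor in the abstract machinery (Theorem \ref{th-estim-insert2}, via the bound $\nr{D^\a}_{\hhs,F_h,N} \leq C h^{-\abs\a}$). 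Your pairing argument does work at $n=0$ (one must commute $\langle x\rangle^{-\d}$ through $h\nabla$ and absorb the cross terms, but this closes), and it is more elementary; however, for $n\geq 1$ you then ``invoke the $n=0$ gradient bound on the outermost resolvent,'' which cannot be done by naively chaining weighted estimates --- the intermediate weights do not compose across resolvent factors. At that point you are forced back onto the iterated Mourre analysis with inserted factors (Lemma \ref{lemjen2.1} and Theorems \ref{th-estim-insert}--\ref{th-estim-insert2}), which the paper applies uniformly. Since you cite that framework anyway, this is not a gap, but the elegance gained by the elliptic pairing evaporates at order $n\geq 1$, and the paper's single unified treatment is cleaner. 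One last small omission: the Mourre theory initially produces weights $\pppg{F_h}^{-\d}$ rather than $\pppg{x}^{-\d}$; the passage between the two is a standard but necessary final step that your sketch leaves implicit.
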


In the self-adjoint case (see \cite{bonyh12,bouclet11}) the local energy decay problem can be dealt with by using general estimates of the form
\[
\nr{\pppg x ^{-\g} e^{-\frac {it}h Q_h} \h(Q_h) \pppg x ^{-\g}} \leq C \pppg t^{-\g},
\]
where $\g \geq 0$, $Q_h$ is a semiclassical self-adjoint Schr\"odinger operator and $\h \in C_0^\infty(\R)$ is supported near a non-trapping energy for $Q_h$. We refer to \cite{wang87} for more details and a proof of this result. However such an estimate is not available in the dissipative situation (we cannot even give a sense to $\h(Q_h)$) and this explains the loss of regularity in Theorem \ref{th-loc-decay} (see Proposition \ref{prop-Uc1} below). Proving such an improvement on high frequency estimates is an important problem on its own which cannot be solved trivially. In this paper we mainly focus on the low frequency part, which already requires a fairly long analysis. Notice however that even if they do not provide directly the sharpest inequalities (in terms of the regularity of the initial data) for the time dependent problem, the estimates of Theorem \ref{th-high-freq} are optimal in the sense that we recover the self-adjoint ones in the non-trapping case.\\

% We only have the estimate of Theorem \ref{th-high-freq}.\\

In order to prove Theorems \ref{th-inter-freq}, \ref{th-low-freq}, \ref{th-low-freq-bis} and \ref{th-high-freq} we use the commutators method of E. Mourre (\cite{mourre81}, see \cite{amrein} and references therein for an overview on the subject). In \cite{art-mourre}, the second author has generalized the original result of Mourre to the dissipative setting. Here we also extend the results of \cite{jensenmp84,jensen85} about the derivatives of the resolvent. More precisely we first study the powers of the resolvent, and then prove that we can insert some suitable factors between these resolvents.\\

%The paper is organized in the following way. We first collect some basic facts about the operators we work with. In particular we prove that they are maximal dissipative and deduce some important consequences. In Section \ref{sec-time} we prove that Theorems \ref{th-inter-freq}, \ref{th-low-freq} and \ref{th-high-freq} about resolvent estimates imply local energy decay (Theorem \ref{th-loc-decay}). Then we prove these three results, together with Theorem \ref{th-low-freq-bis}. In section \ref{sec-Mourre} we prove a dissipative version of the abstract multiple commutators method with inserted factors, and deduce Theorem \ref{th-inter-freq}. Then, after suitable rescaling, we apply it to low frequencies for a small perturbation of the free Laplacian in Section \ref{sec-low-freq}. The proof of Theorem \ref{th-low-freq} is completed in Section \ref{sec-non-small}. We apply the dissipative Mourre's method to high frequencies in Section \ref{sec-high-freq} and, finally, we prove in Section \ref{sec-laplacien} 
%that all these results also hold when the Laplacian in divergence form in replaced by a Laplace-Beltrami operator.\\

Let us close this introduction by fixing some general notation (for the reader's convenience, the main notations of the paper, including some technical ones, are recorded in Appendix \ref{section-notations}). We set
\[
\C_{\pm,+} = \singl{z \in \C \st \pm \Re (z) > 0 , \Im (z) > 0}.
\]
Given $I\subset \R$ we also define
\[
\C_{I,+} = \singl{z \in \C \st  \Re (z) \in I , \Im (z) > 0}.
\]
For $m \in \N$ we denote by $C_0^\infty(\R^m)$ the set of smooth and compactly supported functions on $\R^m$, $\Sc(\R^m)$ is the Schwartz class, $\symbor(\R^m)$ is the set of smooth functions whose derivatives of all orders are bounded, and for $\d \in \R$ we denote by $\symb^\d(\R^m)$ the set of smooth functions $u$ on $\R^m$ such that 
\[
 \forall \a \in \N^m, \exists c_\a \geq 0, \forall x \in \R^m, \quad \abs{\partial^\a u(x)} \leq c_\a \pppg x ^{\d -\abs \a}.
\]
For $j \in \Ii 1 d$ (everywhere we write $\Ii n m$ for $[n,m] \cap \N$) we set $D_j = -i\partial_j$. If $A$ and $H$ are operators, we denote by $\ad_A(H)$ the commutator $[H,A]$. \\

%Finally, when we give an estimate for an unbounded operator on a Hilbert space, it implicitly means that the operator extends to a bounded operator which satisfies the claim. %And when we prove an estimate, the symbol $\lesssim$ stands for $\leq C$ where $C \geq 0$ has the same dependance with respect to all variables as in the estimate we prove.

\noindent 
{\bf Acknowledgements.}
We are grateful to the referees for their careful reading of the first version of the paper and for suggesting several improvements on the presentation.
The research of the second author is partially supported by the French National Research Project NOSEVOL (Non self-adjoint operators, semiclassical analysis and evolution equations, ANR 2011 BS01019 01).

\section{Outline of the paper} \label{sec-outline}

In this section we describe the different sections of the paper, that is the global strategy of the proof of Theorem \ref{th-loc-decay}. We also give some heuristics which should help the reader to understand the key arguments. We explain in particular the main ideas of the proof of Theorem \ref{th-low-freq}, which is the most technical part of the paper.\\

% Details and rigorous arguments will be given thereafter.
% In a first step we check that the problem \eqref{wave-lap} has a solution defined for all positive times.
%
In Section \ref{sec-diss} we give some general properties of maximal dissipative operators. Once we have proved that the operator $\Ac$ defined by \eqref{def-Ac} is maximal dissipative, we obtain in particular the well-posedness of the problems \eqref{wave-A} and hence \eqref{wave-lap}. We also give some dissipative properties of the resolvent $R(z)$ (as defined in \eqref{def-Rz}) which will be useful later in the paper.\\

% As explained in the introduction, this is first proved for \eqref{wave-A}, using the properties of maximal dissipative operators.
% 
% 
% 
%  F resolvent of the operator $\Ac$ will be expressed in terms of the ``resolvent'' $R(z)$ defined in \eqref{def-Rz}
% 
% 
% 
%   For this we prove that . 
% 
% 
%  To this purpose we prove in Proposition \ref{prop-A-diss} that for $z \in \C_+$ we can write 
% \[
% (\Ac - z)\inv =  \begin{pmatrix} 
% 		 R(z) (ia + z) &   R(z)\\
%                I +  R(z) (zia + z^2) &  z R(z)
%               \end{pmatrix}  \in \Lc(\Hc),
% \]
% where $R(z)$ is . This resolvent $R(z)$ is well-defined for any $z \in \C$. More precisely $H_0 -iza(x)$ is maximal dissipative and $\Im(z^2) > 0$ when $\Re z > 0$, and $R(z) = R(-\bar z)^*$ when $\Im z < 0$ (see Proposition \ref{prop-R-diss}).\\

% Once we have a well-defined solution $u$ for the problem \eqref{wave-lap}, 

In Section \ref{sec-time} we derive the local energy decay stated in Theorem \ref{th-loc-decay} from the resolvent estimates of Theorems \ref{th-inter-freq}, \ref{th-low-freq} and \ref{th-high-freq}. For this we take the Fourier transform in time in the wave equation \eqref{wave-lap} and obtain in a standard way (at least formally)
% 
% 
% 
% the (inverse) Fourier transform of $u$:
% \[
% (\Four \inv u) (\t) = \frac 1 {2\pi} \int_0^{+\infty} e^{it\t} u(t) \, dt.
% \]
% It satisfies
% \[
% (\Four \inv u) (\t) = R(\t + i0) (au_0 -i\t u_0 +u_1),
% \]
%  Thus, after derivating $n$ times and taking the Fourier transform, we obtain formally  
\begin{equation} \label{fourier-integral}
(\nabla , \partial _t) u (t) = \frac 1 {(it)^n} \int_{\R} e^{-it\t} \frac {d^n}{d\t^n} \big((\nabla ,-i\t) R(\t+i0) (au_0 -i\t u_0 +u_1) \big) \, d\t.
\end{equation}
Here $R(\t + i0)$ denotes the limit of $R(\t + i\m)$ when $\m \searrow 0$. In this integral we introduce the following partition of unity in $\t$
\begin{eqnarray}
 1 = \chi_0 (\t) + \sum_{h \ {\rm dyadic}} \chi \left( h \t \right) , \label{partdyadic}
\end{eqnarray} 
with $ \chi_0 \in C_0^{\infty}(\R)  $, $ \chi \in C_0^{\infty}(\R \setminus 0) $,
 and treat separately the different regimes: intermediate frequencies ($\abs \t \sim 1$ ), low frequencies ($\abs \t \ll 1$) and high frequencies ($\abs \t \gg 1$).

The contribution of intermediate frequencies follows fairly easily from Theorem \ref{th-inter-freq}. Here $n$ can be as large as we wish, which means that the corresponding contribution decays rapidly in time.

The restriction on the time decay in Theorem \ref{th-loc-decay} is due to low frequencies. On the support of  $\chi_0 $, the integrand in (\ref{fourier-integral}) is controlled by Theorem \ref{th-low-freq} as long as $n \leq d-1$, since it is at worst of size $\abs \t^{-\e}$, which is integrable around 0. This gives a decay of order $O\big(t^{-(d-1)}\big)$. When $d = n$, we only have an estimate of size $\t^{-1-\e}$, which is why we do not reach a time decay of size $O\big(t^{-d}\big)$. But $\t^{-1-\e}$ is not far from being integrable, so with an interpolation argument (Lemma \ref{lem-holder2}) we can finally prove a decay of size $O\big(t^{-(d-\e)}\big)$ for any $\e > 0$.

On the support of $ 1 - \chi_0 (\t) $, the resolvent estimates (once appriopriatly weighted) are $O(1) $ in $ \tau $, which does not give integrability. We overcome this problem by introducing the dyadic decomposition in (\ref{partdyadic}) to be able to use an almost orthogonality argument.
%As for intermediate frequencies, the control of high frequencies in \eqref{fourier-integral} %will not depend on the order of differentiation $n$ and we have a fast decay in time. Let us %comment on how to justify the summability in $h$. 
The rough idea   is to convert the cutoff $ \chi (h \tau) $ into a spectral cutoff  $ \chi (h H_0^{1/2}) $.  This works well in the self-adjoint case but is more tricky here for the following reason. In the self-adjoint case, if $ \tilde{\chi} \equiv 1 $ near the support of $ \chi  $,  the Spectral Theorem provides  the estimate
$$ \nr{\chi (h \t) \big(1-\tilde{\chi}(hH_0^{1/2})\big) \big(H_0 - (\tau \pm i 0)^2 \big)^{-1}}_{L^2 \rightarrow L^2} \lesssim h^2 \sim \tau^{-2} $$
as well as similar estimates for powers of the resolvent. In our case, proving such an estimate when we replace the resolvent  $ \big(H_0 - (\tau \pm i 0)^2 \big)^{-1} $ by $ R (\tau + i 0)  $ (but keep of course the specral cutoff $ \big(1-\tilde{\chi}(hH_0^{1/2})\big)$ requires much more work, part of which is the purpose of Section \ref{sec-time}. Using this analysis we can then infer that, up to nicer remainders, we may replace $ R (\tau + i 0) $ by
$$ \tilde{\chi}(h H_0^{1/2}) R (\tau + i 0) \tilde{\chi}(h H_0^{1/2}) $$
which then allows to sum over $h$ by  almost orthogonality.
Let us point out that in this analysis there is no restriction on $n$ so that we prove a fast decay in time. \\
 %will not depend on the order of differentiation $n$ and we have a fast decay in time.

% the contribution of the integral involving $ \chi (h \tau) $ 

% Here we use a Littlewood-Paley decomposition. If $u_j$ denotes the contribution of %frequencies $\t \in [2^j,2^{j+1}]$, we have (formally) 
%\begin{equation} \label{decomp-uj}
%u_j = \h_j (H_0^{1/2}) u_j \h_j(H_0^{1/2}) + \text{rest},
%\end{equation}
%where $\h_j$ is equal to 1 on a neighborhood of $[2^j,2^{j+1}]$ and supported on %$[2^{j-1},2^{j+2}]$. In order to localize in frequency by means of the functionnal calculus, %we have to use a self-adjoint operator. Because of the difference between the operator $H_0-%iza(z)$ and its self-adjoint part $H_0$, it is quite technical to prove that the rest in %\eqref{decomp-uj} is indeed a rest (see Proposition \ref{prop-Uc23}). However, once this is %done, we can sum the main terms in \eqref{decomp-uj} using the estimates of Theorem 
%\ref{th-%high-freq} and almost orthogonality. It is at this step that we lose a derivative on %the initial data (see Proposition \ref{prop-Uc1}).

Then it remains to prove the resolvent estimates of Theorems \ref{th-inter-freq}, \ref{th-low-freq}, \ref{th-low-freq-bis} and \ref{th-high-freq}. 
They will all follow from the commutators method of Mourre. \\

In Section \ref{sec-Mourre}, we provide  a suitable version of the Mourre theory which allows to estimate powers of the resolvent, including inserted factors, for general dissipative perturbations of self-adjoint operators.  
%There are now many versions in the litterature. In particular, 
In \cite{art-mourre}, the second author considered dissipative operators of the form  $H = H_1 -i V$ where $V \geq 0$ is relatively bounded with respect to the self-adjoint operator $H_1$. It was proved there that if a  commutator estimate of the form 
\begin{equation} \label{outline-hyp-mourre}
\1 J (H_1) \big( [H_1,iA] + \b V \big) \1 J (H_1) \geq \a \1 J (H_1)
\end{equation}
holds for some self-adjoint operator $A$ (here $J \subset \R$, $\b \geq 0$, $\a > 0$),
then
% together with some other assumptions then for $I \Subset J$ and $\d > \frac 12$ we have 
\[
\sup _ {\substack{ \Re z \in I \Subset J \\ \Im z > 0}} \nr{\pppg A ^{-\frac{1}{2} - \epsilon} (H-z)\inv \pppg A ^{-\frac{1}{2} - \epsilon}}_{L^2 \rightarrow L^2} < \infty.
\]
When $H$ is (a perturbation of) the flat Laplacian on $\R^n$, one usually takes (a perturbation of) the generator of dilations for $A$. This will be the case here. 

In the self-adjoint case, it is also well known that one has similar estimates for powers of the resolvent.
%This uniform resolvent estimate has been proved in \cite{art-mourre}. Following the ideas of the self-adjoint case given in \cite{jensenmp84,jensen85}, we give in Section \ref{sec-Mourre} uniform estimates for the powers of the resolvent. But this is not enough.
As pointed out in  \eqref{expr-der-R}, the derivatives of $R(z)$ involve not only powers of $R(z)$ but also such powers with inserted factors. In Section \ref{sec-Mourre}, we show in a general setting that we can estimate both pure powers of $  (H-z)^{-1}$ and such powers with inserted operators $\Th$,
%in the uniform estimates given by Mourre theory, 
as long as $\Th$ has reasonnable commuting properties with the conjugate operator $A$. Since $\Th$ is inserted between resolvents of $H$, we can even allow unbounded operators $\Th$ which are relatively bounded with respect to $H$. This will be important in our case even if the multiplication by $a(x)$ is a bounded operator on $L^2$.\\

In Section \ref{sec-inter-freq}, we derive fairly directly Theorem \ref{th-inter-freq}, about intermediate frequencies, from the general theory of Section \ref{sec-Mourre}. Even here we have to use a parameter-dependent version of the resolvent estimates (since the operator $H_0 - i z a (x)$ depends on the spectral parameter $z$), but this does not rise any particular difficulty in this case.\\

%The proof of Theorem \ref{th-low-freq} about low frequencies is much more problematic. As is well-known, the reason is that uniform Mourre method with a Laplacian and the generator of dilations only works near a positive energy.\\

Low frequencies are much more problematic. As is well-known, the reason is that the standard Mourre method with a Laplacian and the generator of dilations only works near a positive energy.\\

In Section \ref{sec-small} we prove Theorems \ref{th-low-freq} and \ref{th-low-freq-bis}. We  observe first that
\begin{eqnarray} R^{(n)}(z) = \mbox{linear combination of } \ z^k R (z) a^{j_1} R (z) \cdots R (z) a^{j_m} R (z)  \label{rec2}
\end{eqnarray}
with $0 \leq m \leq n$, $j_1,\dots,j_m \in \{0,1\}$ and $n + k + j_1 + \cdots + j_m = 2 m$ (see Proposition \ref{prop-der-R2}).
We explain here the ideas in the cases where there is at most one inserted factor $a$, that is $\sum j_k \leq 1$. 

Our analysis begins with the scaling argument of \cite{bouclet11}. We introduce
\begin{eqnarray*}
 \Ptau =e^{-i A \ln |z|} \frac{H_z}{|z|^2} e^{i A \ln |z|} % =  - \mbox{div} \big(G_{|z|}(x) \nabla \big) - i \frac{a_{|z|}}{|z|} 
\quad \text{and} \quad \tRz = \big( \Ptau - \hat{z}^2 \big)^{-1},
\end{eqnarray*}
where $\hat z = z /\abs z$ and $e^{iA\ln\abs z}$ is the dilation by $\abs z$ : $\big(e^{iA\ln\abs z} u \big)(x) = |z|^{d/2} u(\abs z x)$.
For $\Re z \geq 0$, the new spectral parameter $\hat z^2$ only approaches the real axis at point 1, which is what we want to use the Mourre theory. To implement  this idea, we only consider a small perturbation of the Laplacian. This amounts to substract a compactly supported part to the full perturbation whose contribution will be studied afterwards by a compactness argument which we do not discuss here (see Subsection \ref{sec-non-small}). Then, the smallness of $-\Delta - \Ptau$ will (in particular) allow to apply the Mourre  method with the generator of dilations $A$ (see Proposition \ref{prop-estim-res-amort}).

When all $ j_k $ vanish, we have to estimate operators of the form
\begin{equation} \label{interminable}
|z|^k \pppg{x}^{-\delta}  R (z)^{m+1}  \pppg{x}^{-\delta} = \abs z^{k - 2m-2} \pppg{x}^{-\delta} e^{iA\ln\abs z} \tRz ^{m+1} e^{-iA\ln\abs z}\pppg{x}^{-\delta}.
\end{equation}
Using the resolvent identity for $\Ptau$, this term can be rewritten as a sum of terms of the form 
\begin{equation*} %\label{outline-Ez}
E_z := \abs z^{k - 2m-2} \pppg{x}^{-\delta}  e^{i A \ln |z|} \big( \Ptau + 1 \big)^{-(m_e+1)} e^{-iA \ln|z|} \pppg{x}^{-\delta}
\end{equation*}
where $m_e \geq m$, and
\[
\Sigma_z :=  \abs z^{k - 2m-2} \pppg{x}^{-\delta}  e^{i A \ln |z|} \big( \Ptau + 1 \big)^{-N_1} \tRz^{m_s+1} 
\big( \Ptau + 1 \big)^{-N_2} e^{-iA \ln|z|} \pppg{x}^{-\delta},
\]
where $ m_s \leq n$ and $N_1,N_2$ are as large as we wish. We omit some factors $(1+\hat z^2)$ which do not play any role. Since $\Ptau$ is close to the Laplacian, we can prove the elliptic estimates
\begin{equation} \label{estim-res-outline}
\nr{\big( \Ptau + 1 \big)\inv}_{H^{s-1} \to H^{s+1}} \lesssim 1 \quad \text{for } s \in \big] - d/2 ,d/2\big[,
\end{equation}
where the limitation in the range of $s$ is due to the form of the coefficients of $ \Ptau $ (typically $ G_{jk}(x/|z|) $). If $m_e < \frac d 2$ then $\big( \Ptau + 1 \big)^{-(m_e+1)}$ is uniformly bounded from $H^{-m_e}$ to $H^{m_e}$. Using the Sobolev embedding $H^{m_e} \subset L^{q}$ with $q = \frac {2d}{d-2{m_e}}$ and the simple but crucial estimate
\begin{equation} \label{estim-eiA-outline}
\nr{e^{iA\ln\abs z}}_{L^{q} \to L^{q}} \lesssim \abs{z} ^{m_e},
\end{equation}
together with adjoint estimates (see Proposition \ref{prop-A}), we finally obtain that the elliptic term $E_z$ is uniformly bounded on $L^2$ (we use the weight $\pppg x^{-\d}$ to map $L^{q}$ to $L^2$). When ${m_e}$ is too large we only get $\abs z^{d-\e}$ in (\ref{estim-eiA-outline}) by this method and then $E_z$ is of size $O \big(\abs{z}^{d-\e - n-2}\big)$ (the possible removal of $\e$ will be discussed later).
% and refer to the discussion below about Theorem \ref{th-low-freq-bis}).

We next consider $ \Sigma_z $.
Using the dissipative Mourre theory of Section \ref{sec-Mourre}, we know that the operator 
\begin{equation} \label{outline-mourre}
\pppg {A}^{-\d}   \tRz^{m_s+1} \pppg {A}^{-\d}
\end{equation}
is uniformly bounded as an opertor on $L^2$. Thus we have to estimate an operator of the form 
\[
W_z = \pppg x ^{-\d} e^{iA \ln \abs z} \big( \Ptau + 1 \big) ^{-N_1} \pppg A ^\d.
\]
Let $s \in\big[0, \frac d 2\big[$ and $q = \frac {2d}{d-2s}$. Using Sobolev embeddings we obtain 
\[
\nr{W_z}_{L^2 \rightarrow L^2} \leq \nr{\pppg x ^{-\d} e^{iA \ln \abs z}  \big(1 + \abs x^\d\big)}_{H^s \to L^2} \nr{\big(1 + \abs x^\d\big)\inv \big( \Ptau + 1 \big) ^{-N_1} \pppg A ^\d}_{L^2 \to H^s}. % \lesssim  \nr{\pppg x ^{-\d} e^{iA \ln \abs z}  \big(1 + \abs x^\d\big)}_{H^s \to L^2}.
\]
The second factor is uniformly bounded. The rough idea is that the powers of $x$ given by $\pppg A^\d \simeq 1 + \abs x^\d \abs D^\d$ are controlled by $(1 + \abs x^\d)\inv$ and the derivatives by the resolvents. Of course this is not simple pseudo-differential calculus, so many explicit commutators will be involved. In particular we recall that there is a restriction for the Sobolev index in \eqref{estim-res-outline}, so we cannot simply control a derivative of high order by $\big( \Ptau + 1 \big) ^{-N_1}$ even if $N_1$ is large. For the first factor we write 
\begin{equation} \label{decomp-delta}
\pppg x ^{-\d} e^{i A \ln \abs z} \big(1 + \abs x^\d\big) = \pppg x ^{-\d} e^{i A \ln \abs z} +  \abs z^\d \pppg x ^{-\d} \abs x^\d e^{iA\ln \abs z} .
\end{equation}
The second term in \eqref{decomp-delta} is clearly of order $\abs z^\d$, and for the first term we use \eqref{estim-eiA-outline}. We remark that the weight is used either to go from $L^q$ to $L^2$ or to control the powers of $x$ given by $\pppg A^\d$, but not for both at the same time. Finally, if $\d \geq s$ we obtain that $W_z$ is of size $O(\abs z^s)$ and hence $\Sigma_z$ is of size $O\big(\abs z^{2s-n-2}\big) = O\big(\abs z^{d-\e - n- 2}\big)$ 
 since $s$ can be chosen arbitrarily close to $ d/2 $. \\

%We now consider the situation where only one the the $j_k$s is equal to one and the other %vanish. This means we have to replace $ (\Pau + 1)^M $ in $ E_z $ by $ (\Ptau + %1)^{-\alpha_0} a_ $

% Since the powers $z$ and the factors $a$ come from $ (2 z + i a) $ (see ), we wish to see %that the contribution of $a$ will be at most of size $ z $ (as the factor $2z$). Indeed  
%after the rescaling step, the inserted factor becomes $a_{\abs z} := a(\cdot / \abs z) = %e^{-iA\ln\abs z} a e^{iA \ln \abs z}$. It turns out that, since $a$ is of short range, we %have
%\begin{equation} \label{outline-az}
%\forall s \in \big] - d/ 2 ,  d/ 2 -1\big[ , \quad \nr{a_{\abs z}}_{H^{s+1} \to H^s} %\lesssim \abs z
%\end{equation}

To study next the case where $ \sum_k j_k = 1 $, it is useful to recall that the inserted factors come from \eqref{expr-der-R}, where factors $(2z +ia)$ appear instead of $2z$ is the self-adjoint case. It is thus natural to seek an estimate of size $O(\abs z)$ for the contribution of $a$. This is actually possible for the following reason. 
% if we want the same estimates as in the self-adjoint case. 
After rescaling, the contribution of the inserted factor reads $a_{\abs z} := a(\cdot / \abs z) = e^{-iA\ln\abs z} a e^{iA \ln \abs z}$. Since $a$ is of short range, it turns out  that
\begin{equation} \label{outline-az}
\forall s \in \big] - d/ 2 ,  d/ 2 -1\big[ , \quad \nr{a_{\abs z}}_{H^{s+1} \to H^s} \lesssim \abs z
\end{equation}
(see Proposition \ref{prop-dec-sob}). So $a_{\abs z}$ behaves like a derivative and is indeed of size $O(\abs z)$ at low frequencies.
% (notice that we also use this property to prove \eqref{estim-res-outline}).
% The only remaining difficulty is that $a_{\abs z}$ does not commute with the resolvents of %$\Ptau$.  

This being said, let us come back to the estimate of the analogue of (\ref{interminable}) when one  of the $j_k$ of (\ref{rec2}) is equal to $1$.
%
%, will produce terms similar to $ E_z $ and $ \Sigma_z $ where one factor $ a_{|z|} $ will %be inserted either between powers of $ \tilde{R} (z) $ or between powers of $ (\Ptau + %1)^{-1}  $. Then, by combining (\ref{estim-res-outline}), (\ref{outline-az}) and the %analogue of \ref{outline-mourre} for resolvents with inserted factors, we can repeat the %same analysis as above when all $ j_k $ vanish. 
%The other cases when $ \Sum_j_k \geq 2 $ are similar.
When estimating a term involving $ (\Ptau + 1)^{-m_1} a_{|z|} (\Ptau + 1)^{- m_2 }$ with $ m_1 + m_2 = {m_e}+1 $, %we observe that when $M$ is small, 
 (\ref{outline-az}) costs one derivative but provides one power of $|z|$. This loss of derivative may in some cases be at the expense of using a slightly worse Lebesgue exponent in (\ref{estim-eiA-outline}) but, in the end, we  recover the same estimates as when there was no inserted factors.
%  
 %
 % so we recover the same (good) estimate as before. When $M$ is large, the loss of %regularity is harmless since we have more resolvents than we need, so the factor $a_{\abs %z}$ really give a new power of $\abs z$, as did the factor $2z$ before. The same idea apply %to an inserted factor in an operator like $W_z$. 
 Concerning an inserted factor $a_{\abs z}$ in \eqref{outline-mourre}, we can directly apply the abstract results of Section \ref{sec-Mourre}. Note that even if $a_{\abs z}$ is uniformly bounded on $L^2$, we have to use the version with unbounded inserted factors (Theorem \ref{th-estim-insert2}) to use \eqref{outline-az}.

For the second estimate in Theorem \ref{th-low-freq} we only observe that we can proceed for a derivative as we did for $a_{\abs z}$, except that we do not have the restriction on the Sobolev index as in \eqref{outline-az}.\\

The argument described in the previous paragraph allows to prove Theorem \ref{th-low-freq} where we have an $\e$ loss (which is harmless for the time decay estimate). Let us now briefly explain how to remove this loss in certain cases (this is the purpose of Theorem \ref{th-low-freq-bis}).
% 
% In the previous paragraphs we have seen that the resolvent estimates with inserted factors are as good as the estimates of powers of the resolvent. In fact we can even use these inserted factors to obtain the sharper estimates of Theorem \ref{th-low-freq-bis}.
% 
Consider for instance
% the estimate of $E_z$ and more precisely the term 
$\abs z^{-4} \big( \Ptau + 1 \big) \inv a_{\abs z} \big( \Ptau + 1 \big) \inv$ which appears when $d = 3$ and $n=1$. It is of size $O(\abs z^{-3})$ as an operator from $H^{-3/2}$ to $H^{3/2}$, which is critical for Sobolev embeddings. Using only \eqref{estim-eiA-outline} we  get an estimate of size $O(\abs z ^{-\e})$. Now we can improve \eqref{outline-az} as follows: if $\s < \rho$, $\rho$ given by \eqref{dec-metric-a}, we can modify (\ref{outline-az}) and see $a_{\abs z}$ as a derivative of order $1+\s$, and hence of size $\abs {z}^{1+\s}$ (see Proposition \ref{prop-Th}). Then our operator is now of size $O(\abs z^{-3 + \s})$ as an operator from $H^{-(3-\s)/2}$ to $H^{(3-\s)/2}$. We avoid the critical Sobolev index and 
get a uniform estimate. The point is that we can always do this, except when $d$ is even and $n = d-2$. For instance when $d=4$ and $n=2$ there is a term $\abs z^{-4} \big( \Ptau + 1 \big) ^{-2}$ which cannot be estimated uniformly by this method. \\
%This concludes the proofs of low frequency resolvent estimates.

In Section \ref{sec-high-freq} we deal with high frequencies. The proof of Theorem \ref{th-high-freq} is also quite technical but the ideas are mostly well-known (see \cite{art-mourre}), so we only give a quick overview here. In the self-adjoint case, the idea is to apply Mourre Theory to the semi-classical Schr\"odinger operator with a conjugate operator given by the quantization of an escape function (according to the trick of \cite{gerardm88}).
%, we can chose a compactly supported perturbation of the symbol of the generator of dilations. 
In our dissipative setting, we construct a symbol which is increasing along the flow outside the damping region. Roughly speaking, that we can relax the usual non-trapping condition comes from the term $\b V$ in \eqref{outline-hyp-mourre} which provides additional positivity in the damping region.\\

In Section \ref{sec-laplacien}, we prove the analogue of Theorem \ref{th-loc-decay} when $ H_0 $ is replaced by a Laplace-Beltrami operator associated to a long range perturbation of the flat metric. 
%The idea here is that, by choosing suitable coordinates, such an operator is a compactly supported perturbation of $ H_0 $ to which we can apply a simple perturbation argument.
% to derive  low frequency estimates (the medium and high frequency cases )

\section{Resolvent of dissipative operators} \label{sec-diss}

In this section we record some general properties on the spaces and the unbounded operators we shall use. 
%In particular we prove that the operators are maximal dissipative. From this 
In particular, we deduce  that the problem \eqref{wave-lap} has a solution defined for all positive times, and that the resolvent $R(z)$ is well-defined for $z \in \C_+$ with nice properties away from the real axis.\\

We first remark that the norm on $\Hc$ is equivalent to the norm $(u,v) \mapsto \nr{\nabla u}_{L^2} + \nr v_{L^2}$ of $\dot H^1 \times L^2$. Then we recall the following classical proposition:

\begin{proposition} \label{proprieteH0} 
\begin{enumerate}[(i)] 
\item The space $ {\mathcal S}({\mathbb R^d} ) $ is stable by the resolvent of $ H_0 $ and by $ f (H_0) $ for any $ f \in C_0^{\infty}({\mathbb R}) $.
\item If $ f \in C_0^{\infty}({\mathbb R}) $, $ f (H_0) $ maps compactly supported $L^2$ functions into $ {\mathcal S} ({\mathbb R}^d) $.
%\item The spaces $ H^{1/2} $ and $ {\mathcal D}(H_0^{1/2}) $ coincide. {$H^{1/2}$ ?}% and (\ref{racineLaplacien}) holds thereon.}  
\end{enumerate}
\end{proposition}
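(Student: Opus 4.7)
Plan. The approach I would take relies on two classical ingredients: elliptic regularity for the uniformly elliptic operator $H_0$, and a Combes--Thomas type exponential decay estimate on the resolvent kernel of $H_0 - z$ away from the real spectrum.

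For (i) (resolvent part): given $u \in \Sc(\R^d)$ and $v := (H_0 - z)^{-1} u$, smoothness $v \in C^\infty \cap \bigcap_k H^k(\R^d)$ follows by iterating the elliptic estimate $\nr{v}_{H^{k+2}} \lesssim \nr{u}_{H^k} + \nr{v}_{H^k}$, which holds because $H_0$ is uniformly elliptic with coefficients bounded together with all derivatives thanks to \eqref{dec-metric-a}. For the spatial decay, I would argue either via the Combes--Thomas kernel bound $\abs{K_z(x,y)} \leq C e^{-c\abs{x-y}}$ (obtained by conjugating $H_0$ with $e^{\alpha\cdot x}$ for $\abs{\alpha}$ small, which produces only a bounded perturbation), splitting $u = \chi_R u + (1-\chi_R) u$ with $\chi_R$ a cutoff to $\{\abs{x} \leq R\}$ and letting $R\to\infty$; or via an induction on $N$ showing that $\pppg{x}^N v \in \bigcap_k H^k$. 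The inductive step uses that $[H_0, \pppg{x}^{N+1}]$ is a first order differential operator with coefficients in $\symb^N(\R^d)$, so that
\[
 (H_0 - z)\big(\pppg{x}^{N+1} v\big) = \pppg{x}^{N+1} u + [H_0, \pppg{x}^{N+1}] v
\]
has right-hand side in every $H^k$ by the induction hypothesis and $u \in \Sc$. The slightly delicate point is that one does not a priori know $\pppg{x}^{N+1} v \in L^2$; this is handled by a standard truncation-and-testing argument applying $\chi(\cdot/R)$, using the resolvent estimate for $H_0 - z$, and letting $R \to \infty$.

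For the functional calculus part of (i), I rely on the Helffer--Sj\"ostrand formula
\[
 f(H_0) = -\frac{1}{\pi} \int_{\C} \overline{\partial} \tilde f(z)\, (H_0 - z)^{-1}\, dL(z),
\]
with $\tilde f$ a compactly supported almost-analytic extension of $f$ satisfying $\abs{\overline{\partial} \tilde f(z)} \leq C_N \abs{\Im z}^N$ for every $N$. The $\Sc$-seminorms of $(H_0 - z)^{-1} u$ obtained from the previous step blow up at most polynomially in $1/\abs{\Im z}$, which is absorbed by the almost-analyticity of $\tilde f$, so the integral converges in $\Sc(\R^d)$.

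For (ii), smoothness of $f(H_0) u$ whenever $u \in L^2$ is immediate from functional calculus: $\l \mapsto \l^k f(\l)$ is bounded for every $k$, so $H_0^k f(H_0) u \in L^2$ and elliptic regularity promotes this to $f(H_0) u \in H^{2k}$ for every $k$. When $u$ is compactly supported, the rapid decay of $f(H_0) u$ follows by injecting into the Helffer--Sj\"ostrand formula the Combes--Thomas estimate: since $u$ has compact support, $(H_0 - z)^{-1} u$ decays exponentially in $\abs{x}$ uniformly for $z$ bounded away from the real axis (and with constants growing only polynomially in $1/\abs{\Im z}$), so that $f(H_0) u \in \Sc(\R^d)$. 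The main technical effort is in establishing the Combes--Thomas estimate and its derivatives uniformly in $z$ near (but away from) the real axis; this is the only step where some care is needed, but it is classical for self-adjoint second order uniformly elliptic operators with bounded coefficients.
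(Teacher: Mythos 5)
The paper does not actually prove Proposition \ref{proprieteH0}; the authors label it as classical and move on without argument, so there is no proof in the text to compare against. Your proposal is a correct and complete way to fill in the missing proof: elliptic regularity (iterated) gives $C^\infty$-smoothness, the inductive commutator argument with $[H_0,\pppg{x}^{N+1}]$ (or equivalently Combes--Thomas) gives polynomial decay, and the Helffer--Sj\"ostrand formula transfers these to $f(H_0)$. One point worth being a little more explicit about: in the Combes--Thomas estimate the exponential decay \emph{rate} itself degenerates like $e^{-c\,\Im z\,|x|}$ as $\Im z \to 0$, not only the prefactor; your parenthetical remark that ``constants grow polynomially in $1/\Im z$'' slightly understates this. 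It still works, because for each $N$ the almost-analyticity bound $|\bar\partial\tilde f(z)|\lesssim|\Im z|^{N}$ gives, after integrating in the $\Im z$ variable, $\int_0^1 s^{N-1}e^{-cs|x|}\,ds = O(|x|^{-N})$, which is exactly rapid decay; but the estimate near the real axis, where the conjugation parameter $\alpha$ must shrink with $\Im z$, is the step that needs to be written out carefully rather than appealed to as ``classical.'' The rest of the argument (elliptic regularity, the truncation to justify the a priori weighted $L^2$ membership, and the functional-calculus reduction of derivatives via $\lambda^k f(\lambda) \in C_0^\infty$) is sound.
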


Next, to emphasize that $ \dot{H}^1 $ is not contained in $L^2$, we record the following characterization
$$ \dot{H}^1 = \left\{ u \in L^{\frac{2d}{d-2}} \st \nabla u \in L^2 \right\} , $$
where $ \nabla u $  is taken in the  distributions sense.\\

The space $ {\mathcal D}({\mathcal A}) $ introduced in \eqref{dom-A} will be equipped with the
norm
$$ \nr{(u,v)} _{{\mathcal D}({\mathcal A})} = \nr{(u,v)}_{\mathcal H} + \nr{H_0 u}_{L^2} + \nr{\sqrt {\Ho} v}_{L^2} . $$
Notice that, if $ (u,v) \in {\mathcal D}({\mathcal A}) $, $ u $ does not belong to the domain of $ H_0 $ in general since we do not know that $ u \in L^2 $. In practice, we can get rid of such a problem by using the following proposition.

\begin{proposition} \label{densite-Banach}
\begin{enumerate}[(i)] 
\item{ The space $ {\mathcal S}({\mathbb R^d} ) \times {\mathcal S}({\mathbb R}^d) $ is dense in $ {\mathcal D}({\mathcal A}) $. }
\item{ $ {\mathcal D} ({\mathcal A}) $ is a Banach space.} 
\end{enumerate}
\end{proposition}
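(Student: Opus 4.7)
The plan is to establish (ii) first, as a closedness argument, and then to address (i) by a truncate-and-mollify procedure on the $u$-component.

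For (ii), let $(u_k,v_k)_{k\in\N}$ be Cauchy in $\Dom(\Ac)$. Completeness of $\Hc$ and $L^2(\R^d)$ furnishes a limit $(u,v)\in\Hc$ together with $f,g\in L^2$ such that $\Ho u_k\to f$ and $\Ho^{1/2}v_k\to g$ in $L^2$. Closedness of the self-adjoint operator $\Ho^{1/2}$ combined with the convergence $v_k\to v$ in $L^2$ forces $v\in D(\Ho^{1/2})$ with $\Ho^{1/2}v=g$. On the $u$-slot, the convergence $u_k\to u$ in $\dot H^1\hookrightarrow\Sc'(\R^d)$ entails $\Ho u_k\to \Ho u$ in $\Sc'$, and matching with $\Ho u_k\to f$ in $L^2$ identifies $\Ho u=f\in L^2$ in the distributional sense required by \eqref{dom-A}. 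Thus $(u,v)\in\Dom(\Ac)$ and $(u_k,v_k)\to(u,v)$ in the graph norm.

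For (i), I would approximate $(u,0)$ and $(0,v)$ independently. The second contribution is straightforward: the norm induced on $\{0\}\times L^2$ is equivalent to the standard $H^1(\R^d)$-norm, and $\Sc(\R^d)$ is dense in $H^1$. The essential work concerns $(u,0)$: given $u\in\dot H^1$ with $\Ho u\in L^2$ as a tempered distribution, approximate $u$ in the norm $\|\Ho^{1/2}\cdot\|_{L^2}+\|\Ho\cdot\|_{L^2}$ by elements of $\Sc(\R^d)$.

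My plan is a two-step procedure. Step~1 (truncation): fix $\chi\in C_0^\infty(\R^d)$ with $\chi\equiv 1$ near the origin, set $\chi_n(x)=\chi(x/n)$, and show $\chi_n u\to u$ in both $\dot H^1$ and in the $\Ho$-norm. Since $d\geq 3$, the Sobolev embedding $\dot H^1\hookrightarrow L^{2d/(d-2)}$ gives $u\in L^2_\loc$, and elliptic regularity for $\Ho$ with smooth coefficients then promotes this to $u\in H^2_\loc(\R^d)$; hence $\chi_n u\in H^2(\R^d)$ with compact support. Convergence in $\dot H^1$ follows from $\nabla(\chi_n u)=\chi_n\nabla u+u\nabla\chi_n$: the first term converges by dominated convergence, and the second is estimated via Hölder $L^{2d/(d-2)}\cdot L^d\hookrightarrow L^2$, the scale invariance $\|\nabla\chi_n\|_{L^d}=\|\nabla\chi\|_{L^d}$, and the tail decay $\|u\|_{L^{2d/(d-2)}(|x|\gtrsim n)}\to 0$. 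For the $\Ho$-norm I expand $\Ho(\chi_n u)=\chi_n\Ho u+[\Ho,\chi_n]u$ and use the explicit identity
\begin{equation*}
[\Ho,\chi_n]u = -2\,G\nabla\chi_n\cdot\nabla u - u\,\divg(G\nabla\chi_n),
\end{equation*}
handled by the same Hölder scheme together with the boundedness and derivative bounds on $G$ from \eqref{dec-metric-a}. Step~2 (mollification): for each $n$, convolve $\chi_n u$ with a standard mollifier $\rho_\varepsilon$; the result lies in $C_0^\infty(\R^d)\subset\Sc(\R^d)$ and converges to $\chi_n u$ in $H^2(\R^d)$. Since $\Ho$ and $\Ho^{1/2}$ map $H^2$ and $H^1$ respectively boundedly into $L^2$ by ellipticity, this transfers to convergence in the target norm. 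A diagonal extraction then produces the desired Schwartz approximating sequence.

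The main obstacle I anticipate is Step~1, specifically showing $[\Ho,\chi_n]u\to 0$ in $L^2$ when $u$ is a priori not in $L^2(\R^d)$. One genuinely needs the tail behaviour of $u$ in $L^{2d/(d-2)}$ together with the sharp $L^d$-scaling of derivatives of $\chi_n$ and the fact that their supports are pushed to infinity; in particular, the hypothesis $d\geq 3$ is used in an essential way through the relevant Sobolev embedding.
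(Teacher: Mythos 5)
Your argument is correct, and it takes a genuinely different route from the paper's for part (i). Where the paper truncates with $\chi(\e x)$ and controls the error term $\e\,(\nabla\chi)(\e x)\,u$ via the Hardy inequality applied to $|x|^{-1}u\in L^2$, you instead invoke the Sobolev embedding $\dot H^1\hookrightarrow L^{2d/(d-2)}$ together with the $L^d$-scale-invariance of $\nabla\chi_n$; both exploit the same scaling, so this stage is essentially equivalent (Hardy and Sobolev in $\dot H^1$ carry the same information in $d\geq 3$). The more substantive divergence is in the second stage: after truncation, the paper applies the spectral cutoff $\chi(H_0/n)$ and relies on Proposition \ref{proprieteH0}(ii) — the fact that for $f\in C_0^\infty(\R)$, $f(H_0)$ maps compactly supported $L^2$ functions into $\Sc(\R^d)$ — whereas you mollify by convolution and instead rely on interior elliptic regularity to promote $\chi_n u$ from $H^1_\loc$ to $H^2_\loc$ so that $\chi_n u\in H^2(\R^d)$ and mollification converges in $H^2$. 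Your route trades the spectral smoothing lemma for classical elliptic regularity; both are standard, but the mollification version is more elementary and self-contained, while the paper's spectral version is closer in spirit to the functional-calculus machinery used throughout the rest of the paper. (Your closedness argument for (ii) is the same routine one the paper alludes to, and splitting $(u,v)$ into the two slots is harmless since the graph norm decouples in $u$ and $v$.)
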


\begin{proof} Most of the proof is routine. We only sketch the main points of the item (i).  Let $ \chi \in C_0^{\infty}({\mathbb R}) $ be equal to $ 1 $ near $ 0 $. We show first that, if $ (u,v) \in {\mathcal D}({\mathcal A}) $, then
$$ (u_{\e} , v_{\e}) := \big( \chi (\e x) u , \chi (\e x) v \big) $$
also belongs to $ {\mathcal D}({\mathcal A}) $ and converges to $ (u,v) $ for $ \nr{\cdot }_{{\mathcal D}({\mathcal A})} $. For this purpose, we use that $ \e \nabla \chi (\e x) u $ goes to zero in $ L^2 $ since the operator $ \abs{x} \e \nabla \chi (\e x) $ goes strongly to zero on $ L^2 $ and $ \abs{x}^{-1} u $ belongs to $ L^2 $ by the Hardy inequality. A similar argument shows that $ H_0 u_{\e} \rightarrow H_0 u $ in $ L^2 $. Then, it suffices to approach $ (u_{\e},v_{\e}) $ by Schwartz functions.  We introduce
$$ (u_{\e,n} , v_{\e,n}) := \big( \chi (H_0 / n) u_{\e} , \chi (H_0 / n) v_{\e} \big) . $$
By the item (ii) of Proposition \ref{proprieteH0}, this pair belongs to $ {\mathcal S} ({\R}^d) $. This uses in particular  that $ u_{\e} \in L^2 $ which also shows that $ u_{\e}  $ belongs to the domain of $ H_0$. It only remains to see that $ (u_{\e,n} , v_{\e,n}) \rightarrow  (u_{\e} , v_{\e}) $ in $ {\mathcal D} ({\mathcal A})$, which is clear since $u_{\e,n} \to u_\e$ in $\Dom(H_0)$ and $v_{\e,n} \to v_\e$ in $\Dom(H_0^{1/2})$.
%
% follows then easily from the fact that the norm of $ {\mathcal D}({A}) $ is equivalent to
% $$ \nr{H_0^{1/2} u}_{L^2} + \nr{v}_{L^2} + \nr{ H_0^{1/2} v }_{L^2} + \nr{H_0 u}_{L^2} $$
% by item (iii) of Proposition \ref{proprieteH0}.
\end{proof}

Let us now introduce dissipative operators:

\begin{definition} \label{def-diss}
We say that the operator $T$ with domain $\Dom(T)$ on the Hilbert space $\Kc$ is \emph{dissipative} if 
\[
 \forall \f \in \Dom(T), \quad \Im \innp {T \f } \f \leq 0
\]
(here the inner product is anti-linear on the right).
Moreover $T$ is said to be \emph{maximal dissipative} if it has no other dissipative extension on $\Kc$ than itself. 
\end{definition}

A dissipative operator $T$ is maximal dissipative if and only if $(T-\z)$ has a bounded inverse on $\Kc$ for some (and hence any) $\z \in \C_+$. In this case we have
\begin{equation} \label{estim-res-diss}
\forall \z \in \C_+, \quad \nr{(T-\z)\inv} _{\Kc} \leq \frac 1 {\Im \z}.
\end{equation}
This estimate together with the Hille-Yosida Theorem proves that $-iT$ generates a contractions semigroup. Then for any $\f_0 \in \Dom(T)$, the function $\f : t \mapsto e^{-itT} \f_0$ belongs to $C^1 (\R_+ , \Kc) \cap C^0 (\R_+ , \Dom(T))$ and solves the problem 
\[
\begin{cases} (\partial _t  + iT) \f(t) = 0, \quad \forall t \geq 0, \\ \f(0) = \f_0.
\end{cases}
\]

\begin{proposition} \label{prop-R-diss}
\begin{enumerate}[(i)]
\item For all $z \in \C_{+,+}$ the operators $\Ho -iza$ and $-i(\Ho - iza)$ are maximal dissipative with domain $H^2$. 
% Moreover, for all $ z \in {\mathbb C} $,
% \begin{equation} \label{adjointC}
% \Ho + i \bar za(x) - \bar z^2 = \big( \Ho - i za(x) -  z^2\big) ^*, 
% \end{equation}
\item  For all $z \in \C_+$, the operator
$
\Ho -iza(x) -z^2
$
defined on $H^2$ has a bounded inverse on $L^2$
\[
R(z) = \big(\Ho -iza(x) -z^2\big) \inv, 
\]
as introduced in \eqref{def-Rz}. Moreover $R(-\bar z) = R(z)^*$.
\item There exists $C \geq 0$ such that for all $z \in \C_+$ we have
\[
 \nr{R(z)}_{\Lc(L^2)} \leq \frac C {\Im z (\Im z + \abs {\Re z})}.
\]
\end{enumerate}
\end{proposition}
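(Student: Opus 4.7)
I would derive all three items from one algebraic identity: for $z=\tau+i\mu\in\C_+$, one has
\[
H_0-iza=(H_0+\mu a)-i\tau a, \qquad -i(H_0-iza)=-\tau a-i(H_0+\mu a),
\]
in which $H_0+\mu a$ is self-adjoint on $H^2$ (since $\mu a$ is bounded and real) and the remaining terms are bounded self-adjoint. This displays the two operators appearing in (i) as, respectively, a self-adjoint operator plus a bounded perturbation with nonpositive imaginary part (when $\tau>0$) and $-i$ times the non-negative self-adjoint operator $H_0+\mu a$ plus a bounded self-adjoint perturbation; both are standard instances of maximally dissipative operators, giving (i).

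For (ii) I would cover $\C_+$ by three pieces. On $\C_{+,+}$, $\Im(z^{2})=2\tau\mu>0$, so applying \eqref{estim-res-diss} to the maximally dissipative $H_0-iza$ shifted by $z^{2}\in\C_+$ gives invertibility of $H_0-iza-z^{2}$. On the positive imaginary axis $z=i\mu$, the operator reduces to the positive self-adjoint $H_0+\mu a+\mu^{2}$, trivially invertible with norm at most $1/\mu^{2}$. For $z\in\C_{-,+}$, $-\bar z$ lies in $\C_{+,+}$ so $R(-\bar z)$ is already constructed, and a direct adjoint computation shows $(H_0-iza-z^{2})^{*}=H_0-i(-\bar z)a-(-\bar z)^{2}$; setting $R(z):=R(-\bar z)^{*}$ therefore produces a bounded two-sided inverse of $H_0-iza-z^{2}$ and simultaneously proves the symmetry $R(-\bar z)=R(z)^{*}$.

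For (iii), since $\nr{R(z)}=\nr{R(-\bar z)}$ and both $\Im z$ and $\abs{\Re z}$ are invariant under $z\mapsto-\bar z$, I may assume $\tau\geq 0$; the case $\tau=0$ is already covered by (ii). For $\tau>0$ I would distinguish $\tau\leq\mu/2$ and $\tau>\mu/2$. In the first region, the maximally dissipative realization $-i(H_0-iza)$ from (i), combined with $\Im(-iz^{2})=\mu^{2}-\tau^{2}\geq 3\mu^{2}/4$ and the identity $-i(H_0-iza-z^{2})=-i(H_0-iza)-(-iz^{2})$, yields through \eqref{estim-res-diss} the bound $\nr{R(z)}\leq 1/(\mu^{2}-\tau^{2})\leq 2/(\mu(\mu+\tau))$ (using $\mu-\tau\geq\mu/2$). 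In the second region, (i) applied to $H_0-iza$ and $\Im(z^{2})=2\tau\mu$ give $\nr{R(z)}\leq 1/(2\tau\mu)\leq 3/(2\mu(\mu+\tau))$ (using $\mu+\tau<3\tau$). The only real subtlety lies in (ii) on $\C_{-,+}$, where $H_0-iza$ itself is anti-dissipative and the adjoint identity $R(z)=R(-\bar z)^{*}$ is what restores the symmetry; once this is in hand, (iii) reduces to elementary bookkeeping.
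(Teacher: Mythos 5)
Your proof is correct and follows essentially the same route as the paper: maximal dissipativity of $H_0-iza$ and $-i(H_0-iza)$ on $H^2$ by a bounded dissipative perturbation of a (shifted) self-adjoint operator, invertibility of $H_0-iza-z^2$ via the dissipative resolvent bound on $\C_{+,+}$ and via adjoints on $\C_{-,+}$, and the quantitative estimate by splitting into $|\Re z|\lessgtr\Im z/2$ and applying the dissipative bound to $H_0-iza$ or to $-i(H_0-iza)$ accordingly. The only cosmetic difference is that you absorb $\mu a$ into the self-adjoint part $H_0+\mu a$, whereas the paper perturbs $H_0$ (resp.\ $-iH_0$) by the full bounded term $-iza$ (resp.\ $-za$); these are the same argument.
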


\begin{proof}
(i)\quad The operator $\Ho$ is self-adjoint and non-negative on $L^2$, with domain $ H^2 $. In particular $\Ho$ and $-i\Ho$ are maximal dissipative on $H^2$. If $z \in \C_{+,+}$ the operators $-iza(x)$ and $-za$ are dissipative and bounded on $L^2$, so the operators $\Ho -iza(x)$ and $-i\Ho - za$ are maximal dissipative on $H^2$ by a standard perturbation argument (see Lemma 2.1 in \cite{art-mourre}). 

\noindent
(ii) \quad If $z \in \C_{+,+}$ then $\Ho -iza$ is maximal dissipative and $\Im (z^2) > 0$ so $\Ho - iza(x) - z^2$ has a bounded inverse. For $z \in \C_{-,+}$ we can use the equality
\[
  \Ho - i za(x) -  z^2= \big( \Ho + i \bar za(x) - \bar z^2 \big) ^*.
\]
If $\Re z = 0$, then we only have to remark that $\Ho -iza(x)$ is self-adjoint and non-negative, and $z^2 < 0$. %That $R(-\bar z) = R(z)^*$ is a routine calculation.

\noindent
(iii) \quad According to \eqref{estim-res-diss} applied to $\Ho -iza$, we have for all $z \in \C_{+,+}$ (and similarly if $z \in \C_{-,+}$ according to (ii))
\[
 \nr{R(z)}_{\Lc(L^2)} \leq \frac 1 {2 \abs {\Re(z)} \Im (z)}.
\]
When $\abs{\Re z} \geq \frac {\Im z}2$, the right-hand side can be replaced by $C/ (\Im z (\Im z + \abs{\Re z}))$. If $\abs {\Re z} \leq \frac {\Im z} 2 $ we apply \eqref{estim-res-diss} with $T = -i\Ho -za$ and $\z = -iz^2$ and obtain
\[
\nr{R(z)}_{\Lc(L^2)} = \nr{(-i\Ho -za + i z^2)\inv }_{\Lc(L^2)} \leq \frac 1 {\Im (z)^2 - \Re (z)^2} \leq \frac 4 {3 \Im (z)^2},
\]
which gives the estimate in this case. This concludes the proof.
\end{proof}

% 
% We will consider  the operator
% \begin{equation*}
%  \Ac_0 = \begin{pmatrix} 0 & I \\ \Ho & 0 \end{pmatrix},
% \end{equation*}
% on $ {\mathcal D} ( {\mathcal A} ) \subset \Hc $ (see (\ref{dom-A})). It corresponds to $ {\mathcal A} $ when $ a \equiv 0 $. 
% 

\begin{proposition} \label{prop-A-diss}
\begin{enumerate}[(i)]
\item
For all $ z \in {\mathbb C}_+ $, the operator
\begin{equation} \label{res-Ac}
 \begin{pmatrix} 
		 R(z) (ia + z) &   R(z)\\
               I +  R(z) (zia + z^2) &  z R(z)
              \end{pmatrix} ,  
\end{equation}
defined on $ {\mathcal S} ({\mathbb R}^d) \times {\mathcal S} ({\mathbb R}^d) $, has a bounded closure in $\Lc(\Hc, \Dom(\Ac))$ which we denote by $R_\Ac(z)$.

 \item 
 The operator $\Ac$ defined in \eqref{def-Ac} is maximal dissipative on $\Hc$. Moreover for all $z \in \C_+$ we have
\[
 (\Ac-z)\inv
 = {\mathcal R}_{\mathcal A}(z).
\]
% Similarly $\Ac_0$ is self-adjoint on $\Dom(\Ac_0) = \Dom(\Ac)$, and for $z \in \C\setminus \R$ we have
% \[
% (\Ac_0-z)\inv
%  = {\mathcal R}_{{\mathcal A}_0}(z).
% \]
\end{enumerate}
\end{proposition}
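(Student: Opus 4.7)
The proof follows a standard pattern for maximal dissipative operators: first show $\mathcal{A}$ is dissipative, then verify that the matrix \eqref{res-Ac} gives an algebraic right inverse of $\mathcal{A}-z$ on a dense subspace, and finally extend by density. For dissipativity, I take $\phi=(u,v)\in\mathcal{S}(\R^d)\times\mathcal{S}(\R^d)$ (dense in $\mathcal{D}(\mathcal{A})$ by Proposition \ref{densite-Banach}) and compute $\mathcal{A}\phi=(v,H_0u-iav)$. Using $\langle H_0u,v\rangle_{L^2}=\langle G\nabla u,\nabla v\rangle_{L^2}=\langle H_0^{1/2}u,H_0^{1/2}v\rangle_{L^2}$, which holds by integration by parts in divergence form together with the quadratic-form definition of $H_0^{1/2}$, I get
\[
\langle \mathcal{A}\phi,\phi\rangle_{\mathcal{H}} \;=\; 2\,\Re\langle H_0^{1/2}u,H_0^{1/2}v\rangle_{L^2} \;-\; i\,\|a^{1/2}v\|_{L^2}^2,
\]
whose imaginary part is $-\|a^{1/2}v\|_{L^2}^2\leq 0$. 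Density extends this to all of $\mathcal{D}(\mathcal{A})$.

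For the inversion, I fix $z\in\mathbb{C}_+$ and $(f,g)\in\mathcal{S}\times\mathcal{S}$, and let $(u,v)$ be the image of $(f,g)$ under the matrix \eqref{res-Ac}. Since $(ia+z)f$, $g$ and $H_0 f$ all lie in $\mathcal{S}\subset L^2$, Proposition \ref{prop-R-diss} yields $u,v\in H^2\subset\dot H^1\cap L^2$, whence $(u,v)\in\mathcal{D}(\mathcal{A})$. Using the defining identity $R(z)(H_0-iza-z^2)=I$ on $H^2$ --- equivalently $R(z)H_0=I+R(z)(iza+z^2)$, which shows that the $(2,1)$ entry of \eqref{res-Ac} is really $R(z)H_0$ --- a direct computation gives $v=f+zu$ and $H_0u-iav-zv=g$, that is $(\mathcal{A}-z)\mathcal{R}_\mathcal{A}(z)(f,g)=(f,g)$ on $\mathcal{S}\times\mathcal{S}$.

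The two previous steps combine quickly into (i) and (ii). From dissipativity, $\|(u,v)\|_{\mathcal{H}}\leq (\Im z)^{-1}\|(f,g)\|_{\mathcal{H}}$, and then $\|\mathcal{A}(u,v)\|_{\mathcal{H}}=\|(f,g)+z(u,v)\|_{\mathcal{H}}\leq (1+|z|/\Im z)\|(f,g)\|_{\mathcal{H}}$. Since $\|\cdot\|_{\mathcal{D}(\mathcal{A})}$ is equivalent to $\|\cdot\|_{\mathcal{H}}+\|\mathcal{A}\cdot\|_{\mathcal{H}}$ (the extra term $iav$ being controlled on $L^2$ by $\|a\|_\infty\|v\|_{L^2}$), the matrix \eqref{res-Ac} is a bounded map $(\mathcal{S}\times\mathcal{S},\|\cdot\|_{\mathcal{H}})\to\mathcal{D}(\mathcal{A})$; density of $\mathcal{S}\times\mathcal{S}$ in $\mathcal{H}$ and completeness of $\mathcal{D}(\mathcal{A})$ (Proposition \ref{densite-Banach}) produce a unique extension $\mathcal{R}_\mathcal{A}(z)\in\mathcal{L}(\mathcal{H},\mathcal{D}(\mathcal{A}))$, which is (i). Using continuity of $\mathcal{A}-z\colon\mathcal{D}(\mathcal{A})\to\mathcal{H}$ to pass $(\mathcal{A}-z)\mathcal{R}_\mathcal{A}(z)=I$ from $\mathcal{S}\times\mathcal{S}$ to $\mathcal{H}$ shows that $\mathcal{A}-z$ is surjective; combined with dissipativity this gives maximal dissipativity of $\mathcal{A}$ and $(\mathcal{A}-z)^{-1}=\mathcal{R}_\mathcal{A}(z)$, which is (ii).

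The most delicate point is the algebraic verification on Schwartz data: because $\mathcal{H}$ contains elements of $\dot H^1$ that need not be $L^2$ functions, operations like multiplying by $(ia+z)$ or applying $H_0$ cannot be performed at the level of $\mathcal{H}$, while they are unambiguous on $\mathcal{S}(\R^d)$. Keeping everything inside $\mathcal{S}$ until the algebraic identities have been checked, and then invoking dissipativity only at the density step, neatly bypasses the homogeneous-Sobolev regularity issues that would otherwise have to be handled by hand.
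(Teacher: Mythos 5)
Your proof is correct, and it takes a genuinely different route from the paper's. The paper's proof (Steps 1--2) first establishes explicit bounds $\nr{\nabla R(z) u}_{L^2}\lesssim_z \nr{H_0^{1/2}u}_{L^2}$ and $\nr{(I+z^2R(z))u}_{L^2}\lesssim_z\nr{H_0^{1/2}u}_{L^2}$ via the Hardy inequality, the identity $R(z)=R_0(z)+izR(z)aR_0(z)$, and the boundedness of $R(z):L^2\to H^2$; it then uses these estimates to control each entry of the matrix \eqref{res-Ac} separately, both as an operator $\Hc\to\Hc$ and $\Hc\to\Dom(\Ac)$, proving (i) before touching dissipativity. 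You invert this logic: you observe that once $\Rc_\Ac(z)$ is shown to map $\Sc\times\Sc$ into $\Dom(\Ac)$ (a soft $H^2$-regularity fact) and the algebraic identity $(\Ac-z)\Rc_\Ac(z)=I$ is verified on Schwartz data, the dissipative a priori bound $\nr{(u,v)}_\Hc\le(\Im z)^{-1}\nr{(f,g)}_\Hc$ plus $\Ac(u,v)=(f,g)+z(u,v)$ and the norm-equivalence $\nr{\cdot}_{\Dom(\Ac)}\approx\nr{\cdot}_\Hc+\nr{\Ac\cdot}_\Hc$ give the $\Lc(\Hc,\Dom(\Ac))$-bound for free. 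This is more economical -- it sidesteps the entry-by-entry resolvent estimates entirely -- at the mild cost of first needing dissipativity in hand. It also only verifies the right-inverse identity and then invokes maximal dissipativity to upgrade it to a two-sided inverse, whereas the paper explicitly checks both identities on $\Sc\times\Sc$; both are fine. One detail worth spelling out (you got it right but stated it tersely): the norm equivalence requires that $av$ be controlled in $L^2$ by $\nr{v}_{L^2}\le\nr{(u,v)}_\Hc$, which uses only that $a\in L^\infty$; this is what makes the graph norm and the paper's explicit $\Dom(\Ac)$-norm comparable.
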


% 
% 
% 
% We record first a lemma. {Fusionner avec la def et la prop suivante ?}
% \begin{lemma} \label{lemmeanalytique}  \begin{enumerate}[(i)] \item{For all $ z \in {\mathbb C}_+ $, there exists $ C_z $ such that,  for all $ u \in {\mathcal S}({\mathbb R}^d) $,
% \begin{eqnarray}
% \nr{ \nabla R (z) u}_{L^2} \leq C_z \nr{\nabla u}_{L^2} , \label{1ere}
% \end{eqnarray}
% and
% \begin{eqnarray}
%  \nr{ (I + z^2 R (z) ) u}_{H^1}  \leq C_z \nr{\nabla u}_{L^2} . \label{2eme}
% \end{eqnarray}}
% \item{For all $ z \in {\mathbb C}_+ $, the operator
% \begin{eqnarray}
%  \begin{pmatrix} 
% 		 R(z) (ia + z) &   R(z)\\
%                I +  R(z) (zia + z^2) &  z R(z)
%               \end{pmatrix} ,  \label{afermer}
% \end{eqnarray}
%  defined on $ {\mathcal S} ({\mathbb R}^d) \times {\mathcal S} ({\mathbb R}^d) $ has a bounded closure  $ {\mathcal H} \rightarrow {\mathcal H} $. Actually, this closure maps $ {\mathcal H} $ continuously into $ {\mathcal D}({\mathcal A}) $.}
% \end{enumerate}
% \end{lemma}
% 
% \begin{definition} We let $ {\mathcal R}_{\mathcal A}(z) $ be the closure of (\ref{afermer}). When $ a \equiv 0 $, we denote the corresponding operator $ {\mathcal R}_{{\mathcal A}_0}(z) $. 
% \end{definition}

\begin{proof} 
\stepp
Let $z \in \C_+$ be fixed in all the proof. We set $R_0(z) = (\Ho-z^2)\inv$.
% For $u \in {\mathcal S}({\mathbb R}^d) $ we have
% \[
% \nr{H_0^{1/2} (H_0 - z^2)^{-1} u}_{L^2} \lesssim \nr{H_0^{1/2} u}_{L^2}
% \]
Using that $a$ is of short range and  the Hardy inequality, we have for $u \in \Sc(\R^d)$
\begin{equation*}%\label{Hardya}
\nr{a R_0(z) u}_{L^2} \lesssim_z \nr{\nabla R_0(z) u}_{L^2} \lesssim_z \nr{H_0^{1/2} R_0(z) u}_{L^2}\lesssim_z \nr{H_0^{1/2} u}_{L^2}. 
\end{equation*}
By the resolvent identity
\begin{equation} \label{sousresolvent}
 R (z) = R_0(z) + i z R (z) a R_0(z)
\end{equation} 
and the $L^2\to L^2$ boundedness of $\nabla R(z)$, we obtain
\begin{equation}\label{1ere}
\nr{ \nabla R (z) u}_{L^2} \lesssim_z \nr{H_0^{1/2} u}_{L^2}.
\end{equation}
On the other hand 
\[
\nr{ \big( I + z^2 R_0(z) \big) u}_{L^2} = \nr{H_0^{1/2} R_0(z) H_0^{1/2} u}_{L^2} \lesssim_z \nr{H_0^{1/2} u}_{L^2} ,
\]
and using again \eqref{sousresolvent}:
\begin{equation} \label{2eme}
\nr{(I + z^2 R(z))u}_{L^2} \lesssim_z \nr{H_0^{1/2} u}_{L^2}.
\end{equation}

\stepp 
Using the Hardy inequality, \eqref{1ere}, \eqref{2eme} and the fact that $R(z)$ is bounded from $L^2$ to $H^2$ it is easy to conclude that the operator \eqref{res-Ac} extends to a bounded operator from $\Hc$ to $\Hc$. Now let us check that it is bounded from $\Hc$ to $\Dom(\Ac)$. Let $(u,v) \in {\mathcal S} ({\mathbb R}^d) \times {\mathcal S} ({\mathbb R}^d) $ and define
\[
\begin{pmatrix} U \\ V \end{pmatrix} = R_\Ac(z) \begin{pmatrix} u \\ v \end{pmatrix} = \begin{pmatrix} R (z) (i a + z) u+ R (z) v \\  u + R (z)(zia + z^2) u + z R (z) v \end{pmatrix}
\]
By the resolvent identity (\ref{sousresolvent}), we have
$$ H_0 U = z H_0 R_0(z) u + i z^2 H_0 R (z) a R_0(z) u + H_0 R (z)ia u + H_0 R (z)v  , $$
so that, by the same estimates as above,
\begin{equation*}
\nr{H_0 U}_{L^2} \lesssim \left( \nr{H_0^{1/2} u}_{L^2}  + \nr{v}_{L^2} \right) . 
\end{equation*}
Using again the resolvent identity, 
\[
V = H_0 R_0(z) u + iz^3 R (z)a R_0(z) u + R (z) z i a u + z R (z) v ,
\]
and we have similarly
\begin{equation*}
\nr{H_0^{1/2} V}_{L^2} \lesssim \left( \nr{H_0^{1/2} u}_{L^2} + \nr{v}_{L^2} \right) .  
\end{equation*}
Since $ { \mathcal D} ({\mathcal A}) $ is complete and $ {\mathcal S} ({\mathbb R}^d) \times {\mathcal S} ({\mathbb R}^d)  $ is dense in $ {\mathcal H} $, the first statement follows.

\stepp
 Let $(u,v) \in \Dom(\Ac)$. We have
\begin{align*}
\innp{ \Ac(u,v)}{(u,v)}_\Hc
& = \innp{H_0^{1/2} v}{H_0^{1/2} u}_{L^2} + \innp{\Ho u}{v}_{L^2} -i \innp {av}{v}_{L^2}\\
& = 2 \Re \innp{H_0^{1/2} v}{H_0^{1/2} u}_{L^2}  -i \innp {av}{v}_{L^2}.
\end{align*}
Here the formal integration by parts $\innp{H_0 u} {v} = \innp{H_0^{1/2} u}{H_0^{1/2} v}$ can be justified by approximating $u$ by a Schwartz function (recall that $u$ belongs to $\dot H^1$ and is not necessarily in the domain of $H_0^{1/2}$). This yields
\[
 \Im \innp{ \Ac(u,v)}{(u,v)}_\Hc \leq 0
\]
and proves that $\Ac$ is dissipative on $\Hc$. Then we only have to check that, for all $z \in \C_+$, $ {\mathcal R}_{\mathcal A}(z) $  is a two sided inverse for $(\Ac-z)$. Using the density of $ {\mathcal S} ({\mathbb R}^d) \times {\mathcal S} ({\mathbb R}^d)  $ in $ {\mathcal H} $ (by definition) and in $ {\mathcal D}({\mathcal A}) $ (by Proposition \ref{densite-Banach}) and using the continuity of $ {\mathcal R}_A (z) $ and $ {\mathcal A} - z $ as operators from $\Hc$ to $\Dom(\Ac)$ and from $\Dom(\Ac)$ to $\Hc$ respectively, it suffices to check that $  {\mathcal R}_A (z) (\Ac-z) = I $ and $ (\Ac-z)  {\mathcal R}_A (z)= I $ on ${\mathcal S} ({\mathbb R}^d) \times {\mathcal S} ({\mathbb R}^d)$, which is then a simple calculation.
%The statement about $\Ac_0$ is proved similarly.
\end{proof}

This proposition ensures in particular that $\Ac$ generates a contractions semigroup and hence the well-posedness of the problem \eqref{wave-A} when $\Uc_0 \in \Dom(\Ac)$.\\%The relation \eqref{res-Ac} between $R$ and the resolvent of $\Ac$ will be of constant use in the sequel.\\

An important property of the resolvent of a dissipative operator are the so-called quadratic estimates. This will be used for the abstract Mourre theory but also in Section \ref{sec-non-small} (see the proof of Proposition \ref{prop-b589}). Note that this result of dissipative nature is already used for the self-adjoint theory since the Mourre technique consists in viewing the positive commutator as a dissipative perturbation of the operator.

\begin{lemma} \label{lem-quad-estim}
 Let $T = T_0 -i T_d$ be a maximal dissipative operator on a Hilbert space $\Kc$, where $T_0$ is self-adjoint, $T_d$ is self-adjoint and non-negative and $ {\mathcal D}(T) \subset {\mathcal D}(T_0) \cap {\mathcal D} (T_d) $. Let $B$ be a bounded operator on $\Kc$ such that $B^*B \leq T_d$. Let $Q \in \Lc(\Kc)$. Then for all $\z \in \C_+$ we have
\[
\nr{B^* (T -\z)\inv Q}_{\Lc(\Kc)} \leq \nr{Q^* (T -\z)\inv Q}_{\Lc(\Kc)}^{\frac 12}.
\]
and
\[
\nr{B^* \big(T^* -\bar \z\big)\inv Q}_{\Lc(\Kc)} \leq \nr{Q^* (T -\z)\inv Q}_{\Lc(\Kc)}^{\frac 12}.
\]
\end{lemma}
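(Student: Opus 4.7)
My plan is to derive both bounds from the basic dissipative quadratic form identity for $T$ (resp. $T^*$) combined with the operator inequality $B^*B \leq T_d$ and a Cauchy-Schwarz step. The whole argument is quite short; all the weight is carried by the elementary computation of $\Im\innp{(T-\z)u}{u}$.

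The first step is a pointwise inequality. For $u \in \Dom(T)$, self-adjointness of $T_0$ and of $T_d \geq 0$ yields
\[
\Im\innp{(T-\z)u}{u} = -\innp{T_d u}{u} - (\Im\z)\nr{u}^2.
\]
Since $\Im\z>0$, this gives $\innp{T_d u}{u}\leq |\innp{(T-\z)u}{u}|$, and combining with $B^*B\leq T_d$ (which yields $\nr{Bu}^2 = \innp{B^*Bu}{u}\leq \innp{T_d u}{u}$) produces the pointwise bound
\[
\nr{Bu}^2 \leq |\innp{(T-\z)u}{u}|.
\]

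The second step is to feed in the resolvent and exploit the fact that $Q$ appears twice. For $v\in\Kc$, substituting $u = (T-\z)^{-1}Qv$ (which lies in $\Dom(T)\subset\Dom(T_d)$ by hypothesis) into the previous inequality turns it into
\[
\nr{B(T-\z)^{-1}Qv}^2 \leq \bigl|\innp{Qv}{(T-\z)^{-1}Qv}\bigr| = \bigl|\innp{v}{Q^*(T-\z)^{-1}Qv}\bigr| \leq \nr{v}^2\,\nr{Q^*(T-\z)^{-1}Q}_{\Lc(\Kc)},
\]
by Cauchy-Schwarz. Taking square root and supremum over unit $v$ gives the first estimate.

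For the second estimate I would replay the argument with $T^* = T_0 + iT_d$ in place of $T$. Since the sign in front of $T_d$ flips, one gets $\Im\innp{(T^*-\bar\z)w}{w} = \innp{T_d w}{w} + (\Im\z)\nr{w}^2 \geq \nr{Bw}^2$, and setting $w = (T^*-\bar\z)^{-1}Qv$ yields $\nr{B(T^*-\bar\z)^{-1}Qv}^2 \leq \bigl|\innp{v}{Q^*(T^*-\bar\z)^{-1}Qv}\bigr|$. Finally, $Q^*(T^*-\bar\z)^{-1}Q = \bigl(Q^*(T-\z)^{-1}Q\bigr)^*$, so its operator norm equals that of $Q^*(T-\z)^{-1}Q$, matching the stated right-hand side. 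I do not expect any real obstacle here: both estimates reduce to the same two-line calculation, and the only point requiring a little care is the domain issue that $(T-\z)^{-1}$ maps $\Kc$ into $\Dom(T)\subset\Dom(T_d)$, which is guaranteed by the hypothesis in the statement.
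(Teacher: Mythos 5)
Your proof is correct and follows essentially the same route as the paper's own argument: both exploit that, for the dissipative operator $T = T_0 - iT_d$, the imaginary part of $\innp{(T-\z)u}{u}$ controls $\innp{T_d u}{u}$ (hence $\nr{Bu}^2$ via $B^*B\leq T_d$), and both then substitute $u=(T-\z)^{-1}Qv$ and bound the resulting quantity by $\nr{Q^*(T-\z)^{-1}Q}$; the paper phrases this slightly more algebraically through the resolvent identity $(T^*-\bar\z)^{-1}\big[(T^*-\bar\z)-(T-\z)\big](T-\z)^{-1}=(T-\z)^{-1}-(T^*-\bar\z)^{-1}$, but the computation is the same. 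One small remark (affecting both your write-up and the published statement equally): the hypothesis $B^*B\leq T_d$ naturally yields a bound on $B(T-\z)^{-1}Q$ rather than $B^*(T-\z)^{-1}Q$, so the $B^*$ in the displayed conclusion is a misprint, immaterial in the application since there $B=B^*$.
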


Let us recall the proof of this result:

\begin{proof}
For $\f \in \Hc$ we have
\begin{align*}
 \nr{B (T -\z)\inv Q \f}_\Kc^2 
& = \innp{B^* B (T-\z)\inv Q \f} {(T-\z)\inv Q \f}_\Kc\\
& \leq \frac 1 {2i} \innp{ (2iT_d + 2i \Im \z)  (T-\z)\inv Q \f} {(T-\z)\inv Q\f}_\Kc\\
& \leq \frac 1 {2i} \innp{ Q^*(T^* - \bar \z)\inv \big((T^* - \bar \z ) - (T-\z)\big)  (T-\z)\inv Q \f} {  \f} _\Kc\\
& \leq\nr{ Q^* (T -\z)\inv Q}_{\Lc(\Kc)} \nr{\f}_\Kc^2,
\end{align*}
which gives the first estimate. Here we used that $ {\mathcal D}(T) $ is contained in $ {\mathcal D}(T^*) $ which is a simple consequence of the assumption $  {\mathcal D}(T) \subset {\mathcal D}(T_0) \cap {\mathcal D} (T_d) $. The second estimate is proved similarly.
\end{proof}

Applied to $\Ho -iza$ and $-i(\Ho -iza)$, Lemma \ref{lem-quad-estim} gives the following estimate for the resolvent $R$:

\begin{proposition}\label{prop-res-diss-am}
Let $R$ be given by \eqref{def-Rz} and $Q \in\Lc(L^2)$. Then for all $z \in \C_+$ we have
\[
 {\abs z} \nr{\sqrt {a(x)} R(z) Q}^2_{\Lc(L^2)} \leq   \sqrt 2 \nr {Q^* R(z)  Q}_{\Lc(L^2)} .
\]
\end{proposition}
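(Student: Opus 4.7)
My plan is to apply Lemma \ref{lem-quad-estim} twice, each time to a different dissipative rewriting of the operator underlying $R(z)$, so that the first application produces a factor $|\Re z|$ and the second a factor $\Im z$; the estimate of the proposition then follows by comparing these bounds with $|z| \leq \sqrt 2 \max(|\Re z|, \Im z)$.

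Write $z = \tau + i\mu$ with $\mu > 0$. The first decomposition is
\[
H_0 - iza = (H_0 + \mu a) - i\tau a,
\]
which, when $\tau > 0$, has the form $T_0 - iT_d$ with $T_0 = H_0 + \mu a$ self-adjoint and $T_d = \tau a \geq 0$. Since $\Im z^2 = 2\tau\mu > 0$, Lemma \ref{lem-quad-estim} applies to $T = H_0 - iza$ with $\zeta = z^2$ (so that $(T-\zeta)^{-1} = R(z)$) and $B = \sqrt{\tau}\sqrt{a}$, satisfying $B^*B = T_d$; this yields $\tau\,\nr{\sqrt{a}R(z)Q}^2 \leq \nr{Q^*R(z)Q}$. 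For $\tau < 0$, the operator $H_0 - iza$ is not dissipative, but its adjoint $(H_0 - iza)^* = (H_0 + \mu a) - i(-\tau)a$ is, with $T_d = |\tau|a$; applying the second (adjoint) inequality of Lemma \ref{lem-quad-estim} to $T = (H_0 - iza)^*$ with $\zeta = \bar z^2$ (so that $\Im \zeta = -2\tau\mu > 0$ and $(T^* - \bar\zeta)^{-1} = R(z)$) and using $\nr{Q^*R(z)^*Q} = \nr{Q^*R(z)Q}$ gives the same bound with $|\tau|$ in place of $\tau$. Therefore, for every $z \in \C_+$,
\[
|\Re z|\, \nr{\sqrt{a}R(z)Q}^2 \leq \nr{Q^* R(z) Q}. \tag{1}
\]

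For the second application I exploit
\[
-i(H_0 - iza) = -\tau a - i(H_0 + \mu a),
\]
which is maximal dissipative for every $z \in \C_+$, with $T_d = H_0 + \mu a \geq 0$. Since $R(z) = i\bigl(-i(H_0 - iza) - (-iz^2)\bigr)^{-1}$ and $\Im(-iz^2) = \mu^2 - \tau^2$, the lemma applies whenever $\mu > |\tau|$, with spectral parameter $\zeta = -iz^2 \in \C_+$ and $B = \sqrt{\mu}\sqrt{a}$ (noting $B^*B = \mu a \leq H_0 + \mu a = T_d$); it produces
\[
\Im z\, \nr{\sqrt{a}R(z)Q}^2 \leq \nr{Q^*R(z)Q}\qquad\text{whenever } \Im z > |\Re z|. \tag{2}
\]

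To finish, I distinguish two cases: if $|\Re z| \geq \Im z$, then $|\Re z| \geq |z|/\sqrt 2$ and (1) yields the result; if $\Im z > |\Re z|$, then $\Im z \geq |z|/\sqrt 2$ and (2) yields the result. In both cases $|z|\,\nr{\sqrt a R(z) Q}^2 \leq \sqrt 2\,\nr{Q^* R(z) Q}$. The main obstacle, in my view, is entirely bookkeeping: identifying which of $H_0 - iza$, its adjoint, or $-i(H_0 - iza)$ is genuinely dissipative for a given range of $z$, and aligning the auxiliary spectral parameter so that it lies in $\C_+$. In particular the case $\Re z < 0$ forces the use of the adjoint form of the quadratic estimate in Lemma \ref{lem-quad-estim} in order to conclude a bound for $R(z)$ rather than $R(z)^*$.
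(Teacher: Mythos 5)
Your proof is correct and follows essentially the same route as the paper's: both cases rest on the same two decompositions of $H_0 - iza$ into a self-adjoint part plus a dissipative part (taking $T_d = |\Re z|\,a$ when $|\Re z|\ge \Im z$, and $T_d = H_0 + \Im(z)a$ otherwise), the same choices of $B$, and the same elementary inequality $\max(|\Re z|,\Im z)\ge |z|/\sqrt 2$. You spell out the $\Re z<0$ subcase (adjoint form of the quadratic estimate) explicitly where the paper simply says ``similarly,'' and you flag a harmless sign in $(T-\zeta)^{-1}=iR(z)$, but otherwise the arguments coincide.
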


\begin{proof}
Let $z \in \C_{+}$. If $\Re z \geq \Im z$ we apply Lemma \ref{lem-quad-estim} with $T_0= \Ho + \Im (z) a$, $T_d = \Re (z) a$ and $\z = z^2$, $B = B^* = \sqrt{\abs z a}$ which gives
\[
{\abs z} \nr {\sqrt {a(x)} R(z) Q}^2 \leq   \sqrt 2  \nr{\sqrt {\Re (z) a(x)} R(z) Q}^2  \leq  \sqrt 2 \nr{Q^* R(z) Q}.
\]
The case $\Re(z) \leq - \Im z$ is proved similarly, using the second estimate of Lemma \ref{lem-quad-estim}.
Then assume that $\Im z \geq \abs{\Re z}$. In this case we apply Lemma \ref{lem-quad-estim} with $T_0 = -\Re (z) a$, $T_d = \Ho + \Im (z) a$, $B = B^* = \sqrt {\Im (z) a}$ and $\z = - i z^2$, which gives the same estimate.
\end{proof}

\section{Time decay for the solution of the wave equation}  \label{sec-time}

\newcommand{\finter} {\rm{c}}

In this section we show how to derive the local energy decay of Theorem \ref{th-loc-decay} from  Theorems \ref{th-inter-freq}, \ref{th-low-freq} and \ref{th-high-freq} about uniform resolvent estimates.\\

Let $(u_0,u_1) \in \Dom(\Ac)$. We denote by $u \in C^1 (\R_+,\Hc)$ the function given by
\[
 (u(t) , i \partial_t u(t)) = e^{-it\Ac} (u_0,u_1).
\]
Then for $\m > 0$ we define
\[
u_\m : t \mapsto e^{-t\m} \1{\R_+}(t) u(t) \quad  \text{and} \quad \tilde u_\m : t \mapsto e^{-t\m}\1{\R_+}(t) \partial_t u(t).
\]
Since $e^{-it\Ac}$ is a contraction of $\Hc$ for all $t\geq 0$, the function $t \mapsto a \partial_t u$ belongs to $L^\infty(\R_+, L^2(\R^d))$. Then, since $(\partial_t u, H_0 u - ia\partial_t u) = \partial_t (u , i \partial_t u) = -i\Ac (u , i \partial_t u) = -ie^{-it\Ac} \Ac (u_0,u_1)$, the functions $\Ho u$, $\partial_t^2 u$ are also bounded with respect to $t$. In particular for all $\m > 0$ the functions $u_\m$ and $\tilde u_\m$ decay rapidly in time, which justifies the computations below.\\

We consider the inverse Fourier transform of $u_\m$, defined for $\t \in \R$ by
\begin{equation*} 
(\Four\inv u_\m) (\t) = \int_{0}^{+\infty} e^{it\t} u_\m(t) \, dt.
\end{equation*}
For all $n \in \N$ it satisfies
\begin{equation} \label{fourier-um}
\frac {d^n}{d\t^n} (\Four\inv u_\m) (\t) =  \int_{0}^{+\infty} (it)^n  e^{it\t} u_\m(t) \, dt.
\end{equation}
Let $\t \in \R$ and $z = \t +i\m$. We multiply \eqref{wave-lap} by $e^{it(\t +i\m)}$ and integrate over $t \in \R_+$. After partial integrations we obtain
\begin{equation} \label{rel-Four-R}
\Four\inv u_\m(\t) = R(z) (au_0 -izu_0 +  u_1),
\end{equation}
and then:
\begin{equation} \label{Four-tildeu}
\Four\inv \tilde u_\m(\t) = - u_0 - iz \Four\inv u_\m(\t) = - u_0 - z R(z) (ia u_0 + z u_0 +iu_1). 
\end{equation}
%The exponential decay we have introduced ensures that all these quantities are well-defined for any $\m > 0$. \\

\begin{remark}
All the computations below could have been performed with the resolvent of $\Ac$ instead of $R$, starting from the relation 
\[
(\Ac-(\t+i\m))\inv(u_0,iu_1) = i \int_0^{+\infty} e^{-it(\Ac-(\t+i\m))} (u_0,iu_1)\, d\t  
\]
instead of \eqref{rel-Four-R}. However, the proofs given below would be more complicated from this point of view.
\end{remark}

Now we prove the time decay of $\nabla u_\m$ and $\tilde u_\m$, keeping in mind that all the estimates which are uniform in $\m > 0$ remain true for $\nabla u$ and $\partial_t u$.
Theorems \ref{th-inter-freq}, \ref{th-low-freq} and \ref{th-high-freq} provide uniform estimates on the derivatives of $\Four\inv u_\m$ and $\Four\inv \tilde u_\m$. In order to obtain information on $u_\m$, we first have to inverse the relation \eqref{fourier-um}.

\newcommand{\nonmu}{\n}

% \begin{proposition}
% Let $\m > 0$, $v \in \Sc(\R^d)$ and $\h_0 \in C_0^\infty(\R,[0,1])$ be equal to 1 on a neighborhood of 0. Let $n \in \N$, $\a \in\N^d$ with $\abs \a \in \{ 0, 1\}$ and $\nonmu \in \Ii 0 2$ be such that $n \geq \nonmu$ and $\nonmu + \abs \a \leq 2$. Then for all $\d > n+\frac 12$ the function
% \[
%  \t \mapsto (1 - \h_0(\t)) \t ^\nonmu   \pppg x ^{-\d} D^\a  R^{(n)}(\t +i\m) v
% \]
%  belongs to $L^1(\R,L^2)$.
% \end{proposition}
% 
% 
% 
% 
% \begin{proof}
% Let $\t \in \supp(1-\h_0)$ and $z = \t + i\m$. Using three times the equality
% \[
% R(z) = \big(1+z^2\big)\inv  \big( R(z) (\Hz+1) -1 \big)
% \]
% we can write
% \[
% R(z) = - \big(1+z^2\big)\inv - \big(1+z^2\big)^{-2} (\Hz + 1)   - \big(1+z^2\big)^{-3} (\Hz + 1)^2   + \big(1+z^2\big)^{-3} R(z) (\Hz + 1)^3 
% \]
% on $\Sc(\R^d)$. Using Theorems \ref{th-inter-freq} and \ref{th-high-freq} we can check that there exists $C \geq 0$ (which depends on $v$) such that for all $\t \in \supp(1-\h_0)$ we have
% \[
% \t^\n \nr{ \pppg x ^{-\d} D^\a  R^{(n)}(\t +i\m) v}_{L^2} \leq C \pppg \t ^{- 2},
% \]
% and the conclusion follows.
% \end{proof}

\begin{proposition}
Let $\m > 0$, $v \in \Sc(\R^d)$, $n \geq 2$ and $\h_0 \in C_0^\infty(\R,[0,1])$ be equal to 1 on a neighborhood of 0. Let $\a \in\N^d$ with $\abs a = 1$. Then for all $\d > n+\frac 12$ the function
\[
 \t \mapsto (1 - \h_0(\t))    \pppg x ^{-\d} \t D^\a  R^{(n)}(\t +i\m) v
\]
 belongs to $L^1(\R,L^2)$. Moreover the same applies with $\t D^\a$ replaced by $\t^2$.
\end{proposition}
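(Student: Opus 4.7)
The plan is to decompose $\R = [-C_0, C_0] \cup \{|\t|>C_0\}$ with $C_0$ chosen large enough that Theorem \ref{th-high-freq} applies on the outer part, and to treat the two ranges separately. On the bounded part, intersected with $\supp(1-\h_0)$, the point $z=\t+i\m$ ranges in a compact subset of $\C\setminus\{0\}$, so Theorem \ref{th-inter-freq} furnishes uniform bounds on $\pppg x^{-\d}R^{(n)}(z)\pppg x^{-\d}$ and $\pppg x^{-\d}\nabla R^{(n)}(z)\pppg x^{-\d}$ as operators on $L^2$. Since $v\in\Sc$ gives $\nr{\pppg x^{\d}v}_{L^2}<\infty$, the integrand is bounded on a set of finite measure, hence integrable; the same reasoning applies with $\t^2$ in place of $\t D^\a$.

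For the tail $|\t|>C_0$ the strategy is to exploit the Schwartz regularity of $v$ in order to trade factors of $z$ for additional decay beyond what the resolvent estimates alone provide. Applying $R(z)$ to the identity $(H_0-iza-z^2)w = H_0 w - iza w - z^2 w$ valid for $w\in H^2$ gives
\[
R(z)w = -\frac{w}{z^2} + \frac{R(z)H_0 w}{z^2} - \frac{iR(z)aw}{z}.
\]
Iterating this relation $N$ times on every term still containing $R(z)$---whose argument stays in $\Sc$ because $\Sc(\R^d)$ is stable under $H_0$ and under multiplication by $a$---one arrives at a decomposition
\[
R(z)v = \sum_{k=2}^{K_N} z^{-k}\f_{N,k}(v) + \sum_{j} z^{-p_{N,j}} R(z)\p_{N,j}(v),
\]
with $\f_{N,k}(v), \p_{N,j}(v)\in\Sc(\R^d)$ independent of $z$ and $p_{N,j}\geq N$, where $N$ is chosen arbitrarily large.

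Differentiating $n$ times in $z$, $R^{(n)}(z)v$ becomes a linear combination of $z^{-(k+n)}\f_{N,k}(v)$ (with $k\geq2$) and $z^{-(p_{N,j}+m)}R^{(n-m)}(z)\p_{N,j}(v)$ (with $0\leq m\leq n$). The first family contributes $O(|z|^{-(n+2)})$ both in weighted $L^2$ norm and in its gradient version, since $\nabla$ falls on the Schwartz coefficient. For the second family, Theorem \ref{th-high-freq} supplies
\[
\nr{\pppg x^{-\d} R^{(n-m)}(z)\p_{N,j}(v)}_{L^2}\leq \frac{C}{|z|}\nr{\pppg x^{\d}\p_{N,j}(v)}_{L^2},
\]
together with the analogue for $\nabla R^{(n-m)}(z)\p_{N,j}(v)$ without the $|z|^{-1}$ gain, so each such term is at worst $O(|z|^{-p_{N,j}-m})$; taking $N\geq n+2$ renders this family negligible. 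One concludes that, for $|\t|>C_0$,
\[
\nr{\pppg x^{-\d}R^{(n)}(\t+i\m)v}_{L^2} + \nr{\pppg x^{-\d}\nabla R^{(n)}(\t+i\m)v}_{L^2} \leq \frac{C(v)}{|\t|^{n+2}},
\]
so that $\nr{\pppg x^{-\d}\t^2 R^{(n)}(\t+i\m)v}_{L^2} = O(|\t|^{-n})$ and $\nr{\pppg x^{-\d}\t D^\a R^{(n)}(\t+i\m)v}_{L^2} = O(|\t|^{-n-1})$; both are integrable on $\{|\t|>C_0\}$ exactly when $n\geq 2$, which is where this hypothesis enters.

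The main technical point is to set up the iterative expansion cleanly and to keep track of the $z$-exponents through each iteration, certifying that the $R(z)$-valued remainder can be pushed to arbitrarily large negative powers of $z$ while all explicit terms are Schwartz functions with $z$-independent coefficients; once this bookkeeping is in place the conclusion is a direct consequence of Theorems \ref{th-inter-freq} and \ref{th-high-freq}.
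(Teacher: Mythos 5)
Your argument is correct, and it follows the same overall philosophy as the paper---iterate an algebraic identity to peel off powers of $z$---but the two executions are genuinely different. The paper starts from the single identity $R(z)=(1+z^2)^{-1}\big(R(z)(H_z+1)-1\big)$ and iterates it \emph{exactly three times}, keeping $H_z+1$ as a block; this yields a fixed expansion
$R(z)=-\sum_{k=1}^3(1+z^2)^{-k}(H_z+1)^{k-1}+(1+z^2)^{-3}R(z)(H_z+1)^3$
whose remainder, after applying Theorems~\ref{th-inter-freq} and \ref{th-high-freq} to $R^{(n-j)}(z)$, gives $\pppg{\t}^{-4}$ for $R^{(n)}v$ and $\pppg{\t}^{-3}$ for $D^\a R^{(n)}v$ regardless of $n\geq 2$, with no need to split the $\t$-integral since $(1+z^2)^{-1}$ causes no trouble on $\supp(1-\h_0)$. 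You instead split $H_z$ into $H_0$ and $-iza$, iterate the identity $R(z)w=-z^{-2}w+z^{-2}R(z)H_0w-iz^{-1}R(z)aw$ a variable number $N\geq n+2$ of times so that the $R(z)$-remainder carries $z^{-p}$ with $p\geq N$, and then use the estimates only to render that remainder subdominant; this produces the stronger decay $|\t|^{-(n+2)}$ at the cost of a domain split (needed because your prefactors are singular at $z=0$, though in the end that singularity is harmless on $\supp(1-\h_0)$). Your version also has the small advantage that all Schwartz coefficients $\f_{N,k}(v),\p_{N,j}(v)$ are $z$-independent, whereas the paper's $(H_z+1)^{k}v$ are $z$-polynomials with Schwartz coefficients, which adds a bit of Leibniz bookkeeping. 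Both give integrability of $\t D^\a R^{(n)}v$ and $\t^2 R^{(n)}v$ for $n\geq 2$; the paper's choice of iterating only three times is the minimal amount of work needed, while yours is slightly more general and yields quantitatively sharper tail decay.
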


\begin{proof}
Let $\t \in \supp(1-\h_0)$ and $z = \t + i\m$. Using three times the equality
\[
R(z) = \big(1+z^2\big)\inv  \big( R(z) (\Hz+1) -1 \big)
\]
we can write
\[
R(z) = - \big(1+z^2\big)\inv - \big(1+z^2\big)^{-2} (\Hz + 1)   - \big(1+z^2\big)^{-3} (\Hz + 1)^2   + \big(1+z^2\big)^{-3} R(z) (\Hz + 1)^3 
\]
on $\Sc(\R^d)$. Using Theorems \ref{th-inter-freq} and \ref{th-high-freq} we can check that there exists $C \geq 0$ (which depends on $v$) such that for all $\t \in \supp(1-\h_0)$ we have
\[
 \nr{ \pppg x ^{-\d}  R^{(n)}(\t +i\m) v}_{L^2} \leq C \pppg \t ^{- 4}
\]
and 
\[
 \nr{ \pppg x ^{-\d} D^\a  R^{(n)}(\t +i\m) v}_{L^2} \leq C \pppg \t ^{- 3},
\]
which proves the proposition.
\end{proof}

Let $(u_0,u_1) \in \Sc(\R^d)$ and $w : z \mapsto au_0 - i zu_0 + u_1 \in \Sc(\R^d)$. By the integrability around 0 given by Theorem \ref{th-low-freq} we can now take the inverse Fourier transform of \eqref{fourier-um} for $n \in \Ii 2 {d-1}$ and $\d > n + \frac 12$: for all $t \geq 0$ we have %\comm{et si $d=2$ ?}
\[
(it)^n \pppg x^{-\d} (\nabla u_\m , \tilde u_\m) (t) = \frac 1 {2\pi} \int_\R e^{-it\t} \pppg x^{-\d}   \frac {d^n}{d\t^n} \big(\big(\nabla , -iz \big) R(z) w(z) \big) \, d\t,
\]
where, here and below, $z$ stands for $\t + i\m$. Note that for $\tilde u_\m$ (see \eqref{Four-tildeu}) we used that $\frac {d^n}{d\t^n} u_0 = 0$ since $n\neq 0$.\\
% and
% \[
% (it)^m \pppg x^{-\d}  \tilde u_\m(t) = - \frac 1 {2\pi} \int_\R e^{-it\t} \pppg x^{-\d} \frac {d^m}{d\t^m} \big((\t+i\m) R(\t+i\m) (iau_0 + (\t+i\m)u_0 + i u_1)\big) \, d\t.
% \]

We consider $\h_0 \in C_0^\infty(\R,[0,1])$ and $\h \in C^\infty(\R^*,[0,1])$  two even functions such that $\h_0 + \sum_{j=1}^\infty \h_j = 1$ on $\R_+$, where for $j \in \N^*$ and $\t \in \R_+$ we have set $\h_j(\t) = \h\big(\frac \t  {2^{j-1}}\big)$. In particular $\h_0$ is equal to 1 in a neighborhood of 0.

Using the partition of unity $\h_0(\t) + (1-\h_0)(\t) = 1$, we split the integral into two terms. After an additional partial integration in the second term we obtain 
\begin{equation} \label{decomp-time-decay}
(it)^n \pppg x^{-\d} (\nabla  u_\m , \tilde u_\m) (t) = (v_{0,\m}, \tilde v_{0,\m})(t) + \frac {1}{it} (v_{\finter,\m}, \tilde v_{\finter,\m})(t) + \frac {1}{it} \sum_{j=1}^\infty (v_{j,\m}, \tilde v_{j,\m})(t),
\end{equation}
where
\begin{equation} \label{def-vom}
(v_{0,\m}, \tilde v_{0,\m})(t) = \frac 1 {2\pi} \int_\R \h_0(\t)  e^{-it\t} \pppg x^{-\d}   \frac {d^n}{d\t^n} \big(\big(\nabla , -iz\big) R(z) w(z) \big) \, d\t,
\end{equation}
\[
(v_{\finter,\m}, \tilde v_{\finter,\m})(t) = - \frac 1 {2\pi} \int_\R \h'_0(\t)  e^{-it\t} \pppg x^{-\d}   \frac {d^{n}}{d\t^{n}} \big(\big(\nabla , -iz\big) R(z) w(z) \big) \, d\t
\]
and for $j \geq 1$:
\[
(v_{j,\m}, \tilde v_{j,\m})(t) =  \frac 1 {2\pi} \int_\R \h_j(\t)  e^{-it\t} \pppg x^{-\d}   \frac {d^{n+1}}{d\t^{n+1}} \big( \big(\nabla, -iz\big) R(z) w(z) \big) \, d\t.
\]

According to Theorem \ref{th-low-freq}, the derivatives of $\nabla  R (\t +i\m)$ and $\t  R (\t +i\m)$ are uniformly (in $\m>0$) integrable (in $\t$) around 0 up to order $d-1$ in suitable weighted spaces. By \eqref{decomp-time-decay} this will lead to a $t^{1-d}$ decay rate for $\nabla u_\m$ and $\tilde u_\m$. Since we cannot perform one more partial integration in \eqref{def-vom} (the derivative of the resolvent becomes too singular, see Theorem \ref{th-low-freq}) we cannot clearly get a $t^{-d}$ decay. In order to get an ``almost'' $t^{-d}$ decay, we use the following lemma:

%Moreover this is ``almost'' the case for the derivative of order $d$. 

%Following the ideas used for H\"older-continuous functions, we prove the following lemma:

\begin{lemma} \label{lem-holder2}
Let $\Kc$ be a Hilbert space. Let $f \in C^1 (\R^*, \Kc)$ be equal to 0 outside a compact subset of $\R$, and assume that for some $\g \in ]0,1/2[$ and $M_f \geq 0$ we have
\[
\forall \t \in \R^*, \quad  \nr{f(\t)}_\Kc \leq M_f \abs \t^{-\g} \quad \text{and} \quad \nr{f'(\t)}_\Kc \leq M_f \abs \t ^{-1 -\g}.
\]
Then there exists $C \geq 0$ which does not depend on $f$ and such that for all $t \in \R$ we have
\[
\nr{\hat f (t)}_\Kc \leq C \, M_f \, \pppg t ^{-1 + 2\g}.
\]
%Moreover, the constant $ C $ can be taken of the form $ C_0 c $ for some $ C_0 $ independent of $f$.
\end{lemma}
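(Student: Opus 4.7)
My plan is to use a classical split-and-integrate-by-parts argument, balancing the two hypotheses at a scale $T$ depending on $|t|$.

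First, for $t \neq 0$ and any $T > 0$, I would write
\[
\hat f(t) = \int_{|\tau| \leq T} e^{-it\tau} f(\tau)\, d\tau + \int_{|\tau| > T} e^{-it\tau} f(\tau)\, d\tau ,
\]
and bound the inner piece directly by the pointwise estimate on $f$:
\[
\left\| \int_{|\tau| \leq T} e^{-it\tau} f(\tau)\, d\tau \right\|_\Kc \leq M_f \int_{-T}^T |\tau|^{-\gamma}\, d\tau = \frac{2 M_f\, T^{1-\gamma}}{1-\gamma},
\]
where integrability near $0$ holds since $\gamma < 1$. For the outer piece I would integrate by parts on each of $(T,\infty)$ and $(-\infty,-T)$. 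The compact support of $f$ kills the boundary contributions at $\pm\infty$, leaving only the boundary terms at $\pm T$, of size $\|f(\pm T)\|_\Kc / |t| \leq M_f T^{-\gamma}/|t|$, together with the remainder integral $\tfrac{1}{it}\int_{|\tau|>T} e^{-it\tau} f'(\tau)\, d\tau$, whose norm is at most $\tfrac{M_f}{\gamma |t|} T^{-\gamma}$ thanks to the hypothesis on $\|f'\|_\Kc$. Collecting everything,
\[
\|\hat f(t)\|_\Kc \leq C M_f \Bigl( T^{1-\gamma} + \frac{T^{-\gamma}}{|t|} \Bigr) .
\]

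Next, I would optimize in $T$ by choosing $T = 1/|t|$, which equalizes both terms and gives $\|\hat f(t)\|_\Kc \leq C M_f |t|^{\gamma - 1}$ for all $t \neq 0$. When $|t| \geq 1$, the inequality $\gamma - 1 \leq 2\gamma - 1$ together with $|t| \geq 1$ yields $|t|^{\gamma - 1} \leq |t|^{2\gamma - 1} \leq C \pppg t^{2\gamma - 1}$, which is exactly the claimed bound. When $|t| \leq 1$, the weight $\pppg t^{2\gamma - 1}$ is comparable to $1$, so it suffices to fall back on the trivial estimate
\[
\|\hat f(t)\|_\Kc \leq \|f\|_{L^1(\R, \Kc)} \leq M_f \int_{\supp f} |\tau|^{-\gamma}\, d\tau ,
\]
which is finite thanks to the compact support of $f$ and $\gamma < 1$.

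The only real subtlety is identifying the correct scale $T = 1/|t|$ in the decomposition: once that choice is made, the rest reduces to routine bookkeeping and I do not anticipate any genuine obstacle. It is worth noting in passing that the crude inequality $|t|^{\gamma - 1} \leq |t|^{2\gamma - 1}$ used for $|t| \geq 1$ shows that the exponent $2\gamma - 1$ stated in the lemma is in fact weaker than what the method actually delivers, which is $\gamma - 1$; the form of the lemma is however the one needed in the applications of Section~\ref{sec-time}.
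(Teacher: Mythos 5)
Your proof is correct, and it takes a genuinely different --- and in fact simpler and sharper --- route than the paper's. You split at the natural scale $T=1/|t|$ and integrate by parts directly on $\{|\tau|>T\}$, using the hypothesis on $f'$ through $\int_{|\tau|>T}\|f'(\tau)\|_\Kc\,d\tau\lesssim M_f T^{-\gamma}/\gamma$. The paper instead splits at the larger scale $T=t^{-\beta}$ with $\beta=(1-2\gamma)/(1-\gamma)<1$, replaces $f$ on the outer region by a mollification $f_t(\tau)=\int f(\tau-u/t)\varphi(u)\,du$ at scale $1/t$, and integrates by parts the mollified function; the $f'$-hypothesis then enters only through a mean-value estimate of $f-f_t$, which requires $|\tau|-1/t$ to stay away from $0$ on the outer region. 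That constraint forces $\beta<1$, makes the inner contribution $\sim T^{1-\gamma}=t^{2\gamma-1}$ dominate, and is the source of the exponent $2\gamma-1$ in the statement. Your choice $T=1/|t|$ balances the two contributions and yields $C\,M_f\,|t|^{\gamma-1}$ for $|t|\geq1$, which (since $\gamma>0$) is strictly stronger than $\langle t\rangle^{2\gamma-1}$; your closing remark that the lemma's exponent is weaker than what the direct method produces is accurate. One small caveat, shared with the paper's own proof and harmless in the application: the $L^1$ estimate for $|t|\lesssim 1$, hence the constant $C$, necessarily depends on the diameter of $\supp f$ --- taking $f$ a smoothed $\1{[1,R]}$ gives $M_f\sim R^\gamma$ but $\hat f(0)\sim R$, which $CM_f$ cannot dominate uniformly in $R$ --- so the statement should be read with $\supp f$ confined to a fixed compact set, as is the case in Proposition~\ref{prop-low-freq2} where $f$ carries the fixed cutoff $\chi_0$.
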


\begin{proof}
We first remark that $f \in L^1 (\R,\Kc)$ and hence $\hat f \in L^\infty(\R,\Kc)$ (with a bound which only depends on $M_f$). The difficulty thus comes from large values of $t$. If we set $\b = \frac {1-2\g} {1-\g} \in ]0,1[$ then for all $t \in \R$ we have
\begin{align*}
\nr{\hat f (t)} 
\lesssim M_f \pppg t ^{-1+2\g} + \nr{\int_{\abs \t  >  t^{-\b}} e^{-i\t t} f(\t ) \, d\t }.
\end{align*}
Let $\vf \in C_0^\infty(]-1,1[,[0,1])$ be such that $\int_\R \vf = 1$. For $t \geq 2$ and $\abs \t > t\inv$ we set 
\[
f_t (\t ) = \int_{-1}^1 f\left( \t  - \frac u { t} \right) \vf(u)\, du =  t \int_{\t - \frac 1t}^{\t + \frac 1t} f(y) \vf \big( t (\t -y) \big) \, dy.
\]
According to the mean value inequality we have
\begin{align*}
\nr{\int_{\abs \t  \geq  t ^{-\b}} e^{-i\t t} \big( f(\t ) - f_t(\t ) \big) \, d\t  }
& \leq \int_{\abs \t  \geq  t ^{-\b}} \int_{-1}^1 \nr{f(\t) - f\left( \t - \frac {u}{ t} \right) } \vf(u)\,du  \, d\t \\
& \lesssim   M_f \int_{\abs \t  \geq  t ^{-\b}} \int_{-1}^1 \left(\abs{\t} - \frac 1 t\right)^{-1-\g} \frac {\abs u} { t} \vf(u) \,du  \, d\t \\
& \lesssim   M_f \, t^{\g \b - 1} .
\end{align*}
On the other hand, since
\[
 \nr{f_t\big(  t^{-\b} \big)} \leq \int_{\abs u \leq 1} \nr{ f\left(  t^{-\b}- \frac u  t  \right) } \vf (u)\, du \lesssim  t^{\b \g}
\]
(the same estimate holds for $f_t( - t^{-\b})$), we obtain by partial integration and using $\int \phi^{\prime}(u) du = 0 $,
\begin{align*}
\nr{\int_{\abs \t  \geq  t ^{-\b}} e^{-i\t t} f_t(\t )  d\t}
& \lesssim  M_f \,t^{\g -1} +   t\inv \nr{\int_{\abs \t  \geq  t ^{-\b}} e^{-i\t t} f'_t(\t )  \, d\t }\\
& \lesssim  M_f \,t^{\g -1} + t  \int_{\abs \t  \geq  t ^{-\b}} \nr{ \int_{\t - \frac 1t}^{\t + \frac 1t}  f(y) \vf'\big(t  (\t -y)\big) \, dy\, } d\t \\
& \lesssim  M_f \,t^{\g -1} +  \int_{\abs \t  \geq  t ^{-\b}}\int_{-1}^1 \nr{f\left( \t  - \frac {u}{ t} \right) - f(\t )} \abs{\vf'(u)} \, du \, d \t \\
& \lesssim   M_f \,t^{\g-1}.
\end{align*}
This concludes the proof.
\end{proof}

We can now estimate the contribution of low frequencies:

\begin{proposition} \label{prop-low-freq2}
Let $\e > 0$ and $n= d-1$. Then there exists $C \geq 0$ which does not depend on $(u_0,u_1) \in \Sc(\R^d)\times\Sc(\R^d)$ and such that for $t \geq 0$ and $\m > 0$ we have 
\begin{equation*}
\nr { v_{0,\m}(t)}_{L^2} +\nr { \tilde v_{0,\m}(t)}_{L^2} \leq C \pppg {t}^{-(1-\e)} \nr {(u_0,u_1)}_{L^{2,\d} \times L^{2,\d}}.
\end{equation*}
\end{proposition}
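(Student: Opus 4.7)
The plan is to recognize $v_{0,\mu}(t)$ and $\tilde v_{0,\mu}(t)$ as inverse Fourier transforms of a compactly supported, $C^1$ function $f(\tau)$ with values in $L^2$, and then apply Lemma~\ref{lem-holder2} with $\gamma = \epsilon/2$, which will yield exactly the claimed $\pppg{t}^{-(1-\epsilon)}$ decay. Throughout, I would apply Theorem~\ref{th-low-freq} with its parameter $\epsilon$ replaced by $\epsilon/2$, which is legitimate since the weight $\delta > d + \frac{1}{2}$ from Theorem~\ref{th-loc-decay} handles derivatives up to the $d$-th of the resolvent.

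First I would expand the integrand via Leibniz's rule. Writing $w(z) = (au_0 + u_1) - izu_0$, one has $w'(z) = -iu_0$ and $w''(z) = 0$, hence
\[
\tfrac{d^n}{d\tau^n}\bigl(\nabla R(z)w(z)\bigr) = \nabla R^{(n)}(z)(au_0+u_1) - iz\,\nabla R^{(n)}(z) u_0 - in\,\nabla R^{(n-1)}(z) u_0 ,
\]
and an analogous finite combination for $-izR(z)w(z)$ involving only $z^jR^{(k)}(z)$ with $k \in \{n-2,n-1,n\}$ and $j \leq 2$. On $\supp\h_0$ one has $|z| \leq C$ uniformly in $\mu > 0$, and $\|\pppg{x}^\delta w(z)\|_{L^2} \lesssim \|(u_0,u_1)\|_{L^{2,\delta}\times L^{2,\delta}}$ since $a$ is short range. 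Setting $n = d-1$ and applying Theorem~\ref{th-low-freq} in the form
\[
\|\pppg{x}^{-\delta} R^{(k)}(z)\pppg{x}^{-\delta}\| \lesssim 1 + |z|^{d-2-k-\epsilon/2}, \qquad \|\pppg{x}^{-\delta} \nabla R^{(k)}(z)\pppg{x}^{-\delta}\| \lesssim 1 + |z|^{d-1-k-\epsilon/2},
\]
a term-by-term inspection shows that the dominant contributions come from $\nabla R^{(n)}(z)(au_0+u_1)$ and $zR^{(n)}(z)(au_0+u_1)$, each of size $O(|\tau|^{-\epsilon/2})$, while the remaining summands are uniformly bounded. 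Defining
\[
f(\tau) = \h_0(\tau)\,\pppg{x}^{-\delta}\,\tfrac{d^n}{d\tau^n}\bigl((\nabla,-iz)R(z)w(z)\bigr),
\]
we thus obtain $\|f(\tau)\|_{L^2} \lesssim (1+|\tau|^{-\epsilon/2})\,\|(u_0,u_1)\|_{L^{2,\delta}\times L^{2,\delta}}$.

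Next, I would differentiate $f$ once more in $\tau$. The resulting expression involves $R^{(d)}(z)$ and $\nabla R^{(d)}(z)$, still covered by Theorem~\ref{th-low-freq} thanks to $\delta > d + \frac{1}{2}$; the worst terms, $z R^{(d)}(z)(au_0+u_1)$ and $\nabla R^{(d)}(z)(au_0+u_1)$, contribute $O(|\tau|^{-1-\epsilon/2})$, so
\[
\|f'(\tau)\|_{L^2} \lesssim (1+|\tau|^{-1-\epsilon/2})\,\|(u_0,u_1)\|_{L^{2,\delta}\times L^{2,\delta}}.
\]
Since $R(z)$ is holomorphic on $\C_+$ and $\h_0$ is compactly supported, $f$ is $C^1$ and supported in $\supp\h_0$; Lemma~\ref{lem-holder2} with $\gamma = \epsilon/2 < 1/2$ then yields $\|\hat f(t)\|_{L^2} \lesssim \pppg{t}^{-1+\epsilon}\,\|(u_0,u_1)\|_{L^{2,\delta}\times L^{2,\delta}}$, uniformly in $\mu > 0$, and $(v_{0,\mu},\tilde v_{0,\mu})(t) = \frac{1}{2\pi}\hat f(t)$ by construction. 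The only real obstacle is the $|\tau|^{-\epsilon/2}$ loss at $\tau = 0$ caused by our inability to integrate by parts once more at this order; Lemma~\ref{lem-holder2} is precisely designed to absorb it into an $\epsilon$ loss in the time decay.
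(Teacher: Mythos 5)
Your proof is correct and takes essentially the same route as the paper's (which only sketches it): the key point in both is that the gradient in $v_{0,\mu}$ and the factor $z$ in $\tilde v_{0,\mu}$ each improve the bound from Theorem~\ref{th-low-freq} to $|z|^{d-1-k-\epsilon}$, so for $n=d-1$ the integrand is $O(|\tau|^{-\epsilon/2})$ and its derivative is $O(|\tau|^{-1-\epsilon/2})$, and Lemma~\ref{lem-holder2} then yields the claimed decay. One minor bookkeeping slip: in your Leibniz expansion of $\tilde v_{0,\mu}$, the summand $R^{(d-2)}(z)(au_0+u_1)$ is also of size $O(|\tau|^{-\epsilon/2})$, not uniformly bounded as you state, but since it has the same acceptable size this does not affect the argument.
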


Recall that there is a factor $ t^{d-1} $ in the left hand side of \eqref{decomp-time-decay} (applied with $n=d-1$) so the final the contribution of $v_{0,\m}(t)$ and $\tilde v_{0,\m}(t)$ in the estimate of Theorem \ref{th-loc-decay} will clearly follow from Proposition \ref{prop-low-freq2}.

\begin{proof}
% By Proposition \ref{prop-A-diss} and the fact that $ {\mathcal U}_0 =: (u,v) \in {\mathcal S}({\mathbb R}^ d)^2 $, we see that the components  of $ \DHc (\Ac - (\t+i\m))^{-d} \Uc_0 $ are, up to a multiplicative factor $ (d-1)!$,
% $$ \partial_z^{d-1} \big( \nabla_g R(z) (ia + z) u + \nabla_g R (z) v \big) \qquad \mbox{and} \qquad \partial_z^{d-1} \big( z R(z) (ia + z) u + z R (z) v  \big) . $$
The key observation is that we have either a gradient for $v_{0,\m}(t)$ or an extra factor $z = \t+i\m$ for $\tilde v_{0,\m}(t)$, so that both can be estimated by mean of Theorem \ref{th-low-freq}, with the same bound as in the second estimate of this theorem. The result is then a direct consequence of  Lemma \ref{lem-holder2}.
% in $L^{2}_g \times L^{2}_g$ to the function $\t \mapsto \h_0(\t) \pppg x_\Hc ^{-\d} (\Ac-(\t+i\m))^{-d} %\Uc_0$. 
\end{proof}

In a similar (and even simpler) fashion, the pair $(v_{\finter,\m}(t), \tilde v_{\finter,\m}(t))$ is estimated thanks to Theorem \ref{th-inter-freq}. Notice that the resolvent can be differentiated as much as we wish, provided that we consider the appropriate weighted spaces, so the decay in time is actually as fast as we wish. We record  the estimate we need without proof in the following proposition.

\begin{proposition} Let $n$ be chosen arbitrarily and $\d > n + \frac 12$. There exists $C \geq 0$ which does not depend on $(u_0,u_1) \in \Sc(\R^d)\times\Sc(\R^d)$ and such that for $t \geq 0$ and $\m > 0$ we have 
\begin{equation*}
\nr { v_{\finter,\m}(t)}_{L^2} +\nr { \tilde v_{\finter,\m}(t)}_{L^2} \leq C \nr {(u_0,u_1)}_{L^{2,\d} \times L^{2,\d}}.
\end{equation*}
\end{proposition}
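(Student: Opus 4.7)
The plan is to view the integrand in the definition of $(v_{\finter,\m}, \tilde v_{\finter,\m})(t)$ as a uniformly bounded $L^2$-valued function of $\t$ supported on $\supp \h'_0$, which is a compact subset of $\R^*$ since $\h_0 \in C_0^\infty(\R)$ equals $1$ near $0$. The factor $e^{-it\t}$ has modulus one, so a uniform-in-$t$ bound follows at once from a pointwise bound that is uniform in $\t$ on this compact support and in $\m > 0$.

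First I would split the $\m$-range. For $\m \in (0,1]$, the point $z = \t + i\m$ stays in the compact set $K = \supp \h'_0 + i[0,1] \subset \C \setminus \{0\}$, so Theorem \ref{th-inter-freq} applies with the chosen $n$ and $\d > n + \frac 12$. For $\m > 1$, Proposition \ref{prop-R-diss}(iii) yields $\nr{R(z)}_{\Lc(L^2)} = O(\m^{-2})$, and iterating the identity $R'(z) = iR(z)aR(z) + 2zR(z)^2$ (whose general analog is Proposition \ref{prop-der-R2}) shows that all $R^{(k)}(z)$ are also uniformly bounded on $L^2$ for such $z$; the corresponding contribution is trivially controlled (with crude weights on $(u_0,u_1)$ that are dominated by the $L^{2,\d}$ norms since $\d>n+\frac12>0$).

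Next I would expand $\frac{d^n}{d\t^n}\big[(\nabla,-iz)R(z)w(z)\big]$ by the Leibniz rule, where $w(z) = au_0 - izu_0 + u_1$. Both $-iz$ and $w(z)$ are affine in $\t$, with $w'(z) = -iu_0$ and $(-iz)' = -i$, so only finitely many terms survive. Each is of the form $p(z)\, \nabla R^{(k)}(z)\, v$ or $p(z)\, R^{(k)}(z)\, v$ with $k \in \Ii 0 n$, $p$ a polynomial of degree $\leq 1$, and $v \in \{au_0, u_0, u_1\}$. On each such term, I would factor the weight as $\pppg x^{-\d}\cdot\pppg x^{\d}$ on the right of the resolvent and apply Theorem \ref{th-inter-freq}: since $k \leq n$ and $\d > n + \frac 12$, the operator norms
\[
\nr{\pppg x^{-\d} R^{(k)}(z) \pppg x^{-\d}}_{\Lc(L^2)}, \qquad \nr{\pppg x^{-\d} \nabla R^{(k)}(z) \pppg x^{-\d}}_{\Lc(L^2)}
\]
are uniformly bounded on $K \cap \C_+$. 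The residual factor $\nr{\pppg x^{\d} v}_{L^2}$ is controlled by $\nr{(u_0,u_1)}_{L^{2,\d}\times L^{2,\d}}$, using $a \in L^\infty$ for the case $v = au_0$, and the polynomial $p(z)$ is bounded on $K$.

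Integrating these uniform pointwise bounds against $\h'_0(\t)\,d\t$ over the compact support of $\h'_0$ then yields the claimed estimate. There is no substantial obstacle here; the only item requiring care is the combinatorial bookkeeping in the Leibniz expansion to verify that no term carries more than $n$ derivatives on $R(z)$, which is exactly what the weight $\pppg x^{-\d}$ with $\d > n+\frac 12$ is designed to absorb via Theorem \ref{th-inter-freq}.
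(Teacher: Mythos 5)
Your proposal is correct and is exactly the argument the paper intends (the paper omits the proof, remarking only that the claim follows from Theorem~\ref{th-inter-freq} ``in a similar (and even simpler) fashion'' to Proposition~\ref{prop-low-freq2}): the support of $\h_0'$ is a fixed compact subset of $\R^*$, so after expanding $\frac{d^n}{d\t^n}\big((\nabla,-iz)R(z)w(z)\big)$ by Leibniz (using that $w$ and $-iz$ are affine in $\t$) one is left with finitely many terms $p(z)\,\nabla R^{(k)}(z)\,v$ and $p(z)\,R^{(k)}(z)\,v$ with $k\le n$, which Theorem~\ref{th-inter-freq} controls uniformly on the compact set $\supp\h_0'+i\,]0,1]$ since $\d>n+\tfrac12\ge k+\tfrac12$.

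Two small points. First, the polynomial $p(z)$ can have degree $2$ rather than $\le 1$ (e.g.\ the cross term $(-iz)\cdot(-iz u_0)$ in the second component); this is immaterial since it is bounded on a compact set. Second, and more to the point, in the regime $\m>1$ you verify from Propositions~\ref{prop-R-diss}(iii) and~\ref{prop-der-R2} that $R^{(k)}(z)$ is $O(\m^{-k-2})$ on $L^2$, but you do not address $\nabla R^{(k)}(z)$, which also appears in your Leibniz expansion via the first component $\nabla R(z)w(z)$. This is a genuine (if small) loose end: one needs, say, $\|\nabla R(z)\|_{\Lc(L^2)}=O(\m^{-1})$, which follows from
\[
\|H_0^{1/2}R(z)u\|^2=\Re\innp{\big(1+(iza+z^2)R(z)\big)u}{R(z)u}\lesssim \m^{-2}\|u\|^2
\]
(for $\t$ in a compact set and $\m>1$), after which the same bookkeeping as for $R^{(k)}(z)$ gives $\|\nabla R^{(k)}(z)\|_{\Lc(L^2)}=O(\m^{-k-1})$. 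With this supplement the proof is complete.
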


In the rest of this section we estimate the contribution of high frequencies, \ie of the terms $v_{j,\m}$ and $\tilde v_{j,\m}$, $j\in\N^*$.\\

Let $\tilde \h \in C_0^\infty(\R^*,[0,1])$ be equal to 1 on a neighborhood of $\supp \h$. For $\t \in \R$ and $j \geq 1$ we set $\tilde \h_j(\t) = \tilde \h \big(\frac \t  {2^{j-1}}\big)$.
Let $n\geq 2$, $\d > n + \frac 12$, $s \in \{0,1\}$, $\a \in \N^n$ with $\abs \a \in \{0,1\}$ and $v \in \Sc(\R^d)$. For $j \in \N^*$, $\m > 0$ and $t\geq 0$ we set 
\[
I_{j,\m}(t) =  \int_{\R} e^{-it\t}  \t^{s+1-\abs \a}\h_j(\t)    \pppg x^{-\d}  D^\a   R^{(n)}(\t+i\m)  \, d\t \in \Lc(L^2)
\]
and then:
\[
u_{1,j,\m} (t) = \tilde\h_j \big( \Ho^{1/2} \big) I_{j,\m}(t) \pppg x ^{-\d}  \tilde \h_j \big( \Ho^{1/2} \big) \pppg x^{\d} v,
\]
\[
u_{2,j,\m} (t) = \tilde \h_j \big( \Ho^{1/2} \big) I_{j,\m}(t) \pppg x ^{-\d}  \left( 1 - \tilde \h_j \big( \Ho^{1/2} \big)\right) \pppg x^{\d} v
\]
and
\[
u_{3,j,\m} (t) = \left( 1 - \tilde \h_j \big( \Ho^{1/2} \big)\right) I_{j,\m}(t)  v .
\]
Both $v_{j,\m}$ and $\tilde v_{j,\m}$ are linear combinations of terms of the form $u_{1,j,\m} (t) + u_{2,j,\m} (t) +u_{3,j,\m} (t)$ for some $s \in \{0,1\}$, $\a \in \N^n$ with $\abs \a \in \{0,1\}$ and $v \in \{ u_0,u_1\}$, so we have to prove that there exists $C\geq 0$ such that for all $t\geq 0$ and $\m>0$ we have
\begin{equation} \label{reduc-high-freq}
\nr{ \sum_{j=1}^\infty \big( u_{1,j,\m} (t) + u_{2,j,\m} (t) +u_{3,j,\m} (t) \big)}_{L^2} \leq C \nr{v}_{H^{1+s,\d}} .
\end{equation}

% 
% 
% 
% 
% 
% 
% 
% 
%  For the sum $\sum_{j} \Uc_{j,\m}$ we prove that for any $l \in \Ii 1 d$, $n \in \N$ and $\d > n + \frac 12$ there exists $C\geq 0$ such that for all $\m > 0$, $t >0$ and $v \in \Sc(\R^d)$ we have
% \[
% \nr{\sum_{j\in\N^*} \int_{\R} e^{-it\t} \h_j(\t)   \pppg x^{-\d}  D_l  R^{(n)}(\t+i\m) \, \t v \, d\t}_{L^2_g} \leq C \nr{\pppg x ^{\d} v}_{H^2_g}.
% \]
% We also need the same estimates with $D_l  R^{(n)}(\t+i\m) \t v$ and $\nr{\pppg x ^{\d} v}_{H^2_g}$ replaced by $\t^2  R^{(n)}(\t+i\m) v$ and $\nr{\pppg x ^{\d} v}_{H^2_g}$, $\t  R^{(n)}(\t+i\m) v$ and $\nr{\pppg x ^{\d} v}_{H^1_g}$ or $ R^{(n)} v$ and $\nr{v}_{L^2_g}$. They can be proved similarly.\\
% 
% 
% 
%  Let $\tilde \h \in C_0^\infty(\R^*,[0,1])$ be even and equal to 1 on a neighborhood of $\supp \h$. For $\t \in \R_+$ and $j \in \N^*$ we set $\tilde \h_j(\t) = \tilde \h\big(\frac \t  {2^{j-1}}\big)$. 
% %
% %
% %
% For $j \in \N^*$, $\m>0$ and $t> 0$ we set

\begin{proposition} \label{prop-Uc1}
There exists $C \geq 0$ which does not depend on $v \in \Sc(\R^d)$ and such that for all $t \geq 0$ and $\m > 0$ we have
\[
\nr{ \sum_{j =1}^\infty  u_{1,j,\m}(t)}_{L^2} \leq C \nr {\pppg x ^\d v}_{H^{1+s}}.
\]
\end{proposition}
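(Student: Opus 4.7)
The plan is to combine three ingredients: a bound on each $I_{j,\mu}(t)\langle x\rangle^{-\delta}$ deduced from Theorem \ref{th-high-freq}, the $L^2$ almost-orthogonality of the family $\{\tilde\eta_j(H_0^{1/2})\}_{j\geq 1}$, and a Littlewood--Paley-type estimate controlling the frequency-localized pieces $\tilde\eta_j(H_0^{1/2})\langle x\rangle^\delta v$ by $\|\langle x\rangle^\delta v\|_{H^{1+s}}$. The key observation, which makes these three ingredients fit together, is that the high-frequency resolvent bounds give a polynomial \emph{growth} in $j$ that is exactly compensated by the Sobolev regularity $H^{1+s}$ of the data after dyadic decomposition.

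First I would establish the uniform-in-$(t,\mu)$ bound
\[
\|I_{j,\mu}(t)\langle x\rangle^{-\delta}\|_{\Lc(L^2)} \lesssim 2^{j(s+1)}.
\]
On the support of $\eta_j$, we have $|\tau|\sim 2^{j-1}$ and the support has length $\sim 2^j$. By Theorem \ref{th-high-freq}, the integrand satisfies
\[
\|\langle x\rangle^{-\delta} D^\alpha R^{(n)}(\tau+i\mu)\langle x\rangle^{-\delta}\|_{\Lc(L^2)} \leq C|\tau|^{|\alpha|-1},
\]
which together with the factor $|\tau|^{s+1-|\alpha|}$ appearing in $I_{j,\mu}(t)$ yields an integrand bounded by $C\,2^{js}$, independently of $|\alpha|\in\{0,1\}$; the claim then follows by integration.

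Next, let $f_j := \tilde\eta_j(H_0^{1/2})\langle x\rangle^\delta v$ and $w_j := I_{j,\mu}(t)\langle x\rangle^{-\delta} f_j$, so that $u_{1,j,\mu}(t)=\tilde\eta_j(H_0^{1/2})w_j$ and $\|w_j\|_{L^2}\lesssim 2^{j(s+1)}\|f_j\|_{L^2}$ by the previous step. Since the supports of the $\tilde\eta_j$ have finite overlap, the spectral theorem applied to $H_0$ gives the almost-orthogonality inequality
\[
\Big\|\sum_{j\geq 1}\tilde\eta_j(H_0^{1/2}) w_j\Big\|_{L^2}^2 \lesssim \sum_{j\geq 1}\|w_j\|_{L^2}^2 \lesssim \sum_{j\geq 1} 2^{2j(s+1)}\|f_j\|_{L^2}^2.
\]
To close the argument I would then verify the Littlewood--Paley estimate
\[
\sum_{j\geq 1} 2^{2j(s+1)}\|f_j\|_{L^2}^2 \lesssim \|\langle x\rangle^\delta v\|_{H^{1+s}}^2,
\]
which follows by writing $\|f_j\|_{L^2}^2$ as the integral of $\tilde\eta_j(\lambda)^2$ against the spectral measure of $\langle x\rangle^\delta v$ associated to $H_0^{1/2}$, noticing that on the support of $\tilde\eta_j$ one has $\langle\lambda^2\rangle^{1+s}\gtrsim 2^{2j(1+s)}$, and using the bounded overlap $\sum_j\tilde\eta_j^2\lesssim 1$.

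I do not expect any serious obstacle: the essentially orthogonal nature of the spectral decomposition governs the summation in $j$, while the twofold spectral cutoff in the definition of $u_{1,j,\mu}(t)$ is exactly what allows the polynomial growth $2^{j(s+1)}$ coming from Theorem \ref{th-high-freq} to be absorbed by the Sobolev regularity of the initial data. The mildly delicate point to watch is the uniformity of all constants in $\mu>0$ and $t\geq 0$, which is guaranteed because Theorem \ref{th-high-freq} provides bounds uniform in $\C_+$ and the time variable only enters through the modulus-one factor $e^{-it\tau}$.
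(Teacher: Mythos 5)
Your proposal is correct and follows essentially the same route as the paper: the uniform operator bound $\lVert I_{j,\mu}(t)\langle x\rangle^{-\delta}\rVert\lesssim 2^{j(s+1)}$ (via the factor $\tau^{s+1-|\alpha|}$ in $I_{j,\mu}$, Theorem \ref{th-high-freq}, and the length $\sim 2^j$ of $\operatorname{supp}\eta_j$), the almost-orthogonality of the dyadic spectral cutoffs $\tilde\eta_j(H_0^{1/2})$, and the Littlewood--Paley bound $\sum_j 2^{2j(s+1)}\lVert\tilde\eta_j(H_0^{1/2})\langle x\rangle^\delta v\rVert^2\lesssim\lVert\langle H_0\rangle^{(1+s)/2}\langle x\rangle^\delta v\rVert^2$ are exactly the three ingredients the paper combines. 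The only difference is a cosmetic reorganization (carrying the intermediate objects $f_j$ and $w_j$ rather than directly splitting $\sup_j 2^{-2(s+1)j}(\int\cdots)^2$ off as a multiplicative factor); the estimates and their roles are identical.
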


We record that the loss of a derivative in Theorem \ref{th-loc-decay} is due to this proposition.

\begin{proof}
There exists $N \in \N$ such that
\begin{equation*} 
\forall j,k \in \N^*, \quad \abs{j-k} \geq N \quad \implies \quad  \supp (\tilde \h_j) \cap \supp (\tilde \h_k) = \emptyset,
\end{equation*}
and hence
\[
\nr{ \sum_{j =1}^\infty u_{1,j,\m}(t) }^2_{L^2}\leq  N \sum_{j =1}^\infty\nr{u_{1,j,\m}(t) }^2_{L^2}.
\]
The right-hand side is bounded by the product of
\[
N \sup_{j \in \N^*} 2^{-2(s+1)j}  \left(\int_\R \t^{s+1-\abs\a} \h_j(\t) \nr{ \pppg x ^{-\d} D^\a R^{(n)}(\t+i\m) \pppg {x}^{-\d} }_{\Lc(L^2)} d\t \right)^2 ,
\]
(which is independent of $t$ and uniformly bounded in $\m$ by Theorem \ref{th-high-freq}), and 
\[
 \sum_{j =1}^\infty 2^{2(s+1)j} \nr{  \tilde \h_j \big( \Ho^{1/2} \big) \pppg x^{\d}v}^2_{L^2}.
\]
The latter is controlled by $\nr{ \pppg \Ho^{\frac {s+1}2} \pppg x^{\d}v}^2_{L^2}$, again by almost orthogonality.
% 
% \begin{eqnarray*}
% \lefteqn{\nr{ \sum_{j =1}^\infty u_{1,j,\m}(t) }^2_{L^2}\leq  N \sum_{j =1}^\infty\nr{u_{1,j,\m}(t) }^2_{L^2}}\\
% && \leq  N \sup_{j \in \N^*} 2^{-2(s+1)j}  \left(\int_\R \t^{s+1-\n} \h_j(\t) \nr{ \pppg x ^{-\d} D_l^\n R^{(n)}(\t+i\m) \pppg {x}^{-\d} }_{\Lc(L^2)} d\t \right)^2 \sum_{j =1}^\infty 2^{2(s+1)j} \nr{  \tilde \h_j \big( \Ho^{1/2} \big) \pppg x^{\d}v}^2_{L^2} \\
% && \lesssim  \sum_{j =1}^\infty \nr{2^{-(s+1)j} \tilde \h_j \big(\Ho^{1/2}\big) \Ho^{\frac {s+1}2} \pppg x^{\d}v}^2_{L^2}\\
% && \lesssim  \nr{ \pppg \Ho^{\frac {s+1}2} \pppg x^{\d}v}^2_{L^2},
% \end{eqnarray*}
% where we have used the almost orthogonality in  the first and last estimates.
\end{proof}

\begin{proposition} \label{prop-Uc23}
There exists $C \geq 0$ which does not depend on $v \in \Sc(\R^d)$ and such that for all $j \in \N^*$, $t \geq 0$ and $\m \in ]0,1]$ we have
\[
\sum_{j =1}^\infty \nr{u_{2,j,\m}(t)}_{L^2} + \sum_{j =1}^\infty \nr{u_{3,j,\m}(t)}_{L^2} \leq C \nr{\pppg x ^\d v}_{L^2}.
\]
\end{proposition}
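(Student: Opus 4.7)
The key mechanism is a spectral gap. By construction of $\h$ and $\tilde\h$, the supports of $\h_j$ and $1-\tilde\h_j$ are separated in a scale invariant way, so there exists $c>0$ such that $|\lambda^2-\t^2|\geq c\,2^{2j}$ whenever $\t\in\supp\h_j$, $\lambda\in\supp(1-\tilde\h_j)$, $j\in\N^*$. Setting $R_0(z)=(H_0-z^2)^{-1}$ and using the spectral theorem for the self-adjoint operator $H_0$, I obtain that for every $N\in\N$ there is $C_N$ such that, uniformly in $\t\in\supp\h_j$, $\mu>0$, $j\geq 1$,
\[
\bigl\|(1-\tilde\h_j(H_0^{1/2}))\,R_0(\t+i\mu)^{N}\bigr\|_{L^2\to L^2}\leq C_N\,2^{-2Nj},
\]
together with the analogous bound on the right. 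This is the substitute, in our setting, for the spectral estimate recalled in Section~\ref{sec-outline}. The strategy is to reduce the dissipative resolvent $R(z)$ to $R_0(z)$, for which the preceding bound applies.

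Starting from the resolvent identity $R(z)=R_0(z)+iz\,R_0(z)\,a\,R(z)$ iterated $N$ times and using Proposition~\ref{prop-der-R2}, I would expand $R^{(n)}(\t+i\mu)$ as a finite linear combination of terms of the schematic form
\[
z^{p}\,R_0(z)^{n_0}\,a\,R_0(z)^{n_1}\,a\cdots a\,R_0(z)^{n_\ell}\,a\,R(z)^{r},
\]
where the exponents $n_0,\ldots,n_\ell\geq 1$ and where for remainder terms $n_0$ can be chosen as large as needed. Each $R_0(\t+i\mu)$ on the support of $\h_j$ contributes a factor $O(2^{-j})$ in weighted norm, each $a$ is $L^2$-bounded, each $R(z)^{r}$ is handled by Theorem~\ref{th-high-freq}, and the $\t$-integration over $\supp\h_j$ (of length $\sim 2^j$) against $\t^{s+1-|\a|}$ contributes $O(2^{j(s+2)})$. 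Thus every term in the expansion is bounded by a fixed polynomial in $2^j$ times $2^{-2n_0 j}$ coming from the spectral-gap estimate once $(1-\tilde\h_j(H_0^{1/2}))$ has been placed against the leftmost $R_0(z)^{n_0}$.

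For $u_{3,j,\mu}(t)$ I would insert this expansion for $R^{(n)}(\t+i\mu)$ and commute $(1-\tilde\h_j(H_0^{1/2}))$ through the factor $\pppg x^{-\d}D^{\a}$. Semiclassical functional calculus at high frequency for $H_0$ yields $\bigl\|[\tilde\h_j(H_0^{1/2}),\pppg x^{-\d}D^{\a}]\bigr\|_{L^2\to L^2}=O(2^{-j\,|\a|-j})$ (with analogous bounds for iterated commutators), which is acceptable. On the principal term, $(1-\tilde\h_j(H_0^{1/2}))$ then sits immediately to the left of $R_0(z)^{n_0}$; choosing $n_0$ large enough, independently of $j$, $\mu$, $t$, to dominate the polynomial growth from the other factors gives $\|u_{3,j,\mu}(t)\|_{L^2}\leq C\,2^{-j}\|\pppg x^{\d}v\|_{L^2}$, whence summability in $j$. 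The case of $u_{2,j,\mu}(t)$ is entirely symmetric: one pushes the inner $(1-\tilde\h_j(H_0^{1/2}))$ through $\pppg x^{-\d}$ and through the expansion toward the rightmost $R_0(z)$, the outer $\tilde\h_j(H_0^{1/2})$ on the left being uniformly $L^2$-bounded.

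The main obstacle is the bookkeeping of the expansion: Proposition~\ref{prop-der-R2} combined with the iterated identity $R=R_0+iz\,R_0\,a\,R$ produces many terms, and one must check that in each of them the spectral cutoff can be placed \emph{adjacent} to some $R_0(z)^{n_0}$ with $n_0$ large enough. The non-commutativity of $a$ (and hence of each remaining $R(z)$) with $H_0$-functional calculus prevents a direct spectral-theoretic push-through; this is precisely why the expansion of $R$ in terms of $R_0$ is needed, and why the expansion order must be chosen in function of the polynomial growth of the other factors. Keeping this uniform in $\mu\in]0,1]$ and summing the resulting geometric series in $j$ yields the claim.
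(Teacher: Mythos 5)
There is a genuine gap. Your scheme requires uniform bounds on the \emph{undamped} resolvent $R_0(z)=(\Ho-z^2)^{-1}$ away from the single spectral cutoff, and these are not available here. You write ``Each $R_0(\t+i\mu)$ on the support of $\h_j$ contributes a factor $O(2^{-j})$ in weighted norm'', but this is precisely the weighted limiting absorption estimate for $\Ho$, which is only known under a non-trapping hypothesis. In this paper the classical flow of $p(x,\xi)=\langle G(x)\xi,\xi\rangle$ is allowed to have trapped orbits (the assumption \eqref{hyp-amort} only requires these to meet the damping region), so Theorem \ref{th-high-freq} gives uniform estimates on the damped resolvent $R(z)=(\Hz-z^2)^{-1}$, \emph{not} on $R_0(z)$; with trapping, $\nr{\pppg x^{-\d}R_0(\t+i\m)\pppg x^{-\d}}$ can blow up (even exponentially) as $\m\to 0^+$. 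Consequently, after you place $(1-\tilde\h_j(H_0^{1/2}))$ against the leftmost $R_0(z)^{n_0}$, all the intermediate factors $R_0(z)^{n_1},\dots,R_0(z)^{n_\ell}$ in your schematic term are uncontrolled, and the ``fixed polynomial in $2^j$'' you invoke does not exist. A secondary issue: for the remainder terms you claim ``$n_0$ can be chosen as large as needed'', but differentiating the $N$-times-iterated identity distributes derivatives by Leibniz among all the $R_0$'s and the factor $R$, and in many resulting terms no derivatives fall on the leftmost $R_0$, giving $n_0=1$ regardless of $N$.

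The paper's proof avoids both pitfalls by using only a \emph{single} step of the resolvent identity $R=R_0+izR_0aR$ combined with an \emph{induction} on the number of derivatives (see \eqref{estim-u23-rec}), together with a chain of nested spectral cutoffs $\p_j\succ\vf_j\succ\s_j$ with disjoint-support relations. This guarantees that every occurrence of a free resolvent $R_0^{(k)}(z)$ carries a disjoint-support cutoff $(1-\vf_j)$ (hence contributes exponential decay in $j$ by the spectral theorem), while every remaining damped resolvent $R^{(m-k)}(z)$ is estimated either by Theorem \ref{th-high-freq} or by the inductive hypothesis applied to a lower-order derivative. If you want to salvage your approach you would have to insert a cutoff in front of \emph{every} $R_0$ factor and handle the complementary cutoff terms recursively — which is in effect the paper's induction.
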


\begin{proof}
\stepp
We only prove the estimate for $u_{3,j,\m}(t)$. The estimate for $u_{2,j,\m}(t)$ can be proved similarly. For this we prove by induction on $n \in \N$ the following statement: for $m \geq n$, $\d > m + \frac 12$ and $\p \in C_0^\infty(\R^*,[0,1])$ equal to 1 on a neighborhood of $\supp \h$ there exists $C \geq 0$ such that for $j \in \N^*$, $\t \in \supp(\h_j)$ and $\m \in ]0,1]$ we have
\begin{equation} \label{estim-u23-rec}
\t ^{s+1-\abs\a} \nr{\left( 1 - \p_j \big( \Ho^{1/2} \big)\right)   \pppg x ^{-\d} D^\a R^{(m)}(\t+i\m) \pppg x^{-\d}}_{\Lc(L^2)} \leq C \, 2^{(s-1-n) j}
\end{equation}
(where $\p_j (\th) = \p(\th/2^{j-1})$ for all $\th \in \R$ and $j \in \N^*$). Applied with $m=n$ and $\p = \tilde \h$, and after integration on $\t$, this gives 
\[
\nr{u_{3,j,\m}(t)} \lesssim 2^{(s-n)j} \nr{\pppg x^\d v}_{L^2},
\]
from which we get the estimate on $ \sum_{j} ||u_{3,j,\mu}(t) ||_{L^2} $ since $n \geq 2$. Note that since $\t \in \supp(\h_j)$, \eqref{estim-u23-rec} can be rewritten as
\[
\nr{\left( 1 - \p_j \big( \Ho^{1/2} \big)\right)   \pppg x ^{-\d} D^\a R^{(m)}(\t+i\m) \pppg x^{-\d}}_{\Lc(L^2)} \lesssim \, 2^{(\abs \a - 2 - n)j}.
\]

%This is a consequence of Theorem \ref{th-high-freq} when $n = 0$. 

\stepp
Let $n \in \N$ and $m \geq n$. If $n>0$ we assume that \eqref{estim-u23-rec} holds up to order $n-1$.
Here we can use pseudo-differential calculus on $\p_j \big( \Ho^{1/2} \big)$ (see \cite{dimassis} or Proposition 2.1 in \cite{bouclett08} in a closer context). Indeed 
% $ $ (and likewise ) is of the form $ \Phi (h^2 H_0) $ with $ \Phi $ smooth,  
%
$\p$ vanishes on a neighborhood of 0 and hence $\p_j \big( \Ho^{1/2} \big)$ can be rewritten as $\tilde \p  (h^2 H_0)$ where $\tilde \p : \l \mapsto \p(\sqrt \l)$ belongs to $C_0^\infty(\R)$ and $h = 2^{1-j}$.
Let $\vf \in C_0^\infty(\R^*,[0,1])$ be equal to 1 on a neighborhood of $\supp \h$ and such that $\p = 1$ on a neighborhood of $\supp \vf$. As before we set  $\vf_j (\th) = \vf(\th/2^{j-1})$ for $\th \in \R$ and $j \in \N^*$.
Using Theorem \ref{th-high-freq}, we see that for $j \in \N^*$, $\t \in \supp(\h_j)$ and $\m \in ]0,1]$ we have
\begin{eqnarray*}
\lefteqn{ \nr{\left( 1 - \p_j \big( \Ho^{1/2} \big)\right) \pppg x ^{-\d} D^\a \vf_j\big(\Ho^{1/2}\big) R^{(m)}(\t + i\m)  \pppg x ^{-\d} }}\\
&& \leq \nr{\left( 1 - \p_j \big( \Ho^{1/2} \big)\right)  \pppg x ^{-\d} D^\a \vf_j\big(\Ho^{1/2}\big) \pppg x ^\d } \nr {\pppg x ^{-\d}  R^{(m)}(\t + i\m)  \pppg x ^{-\d} }\\
&& \lesssim 2^{(\abs \a-2-n)j}.
\end{eqnarray*}
Actually the decay could be as fast as we wish since we use the fact that the symbols of $  \left( 1 - \p_j \big( \Ho^{1/2} \big)\right) $ and $ \vf_j\big(\Ho^{1/2}\big)$ have disjoint supports. Here we have not used the inductive assumption. Since $\left( 1 - \p_j \big( \Ho^{1/2} \big)\right)$ is a bounded operator on $L^2$ uniformly in $j$, it only remains to estimate
\[
 \pppg x ^{-\d}D^\a \left( 1 - \vf_j\big(\Ho^{1/2}\big) \right) R^{(m)}(\t + i\m)  \pppg x ^{-\d}.
\]

\stepp
For $z \in \C_+$ we have
\[
R(z) = R_0(z) +iz R_0(z)  a  R(z), \qquad \text{where } R_0(z) = \big(\Ho -z^2\big)\inv.
\]
Differentiating $m$ times we get
\begin{equation*} %\label{res-eq-m}
R^{(m)}(z) =  R_0^{(m)}(z) + i \sum _{k=0}^{m} C_{m}^k (zR_0)^{(k)}(z) a R^{(m-k)}(z),
\end{equation*}
where $zR_0$ is the function $z \mapsto z R_0(z)$. For all $k \in \N$ the derivative $R_0^{(k)}(z)$ can be written as a linear combination of terms of the form $z^{\s_1} R_0(z)^{\s_2}$ where $2\s_2 = \s_1 + k +2$.
%
 %On the other hand, using that $ \mbox{Re}(z) = \tau $ belongs to the support of $ \chi_j $ and that $ \mu \leq 1 $, we see that $ \mbox{Re}(z^2)/2^{2j-2} $ is close to the support of $ \chi $ when $j$ is large, hence that 
%$$ |\lambda^2 - \mbox{Re}(z^2)| \gtrsim 2^{2j}, \qquad \lambda \in \mbox{supp}(1-\psi_j), \ \mbox{Re}(z) \in \mbox{supp}(\chi_j) . $$
Using the Spectral Theorem and the fact that $1-\vf$ and $\h$ have disjoint supports, this implies that for $\t \in \supp \h_j$, $\m \in ]0,1]$ and $j \in \N^*$ we have
\[
 \nr{ (1 - \vf_j)\big(\Ho^{1/2}\big)R_0^{(m)}(\t + i\m)  } \lesssim 2^{-(m+2)j} .
\]
In particular we have 
\[
\nr{\pppg x ^{-\d} D^\a \left( 1 - \vf_j\big(\Ho^{1/2}\big) \right) R_0^{(m)}(\t + i\m)  \pppg x ^{-\d} }  \lesssim 2^{(\abs \a - m-2)j}.
\]
Here and below the contribution of $D^\a$ and the uniform control with respect to $z = \t + i\m$ follow from the fact that 
$
\big( 1 - \vf_j(H_0^{1/2}) \big) R_0^{(m)} (z)
$
is a linear combination of terms of the form 
\[
(hz)^{\s_1} \big( 1 - \vf(h H_0^{1/2}) \big) (h^2 H_0 - (hz)^2)^{-\s_2}  h ^{m+2},
\]
where $h = 2^{-(j-1)}$, $2\s_2 = \s_1 + m +2$ and $\Re (hz)$ belongs to a compact disjoint from the support of $1-\vf$.
Now let $k \in \Ii 0 m$. Let $\s \in C_0^\infty(\R^*,[0,1])$ be equal to 1 on a neighborhood of $\supp \h$ and such that $\vf = 1$ on a neighborhood of $\supp \s$, and $\s_j (\th) = \s(\th/2^{j-1})$ for $\th \in \R$ and $j \in \N^*$. For $\t \in \supp \h_j$, $\m \in ]0,1]$, $z = \t + i \m$ and $j \in \N^*$ we have 
\begin{eqnarray*}
\lefteqn{  \nr{\pppg x ^{-\d}D^\a  \left( 1 - \vf_j\big(\Ho^{1/2}\big) \right) (zR_0)^{(k)}(z) a R^{(m-k)}(z)  \pppg x ^{-\d}}}\\
&& \lesssim  \nr{\pppg x ^{-\d}D^\a  \left( 1 - \vf_j\big(\Ho^{1/2}\big) \right) (zR_0)^{(k)}(z) \pppg x ^\d a\s_j\big(\Ho^{1/2}\big)
} \nr{\pppg x ^{-\d}  R^{(m-k)}(z)  \pppg x ^{-\d}}\\
&& \,+   \nr{\pppg x ^{-\d} D^\a \left( 1 - \vf_j\big(\Ho^{1/2}\big) \right) (zR_0)^{(k)}(z) \pppg x ^\d a} \nr{ \left( 1 - \s_j\big(\Ho^{1/2}\big)\right) \pppg x ^{-\d}  R^{(m-k)}(z)  \pppg x ^{-\d}}\\
&& \lesssim 2^{(\a - 2 - n)j} +  2 ^{ (\a - k - 1)j} 2^{(k-n-1)j} \lesssim 2^{(\a - 2 -n)j}.
\end{eqnarray*}
The first term and the first norm of the second term are estimated as above by the pseudo-differential calculus and Theorem \ref{th-high-freq}. 
The estimate of the second norm comes from Theorem \ref{th-high-freq} if $n \leq k$, and from \eqref{estim-u23-rec} applied with $n-1-k$ and $\a = 0$ otherwise (we observe that if $n=0$ the estimate on this norm uses only Theorem \ref{th-high-freq} and no inductive assumption). This gives \eqref{estim-u23-rec} by induction and hence concludes the proof of the proposition. 
\end{proof}

% This last estimate also uses that
% $$ R_0^{(2p)}(z) = \mbox{linear comb. of} \ \ z^{2k} (H_0 -z^2)^{-p-k-1}, $$
% and
% $$
% R_0^{(2p+1)}(z) = \mbox{linear comb. of} \ \ z^{2k+1} (H_0 -z^2)^{-p-k-2}, $$
% with $ k \leq p $. 
% Therefore
% \[
%  \nr{\left( 1 - \tilde \h_j \big( \Ho^{\frac 12} \big)\right) \pppg x ^{-\d} R_0^{(m)}(z)  \pppg x ^{-\d} } \leq  C 2^{-(m+2)j}.
% \]
% Similarly for all $k \in \Ii 0 m$ we have
% \[
%  \nr{\left( 1 - \tilde \h_j \big( \Ho^{\frac 12} \big)\right) \pppg x ^{-\d} (zR_0)^{(k)}(z) \pppg x ^\d a \p_j \big( \Ho^{\frac 12} \big)} \leq C 2^{-(k+1)j}.
% \]
% Here again, the right hand side could be replaced by any faster decay since $ 1 - \tilde \h_j $ and $ \p_j $ have disjoint supports.
% On the other hand we have $m-k \geq n-k-1$, and since $n-k-1\leq n-1$ we can apply the inductive assumption (with $\p$ instead of $\tilde \h$) and get
% \[
%  \nr{ \left( 1 - \p_j \big( \Ho^{\frac 12} \big)\right) \pppg x ^{-\d} R^{(m-k)}(z) \pppg x ^{-\d}} \leq C 2 ^{-(n-k)j}.
% \]
% Finally we obtain
% \[
% \nr{(1-\tilde \h_j(\Ho^{\frac 12}) \pppg x ^{-\d} R^{(m)}(z) \pppg x^{-\d}} \leq  C 2^{-(m+1)j} + C \sum _{k=0}^{m} 2^{-(k+1)j}2^{-(n-k)j},
% \]
% and \eqref{estim-u23-rec} follows. Then we can prove similarly that
% \[
% \nr{\left( 1 - \tilde \h_j \big( \Ho^{\frac 12} \big)\right) \pppg x ^{-\d} D_l R^{(m)} (\t+i\m) \pppg x^{-\d}}_{\Lc(L^2_g)} \leq C 2^{- n j}.
% \]

% \stepp
% After integration we obtain that 
% \[
% \nr{u_{3,j,\m}(t)} \lesssim 2^{(s-n)j} \nr{\pppg x^\d}_{L^2_g},
% \]
% and we can conclude when $n \geq 2$.
% \end{proof}
% 

According to \eqref{reduc-high-freq}, Propositions \ref{prop-low-freq2}, \ref{prop-Uc1} and \ref{prop-Uc23} prove Theorem \ref{th-loc-decay}. Now the rest of the paper is devoted to the proofs of Theorems \ref{th-inter-freq}, \ref{th-low-freq}, \ref{th-low-freq-bis} and \ref{th-high-freq}.

\section{Resolvent estimates for an abstract dissipative operator} \label{sec-Mourre}

\subsection{Multiple commutators method in the dissipative setting}

In this paragraph we first recall Mourre's commutators method in an abstract dissipative context, and then we derive uniform estimates for the powers of the resolvent, generalizing the results of \cite{jensenmp84,jensen85}. We will show in Paragraph \ref{sec-insert-factors} how to deal with the inserted factors mentioned after Theorem \ref{th-inter-freq}.\\ 
%Then, in Paragraph \ref{sec-inter-freq}, we apply the general result to prove Theorem \ref{th-inter-freq}.\\

Let $\Hc$ be a Hilbert space. We consider on $\Hc$ a family of abstract operators $(H_\l)_{\l \in \L}$ (parametrized by any set $\L$) of the form $\Hl = \Hul - iV_\l$ where the operators $\Hul$ and $V_\l$ are self-adjoint and the domain $\Dom_H$ of $\Hul$ is independant of $\l$. Moreover $V_\l$ is non-negative and uniformly $\Hul$-bounded with bound less than 1: there exist $a \in [0,1[$ and $b \geq 0$ such that
\[
 \forall \l \in \L, \forall \f \in \Dom_H, \quad \nr{V_\l \f} _\Hc \leq a \nr{\Hul \f}_\Hc + b \nr \f _\Hc.
\]
In particular for all $\l \in \L$ the operator $\Hl$ is maximal dissipative, and its adjoint is $\Hul +iV_\l$ with domain $\Dom(\Hl^*) = \Dom(\Hl) =  \Dom_H$ (see Lemma 2.1 in \cite{art-mourre}).\\

Let $J$ be an open interval of $\R$. Let $(A_\l)_{\l \in \L}$ be a family of self-adjoint operators on $\Hc$ and $(\a_\l)_{\l \in \L} \in ]0,1]^\L$.

\begin{definition} \label{defconjunif}
The operators $(A_\l)_{\l \in \L}$ are said to be uniformly conjugate to $(\Hl)_{\l \in \L}$ on $J$ with lower bounds $(\a_\l)_{\l \in \L}$ if there exists a bounded family of non-negative numbers $(\b_\l)_{\l \in \L}$ such that
\begin{enumerate}[($a$)]
\item 
The domain of $A_\l$ does not depend on $\l \in \L$ (it is denoted by $\Dom_A$), and $\Dom_H \cap \Dom_A$ is dense in $\Dom_H$ endowed with the graph norm $\f \mapsto \nr{(\Hul -i) \f}$ for all $\l \in \L$.
\item 
For all $t \in \R$ and $\l \in \L$ we have $e^{itA_\l}\Dom_H \subset \Dom_H$, and 
\begin{equation*}
\forall \l \in \L, \forall \f \in \Dom_H,\quad  \sup_{\abs t  \leq 1} \nr{\Hul   e^{it A_\l} \f}<\infty.
\end{equation*}
\item 
For all $\l \in \L$ the quadratic forms $\ad_{iA_\l}(\Hul)$ and $\ad_{iA_\l}(V_\l)$ defined on $\Dom_H\cap \Dom_A$ are bounded from below and closable. If we still denote by $\ad_{iA_\l}(\Hul)$ and $\ad_{iA_\l}(V_\l)$ the associated self-adjoint operators, then $\Dom_H \subset \Dom\big(\ad_{iA_\l}(\Hul)\big) \cap\Dom\big(\ad_{iA_\l}(V_\l))$ and there exists $c \geq 0$ such that for $\l \in \L$ and $\f,\p \in\Dom_H$ we have
\begin{equation*}
\nr{\ad_{iA_\l}(\Hul) \f} +   \nr{\ad_{iA_\l}(V_\l) \f}  \leq c \sqrt {\a_\l} \nr{(\Hul-i)\f}
\end{equation*}
and 
\[
\b_\l \nr{V_\l \f} \nr{\ad_{iA_\l}(\Hul) \p} \leq c \, \a_\l \nr{(\Hul-i) \f} \nr{(\Hul -i) \p}.
\]

\item
There exists $c\geq 0$ such that for $\l\in\L$ and $\f,\p \in \Dom_H \cap \Dom_A$ we have
\[
\abs{\innp{\ad_{iA_\l}(\Hul) \f }{A_\l\p} - \innp{A_\l\f}{\ad_{iA_\l}(\Hul) \p}} \leq c \, \a _\l \nr{(\Hul-i) \f} \nr{(\Hul-i)  \p }.
\]
Moreover we have similar estimates if we replace $\ad_{iA_\l}(\Hul)$ by $\b_\l V_\l$ or $\ad_{iA_\l}(V_\l)$.
\item
For all $\l \in \L$ we have
\begin{equation} \label{hyp-mourre}
\1 J (\Hul)\big(\ad_{iA_\l}(\Hul) + \b_\l V_\l \big) \1 J (\Hul)  \geq \a_\l \1 J(\Hul).
\end{equation}
\end{enumerate}
\end{definition}

Up to the parameter dependence, these conditions are the standard ones when there is no dissipative perturbation (see \cite{mourre81}). 
In the applications, introducing a parameter will be particularly useful to handle the low and high frequency regimes. At low frequency, $ \lambda $ will correspond to the rescaling factor $ |z|  \ll 1 $, $ A_\l = A$ to the standard generator of dilations and $ H_{\lambda} $ to the rescaled operator $\Ptau$ (see \eqref{def-Ptau}). At high frequency, $ \lambda $ will be the semiclassical parameter $h$, $H_{\lambda} = \hhs$ (see \eqref{def-Hhs}) and $ A_h $ will be a modification of the conjugate operator of G\'erard-Martinez (to handle the possible bounded trajectories), see \eqref{op-Fh}. We refer a reader non familiar with the Mourre Theory to the Appendix \ref{AppB} where we consider a simple example with $ H_1 = - \Delta $.\\

In \cite{art-mourre}, the following extension to Mourre's result has been proved:

\begin{theorem} \label{th-mourre}
Suppose $(A_\l)_{\l\in\L}$ is uniformly conjugate to $(\Hl)_{\l\in\L}$ on $J$ with bounds $(\a_\l)_{\l \in \L}$. Let $\d \in \big] \frac 12 , 1]$ and let $I$ be a closed subinterval of $J$.
\begin{enumerate}[(i)]
\item There exists $c \geq 0$ such that for $\l \in \L$ and $z \in \C_{I,+}$ we have
\[
\nr{\pppg {A_\l}^{-\d} (\Hl -z)\inv \pppg {A_\l}^{-\d}}_{\Lc(\Hc)} \leq \frac c {\a_\l}.
\]
\item There exists $c \geq 0$ such that for $\l \in \L$ and $z,z' \in \C_{I,+}$ we have
\[
\nr{\pppg {A_\l}^{-\d} \left((\Hl -z)\inv - (\Hl -z')\inv \right) \pppg {A_\l}^{-\d}}_{\Lc(\Hc)} \leq c {\a_\l}^{-\frac {4\d}{2\d+1}} \abs{z-z'}^{\frac {2\d-1}{2\d+1}}.
\]
\item For any $\l \in \L$ and $\t \in I$ the limit
\[
\pppg {A_\l}^{-\d} (\Hl -(\t+i0))\inv \pppg {A_\l}^{-\d} = \lim_{\m \to 0^+} \pppg {A_\l}^{-\d} (\Hl -(\t+i\m))\inv \pppg {A_\l}^{-\d}
\]
exists in $\Lc(\Hc)$ and define a H\"older-continuous function of $\t$.
\end{enumerate}

\end{theorem}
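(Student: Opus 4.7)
The plan is to adapt Mourre's differential inequality method to the dissipative, parameter-dependent setting, following in spirit the approach of \cite{art-mourre}. I would focus on establishing (i), since (ii) follows by interpolating between (i) and the trivial bound $\|(H_\lambda - z)^{-1}\| \leq (\Im z)^{-1}$ via the resolvent identity, and (iii) is then a consequence of Cauchy's criterion applied to the Hölder estimate of (ii).

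For (i), I would fix $\tau_0 \in I$ and work in a small complex neighborhood; a covering argument reduces the problem to a uniform estimate localized in $\Re z$. Choose $\chi \in C_0^\infty(J, [0,1])$ equal to $1$ near $I$, write $\chi_\lambda = \chi(H_\lambda^1)$, and introduce the regularized operator
\[
H_\lambda^\epsilon = H_\lambda - i\epsilon M_\lambda, \qquad M_\lambda = \chi_\lambda\big(\ad_{iA_\lambda}(H_\lambda^1) + \beta_\lambda V_\lambda\big)\chi_\lambda + \alpha_\lambda(1 - \chi_\lambda^2),
\]
so that condition (e) forces $M_\lambda \geq \alpha_\lambda$ as a form on $\Dom_H$. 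For $\epsilon > 0$, $H_\lambda^\epsilon - z$ is maximal dissipative and its inverse $G_\lambda^\epsilon(z)$ satisfies the coarse bound $\|G_\lambda^\epsilon(z)\|\leq (\epsilon\alpha_\lambda + \Im z)^{-1}$. The object to control is
\[
F_\lambda^\epsilon(z) = \langle A_\lambda\rangle^{-\delta} G_\lambda^\epsilon(z) \langle A_\lambda\rangle^{-\delta},
\]
and the goal is a bound on $\|F_\lambda^\epsilon(z)\|$ uniform in $\epsilon > 0$.

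The heart of the argument is a differential inequality in $\epsilon$. Since $\partial_\epsilon G_\lambda^\epsilon(z) = i G_\lambda^\epsilon(z) M_\lambda G_\lambda^\epsilon(z)$, inserting the definition of $M_\lambda$ and writing $\chi_\lambda = 1 - (1-\chi_\lambda)$ inside, one rewrites $\|F_\lambda^\epsilon(z)\|$ as a sum of a quadratic form in $G_\lambda^\epsilon$ against $\ad_{iA_\lambda}(H_\lambda^1) + \beta_\lambda V_\lambda$ plus remainders. The quadratic form is handled by Lemma \ref{lem-quad-estim} (which bounds $\|\sqrt{V_\lambda}G_\lambda^\epsilon Q\|$ in terms of $\|Q^* G_\lambda^\epsilon Q\|$) together with the identity $\ad_{iA_\lambda}(H_\lambda^1) = i[H_\lambda^\epsilon - z, A_\lambda] + i\ad_{iA_\lambda}(V_\lambda) + O(\epsilon)$, after which the commutator with $H_\lambda^\epsilon - z$ is absorbed into $G_\lambda^\epsilon$. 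The main obstacle is then to commute the non-integer power $\langle A_\lambda\rangle^{-\delta}$ across $A_\lambda$ while keeping estimates uniform in $\lambda$; this is done via a Helffer--Sjöstrand almost analytic extension of $t \mapsto \langle t \rangle^{-\delta}$, whose remainders are quantified by conditions (c) and (d). The outcome is an inequality of the form
\[
\alpha_\lambda \, \partial_\epsilon \|F_\lambda^\epsilon(z)\| \leq C\big(1 + \|F_\lambda^\epsilon(z)\|^{1 + 1/(2\delta)}\big),
\]
valid as long as $\delta \in \big]\tfrac12, 1\big]$. Integrating this from $\epsilon = 1$ (where the crude bound gives $\|F_\lambda^\epsilon\| \leq C/\alpha_\lambda$) down to $\epsilon \searrow 0$ yields $\|F_\lambda^\epsilon(z)\| \leq c/\alpha_\lambda$ uniformly in $\epsilon$, $\lambda \in \Lambda$ and $z \in \C_{I,+}$; passing to the limit $\epsilon \to 0$ establishes (i).

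For (ii), the resolvent identity $(H_\lambda - z)^{-1} - (H_\lambda - z')^{-1} = (z-z')(H_\lambda - z)^{-1}(H_\lambda - z')^{-1}$, combined with the two bounds $\|\langle A_\lambda\rangle^{-\delta}(H_\lambda - z)^{-1}\langle A_\lambda\rangle^{-\delta}\| \lesssim \alpha_\lambda^{-1}$ from (i) and $\|(H_\lambda - z)^{-1}\| \leq (\Im z)^{-1}$, yields after interpolation with the right power the exponent $(2\delta-1)/(2\delta+1)$. Finally, (iii) follows because the Hölder estimate of (ii) gives a Cauchy family as $\Im z \searrow 0$ in operator norm on weighted spaces, so the limit exists and is itself Hölder-continuous in $\tau \in I$.
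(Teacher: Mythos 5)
The paper does not prove Theorem~\ref{th-mourre}: it imports it from \cite{art-mourre}, and what is actually proved in Section~\ref{sec-Mourre} is the further machinery (Theorems~\ref{thordren}--\ref{th-estim-insert2}) built on top of it. So the comparison has to be with the argument of \cite{art-mourre}, which is the dissipative, parameter--uniform version of Mourre's classical differential--inequality proof.

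Your sketch of part~(i) is the right route and matches that argument in its essentials: regularizing $H_\l$ by $-i\e M_\l$, where $M_\l$ is built from the positive commutator localized near the energy window, with $\a_\l(1-\chi_\l^2)$ added so that hypothesis~$(e)$ of Definition~\ref{defconjunif} forces $M_\l\geq\a_\l$; obtaining the coarse bound $\nr{(H_\l^\e-z)^{-1}}\leq(\e\a_\l+\Im z)^{-1}$; and integrating a differential inequality in $\e$ for $F_\l^\e(z)=\pppg{A_\l}^{-\d}(H_\l^\e-z)^{-1}\pppg{A_\l}^{-\d}$, using Helffer--Sj\"ostrand to commute $\pppg{A_\l}^{-\d}$ with $A_\l$ and the quadratic estimates of Lemma~\ref{lem-quad-estim} to close the inequality.

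The genuine gap is in your derivation of~(ii). Interpolating between the bound of~(i) and the coarse bound $\nr{(H_\l-z)^{-1}}\leq(\Im z)^{-1}$ does not yield a H\"older estimate, precisely because the coarse bound degenerates as $\Im z\searrow 0$. Concretely, combining the resolvent identity with the quadratic estimate gives
\[
\nr{\pppg{A_\l}^{-\d}\big((H_\l-z)^{-1}-(H_\l-z')^{-1}\big)\pppg{A_\l}^{-\d}} \;\lesssim\; \frac{|z-z'|}{\a_\l\sqrt{\Im z\cdot\Im z'}},
\]
and taking any geometric mean of this with the uniform bound $\lesssim\a_\l^{-1}$ from~(i) still leaves a residual negative power of $\Im z$ which cannot be removed. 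The exponent $(2\d-1)/(2\d+1)$ and the accompanying $\a_\l^{-4\d/(2\d+1)}$ are the specific output of the differential--inequality analysis, not a corollary of~(i): one must track the $\e$--dependence of $F_\l^\e(z)$ quantitatively (showing $\nr{\partial_\e F_\l^\e}$ is integrable near $\e=0$, with a rate governed by $\d$) together with a bound on $\partial_z F_\l^\e$ whose blow--up is in $\e$ rather than in $\Im z$, and then let $\e\to0$. In other words, (i), (ii) and (iii) are established simultaneously by the same iteration, and (ii) cannot be obtained \emph{a posteriori} from~(i) alone. Your reduction of~(iii) to~(ii) is fine once~(ii) is available.
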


We now extend this result to multiple resolvent estimates, following the ideas of \cite{jensenmp84} and \cite{jensen85}. To this purpose we need more commutators estimates. \\

Let us assume that $(A_\l)_{\l \in \L}$ is uniformly conjugate to $(\Hl)_{\l \in \L}$ on $J$ for lower bounds $(\a_\l)_{\l \in \L}$.
%  and set for all $\l \in \L$:
% \[
% \Th^1_{R,\l} =  [\Hl,i A_\l]^0, \quad \Th^1_{I,\l} =  [V_\l,iA_\l]^0 \quad \text{and} \quad \Th_\l^1 = \Th^1_{R,\l} - i \Th_{R,\l}^1.
% \]
For $n \geq 2$, as long as the forms $[\ad_{iA_\l}^{n-1}(\Hul),iA_\l]$ and $[\ad_{iA_\l}^{n-1}(V_\l),iA_\l]$ are closable and semi-bounded we consider the corresponding operators denoted by $\ad_{iA_\l}^{n}(\Hul)$ and $\ad_{iA_\l}^{n}(V_\l)$. Then we set $\ad_{iA_\l}^{n}(\Hl) = \ad_{iA_\l}^{n}(\Hul) -i \ad_{iA_\l}^{n}(V_\l)$.
% \[
% \Th^n_{R,\l} = [\Th^{n-1}_{R,\l},iA_\l]^0, \quad \Th^n_{I,\l} = [\Th^{n-1}_{I,\l},iA_\l]^0 \quad \text{and} \quad \Th^n_\l = \Th^n_{R,\l} -i \Th^n_{I,\l} .
% \]
In general, defining such iterated commutators and their closures has to be handle with caution (see e.g. \cite{jensenmp84}). In our applications, however, we will only consider (pseudo)differential operators on $ \R^n $ for which there will be no problem.

Now we can define $N$-smoothness as in the self-adjoint case, taking into account the commutators of the dissipative part with the conjugate operator:

\begin{definition} \label{defconjunifdissn} 
Let $N \geq 2$. Then $H_\l$ is said to be uniformly $N$-smooth with respect to $A_\l$ if in addition to the assumptions of Definition \ref{defconjunif} the following conditions are satisfied:
\begin{enumerate} [($a_n$)]
\setcounter{enumi}{2}
\item 
For all $\l \in \L$ and $n \in \Ii 2 N$ the operators  $\ad_{iA_\l}^{n}(\Hul)$ and $\ad_{iA_\l}^{n}(V_\l)$ are well-defined and their domains contain $\Dom_H$.
Moreover there exists $c \geq 0$ such that
\[
\forall n \in \Ii 2 N, \forall \l \in \L, \forall \f \in \Dom_H, \quad \nr{\ad_{iA_\l}^{n}(\Hl) \f} \leq  c \, \a_\l   \nr {(\Hul-i) \f}.
\]
\item 
There exists $c \geq 0$ such that for all $\l \in \L$ and $\f,\p \in \Dom_H \cap \Dom_A$ we have
\[
\abs{\innp{\ad_{iA_\l}^{N}(\Hl) \f }{A_\l  \p} - \innp{A_\l \f}{\ad_{iA_\l}^{N}(\Hl) \p}} \leq c \, \a_\l \nr {(\Hul-i) \f} \nr {(\Hul-i) \p}.
\]
\end{enumerate}
Moreover we say that $\Hl$ is uniformly $\infty$-smooth with respect to $A_\l$ if it is $N$-smooth for all $N \geq 2$.
\end{definition}

Under such a smoothness assumption, we can prove uniform estimates for the powers of the resolvent of $\Hl$. To this purpose we follow the ideas of \cite{jensen85}, to which we refer for the proofs. The strategy consists of reducing estimates for the powers of the resolvent to estimates of the resolvent involving spectral projections of the conjugate operator $A_\l$ on $\R_+$ and $\R_-$. The following abstract statement summarizes and generalizes Lemmas 2.1 and 2.2 of \cite{jensen85}:

\begin{lemma} \label{lemjen2.1}
Let $Q$, $P_+$ and $P_-$ be bounded operators on $\Hc$ such that $P_+ + P_- = 1$ and $Q$ is self-adjoint, bounded, with $Q\inv \geq 1$. Let $l\in\N^*$ and let $R_1,\dots,R_l$ be bounded operators on $\Hc$. Suppose that there exist $\s \in ]0,1]$, $N>\s$ and $c_1,\dots,c_l > 0$ such that for any $j\in\Ii 1 l$ we have
\begin{enumerate}
\item[$(a_1)$] $\nr{Q^\s R_j Q ^\s} \leq c_j$,
\item[$(b_1)$] $\nr{Q^{1-\d} P_- R_j Q^\d} \leq c_j$ for all $\d \in [\s,N[$,
\item[$(c_1)$] $\nr{Q^\d R_j P_+ Q^{1-\d}} \leq c_j$ for all $\d \in [\s,N[$,
\item[$(d_1)$] $\nr{Q^{-\d_1} P_- R_j P_+ Q^{-\d_2}} \leq c_j$ for all $\d_1,\d_2 \geq 0$ such that $\d_1 + \d_2 < N -1$.
\end{enumerate}
Then for all $n \in \N^*$ such that $n < N + 1 - \s$ and $j_1,\dots,j_n \in \Ii 1 l$ we have
\begin{enumerate}
\item[$(a_n)$] $\nr{Q^{n-1+\s} R_{j_1}\dots R_{j_n} Q ^{n-1+\s}} \leq 2^{n-1} c_{j_1}\dots c_{j_n}$,
\item[$(b_n)$] $\nr{Q^{n-\d} P_- R_{j_1}\dots R_{j_n} Q^\d} \leq 2^{n-1} c_{j_1}\dots c_{j_n}$ for all $\d \in [n-1+\s,N[$,
\item[$(c_n)$] $\nr{Q^\d R_{j_1}\dots R_{j_n} P_+ Q^{n-\d}} \leq 2^{n-1} c_{j_1}\dots c_{j_n}$ for all $\d \in [n-1+\s,N[$,
\item[$(d_n)$] $\nr{Q^{-\d_1} P_- R_{j_1}\dots R_{j_n} P_+ Q^{-\d_2}} \leq 2^{n-1} c_{j_1}\dots c_{j_n}$ for all $\d_1,\d_2 \geq 0$ such that $\d_1 + \d_2 < N - n$.
\end{enumerate}
\end{lemma}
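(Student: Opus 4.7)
The plan is to argue by induction on $n$. The base case $n=1$ is precisely the four hypotheses $(a_1)$--$(d_1)$. For the inductive step $n \to n+1$ (assuming $n+1 < N+1-\sigma$, i.e.\ $n+\sigma < N$), the strategy is to insert the resolution $I = P_+ + P_-$ between two well-chosen consecutive factors in the product $R_{j_1}\cdots R_{j_{n+1}}$, split the resulting expression into two terms, and bound each by pairing one $n$-factor block (estimated by the inductive hypothesis) with one isolated factor (estimated by the base case), systematically absorbing extra powers of $Q$ through $\|Q^a\|\leq 1$ for $a\geq 0$, which follows from $Q^{-1}\geq 1$. Throughout, I would use that $P_\pm$ commute with $Q$, as will be the case in the applications, where both are functional calculi of the same conjugate operator.

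To illustrate the mechanism on $(a_{n+1})$, I would insert $P_++P_-$ just after $R_{j_1}$ and write the $P_+$-piece as
\[
Q^{n+\sigma} R_{j_1} P_+ R_{j_2}\cdots R_{j_{n+1}} Q^{n+\sigma}
= \bigl(Q^{n+\sigma} R_{j_1} P_+ Q^{1-(n+\sigma)}\bigr)\bigl(Q^{n-1+\sigma} R_{j_2}\cdots R_{j_{n+1}} Q^{n+\sigma}\bigr).
\]
The first factor is bounded by $c_{j_1}$ via $(c_1)$ with $\delta=n+\sigma \in [\sigma,N[$, and the second by $2^{n-1}c_{j_2}\cdots c_{j_{n+1}}$ via the inductive hypothesis $(a_n)$ after using $\|Q\|\leq 1$ to absorb one extra $Q$. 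The $P_-$-piece is treated symmetrically: the isolated factor $Q^{n+\sigma}R_{j_1}Q^{n-1+\sigma}$ is controlled by $(a_1)$ together with $\|Q\|\leq 1$, and the remaining block $Q^{-\sigma}P_- R_{j_2}\cdots R_{j_{n+1}} Q^{n+\sigma}$ is bounded by $(b_n)$ with $\delta=n+\sigma$. Summing the two contributions produces the announced factor $2^n$.

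The statements $(b_{n+1})$, $(c_{n+1})$, $(d_{n+1})$ are obtained by the same scheme with different insertion points. For $(b_{n+1})$ one inserts $P_++P_-$ between $R_{j_n}$ and $R_{j_{n+1}}$: the resulting terms split into $(P_-,\ldots,P_+)$ and $(P_-,\ldots,P_-)$ configurations, handled respectively by $(d_n)$ paired with $(c_1)$ and by $(b_n)$ paired with $(b_1)$ (using $P_-^2=P_-$ when $P_\pm$ are the complementary spectral projections). The bound $(c_{n+1})$ is dual to $(b_{n+1})$. Finally $(d_{n+1})$ follows by inserting $P_++P_-$ inside the product and combining $(d_n)$ with $(b_1)$ or $(c_1)$, or $(b_n)$ and $(c_n)$ with $(d_1)$, according to the side on which the extra $P_\pm$ appears.

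The main obstacle is not conceptual but is the bookkeeping: at every split one must check that the $Q$-exponents lie in the admissible ranges of the hypothesis being invoked, in particular $\delta<N$ in $(b)$ and $(c)$ and the strict inequality $\delta_1+\delta_2<N-n$ in $(d)$. It is exactly this verification that consumes one unit of slack in $N-n-\sigma$ at each step of the induction and forces the precise cutoff $n<N+1-\sigma$ in the statement. The overall scheme is the one of Lemmas~2.1 and 2.2 of \cite{jensen85}, generalized only by keeping track of the different operators $R_{j_1},\ldots,R_{j_n}$ and their individual constants $c_{j_1},\ldots,c_{j_n}$.
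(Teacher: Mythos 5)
Your proposal follows the paper's proof exactly: the paper simply points to Jensen's original argument and records that one proceeds by induction, inserting $Q^{s_1}Q^{-s_1}P_- + P_+ Q^{-s_2}Q^{s_2}=1$ between the factors $R_{j_1}$ and $R_{j_2}$ or between $R_{j_{n-1}}$ and $R_{j_n}$, for suitable $s_1,s_2$. Two small remarks only: in the $P_-$-piece of your $(a_{n+1})$ the isolated factor should read $Q^{n+\sigma}R_{j_1}Q^{\sigma}$ (not $Q^{n+\sigma}R_{j_1}Q^{n-1+\sigma}$), and the commutativity of $P_\pm$ with $Q$ together with $P_\pm^2=P_\pm$ is in fact unnecessary -- the $Q$-powers in hypotheses $(a_1)$--$(d_1)$ already sit on the correct sides of $P_\pm$, so the insertion never requires crossing a power of $Q$ over $P_\pm$, which is precisely why the lemma can be stated at this level of generality.
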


\begin{proof}
This lemma can be proved as the original one by induction on $n \in \N^*$ by inserting $Q^{s_1} Q^{-s_1} P_- + P_+ Q^{-s_2}Q^{s_2} = 1$ between the the factors $R_{j_1}$ and $R_{j_2}$ or $R_{j_{n-1}}$ and $R_{j_n}$, for suitable $s_1,s_2 \in \R$. We omit the details.
\end{proof}

We want to apply this lemma with $R_j = (\Hl-z)\inv$, $Q= \pppg {A_\l}\inv$ and $P_\pm = \1{\R_\pm}(A_\l)$.
%Then it only remains to prove estimates of the form ($a_1$)-($d_1$). 
We already have ($a_1$) by Theorem \ref{th-mourre}, so in order to obtain $(a_n)$, $(b_n)$, $(c_n)$ and $(d_n)$, it remains to prove $(b_1)$, $(c_1)$ and $(d_1)$. The following proposition states that $(d_1)$ holds true:

%  the others are given by the following two propositions:

\begin{proposition} \label{thjen3.2}
Let $N \geq 2$. Suppose $(A_\l)_{\l \in \L}$ is uniformly conjugate to $(\Hl)_{\l \in \L}$ on $J$ with lower bound $(\a_\l)_{\l \in \L}$ and $\Hl$ is $N$-smooth with respect to $A_\l$. Let $I$ be a compact subinterval of $J$ and $\d_1,\d_2 \geq 0$ such that $\d_1 + \d_2< N-1$.

\begin{enumerate}[(i)]
\item There exists $c\geq 0$ such that
\begin{equation*}
\forall \l \in \L, \forall z\in \C_{I,+},\quad  \nr{\pppg{A_\l}^{\d_1} \1{\R_-} (A_\l) (\Hl - z)\inv \1{\R_+}(A_\l) \pppg{A_\l}^{\d_2}} \leq \frac c {\a_\l}.
\end{equation*}

% \item There exists $c\geq 0$ such that for $\l \in \L$ and $z,z' \in \C_{I,+}$ we have
% \begin{equation*}
% \nr{\pppg{A_\l}^{\d_1}\1{\R_-}(A_\l) \big((\Hl - z)\inv - (\Hl - z')\inv \big) \1{\R_+} (A_\l) \pppg{A_\l}^{\d_2}} \leq c \a_\l^{\g_\a} \abs{z-z'}^{\g_z},
% \end{equation*}
% with $\g_\a = \frac {\d_1 + \d_2 - 2n}{n+1}$ and $\g_z = \frac {n-1-\d_1-\d_2}{n+1}>0$.
\item For $\t \in I$ the limit
\begin{eqnarray*}
\lefteqn{\pppg{A_\l}^{\d_1}\1{\R_-}(A_\l) (\Hl - (\t+i0))\inv \1{\R_+}(A_\l) \pppg{A_\l}^{\d_2}}\\
&& \hspace{2cm} = \lim_{\m\to 0^+} \pppg{A_\l}^{\d_1}\1{\R_-}(A_\l) (\Hl - (\t+i\m))\inv \1{\R_+} (A_\l) \pppg{A_\l}^{\d_2}
\end{eqnarray*}
exists in $\Lc(\Hc)$ and define a H\"older-continuous function of $\t$.
\end{enumerate}
\end{proposition}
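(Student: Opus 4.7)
The plan is to follow Jensen's strategy \cite{jensen85}, extended here to the dissipative setting, which consists in reducing the estimate to the base Mourre bound already provided by Theorem \ref{th-mourre}(i) by exploiting commutator identities together with the spectral localizations $\1{\R_\pm}(A_\l)$. The starting identity is
\[
\ad_{iA_\l}(R(z)) = -R(z) \ad_{iA_\l}(\Hl) R(z),
\]
obtained by differentiating $(\Hl - z) R(z) = I$. Iterating this yields, for each $k \leq N$, an expansion of $\ad_{iA_\l}^k(R(z))$ as a finite combinatorial sum of products of $k+1$ factors $R(z)$ interlaced with iterated commutators $\ad_{iA_\l}^{j_p}(\Hl)$ (with $\sum_p j_p = k$). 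By condition $(c_n)$ of Definition \ref{defconjunifdissn}, each such commutator is $(\Hul - i)$-bounded with constant of order $\a_\l$, so it can be absorbed by an adjacent $R(z)$ at the cost of a single factor $\a_\l$ per commutator (modulo harmless bounded remainders).

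Combining this with the Leibniz-type identity
\[
A_\l^n R(z) = \sum_{k=0}^n \binom{n}{k} \ad_{iA_\l}^k(R(z))\, i^{-k} A_\l^{n-k},
\]
I would first establish the estimate for integer values $\d_1 = n_1, \d_2 = n_2 \in \N$ with $n_1 + n_2 < N-1$. Every term produced by moving $A_\l^{n_1}$ and $A_\l^{n_2}$ through $R(z)$ features exactly one ``free'' factor $R(z)$ surrounded by at most $N-1$ commutators; this free resolvent is estimated by Theorem \ref{th-mourre}(i), while the accumulated factors $\a_\l$ from the commutators precisely cancel against the factors $\a_\l^{-1}$ produced each time we sandwich a resolvent between weights. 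The spectral projections $\1{\R_-}(A_\l)$ and $\1{\R_+}(A_\l)$ at the two ends are crucial at this stage: they allow one to bound $\pppg{A_\l}^{n_j}\1{\R_\pm}(A_\l)$ by $(\pm A_\l)^{n_j}\1{\R_\pm}(A_\l) + O(1)$ in the appropriate sense, and they also prevent the residual powers $A_\l^{n_j-k}$ from growing unboundedly in the intermediate terms. Non-integer values of $\d_1$ and $\d_2$ are then reached by Stein's complex interpolation applied to the analytic family
\[
(\z_1, \z_2) \longmapsto e^{\z_1^2 + \z_2^2}\pppg{A_\l}^{\z_1}\1{\R_-}(A_\l) R(z) \1{\R_+}(A_\l) \pppg{A_\l}^{\z_2},
\]
on a strip where $\Re \z_1 + \Re \z_2$ runs from $2\d$ (covered by Theorem \ref{th-mourre}(i)) to an integer less than $N-1$.

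For part (ii), once (i) is established uniformly in $\Im z > 0$, the existence of the boundary value as $\m \to 0^+$ follows by writing the same commutator expansion: each term involves a single free resolvent sandwiched between $\pppg{A_\l}^{-\d'}$ weights for some $\d' > 1/2$, hence admits a limit and is H\"older continuous in $\t$ by Theorem \ref{th-mourre}(ii)--(iii), while the factors $\ad_{iA_\l}^{j_p}(\Hl)$ and $A_\l$-powers are independent of $\m$ and bounded between the appropriate weighted spaces.

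The main obstacle is the bookkeeping of the $\a_\l$-dependence through the iterated commutator expansion. Each commutator gain of a factor $\a_\l$ must be matched exactly against one application of the base Mourre estimate (which costs $\a_\l^{-1}$) and one relative bound of the form $\ad_{iA_\l}^{j}(\Hl)(\Hul - i)^{-1} = O(\a_\l)$, so that in the end only the single surviving Mourre factor contributes the $\a_\l^{-1}$ of the stated bound. Verifying this balance uses in an essential way that the bound in ($c_n$) of Definition \ref{defconjunifdissn} is $c\,\a_\l \nr{(\Hul - i)\f}$, \emph{linear} in $\a_\l$; a weaker quadratic or sublinear growth would destroy the uniformity and produce a worse power of $\a_\l^{-1}$ in the conclusion.
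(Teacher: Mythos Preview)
Your sketch has a genuine gap in the reduction to Theorem~\ref{th-mourre}. Expanding $\ad_{iA_\l}^k(R(z))$ yields sums of products $R(z)B_{j_1}R(z)\cdots B_{j_p}R(z)$ with $p+1\geq 2$ resolvent factors (for $k\geq 1$), each singular as $\Im z\to 0$. The relative $\Hul$-boundedness of $B_j=\ad_{iA_\l}^{j}(\Hl)$ does not collapse these products: writing $B_jR(z)=[B_j(\Hul-i)^{-1}][I+(z-i)R(z)]$ still leaves a full resolvent in the second bracket. To bound such a product via Theorem~\ref{th-mourre}(i) you would need $\pppg{A_\l}^{-\d}$ weights between consecutive resolvents, and inserting them forces you to control $\pppg{A_\l}^{\d}B_j\pppg{A_\l}^{\d}$, which is unbounded since $B_j$ is only $\Hul$-bounded. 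The residual powers $A_\l^{\,n_1+n_2-k}$ landing on $\1{\R_+}(A_\l)$ are likewise unbounded operators that the projection alone does not tame. Even the case $\d_1=\d_2=0$ of the proposition, namely $\|\1{\R_-}(A_\l)(\Hl-z)^{-1}\1{\R_+}(A_\l)\|\leq c/\a_\l$, is strictly stronger than Theorem~\ref{th-mourre}(i) and cannot be extracted from it by algebra; in effect your scheme is trying to invoke the multiple-resolvent reduction of Lemma~\ref{lemjen2.1} to establish its own hypothesis~$(d_1)$.

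What you describe is also not the argument of \cite{jensen85} that the paper follows (see the remark after Proposition~\ref{thjen3.5}). Jensen's proof is a differential-inequality argument in an auxiliary parameter $\e>0$, in the spirit of Mourre's original method: one regularizes the weights by suitable functions of $A_\l$ so that the regularized quantity is bounded for each $\e$, computes its $\e$-derivative (this is where your commutator identities enter), and uses the positivity assumption~$(e)$ of Definition~\ref{defconjunif} together with a quadratic estimate as in Lemma~\ref{lem-quad-estim} to close a Gronwall-type inequality uniform in $\e$ and in $\Im z$. The projections $\1{\R_\pm}(A_\l)$ are exploited through the sign of $A_\l$ on their ranges, which is what makes the regularization compatible with the growing weights $\pppg{A_\l}^{\d_j}$; they do not act by ``preventing residual powers from growing'' in a commutator expansion.
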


The next proposition shows that $(b_1)$ and $(c_1)$ hold true as well.

\begin{proposition} \label{thjen3.5}
Let $N \geq 2$. Suppose $(A_\l)_{\l \in \L}$ is uniformly conjugate to $(\Hl)_{\l \in \L}$ on $J$ with lower bound $(\a_\l)_{\l \in \L}$ and $\Hl$ is $N$-smooth with respect to $A_\l$. Let $I$ be a compact subinterval of $J$ and $\d \in \left] \frac 12 , N \right[$.

\begin{enumerate}[(i)]
\item There exists $c \geq 0$ such that
\begin{equation} \label{thjen3.5a}
\forall \l \in \L,  \forall z\in \C_{I,+}, \quad  \nr{\pppg{A_\l}^{-\d} (\Hl - z)\inv \1{\R_+}(A_\l) \pppg{A_\l}^{\d-1}}_{\Lc(\Hc)}  \leq \frac c {\a_\l}.
\end{equation}

% \item For $\d' \in ]-\infty,\d[$ there exists $\g_{\d,\d'},\tilde \g_{\d,\d'} >0$ and $c\geq 0$ such that for $\l \in \L$ and $z,z' \in \C_{I,+}$ we have
% \begin{equation}\label{thjen3.5b}
% \nr{\pppg{A_\l}^{-\d} \big((\Hl - z)\inv - (\Hl-z')\inv \big) \1{\R_+}(A_\l) \pppg{A_\l}^{\d'-1}}_{\Lc(\Hc)} \leq c {\a_\l^{-\tilde \g_{\d,\d'}}} \abs{z-z'}^{\g_{\d,\d'}}.
% \end{equation}

\item For $\t \in I$ the limit
\begin{eqnarray*}
\lefteqn{\pppg{A_\l}^{-\d} (\Hl - (\t+i0))\inv \1{\R_+}(A_\l) \pppg{A_\l}^{\d'-1}}\\
&& \hspace{2cm} = \lim_{\m\to 0^+} \pppg{A_\l}^{-\d} (\Hl - (\t+i\m))\inv  \1{\R_+}(A_\t) \pppg{A_\l}^{\d'-1}
\end{eqnarray*}
exists in $\Lc(\Hc)$ and define a H\"older-continuous function of $\t$.
\item We have similar results for the operator 
\[
 \pppg{A_\l}^{-\d} (\Hl^* - \bar z)\inv \1{\R_-}(A_\l) \pppg{A_\l}^{\d-1}
\]
and hence, taking the adjoint, for
\[
 \pppg{A_\l}^{\d-1} \1{\R_-}(A_\l)  (\Hl -  z)\inv \pppg{A_\l}^{-\d}.
\]
\end{enumerate}
\end{proposition}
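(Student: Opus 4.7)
The strategy is to reduce Proposition \ref{thjen3.5} to Theorem \ref{th-mourre} by an induction on integer parts of $\d$, exactly in the spirit of \cite{jensen85} but taking care of the dissipative correction $\b_\l V_\l$ that appears in \eqref{hyp-mourre} and of the parameter dependence encoded in $\a_\l$. Since the spectral projection $\1{\R_+}(A_\l)$ is bounded and the weight $\pppg{A_\l}^{\d-1}$ is bounded when $\d \leq 1$, the base case $\d \in \left]\frac 12, 1\right]$ of assertion (i) is nothing but Theorem \ref{th-mourre} (i).

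For the inductive step, suppose \eqref{thjen3.5a} is known up to $\d - 1 \in \left]\frac 12, N-1\right[$. The plan is to introduce, for $\e > 0$, a regularized weight
\[
\f_\e(A_\l) = \1{\R_+}(A_\l) \pppg{A_\l}^{\d-1} (1+\e A_\l^2)^{-M}
\]
with $M$ chosen large enough to make everything bounded, and then use the commutator identity
\[
(\Hl-z)\inv \f_\e(A_\l) = \f_\e(A_\l) (\Hl-z)\inv + (\Hl-z)\inv [\f_\e(A_\l), \Hl] (\Hl-z)\inv.
\]
I would expand $[\f_\e(A_\l), \Hl]$ through a Helffer-Sj\"ostrand representation of $\f_\e$ (applied to the self-adjoint operator $A_\l$), which produces a finite sum of terms of the form $g_{\e,k}(A_\l) \ad^k_{iA_\l}(\Hl) \tilde g_{\e,k}(A_\l)$ for $k \in \Ii 1 {N-1}$, plus a controlled remainder; each commutator eats one factor of $\pppg{A_\l}$ from the weight, while by condition $(c_n)$ of Definition \ref{defconjunifdissn} we have $\ad^k_{iA_\l}(\Hl)(\Hul-i)^{-1}$ bounded by $\a_\l$. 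Pairing these gains with the base case and the inductive hypothesis on the two external resolvents closes the estimate with the expected $c/\a_\l$ bound. The boundary term $\f_\e(A_\l)(\Hl-z)\inv$ is controlled via the adjoint form of Theorem \ref{th-mourre} (i) together with the decay of the residual weights.

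The main obstacle will be twofold. First, the spectral projection $\1{\R_+}$ is not smooth, so the Helffer-Sj\"ostrand calculus only applies after regularization; one must prove that the bounds obtained at level $\e > 0$ are uniform in $\e$ and pass to the limit $\e \to 0^+$ using a monotone-type argument together with the boundedness of $\pppg{A_\l}^{-\d}(\Hl-z)\inv \1{\R_+}(A_\l)\pppg{A_\l}^{\d'-1}$ for $\d' < \d$ coming from interpolation with the induction hypothesis. Second, unlike in the self-adjoint setting of \cite{jensen85}, the operator $\Hl$ is not normal, so one cannot rely on the spectral theorem and one must handle simultaneously the commutators with $\Hul$ and with $V_\l$; here the bound $\nr{\ad^n_{iA_\l}(\Hl)\f} \leq c\a_\l \nr{(\Hul-i)\f}$ from condition $(c_n)$, together with the extra positivity provided by $\b_\l V_\l$ in \eqref{hyp-mourre} (which already played this role in Theorem \ref{th-mourre}), is exactly what makes the induction go through.

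Assertion (ii) would follow by running the same commutator expansion on the difference $(\Hl-z)\inv - (\Hl-z')\inv$ and invoking Theorem \ref{th-mourre} (ii) at the base level; the H\"older exponent is inherited unchanged, and the existence of the boundary value $z = \t + i0$ follows from uniformity in $\Im z > 0$ plus completeness. Assertion (iii) is by duality: the identity $R(z)^* = (\Hl^* - \bar z)\inv$ combined with the observation that $\Hl^* = \Hul + iV_\l$ is conjugate to $-A_\l$ in the sense of Definition \ref{defconjunif} with the same lower bound $\a_\l$ (the commutator $[\Hul, -iA_\l]$ flips sign but the dissipative correction does not hinder the inequality) reduces (iii) to (i) with $\1{\R_+}$ replaced by $\1{\R_-}$ after taking adjoints.
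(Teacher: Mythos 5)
Your base case is already incorrect, and this reveals a conceptual misreading of the statement. When $\d \in \left]\frac 12, 1\right]$, the right-hand weight $\pppg{A_\l}^{\d-1}$ is bounded, so you need to bound $\pppg{A_\l}^{-\d}(\Hl-z)^{-1}\1{\R_+}(A_\l)$. But Theorem \ref{th-mourre}(i) gives boundedness of $\pppg{A_\l}^{-\d}(\Hl-z)^{-1}\pppg{A_\l}^{-\d}$, which requires a \emph{decaying} weight on both sides; the projection $\1{\R_+}(A_\l)$ provides no such decay. You cannot insert $\pppg{A_\l}^{\d'}\pppg{A_\l}^{-\d'}$ with $\d' > \frac 12$ between the resolvent and $\1{\R_+}$ without creating an unbounded factor $\pppg{A_\l}^{\d'+\d-1}$ on the right (since $\d' + \d - 1 > \frac 12 + \frac 12 - 1 = 0$). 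So the proposition is genuinely stronger than Theorem \ref{th-mourre}(i) already for $\d$ near $\frac 12$: the projection onto $\R_+$ in $A$ is essential, and proving that it allows the weaker weight is precisely the content of the result, not a triviality.

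The same defect reappears and is fatal in your inductive step. After the commutator identity, the ``boundary term'' is $\pppg{A_\l}^{-\d}\f_\e(A_\l)(\Hl-z)^{-1} = \1{\R_+}(A_\l)\pppg{A_\l}^{-1}(1+\e A_\l^2)^{-M}(\Hl-z)^{-1}$, which carries \emph{no weight whatsoever on the right of the resolvent}. No rearrangement of the left weight, no interpolation with the inductive hypothesis, and no adjoint of Theorem \ref{th-mourre}(i) can control this: all of these require a decaying weight or a projection sitting on the right of $(\Hl-z)^{-1}$. This term diverges like $(\Im z)^{-1}$. The paper's proof, following Jensen \cite{jensen85}, does not use the naive resolvent commutator identity; Jensen's argument is a differential-inequality/modified-resolvent argument in the Mourre tradition, which is exactly designed to absorb this kind of uncontrolled free term. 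Finally, your reduction of (iii) by conjugating with $-A_\l$ does not work either: $[\Hul, -iA_\l] = -[\Hul, iA_\l]$ flips the sign of the Mourre inequality, so $-A_\l$ is not conjugate to $\Hl^*$. The correct statement is that the \emph{same} $A_\l$ is conjugate to $\Hl^*$ (the real part and the Mourre commutator are unchanged), but the roles of $\1{\R_\pm}(A_\l)$ and of the half-planes for $z$ are interchanged, which is how the paper organizes (iii).
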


To prove these two results we follow word for word the proofs of the self-adjoint analogues given in \cite{jensen85} (they are also rewritten with full details for a family of dissipative operators in \cite{these}). Now using Lemma \ref{lemjen2.1} we get uniform estimates for the powers of the resolvent, which gives regularity for the limit $(\Hl-(\t+i0))\inv$ with respect to $\t$. These conclusions are summarized in the next two theorems.

\begin{theorem} \label{thordren}
Let $N\geq 2$. Suppose $(A_\l)_{\l \in \L}$ is uniformly conjugate to $(\Hl)_{\l \in \L}$ on $J$ with lower bound $(\a_\l)_{\l \in \L}$ and $\Hl$ is $N$-smooth with respect to $A_\l$. Let $I$ be a compact subinterval of $J$ and $n \in \Ii 1 N$.
\begin{enumerate}[(i)]
\item If $\d > n-\frac 12$ there exists $c\geq 0$ such that
\begin{equation*}
\forall \l \in \L, \forall z \in \C_{I,+}, \quad \nr{ \pppg{A_\l}^{-\d} (\Hl-z)^{-n} \pppg{A_\l}^{-\d}} \leq \frac c {\a_\l^{n}}.
\end{equation*}
\item If $\d \in \left]n - \frac 12,N\right[$ there exists $c\geq 0$ such that
\begin{equation*}
\forall \l \in \L, \forall z \in \C_{I,+}, \quad   \nr{ \pppg{A_\l}^{\d-n} \1{\R_-}(A_\l)(\Hl -z)^{-n}  \pppg{A_\l}^{-\d}} \leq \frac c {\a_\l^n}.
\end{equation*}
\item If $\d \in \left] n - \frac 12 , N \right[$ there exists $c\geq 0$ such that
\begin{equation*}
\forall \l \in \L, \forall z \in \C_{I,+}, \quad   \nr{ \pppg{A_\l}^{-\d} (\Hl -z)^{-n} \1{\R_+}(A_\l) \pppg{A_\l}^{\d-n}} \leq \frac c {\a_\l^n}.
\end{equation*}
\item If $\d_1,\d_2 \geq 0$ satisfy $\d_1 + \d_2 < N - n$ there exists $c\geq 0$ such that
\begin{equation*}
\forall \l \in \L, \forall z \in \C_{I,+}, \quad \nr{ \pppg{A_\l}^{\d_1} \1{\R_-} (A_\l) (\Hl-z)^{-n} \1{\R_+} (A_\l) \pppg{A_\l}^{\d_2}} \leq \frac c {\a_\l^n}.
\end{equation*}
\item Moreover these operators have a limit in $\Lc(\Hc)$ when $\Im z \to 0^+$ if $\pppg{A_\l}^{\d-n}$ is replaced by $\pppg{A_\l}^{\d'-n}$ for some $\d' \in ]-\infty,\d[$ in (ii) and (iii). These limits define H\"older-continuous fonctions on $I$.
\end{enumerate}
\end{theorem}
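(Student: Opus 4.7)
My plan is to deduce Theorem \ref{thordren} as a direct application of Lemma \ref{lemjen2.1}, taking $R_j = (\Hl - z)\inv$ for every $j$, $Q = \pppg{A_\l}\inv$ (which is self-adjoint, bounded, with $Q\inv \geq 1$), and the spectral projections $P_\pm = \1{\R_\pm}(A_\l)$. Choosing $c_j = c/\a_\l$, where $c$ is the constant from Theorem \ref{th-mourre}(i), yields a product $c_{j_1}\cdots c_{j_n} \leq c^n/\a_\l^n$ which accounts for the factor $\a_\l^{-n}$ in the statement of the theorem.

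The first step is to verify the four hypotheses $(a_1), (b_1), (c_1), (d_1)$ of the lemma for a fixed $\s \in \left]\frac 12, 1\right[$. Property $(a_1)$ is precisely Theorem \ref{th-mourre}(i). Property $(c_1)$ is the bound
\[
\nr{\pppg{A_\l}^{-\d} (\Hl - z)\inv \1{\R_+}(A_\l) \pppg{A_\l}^{\d-1}} \leq c/\a_\l,
\]
which for $\d \in \left]\frac 12, N\right[$ is Proposition \ref{thjen3.5}(i); its symmetric companion (iii) of that same proposition yields $(b_1)$, while $(d_1)$ is Proposition \ref{thjen3.2}(i). In every case, the range of admissible weights produced by those two propositions contains the range $[\s, N[$ (or $\d_1 + \d_2 < N - 1$) required by Lemma \ref{lemjen2.1}.

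The second step is to read off (i)--(iv) from the corresponding conclusions $(a_n)$--$(d_n)$. By monotonicity of $\pppg{A_\l}^{-\d}$ in $\d$, for (i) it suffices to treat $\d \in \left]n-\frac 12, n\right[$, in which case we set $\s = \d - n + 1 \in \left]\frac 12, 1\right[$ so that $n - 1 + \s = \d$; the constraint $n < N + 1 - \s$ of the lemma then holds whenever $n \leq N$ (the only delicate case $n = N$ being handled by taking $\d$ strictly smaller than $N$, again by monotonicity). Items (ii) and (iii) follow directly from $(b_n)$ and $(c_n)$ by picking $\s$ arbitrarily close to $\frac 12$ so that $[n-1+\s, N[$ contains the prescribed $\d \in \left]n-\frac 12, N\right[$; item (iv) is exactly $(d_n)$ with the constraint $\d_1 + \d_2 < N - n$.

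For (v), existence of the boundary values as $\Im z \to 0^+$ and their H\"older continuity in $\t$ is obtained by revisiting the proof of Lemma \ref{lemjen2.1}: the bound for the product of $n$ resolvents is produced by repeatedly inserting $\Id = Q^{s_1}Q^{-s_1} P_- + P_+ Q^{-s_2}Q^{s_2}$ between consecutive factors, which expresses the weighted product as a finite sum of compositions of the three elementary building blocks whose limits and H\"older regularity are provided by Theorem \ref{th-mourre}(iii), Proposition \ref{thjen3.2}(ii) and Proposition \ref{thjen3.5}(ii). The main obstacle, and essentially the only subtle point, is to ensure that every intermediate exponent in this decomposition remains strictly inside the open ranges where those three limit statements apply; this forces a slight weakening of one outer weight, hence the replacement of $\pppg{A_\l}^{\d-n}$ by $\pppg{A_\l}^{\d'-n}$ with $\d' < \d$ in (ii) and (iii). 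Once norm convergence is obtained at the weaker weight, the uniform bound already established at the stronger weight together with a density argument upgrades this convergence to a H\"older-continuous dependence on $\t$ in the required norm.
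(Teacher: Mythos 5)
Your proposal is correct and matches the paper's own route: the paper likewise derives Theorem \ref{thordren} by feeding $R_j = (\Hl-z)\inv$, $Q = \pppg{A_\l}\inv$, $P_\pm = \1{\R_\pm}(A_\l)$ into Lemma \ref{lemjen2.1}, with the four hypotheses supplied by Theorem \ref{th-mourre}(i) and Propositions \ref{thjen3.5} and \ref{thjen3.2}, and then reads off (i)--(iv) from $(a_n)$--$(d_n)$ (the bookkeeping of $\s$ versus $\d$ and the $n = N$ edge case are handled as you describe). The only wobble is the final sentence on (v): H\"older continuity in $\tau$ is not an ``upgrade'' obtained by density from a uniform bound but is inherited directly from the H\"older continuity of the elementary pieces in Theorem \ref{th-mourre}(iii), Proposition \ref{thjen3.2}(ii) and Proposition \ref{thjen3.5}(ii) after expanding the product, which your preceding sentence already captures.
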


\begin{theorem} \label{th-mourre-puiss}
Let $N\geq 2$. Suppose $(A_\l)_{\l \in \L}$ is uniformly conjugate to $(\Hl)_{\l \in \L}$ on $J$ with lower bound $(\a_\l)_{\l \in \L}$ and $\Hl$ is $N$-smooth with respect to $A_\l$. Let $I$ be a compact subinterval of $J$ and $n \in\Ii 1 N$.
\begin{enumerate}[(i)]
\item If $\d > n - \frac 12$ then for all $\l \in \L$ and $\t \in I$ the limit
\begin{equation*}
 \pppg{A_\l}^{-\d} (\Hl-(\t+i0))\inv \pppg{A_\l}^{-\d}
\end{equation*}
exists in $\Lc(\Hc)$ and defines a function of class $C^{n-1}$ on $I$.
\item If $\d \in \left] n - \frac 12,  N \right[$ and $\d'<\d$ then for all $\l \in \L$ and $\t \in I$ the limits
\begin{equation*}
\pppg{A_\l}^{-\d} (\Hl -(\t+i0))\inv \1{\R_+}(A_\l) \pppg{A_\l}^{\d'-n}
\end{equation*}
and
\begin{equation*}
\pppg{A_\l}^{\d'-n}    \1{\R_-}(A_\l)   (\Hl -(\t+i0))\inv \pppg{A_\l}^{-\d} 
\end{equation*}
exist in $\Lc(\Hc)$ and define functions of class $C^{n-1}$ on $I$.
\item If $\d_1,\d_2 \geq 0$ satisfy $\d_1 + \d_2 < N - n$ then for all $\l \in \L$ and $\t \in I$ the limit
\begin{equation*}
\pppg{A_\l}^{\d_1} \1{\R_-} (A_\l) (\Hl -(\t+i0))\inv \1{\R_+} (A_\l) \pppg{A_\l}^{\d_2}
\end{equation*}
exists in $\Lc(\Hc)$ and defines a function of class $C^{n-1}$ on $I$.
\end{enumerate}
\end{theorem}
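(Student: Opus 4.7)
The plan is to derive Theorem \ref{th-mourre-puiss} from Theorem \ref{thordren} by exploiting the identity
\[
\frac{d^{k-1}}{dz^{k-1}}(\Hl-z)^{-1} = (k-1)!\,(\Hl-z)^{-k}, \qquad \Im z>0,
\]
which lets us deduce the $C^{n-1}$ regularity of the boundary value of the weighted simple resolvent from the existence and continuity, for every $k\in\Ii 1 n$, of the boundary value of the corresponding weighted $k$-th power of the resolvent. The proof thus splits into two steps.

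\emph{Step 1: H\"older continuity of weighted boundary limits of powers.} For $\d>n-\frac12\geq k-\frac12$, I would insert the resolution $I=\1{\R_+}(A_\l)+\1{\R_-}(A_\l)$ between every pair of consecutive factors in $(\Hl-z)^{-k}$, expanding it as a sum of $2^{k-1}$ alternating products $R(z)\,P_{\e_1}R(z)\cdots P_{\e_{k-1}}R(z)$ with $\e_j\in\{+,-\}$, where $R(z)=(\Hl-z)^{-1}$ and $P_\pm=\1{\R_\pm}(A_\l)$. Splitting the outer weights $\pppg{A_\l}^{-\d}$ into factors $\pppg{A_\l}^{\s_j}\pppg{A_\l}^{-\s_j}$ placed between each resolvent and its adjacent projection, each summand becomes a composition of individually bounded operators of the types handled by Theorem \ref{thordren}(ii)--(iv), each of which has a H\"older continuous boundary limit by Theorem \ref{thordren}(v) together with Propositions \ref{thjen3.2} and \ref{thjen3.5}. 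Continuity of the composition of operators in the norm topology (jointly continuous on bounded subsets) transfers these limits to the product. Summing gives the claim for (i); parts (ii), (iii), (iv) follow from the same decomposition starting from the $P_\pm$ already present.

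\emph{Step 2: Passage to $C^{n-1}$ regularity.} Setting $F_k(z):=\pppg{A_\l}^{-\d}(\Hl-z)^{-k}\pppg{A_\l}^{-\d}$, the identity $F_k'(z)=k\,F_{k+1}(z)$ integrated along the horizontal segment $\{s+i\m:\,s\in[\t_0,\t]\}$ gives
\[
F_1(\t+i\m)-F_1(\t_0+i\m)=\int_{\t_0}^\t F_2(s+i\m)\,ds,\qquad\m>0.
\]
Letting $\m\to0^+$ by dominated convergence (uniform bound from Theorem \ref{thordren}(i), pointwise convergence from Step 1) shows that $F_1(\cdot+i0)\in C^1(I)$ with derivative $F_2(\cdot+i0)$. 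Iterating with $F_2,\ldots,F_n$ in turn yields $F_1(\cdot+i0)\in C^{n-1}(I)$ with $F_1^{(k)}(\t+i0)=k!\,F_{k+1}(\t+i0)$ for $k\in\Ii 0 {n-1}$. The adaptation to (ii)--(iv) is verbatim, replacing $F_k$ by its variant incorporating the spectral projections.

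\emph{Main obstacle.} The technical core is the weight bookkeeping in Step 1: when inserting $k-1$ projections into $(\Hl-z)^{-k}$, the external weight $-\d$ must be distributed as a sum $\sum\s_j$ so that every inner block of the form $\pppg{A_\l}^{-\s}P_{\e}R(z)P_{\e'}\pppg{A_\l}^{-\s'}$ falls in the range of one of (ii)--(iv) in Theorem \ref{thordren}. This requires both respecting the one-sided sign constraint (the estimates allow $P_-$ only on the left and $P_+$ only on the right, with adjoint versions covering the opposite cases) and staying within the admissibility bounds $\d\in\,]k-\frac12,N[$ or $\d_1+\d_2<N-k$. This is the dissipative analog of the combinatorial scheme of Jensen in the self-adjoint case, made possible by the refinements provided by Propositions \ref{thjen3.2}--\ref{thjen3.5}.
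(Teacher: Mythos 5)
Your proposal is correct and takes essentially the same route as the paper. The paper itself gives no explicit proof of Theorem~\ref{th-mourre-puiss}, stating instead that it follows the self-adjoint argument of Jensen (\cite{jensen85}) verbatim; your two-step scheme — (1) boundary limits and H\"older continuity of the weighted powers $\pppg{A_\l}^{-\d}(\Hl-z)^{-k}\pppg{A_\l}^{-\d}$ via the $P_\pm$-decomposition and the estimates of Lemma~\ref{lemjen2.1} and Propositions~\ref{thjen3.2}--\ref{thjen3.5}, and (2) integration of the Cauchy--Riemann identity $\partial_z(\Hl-z)^{-k}=k(\Hl-z)^{-k-1}$ along horizontal segments combined with dominated convergence and the fundamental theorem of calculus — is exactly that argument.
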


\subsection{Inserted factors}  \label{sec-insert-factors}

As mentioned in the introduction, the derivatives of our resolvent $R(z)$ defined in \eqref{def-Rz} are not given by linear combinations of its powers. We have to take into account that multiplications by the absorption index $a$ are inserted between some of the factors $R(z)$. More precisely we have the following result:

\begin{proposition} \label{prop-der-R2}
For all $n \in \N^*$ the derivative $R^{(n)}(z)$ is a linear combination of terms of the form 
\[
z^k R(z) a(x)^{j_1} R(z)  a(x)^{j_2} \dots  a(x)^{j_m} R(z)
\]
where $m \in \Ii 0 n$ (there are $m+1$ factors $R(z)$), $k\in\N$, $j_1,\dots,j_m \in \{ 0,1\}$ and $n = 2m - k - (j_1+\dots +j_m)$.
\end{proposition}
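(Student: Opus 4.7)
The plan is to argue by induction on $n \geq 1$, starting from the explicit formula \eqref{expr-der-R} and then differentiating term by term.

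For the base case $n=1$, I would simply note that since $H_z - z^2 = H_0 - iza - z^2$, the derivative in $z$ of $H_z - z^2$ (in the sense of operator-valued functions from $\Dom(H_0)$ to $L^2$) equals $-ia - 2z$, and hence the standard formula
\[
R'(z) = -R(z) \frac{d}{dz}(H_z - z^2) R(z) = i R(z) a R(z) + 2z R(z)^2
\]
gives exactly two terms of the claimed form: $(m,k,j_1)=(1,0,1)$ and $(m,k,j_1)=(1,1,0)$, both satisfying $2m - k - \sum j_\ell = 1$.

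For the inductive step, assume the result holds at order $n$ and consider a typical term
\[
T(z) = z^k R(z) a^{j_1} R(z) a^{j_2} \cdots a^{j_m} R(z),
\]
with $m \in \Ii 0 n$, $k \in \N$, $j_\ell \in \{0,1\}$ and $2m - k - \sum j_\ell = n$. The derivative $T'(z)$ splits into the derivative of the prefactor $z^k$ (which shifts $k \mapsto k-1$, increasing $2m - k - \sum j_\ell$ by one) and the contributions obtained by differentiating each of the $m+1$ occurrences of $R(z)$ via \eqref{expr-der-R}. Differentiating the $i$-th factor $R(z)$ and replacing it either by $iR(z)aR(z)$ or by $2zR(z)^2$ amounts to inserting a new pair $(R, a^{j})$ with $j=1$ or $j=0$ respectively (and multiplying by $1$ or by $z$). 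In each of these two cases, $m$ becomes $m+1$, the new $j$-sequence still takes values in $\{0,1\}$, and a direct count shows that $2m - k - \sum j_\ell$ increases again by exactly $1$. Summing over all possibilities yields a linear combination of terms of the claimed shape with parameter $n+1$, and the new $m' \in \Ii 0 {n+1}$, which closes the induction.

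I expect no serious obstacle here: the only thing to watch is bookkeeping, in particular checking that the exponents $j_\ell$ never exceed $1$ (this is automatic because each differentiation of an $R(z)$ creates a new insertion slot between two freshly produced factors of $R(z)$, never merging with an existing $a^{j_\ell}$) and that the arithmetic identity $n+1 = 2m' - k' - \sum j'_\ell$ holds in every case. The only minor point worth noting is that we work at the level of operator-valued functions on $\Sc(\R^d)$ (or on $\Dom(H_0)$), where the formal manipulations above are justified by the fact that $R(z)$ is holomorphic in $z \in \C_+$ with values in $\Lc(L^2, H^2)$ by Proposition \ref{prop-R-diss}, so that iterated differentiation is legitimate.
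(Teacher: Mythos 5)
Your proof is correct and follows exactly the approach the paper indicates: induction on $n$ using \eqref{expr-der-R} to differentiate each factor $R(z)$ and the scalar prefactor $z^k$. The bookkeeping you carry out (cases (a), (b)(i), (b)(ii)) is precisely what the paper leaves to the reader when it writes ``This is an easy induction which we omit.''
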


\begin{proof}
This is an easy induction which we omit (use \eqref{expr-der-R}).
% For all $z \in \C_+$ we have $R'(z) = R(z) (ia + 2z) R(z)$, so the result is clear when $n=0$ or $n=1$. Then we can proceed by induction on $n$: each derivation gives a finite number of new terms where $z^k$ has been turned into $kz^{k-1}$ (if $k\geq 1$) or a factor $R(z)$ is replaced by $R(z) (ia+2z) R(z)$. In both cases, the quantity $2m - k- (j_1+\dots +j_m)$ has been increased by 1.
\end{proof}

% 
% 
% \begin{proposition} \label{prop-der-R}
% For $z \in \C_+$ we define $\Psi_1 (z) \in \Lc(L^2_g)$ as the multiplication operator by $(ia+ 2z)$ and we write $\Psi_0(z) = \Psi'_1(z) = 2$. Then for all $m \in \N^*$ the derivative $R^{(m)}(z)$ is a sum of terms of the form 
% \[
% R(z) \Psi _{j_1}(z) R(z) \Psi_{j_2}(z) \dots R(z) \Psi_{j_n}(z) R(z)
% \]
% where $n \in \Ii 0 m$, $j_1,\dots,j_n \in \{ 0,1\}$ and $2n - (j_1+\dots +j_n) = m$.
% \end{proposition}
% 
% \begin{proof}
% For all $z \in \C_+$ we have $R'(z) = R(z) \Psi_1(z) R(z)$, so the result is clear when $m=0$ or $m=1$. Then we can proceed by induction on $m$: each derivation gives a finite number of new terms where a factor $\Psi_1(z)$ has been turned into $\Psi_0(z)$ or a factor $R(z)$ is replaced by $R(z) \Psi_1(z) R(z)$. In both cases, the quantity $2n - (j_1+\dots +j_n)$ has been increased by 1.
% \end{proof}
% 
% 

% \begin{lemma} \label{lem-der-R}
% Let $A_1 : z \mapsto  \Id_{L^2}$, $A_2 :  z \mapsto R(z)$, $A_3 :  z \mapsto 2 z R(z)$ and $A_4 :  z \mapsto  i  R(z)a(x)$.
% %
% Then for every $n \in \N$ the derivative
% $
% R^{(n)}
% $
% is a sum of terms of the form $\prod_{k=0}^{n} A_{j_k}$ where $j_k \in \Ii 1 4$ for all $k \in \Ii 0 n$.
% %
% Moreover if for such a term and $m \in \Ii 1 4$ we set 
% $
%  N_m  = \Card \singl{ k \in \Ii 1 n \st  j_k = m},
% $
% then $N_2 = N_1 + 1$.
% \end{lemma}
% 
% 
% \begin{proof}
% For all $z \in \C_+$ we have $R'(z) = R(z) (ia(x) + 2z) R(z)$, so the result is clear when $n=0$ or $n=1$ and follows by induction on $n$.
% \end{proof}

Thus we cannot apply directly Theorem \ref{thordren} to estimate the derivatives of $R(z)$ uniformly in $z$. To include such a situation, we prove that a certain class of operators can be inserted between the resolvents in Theorem \ref{thordren}.

\begin{definition}
Let $A$ be an operator on the Hilbert space $\Hc$ and $N\in\N$. We say that the operator $\Phi \in \Lc(\Hc)$ belongs to $\opinsert_{N} (A)$ if the commutators $\ad^n_{A}(\Phi)$ for $n \in \Ii 0 N$, defined iteratively in the sense of forms on $\Dom(A)$, extend to bounded operators on $\Hc$. In this case we set
\[
\nr{\Phi}_{A,N} = \sum_{n=0}^N \nr{\ad^n_A(\Phi)}_{\Lc(\Hc)}.
\]
If $\Phi = (\Phi_1, \Phi_2) \in \opinsert_N(A)^2$, we also denote by $\nr{\Phi}_{A,N}$ the product $\nr{\Phi_1}_{A,N}\nr{\Phi_2}_{A,N}$.
\end{definition}

This definition implies that $\Phi \in \opinsert_N(A)$ preserves the domain of $A^n$ for all $n\in\Ii 0 N$. More generally, $\ad_{A}^m (\Phi)$ preserves the domain of $A^n$ as long as $n+m \leq N$. This remark will be implicitly used below to justify the standard algebraic computations such as \eqref{dev-Phi12}.

\begin{proposition} \label{prop-insert-prod}
Let $A$ be a self-adjoint operator on $\Hc$ and $N\in\N$. Let $\Phi_1,\Phi_2 \in \opinsert_N(A)$. Then $\Phi_1\Phi_2 \in \opinsert_N(A)$ and 
\[
\nr{\Phi_1 \Phi_2}_{A,N} \leq 2^N \nr{\Phi_1}_{A,N}\nr{\Phi_2}_{A,N}.
\]
\end{proposition}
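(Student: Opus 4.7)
The proof is essentially the Leibniz rule for iterated commutators combined with elementary binomial estimates, so the plan is short.

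First I would establish the Leibniz identity
\[
\ad_A^n(\Phi_1 \Phi_2) = \sum_{k=0}^n \binom{n}{k} \ad_A^k(\Phi_1)\, \ad_A^{n-k}(\Phi_2) \qquad (0 \leq n \leq N)
\]
by induction on $n$. The base case $n=0$ is trivial, and the inductive step uses only $[A, XY] = [A,X]Y + X[A,Y]$ together with Pascal's rule. The hypothesis $\Phi_1, \Phi_2 \in \opinsert_N(A)$ guarantees that every iterated commutator that appears is a bounded operator, hence preserves each $\Dom(A^m)$ for $m \leq N$; this legitimates the computation of $\ad_A^n(\Phi_1\Phi_2)$ as a form on $\Dom(A^n)$ and shows that it extends to a bounded operator on $\Hc$, so that $\Phi_1 \Phi_2 \in \opinsert_N(A)$.

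Taking operator norms in the Leibniz formula gives
\[
\nr{\ad_A^n(\Phi_1 \Phi_2)}_{\Lc(\Hc)} \leq \sum_{k=0}^n \binom{n}{k} \nr{\ad_A^k(\Phi_1)}_{\Lc(\Hc)} \nr{\ad_A^{n-k}(\Phi_2)}_{\Lc(\Hc)}.
\]
Summing over $n \in \Ii 0 N$ and using $\binom{n}{k} \leq 2^n \leq 2^N$, each product $\nr{\ad_A^k(\Phi_1)}\nr{\ad_A^j(\Phi_2)}$ with $k+j \leq N$ appears once (for $n = k+j$) with a coefficient bounded by $2^N$, so
\[
\nr{\Phi_1 \Phi_2}_{A,N} \leq 2^N \sum_{k=0}^N \sum_{j=0}^N \nr{\ad_A^k(\Phi_1)}_{\Lc(\Hc)} \nr{\ad_A^j(\Phi_2)}_{\Lc(\Hc)} = 2^N \nr{\Phi_1}_{A,N} \nr{\Phi_2}_{A,N},
\]
which is the desired bound.

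There is no real obstacle here beyond the bookkeeping; the only point that requires a word of justification is the inductive definition of $\ad_A^n(\Phi_1 \Phi_2)$ as a form on $\Dom(A)$, but since the paper has already noted that each $\Phi_i$ and each $\ad_A^m(\Phi_i)$ preserve the relevant domains for $m \leq N$, the Leibniz expansion makes sense termwise and the identification of the form with the bounded operator on the right is automatic.
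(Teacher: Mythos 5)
Your proof is correct and takes exactly the same route as the paper's: the Leibniz identity for iterated commutators (which the paper records as its equation \eqref{dev-Phi12}) followed by a straightforward norm estimate. The paper simply states the identity and leaves the binomial bookkeeping implicit, so your writeup is a slightly more detailed version of the same argument.
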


\begin{proof}
This comes from the equality
\begin{equation} \label{dev-Phi12}
\ad _A^N (\Phi_1\Phi_2) = \sum_{n=0}^N C_N^n \ad _A^n (\Phi_1)\ad _A^{N-n}(\Phi_2). 
\end{equation}
\end{proof}

\begin{proposition}  \label{prop-phi-Ad}
Let $A$ be a self-adjoint operator on $\Hc$, $N\in\N$ and $\d \in [-N,N]$. Then there exists $C \geq 0$ such that for all $\Phi \in \opinsert_{N}(A)$ we have 
\[
 \nr{\pppg A ^{\d} \Phi \pppg A ^{-\d}}_{\Lc(\Hc)} \leq C \nr \Phi_{A,N}.
\]
\end{proposition}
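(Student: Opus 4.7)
First, I would reduce to $\d \geq 0$. Since $A$ is self-adjoint, a direct computation gives $\ad_A(\Phi^*) = -(\ad_A\Phi)^*$ and hence $\ad_A^n(\Phi^*) = (-1)^n(\ad_A^n\Phi)^*$, so that $\Phi^* \in \opinsert_N(A)$ with $\nr{\Phi^*}_{A,N} = \nr{\Phi}_{A,N}$. The identity
\[
\big(\pppg A^\d \Phi \pppg A^{-\d}\big)^* = \pppg A^{-\d} \Phi^* \pppg A^{\d}
\]
then reduces the case $\d \in [-N,0]$ to the case $\d \in [0,N]$. To handle non-integer $\d \in (0,N)$, I would apply the Hadamard three-lines principle (in its Stein interpolation form, using a $e^{\e z^2}$ regularization to secure admissible growth of $\pppg A^z \Phi \pppg A^{-z}$ on vertical lines and then letting $\e \to 0$) to the analytic family $F(z) = \pppg A^z \Phi \pppg A^{-z}$ on the strip $0 \leq \Re z \leq N$. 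On the line $\Re z = 0$ the spectral theorem makes $\pppg A^{it}$ unitary, so $\nr{F(it)}_{\Lc(\Hc)} = \nr{\Phi}_{\Lc(\Hc)} \leq \nr{\Phi}_{A,N}$, and the proof reduces to establishing the estimate at the integer endpoint $\d = N \in \N$.

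For the integer endpoint, I would argue by induction on $k \in \Ii 1 N$ that $\nr{\pppg A^k \Phi \pppg A^{-k}}_{\Lc(\Hc)} \leq C_k \nr{\Phi}_{A,k}$. The base case $k = 1$ is the heart of the argument and relies on the integral representation
\[
\pppg A^{-1} = (1 + A^2)^{-1/2} = \frac{1}{\pi} \int_0^\infty \mu^{-1/2} (\mu + 1 + A^2)^{-1} \, d\mu,
\]
which follows from the spectral theorem and $\int_0^\infty \mu^{-1/2}(\mu + \l)^{-1} d\mu = \pi \l^{-1/2}$ for $\l > 0$. Starting from $\pppg A \Phi \pppg A^{-1} = \Phi + \pppg A [\Phi, \pppg A^{-1}]$ and setting $R_\mu = (\mu + 1 + A^2)^{-1}$, the identity $[\Phi, R_\mu] = R_\mu [A^2, \Phi] R_\mu$ together with $[A^2, \Phi] = A \ad_A(\Phi) + \ad_A(\Phi) A$ yields inside the $\mu$-integral the two terms $\pppg A R_\mu \cdot A \ad_A(\Phi) \cdot R_\mu$ and $\pppg A R_\mu \cdot \ad_A(\Phi) \cdot A R_\mu$. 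The spectral theorem supplies the uniform bounds $\nr{\pppg A \cdot A R_\mu} \leq 1$, $\nr{\pppg A R_\mu} \leq (\mu+1)^{-1/2}$, $\nr{A R_\mu} \leq \tfrac{1}{2}(\mu+1)^{-1/2}$ and $\nr{R_\mu} \leq (\mu+1)^{-1}$, from which each term is bounded in norm by $C \nr{\ad_A \Phi}\,(\mu+1)^{-1}$. Since $\mu^{-1/2}(\mu+1)^{-1}$ is integrable on $(0,\infty)$, one concludes $\nr{\pppg A \Phi \pppg A^{-1}} \leq \nr{\Phi} + C \nr{\ad_A \Phi} \leq C \nr{\Phi}_{A,1}$.

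The inductive step $k \geq 2$ is then easy. I would write $\pppg A^k \Phi \pppg A^{-k} = \pppg A \cdot \Psi \cdot \pppg A^{-1}$ with $\Psi := \pppg A^{k-1} \Phi \pppg A^{-(k-1)}$, and observe that $A$ commutes with $\pppg A^s$ via functional calculus, so that $\ad_A(\Psi) = \pppg A^{k-1} \ad_A(\Phi) \pppg A^{-(k-1)}$. Since $\ad_A \Phi \in \opinsert_{k-1}(A)$ with $\nr{\ad_A \Phi}_{A,k-1} \leq \nr{\Phi}_{A,k}$, the inductive hypothesis applied simultaneously to $\Phi$ and to $\ad_A\Phi$ bounds both $\nr{\Psi}$ and $\nr{\ad_A \Psi}$ by $C \nr{\Phi}_{A,k}$, so that the case $k=1$ applied to $\Psi$ closes the induction. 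The main technical obstacle is really the base case $k=1$: the formal manipulations above must be justified on a dense core (e.g.\ $\Dom(A^2)$, which is stable under the operators involved because $\Phi \in \opinsert_1(A)$), and the convergence of the $\mu$-integral near $0$ and $+\infty$ must be verified simultaneously for both pieces of $[A^2,\Phi]$.
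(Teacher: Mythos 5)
Your proof is correct but reaches the integer case by a genuinely different route than the paper. The paper handles all integer exponents $\d \in \Ii 0 N$ at once via the algebraic binomial identity $(A-i)^\d \Phi = \sum_{k=0}^\d \binom{\d}{k}(-1)^k \ad_A^k(\Phi)(A-i)^{\d-k}$, which after right-multiplication by $(A-i)^{-\d}$ exhibits $(A-i)^\d\Phi(A-i)^{-\d}$ as a finite sum of $\ad_A^k(\Phi)(A-i)^{-k}$ with $(A-i)^{-k}$ bounded; since $(A-i)^\d$ has modulus $\pppg A^\d$, this gives the estimate immediately, and the paper then interpolates for non-integer $\d$ and treats negative $\d$ similarly. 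You instead do the base case $k=1$ by the integral representation $\pppg A^{-1}=\frac{1}{\pi}\int_0^\infty \mu^{-1/2}(\mu+1+A^2)^{-1}\,d\mu$ together with $[\Phi,R_\mu]=R_\mu[A^2,\Phi]R_\mu$ and uniform spectral bounds on $R_\mu$, and then induct on the integer exponent, closing the loop with the same $k=1$ estimate applied to $\Psi=\pppg A^{k-1}\Phi\pppg A^{-(k-1)}$. Both arguments are sound; the paper's is slicker because the binomial formula dispatches every integer exponent in one line, whereas yours replaces a short algebraic identity by an operator-valued integral plus an induction. One small slip: with the paper's convention $\ad_A(\Phi)=[\Phi,A]$, the correct identity is $[A^2,\Phi]=-A\ad_A(\Phi)-\ad_A(\Phi)A$, so your expression has the wrong sign; this is immaterial for the norm bound. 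Your reduction to $\d\ge 0$ via adjoints and your $\e$-regularized Hadamard three-lines interpolation to pass from the endpoint $\d=N$ to non-integer $\d\in(0,N)$ are both fine and consistent with what the paper sketches as ``by interpolation.''
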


\begin{proof}
First assume that $\d \in \Ii 0 N$. For all $\Phi \in \opinsert_N(A)$ we have
\[
(A-i)^\d \Phi = \sum_{m=0}^\d C_\d^k (-1)^k \ad_A^k (\Phi) (A-i)^{\d-k},
\]
from which the estimate easily follows. We proceed similarly if $\d$ is a negative integer, and we conclude by interpolation.
\end{proof}

Let $f \in \symb^{-\rho}(\R)$ for some $\rho > 0$. We consider an almost analytic extension $\tilde f$ of $f$ (see \cite{dimassis, davies95}):
\begin{equation*} 
 \tilde f (x+iy) = \p \left( \frac y {\pppg x} \right) \sum_{k=0}^m f^{(k)} (x) \frac {(iy)^k}{k!}
\end{equation*}
where $m \geq 2$ and $\p \in C_0^\infty(\R,[0,1])$ is supported on $[-2,2]$ and equal to 1 on $[-1,1]$. Writing $\z=x+iy$ we have
\begin{align*}
 \frac {\partial \tilde f} {\partial \bar \z} (\z) = \frac  {\partial_x  +i \partial_y  }2 \p \left( \frac y {\pppg x} \right) \sum_{k=0}^m f^{(k)} (x) \frac {(iy)^k}{k!} + \frac 1 2 \p \left( \frac y {\pppg x} \right)  f^{(m+1)} (x) \frac {(iy)^m}{m!} ,
\end{align*}
and in particular
\begin{equation} \label{estim-tildef}
\abs{\frac {\partial \tilde f}{\partial \bar \z} (\z)} \leq \1 {\singl{\pppg {\Re \z} \leq \abs{\Im \z} \leq 2 \pppg {\Re \z}} } (\z) \pppg {\Re \z}^{-1-\rho} + \1 {\singl{ \abs{\Im \z} \leq 2 \pppg{\Re \z}}} (\z) \abs{\Im \z}^m \pppg {\Re \z}^{-m-1-\rho}.
\end{equation}
% 
% 
% \[
%  \frac {\partial \tilde f} {\partial \bar \z} (\z) = \bigo {\Im \z} 0 \left( \abs{\Im \z}^m\right).
% \]
Thus for any self-adjoint operator $A$ we can write the Helffer-Sj\"ostrand formula for $f(A)$:
\begin{equation}\label{helffer-sjostrand}
f(A) = - \frac 1 \pi \int_{\z = x + i y \in \C} \frac {\partial \tilde f}{\partial \bar \z} (\z) (A-\z)\inv \, dx \, dy
\end{equation}
(see \cite{dimassis, davies95} for more details).

\begin{proposition} \label{prop-comm-hplus}
Let $N \in \N^*$ and $\Phi \in \opinsert_{N}(A)$. Let $\d_1,\d_2 \in \R_+$ be such that $\d_1 + \d_2 < N$. Let $g_- \in \symb^{\d_1}(\R)$ and $g_+ \in \symb^{\d_2}(\R)$ be such that $\supp g_-\cap \supp g_+ = \emptyset$. Then there exists $C \geq 0$ such that for any self-adjoint operator $A$ and any $\Phi \in \opinsert _N(A)$ we have
\[
\nr{g_-(A) \Phi g_+(A)}_{\Lc(L^2)} \leq C  \nr \Phi _{A,N}.
\]
\end{proposition}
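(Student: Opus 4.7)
The plan is to reduce to a weighted estimate via smooth cutoffs, then perform a commutator expansion exploiting the disjoint supports of $g_\pm$.

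First, I would choose cutoffs $\chi_\pm \in \symbor(\R)$ with $\chi_\pm \equiv 1$ on $\supp g_\pm$ and with $\supp\chi_-$, $\supp\chi_+$ at positive distance, so that all products $\chi_-^{(i)}\chi_+^{(j)}$ vanish identically. Since $g_\pm = g_\pm\chi_\pm$ and $g_\pm \in \symb^{\d_\pm}$, the spectral theorem gives $g_\pm(A) = g_\pm(A)\chi_\pm(A)$ together with the bounds $\nr{g_-(A)\pppg{A}^{-\d_1}} \leq C$ and $\nr{\pppg{A}^{-\d_2}g_+(A)} \leq C$. It therefore suffices to prove
\[
\nr{\pppg{A}^{\d_1}\chi_-(A)\Phi\chi_+(A)\pppg{A}^{\d_2}}_{\Lc(\Hc)} \leq C\nr{\Phi}_{A,N}.
\]

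Second, since $\chi_-\chi_+ = 0$, I would write $\chi_-(A)\Phi\chi_+(A) = \chi_-(A)[\Phi,\chi_+(A)]$ and expand using the Helffer-Sj\"ostrand formula \eqref{helffer-sjostrand} applied to $\chi_+(A)$, together with iterated use of the identity
\[
[\Phi,(A-\z)^{-1}] = -(A-\z)^{-1}\ad_A(\Phi)(A-\z)^{-1}.
\]
After $N-1$ iterations, one obtains explicit terms of the form $(\chi_-\chi_+^{(k)})(A)\ad_A^k(\Phi)$ for $k=1,\ldots,N-1$ --- all vanishing by the disjoint-support condition --- together with a remainder
\[
R_N = \frac{(-1)^{N-1}}{\pi}\int_\C \frac{\partial\tilde\chi_+}{\partial\bar\z}(\z)\,\chi_-(A)(A-\z)^{-N_1}\ad_A^N(\Phi)(A-\z)^{-N_2}\,dx\,dy,
\]
where $N_1 + N_2 = N+1$. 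By combining this with the dual expansion $\chi_-(A)\Phi\chi_+(A) = [\chi_-(A),\Phi]\chi_+(A)$ and using $\chi_-^{(k)}\chi_+ \equiv 0$, the splitting $N_1 + N_2 = N+1$ can be chosen with $N_1 > \d_1$ and $N_2 > \d_2$, which is possible because $\d_1 + \d_2 < N$.

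Third, I would estimate the weighted norm of $R_N$ by the spectral theorem. For $\l \in \supp\chi_-$ and $\z$ where $\partial\tilde\chi_+/\partial\bar\z$ is non-negligible (so that $\Re\z$ is essentially in $\supp\chi_+$), one has $|\l - \Re\z| \geq d_0 > 0$, giving
\[
\nr{\pppg{A}^{\d_1}\chi_-(A)(A-\z)^{-N_1}} \lesssim \pppg{\z}^{\d_1-N_1} \quad \text{and} \quad \nr{(A-\z)^{-N_2}\pppg{A}^{\d_2}} \lesssim \pppg{\z}^{\d_2-N_2}
\]
for $|\Im\z|$ of order $\pppg{\Re\z}$, with $|\Im\z|^{-N_i}$-type bounds as $|\Im\z|$ shrinks. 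Using \eqref{estim-tildef} with an almost-analytic extension $\tilde\chi_+$ of sufficiently high order $m$ ensures absolute convergence of the $\z$-integral, and the total is bounded by $C\nr{\ad_A^N(\Phi)} \leq C\nr{\Phi}_{A,N}$.

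The main technical obstacle is the redistribution in the second step: a naive single-sided expansion only produces an asymmetric remainder $(A-\z)^{-N}\ad_A^N(\Phi)(A-\z)^{-1}$, which is insufficient to absorb $\pppg{A}^{\d_2}$ when $\d_2 > 1$. Obtaining the balanced form with both $N_1 > \d_1$ and $N_2 > \d_2$ requires a coupled expansion exploiting both vanishing conditions $\chi_-\chi_+^{(k)} = 0$ and $\chi_-^{(k)}\chi_+ = 0$; equivalently, one may view $\chi_-(A)\Phi\chi_+(A)$ as a double operator integral with kernel $\chi_-(\l)\chi_+(\m)(\l-\m)^{-N}$ applied to $\ad_A^N(\Phi)$, the kernel being bounded after multiplication by $\pppg{\l}^{\d_1}\pppg{\m}^{\d_2}$ precisely because $\d_1 + \d_2 < N$.
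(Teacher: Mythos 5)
You have correctly isolated the crux: after a single-sided Helffer--Sj\"ostrand expansion of $\chi_+(A)$ and iteration of $[\Phi,(A-\z)^{-1}]=-(A-\z)^{-1}\ad_A(\Phi)(A-\z)^{-1}$, the remainder $(A-\z)^{-N}\ad_A^N(\Phi)(A-\z)^{-1}$ is lopsided and cannot absorb $\pppg A^{\d_2}$ on the right unless $\d_2<1$. But your proposed fix is left as a gesture rather than an argument. ``Combining with the dual expansion'' is not a well-defined operation (the left-iterated and right-iterated formulas are two representations of the \emph{same} operator, and there is no step that interpolates them into a balanced $(A-\z)^{-N_1}\ad_A^N(\Phi)(A-\z)^{-N_2}$). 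The double-operator-integral reformulation is morally correct, but it needs a boundedness theorem for the Schur-type multiplier $\chi_-(\l)\chi_+(\m)(\l-\m)^{-N}\pppg\l^{\d_1}\pppg\m^{\d_2}$, which is neither elementary nor established here, and which in any case would require extracting from the symbol conditions a quantitative spectral gap $|\l-\m|\gtrsim 1+|\l|+|\m|$ on $\supp\chi_-\times\supp\chi_+$ (pointwise boundedness of a kernel is not sufficient for a Schur multiplier to be bounded). A concrete ``coupled'' realization exists (expand \emph{both} $\chi_\pm(A)$ via Helffer--Sj\"ostrand and iterate $(A-\w)^{-1}B(A-\z)^{-1}=\frac{1}{\z-\w}\bigl[B(A-\z)^{-1}-(A-\w)^{-1}B+(A-\w)^{-1}\ad_A(B)(A-\z)^{-1}\bigr]$, the boundary terms dying because $\partial_{\bar\w}\tilde\chi_-\cdot\tilde\chi_+\equiv 0$ and $\tilde\chi_-\cdot\partial_{\bar\z}\tilde\chi_+\equiv 0$), but you have not set it up, and the $(\z-\w)^{-N}$ factor still has to be reconciled with the weights.

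The paper sidesteps the balancing problem altogether by a different device. Setting $\e=N-\d_1-\d_2>0$, it factorizes $g_+=\pppg\cdot^{\d_2+\e}\prod_{j=1}^N g_j$ with each $g_j\in\symb^{-\e/N}(\R)$ still having support disjoint from $\supp g_-$. Disjointness is then used purely algebraically: $g_-(A)\Phi g_+(A)=g_-(A)\bigl(\ad_{g_1(A)}\cdots\ad_{g_N(A)}\Phi\bigr)\pppg A^{\d_2+\e}$ because every term in the expansion of the iterated commutators with a $g_j(A)$ adjacent to $g_-(A)$ vanishes. Expanding each of the $N$ commutators via Helffer--Sj\"ostrand then produces $N$ resolvents $(A-\z_j)^{-1}$ on \emph{each} side of $\ad_A^N(\Phi)$, with $N$ independent contour variables. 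The weight $\pppg A^{\d_1}$ coming from $g_-$ is distributed as $\pppg A^{\d_1/N}$ per left resolvent, and $\pppg A^{\d_2+\e}$ likewise on the right; since $\d_1/N,(\d_2+\e)/N\leq 1$ these can each be absorbed, and the mild decay $-\e/N$ of the $g_j$ makes the $N$-fold integral converge. In short, the balance you had to engineer by hand is built in automatically by the factorization of $g_+$, and the disjointness hypothesis is used only once, in the algebraic identity, never in the quantitative estimates.
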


\begin{proof}
Let $\e = N-\d_1 - \d_2 > 0$. We can write 
\[
 g_+(A) = \pppg A ^{\d_2 + \e} \prod_{j=1}^N g_j(A)
\]
where for all $j \in \Ii 1 N$ we have $g_j \in \symb^{-\e / N}(\R)$ and $\supp g_- \cap \supp g_j = \emptyset$. Then we have
\[
 g_-(A) \Phi g_+(A) = g_-(A) \left(\ad_{g_1(A)} \dots  \ad_{g_N(A)} \Phi \right) \pppg A ^{\d_2 + \e}.
\]
According to \eqref{helffer-sjostrand} we have
\begin{eqnarray*}
\lefteqn{\left(\prod _{j=1}^N \ad_{g_j(A)}\right) \Phi}\\
&& = \frac {1}{\pi^N} \int_{\z_1} \dots \int_{\z_N} \left(\prod_{j=1}^N \frac {\partial \tilde g_j}{\partial \bar \z_j}(\z_j)(A-\z_j)\inv\right) \ad^N_A \Phi  \prod_{j=1}^N (A-\z_j)\inv \, dx_N\, dy_N \dots dx_1\,dy_1.
\end{eqnarray*}
For $\g \in [0,1]$ and $\z \in \C\setminus \R$ we have
\[
\nr{\pppg A^\g (A-\z)\inv }_{\Lc(\Hc)} \lesssim \frac {\pppg {\Re \z}^\g} {\abs{\Im \z}},
\]
so according to \eqref{estim-tildef} (with $\rho = \e / N$) there exists $C \geq 0$ such that for $j \in \Ii 1 N$ and $\z_j \in \C \setminus \R$ we have
\[
\abs{\frac {\partial \tilde g_j}{\partial \bar \z_j}(\z_j)} \nr{\pppg A^{\frac {\d_1} N} (A-\z_j)\inv}  \nr{ (A-\z_j)\inv\pppg A^{\frac {\d_2+\e} N}} \leq C \pppg {\z_j}^{-2-\frac \e N}.
\]
% 
% \[
%  \nr{ \pppg A^{\frac {\d_1}N}(A-\z_j)\inv }_{\Lc(L^2)} \leq C \abs {\Im \z_j}^{\frac {\d_1} N - 1} \quad \text{and} \quad  \nr{ (A-\z_j)\inv \pppg A^{\frac {\d_2+\e}N}}_{\Lc(L^2)} \leq C \abs {\Im \z_j}^{\frac {\d_2+\e} N - 1},
% \]
% so 
% \[
%  \nr{\pppg A^{\d_1} \prod _{j=1}^N \ad_{g_j(A)} \Phi(z) \pppg A ^{\d_2 + \e} } \leq C  \nr \Phi _ {A,N} \prod_{j=1}^N \int_{\z_j} \abs{\frac {\partial \tilde g_j}{\partial \bar \z_j}(\z_j)} \abs {y_j}^{\frac {\d_1 + \d_2+\e} N - 2} \, dx_j\, dy_j.
% \]
% Since $\frac {\partial \tilde g_j}{\partial \bar \z_j}(\z_j)$ decays like $\pppg {x_j} ^{-1- \frac \e N}$, each integral of the right-hand side is finite and the result follows.
%
The result follows after integration.
\end{proof}

We can now state the main result of this section. Let $N\geq 2$. The families of operators $(A_\l)_{\l \in \L}$ and $(\Hl)_{\l \in \L}$ are as before. We also consider for all $k \in \Ii 1 N$ and $\l \in \L$ a pair $\Phi_{k,\l} = (\Phi_{k,\l,1},\Phi_{k,\l,2})$ of operators in $\opinsert_N(A_\l)$. For $k \in \Ii 1 N$ and $\l \in \L$ we set
\[
R^\Phi_{k,\l}(z) = \Phi_{k,\l,1} (\Hl -z)\inv  \Phi_{k,\l,2} .
\]

\begin{theorem} \label{th-estim-insert}
Suppose $(A_\l)_{\l \in \L}$ is uniformly conjugate to $(\Hl)_{\l \in \L}$ on $J$ with lower bound $(\a_\l)_{\l \in \L}$ and $\Hl$ is $N$-smooth with respect to $A_\l$. Let $I$ be a compact subinterval of $J$ and $n \in \Ii 1 N$. %Assume that for all $j \in \Ii 1 n$ and $\l \in \L$ the operators $\Phi_{j,\l}$ and $\Phi'_{j,\l}$ belong to $\opinsert_N(A)$.
\begin{enumerate}[(i)]
\item If $\d > n-\frac 12$ there exists $c\geq 0$ such that for all $\l \in \L$ and $z \in \C_{I,+}$ we have
\begin{equation*}
\nr{ \pppg{A_\l}^{-\d} R^\Phi_{1,\l}(z) \dots R^\Phi_{n,\l}(z) \pppg{A_\l}^{-\d}} \leq \frac c {\a_\l^{n}} \prod_{k=1}^n \nr{\Phi_{k,\l}}_{A_\l,N}.
\end{equation*}
\item If $\d \in \left]n - \frac 12,N\right[$ there exists $c\geq 0$ such that for all $\l \in \L$ and $z \in \C_{I,+}$ we have
\begin{equation*}
\nr{ \pppg{A_\l}^{\d-n} \1{\R_-}(A_\l)R^\Phi_{1,\l}(z) \dots R^\Phi_{n,\l}(z)  \pppg{A_\l}^{-\d}} \leq \frac c {\a_\l^n} \prod_{k=1}^n \nr{\Phi_{k,\l}}_{A_\l,N}.
\end{equation*}
\item If $\d \in \left] n - \frac 12 , N \right[$ there exists $c\geq 0$ such that for all $\l \in \L$ and $z \in \C_{I,+}$ we have
\begin{equation*}
\nr{ \pppg{A_\l}^{-\d} R^\Phi_{1,\l}(z) \dots R^\Phi_{n,\l}(z) \1{\R_+}(A_\l) \pppg{A_\l}^{\d-n}} \leq \frac c {\a_\l^n}  \prod_{k=1}^n \nr{\Phi_{k,\l}}_{A_\l,N}.
\end{equation*}
\item If $\d_1,\d_2 \geq 0$ satisfy $\d_1 + \d_2 < N - n$ there exists $c\geq 0$ such that for all $\l \in \L$ and $z \in \C_{I,+}$ we have
\begin{equation*}
\nr{ \pppg{A_\l}^{\d_1} \1{\R_-} (A_\l) R^\Phi_{1,\l}(z) \dots R^\Phi_{n,\l}(z) \1{\R_+} (A_\l) \pppg{A_\l}^{\d_2}} \leq \frac c {\a_\l^n}  \prod_{k=1}^n \nr{\Phi_{k,\l}}_{A_\l,N}.
\end{equation*}
\end{enumerate}
\end{theorem}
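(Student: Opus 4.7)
The approach is to apply Lemma~\ref{lemjen2.1} with $R_j := R^\Phi_{j,\l}(z)$, $Q := \pppg{A_\l}\inv$, and the spectral projections $P_\pm := \1{\R_\pm}(A_\l)$. Given exponents $\d$, $\d_1$, $\d_2$ compatible with each of (i)--(iv), one chooses $\s\in]\tfrac12,1]$ so that the conclusions $(a_n)$--$(d_n)$ of the lemma yield the stated weights, while keeping the parameter $N$ equal to the one in the $N$-smoothness assumption. The constants delivered are then of the form $2^{n-1}\prod_{k}c_k$ with $c_j\lesssim\nr{\Phi_{j,\l}}_{A_\l,N}/\a_\l$, which reproduces the factor $\prod_k\nr{\Phi_{k,\l}}_{A_\l,N}/\a_\l^n$ claimed in the theorem. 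It thus suffices to verify the four base estimates $(a_1)$--$(d_1)$ for $R_j$ of the inserted form.

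The plan for each base case is to factor the weighted sandwich so that the two inserted operators sit in conjugate brackets $\pppg{A_\l}^{s_i}\Phi_{j,\l,i}\pppg{A_\l}^{-s_i}$ (bounded by $\nr{\Phi_{j,\l,i}}_{A_\l,N}$ via Proposition~\ref{prop-phi-Ad}, since the exponents meet $|s_i|\leq N$ thanks to $\d,\d_1+\d_2<N$), leaving a bare-resolvent factor in a form amenable to the estimates already proved. Explicitly, one writes
\[
\pppg{A_\l}^{s_1}\Phi_{j,\l,1}(\Hl-z)\inv\Phi_{j,\l,2}\pppg{A_\l}^{s_2}=\bigl(\pppg{A_\l}^{s_1}\Phi_{j,\l,1}\pppg{A_\l}^{-s_1}\bigr)\pppg{A_\l}^{s_1}(\Hl-z)\inv\pppg{A_\l}^{s_2}\bigl(\pppg{A_\l}^{-s_2}\Phi_{j,\l,2}\pppg{A_\l}^{s_2}\bigr),
\]
and controls the middle weighted resolvent by Theorem~\ref{th-mourre} for $(a_1)$, and by Propositions~\ref{thjen3.2} and~\ref{thjen3.5} for $(b_1)$, $(c_1)$, $(d_1)$---provided any cutoffs $P_\pm$ are placed adjacent to $(\Hl-z)\inv$. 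To move them across an inserted $\Phi_{j,\l,i}$, one inserts $I=P_-+P_+$ on the relevant side; the diagonal products $P_\pm\Phi_{j,\l,i}P_\pm$ commute through the weights and reduce to the previous case, while the off-diagonal products $P_-\Phi_{j,\l,i}P_+$ must be treated separately via Proposition~\ref{prop-comm-hplus}.

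For these off-diagonal pieces the plan is to decompose $P_\pm=\chi_\pm(A_\l)+\eta_\pm(A_\l)$ with $\chi_\pm\in C^\infty(\R)$ supported in disjoint half-lines and $\eta_\pm$ bounded with compact support near~$0$. After absorbing the ambient powers of $\pppg{A_\l}$ into the symbols, the principal term $\chi_-(A_\l)\Phi_{j,\l,i}\chi_+(A_\l)$ is directly covered by Proposition~\ref{prop-comm-hplus} with bound $\nr{\Phi_{j,\l,i}}_{A_\l,N}$, and the three remainders involving $\eta_\pm(A_\l)$ are trivially bounded since $\pppg{A_\l}^s\eta_\pm(A_\l)$ is bounded for every $s\in\R$.

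The main obstacle is precisely this reconciliation between the sharp spectral projections demanded by Lemma~\ref{lemjen2.1} and the smooth-symbol hypothesis of Proposition~\ref{prop-comm-hplus}, together with the need to keep every constant uniform in $\l\in\L$. Uniformity is however built into the structure: Definition~\ref{defconjunifdissn} controls the iterated commutators of $\Hl$ uniformly in $\l$, and $\nr{\Phi_{k,\l}}_{A_\l,N}$ appears explicitly as the external factor. Once $(a_1)$--$(d_1)$ are in place, Lemma~\ref{lemjen2.1} delivers (i)--(iv) simultaneously.
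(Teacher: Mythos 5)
Your proposal follows the paper's proof essentially verbatim: both apply Lemma \ref{lemjen2.1} with $R_j = R^\Phi_{j,\l}(z)$, verify the base cases via Theorem \ref{thordren}, Propositions \ref{thjen3.2}, \ref{thjen3.5}, \ref{prop-phi-Ad}, and overcome the incompatibility between the sharp projections $\1{\R_\pm}(A_\l)$ and the smooth-symbol hypothesis of Proposition \ref{prop-comm-hplus} by inserting a smooth partition of unity near the inserted factors. The only presentational difference is that the paper inserts $1=\h_0+\h_-+\h_+$ directly between $\Phi_{j,\l,1}$ and the resolvent and then uses $\1_{\R_-}=\1_{\R_-}\tilde\h_-$ to invoke Proposition \ref{prop-comm-hplus}, whereas you insert $P_-+P_+$ and then split each $P_\pm$ into a smooth half-line cutoff plus a compactly supported remainder---this is the same idea arranged slightly differently.
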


Since the identity operator always belongs to $\opinsert_N(A_\l)$ with norm 1, Theorem \ref{th-estim-insert} can be seen as a generalization of Theorem \ref{thordren}.

Note that in all these estimates the constants $c$ may depend on $(\Hl)_{\l \in \L}$ and $(A_\l)_{\l\in\L}$ (in fact they depend on the constants which appears in properties $(c_n)$ and $(d_n)$ of Definition \ref{defconjunifdissn}), but they do not depend on the inserted factors $\Phi_{k,\l}$.

\begin{proof}
To prove this theorem we apply Lemma \ref{lemjen2.1} with $R_j = R^\Phi_{j,\l}(z)$ instead of ${(\Hl-z)\inv}$. Let $\d \in {\left] \frac 12, N \right[}$. According to Theorem \ref{thordren} and Proposition \ref{prop-phi-Ad} we have for all $ j\in\Ii 1 N$
\[
\nr{\pppg {A_\l}^{-\d} R^\Phi_{j,\l}(z) \pppg {A_\l}^{-\d}}_{\Lc(\Hc)} \lesssim \frac {\nr{\Phi_{j,\l}}_{A,N}} {\a_\l} .
\]
Now let $\h_0 \in C_0^\infty(\R,[0,1])$ be equal to 1 in a neighborhood of 0 and $\h_\pm = (1 - \h_0) \1{\R_\pm} \in C^\infty(\R,[0,1])$. For $* \in \{ - , 0 , +\}$, $\l \in \L$, $z \in \C_{I,+}$ and $j\in \Ii 1 N$ we set
\[
\Th^*_{j,\l}(z) = \pppg {A_\l} ^{\d-1} \1 {\R_-}({A_\l})  \Phi_{j,\l,1} \h_*({A_\l})  (\Hl -z)\inv  \Phi_{j,\l,2} \pppg {A_\l} ^{-\d}.
\]
As above we have
\[
\nr{\Th^0_{j,\l}(z)}_{\Lc(\Hc)} \lesssim \frac {\nr{\Phi_{j,\l}}_{A,N}} {\a_\l} .
\]
According to Theorem \ref{thordren}, the operator $\pppg {A_\l} ^{\d-1} \h_-({A_\l}) (\Hl-z)\inv \pppg {A_\l}^{-\d}$ is bounded uniformly in $\l \in \L$ and $z \in \C_{I,+}$ and hence, according to Proposition \ref{prop-phi-Ad}, $\Th^-_{j,\l}(z)$ is estimated as $\Th^0_{j,\l}(z)$. 
Let us now turn to $\Th^+_{j,\l}(z)$. Let $\tilde \h_- \in C^\infty(\R,[0,1])$ be equal to 1 on $\R_-$ and equal to 0 on $\supp(\h_+)$. According to Theorem \ref{thordren} and Propositions \ref{prop-phi-Ad} and \ref{prop-comm-hplus} we have
\begin{align*}
\nr{\Th^+_{j,\l}(z)}_{\Lc(\Hc)}
& \lesssim \frac {\nr{\Phi_{j,\l,2}}_{A,N}} {\a_\l}  \nr{\pppg {A_\l} ^{\d-1} \1 {\R_-}({A_\l})  \Phi_{j,\l,1} \h_+({A_\l}) \pppg {A_\l}}\\
& \lesssim \frac {\nr{\Phi_{j,\l,2}}_{A,N}} {\a_\l} \nr{\pppg {A_\l} ^{\d-1} \tilde \h_- ({A_\l})  \Phi_{j,\l,1} \h_+({A_\l}) \pppg {A_\l}}\\
& \lesssim \frac {\nr{\Phi_{j,\l,2}}_{A,N}} {\a_\l} \nr{\Phi_{j,\l,1}}_{A,N} .
\end{align*}
Finally we have proved that there exists $C \geq 0$ such that for all $\l \in \L$, $z \in \C_{I,+}$ and $j\in \Ii 1 N$ we have
\[
\nr{ \pppg {A_\l} ^{\d-1} \1 {\R_-}({A_\l})  R^\Phi_{j,\l}(z)  \pppg {A_\l} ^{-\d}}\leq \frac C {\a_\l} \nr{\Phi_{j,\l}}_{A,N}.
\]
% Then it only remains to prove similarly that this is also the case for the operators
The operators
$
\pppg {A_\l} ^{-\d}  R^\Phi_{j,\l}(z)  \1 {\R_+}({A_\l})  \pppg {A_\l} ^{\d-1}
$
and
$
 \pppg {A_\l} ^{\d_1} \1 {\R_-}({A_\l}) R^\Phi_{j,\l}(z) \1 {\R_+} \pppg {A_\l} ^{\d_2}
$
are treated similarly, and we conclude with Lemma \ref{lemjen2.1}.
\end{proof}

\begin{definition} \label{def-415}
Let $H$ be a maximal dissipative operator on $\Hc$. Let $A$ be an operator on $\Hc$ and $N\in\N$. We say that the pair of (non necessarily bounded) operators $\Phi = (\Phi_1,\Phi_2)$ belongs to $\opinsert_{N} (H,A)$ if the operators $\Phi_1(H-i)\inv$, $(H-i)\inv\Phi_2$ and $\Phi_1 (H-i)\inv\Phi_2$ belong to $\opinsert_N(A)$. In this case we set
\[
\nr{\Phi}_{H,A,N} = \nr{\Phi_1 (H-i)\inv \Phi_2}_{A,N} + \nr{\Phi_1 (H-i)\inv}_{A,N} \nr{ (H-i)\inv\Phi_2}_{A,N}.
\]
\end{definition}

This is an abstract condition which will be fulfilled (and standard to check) for the kind of differential operators we consider in this paper.

\begin{theorem} \label{th-estim-insert2}
Under the assumptions of Theorem \ref{th-estim-insert}, if for all $\l \in \L$ the pairs of operators $\Phi_{1,\l},\dots,\Phi_{N,\l}$ belong to $\opinsert_N(\Hl,A_\l)$ then in all the estimates we can replace $\prod_{k=1}^n \nr{\Phi_{k,\l}}_{A_\l,N}$ by $\prod_{k=1}^n \nr{\Phi_{k,\l}}_{\Hl,A_\l,N}$ if $z$ stays in a bounded subset of $\C_{I,+}$.
\end{theorem}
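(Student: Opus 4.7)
The plan is to reduce Theorem \ref{th-estim-insert2} to Theorem \ref{th-estim-insert} by absorbing each possibly unbounded $\Phi_{k,\lambda,j}$ into an adjacent resolvent through a double application of the resolvent identity. Applying $(H_\lambda-z)^{-1} - (H_\lambda-i)^{-1} = (z-i)(H_\lambda-i)^{-1}(H_\lambda-z)^{-1}$ once on each side yields the three-term expansion
\begin{equation*}
(H_\lambda-z)^{-1} = (H_\lambda-i)^{-1} + (z-i)(H_\lambda-i)^{-2} + (z-i)^2 (H_\lambda-i)^{-1}(H_\lambda-z)^{-1}(H_\lambda-i)^{-1}.
\end{equation*}
Inserting this between $\Phi_{k,\lambda,1}$ and $\Phi_{k,\lambda,2}$ gives $R^\Phi_{k,\lambda}(z) = T^{(1)}_{k,\lambda} + (z-i)T^{(2)}_{k,\lambda} + (z-i)^2 T^{(3)}_{k,\lambda}(z)$ with
\begin{align*}
T^{(1)}_{k,\lambda} &= \Phi_{k,\lambda,1}(H_\lambda-i)^{-1}\Phi_{k,\lambda,2}, \qquad T^{(2)}_{k,\lambda} = [\Phi_{k,\lambda,1}(H_\lambda-i)^{-1}] \cdot [(H_\lambda-i)^{-1}\Phi_{k,\lambda,2}], \\
T^{(3)}_{k,\lambda}(z) &= [\Phi_{k,\lambda,1}(H_\lambda-i)^{-1}] \, (H_\lambda-z)^{-1} \, [(H_\lambda-i)^{-1}\Phi_{k,\lambda,2}].
\end{align*}
By Definition \ref{def-415}, the bracketed operators and $T^{(1)}_{k,\lambda}$ all belong to $\opinsert_N(A_\lambda)$, with norms controlled by $\nr{\Phi_{k,\lambda}}_{H_\lambda,A_\lambda,N}$; Proposition \ref{prop-insert-prod} then shows that $T^{(2)}_{k,\lambda}$ also lies in $\opinsert_N(A_\lambda)$ with a norm bounded by the same quantity (up to the constant $2^N$).

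Expanding the product $R^\Phi_{1,\lambda}(z)\cdots R^\Phi_{n,\lambda}(z)$ yields a sum of $3^n$ terms. In each term, say $m$ of the positions (with $0\leq m\leq n$) are of type $T^{(3)}$ and contribute a single resolvent flanked by $\opinsert_N(A_\lambda)$ factors, while the other positions contribute bounded $\opinsert_N(A_\lambda)$ factors. Using Proposition \ref{prop-insert-prod} iteratively to merge consecutive $\opinsert_N(A_\lambda)$ pieces, the whole term rewrites as
\begin{equation*}
\Psi'_0 (H_\lambda-z)^{-1} \Psi'_1 (H_\lambda-z)^{-1} \cdots (H_\lambda-z)^{-1} \Psi'_m,
\end{equation*}
where $\Psi'_0,\dots,\Psi'_m \in \opinsert_N(A_\lambda)$ and the product of their $\opinsert_N$-norms is bounded by a uniform constant times $\prod_{k=1}^n \nr{\Phi_{k,\lambda}}_{H_\lambda,A_\lambda,N}$. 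This expression fits the framework of Theorem \ref{th-estim-insert} with $m$ resolvent factors, after splitting the $\Psi'_j$ into pairs in any convenient way (for instance, trivial factors at the endpoints).

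Applying Theorem \ref{th-estim-insert} in place of $n$ by $m$ is permissible because $\delta > n - \frac12 \geq m - \frac12$ in cases (i)--(iii), and $\delta_1+\delta_2 < N - n \leq N - m$ in case (iv). This yields a bound by $c\,\alpha_\lambda^{-m} \prod_{k=1}^n \nr{\Phi_{k,\lambda}}_{H_\lambda,A_\lambda,N}$ for each of the $3^n$ terms. Since $\alpha_\lambda \in (0,1]$ we have $\alpha_\lambda^{-m} \leq \alpha_\lambda^{-n}$, and the prefactors $(z-i)^{2m}$ are uniformly bounded as $z$ ranges over a bounded subset of $\C_{I,+}$. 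Summing gives the four announced estimates simultaneously. The principal obstacle is identifying the right algebraic identity: a single-sided resolvent expansion would leave one of $\Phi_{k,\lambda,1}$, $\Phi_{k,\lambda,2}$ unabsorbed and hence unbounded, whereas the symmetric three-term expansion above absorbs both sides into resolvents while producing only a finite remainder; once this decomposition is isolated, the rest is bookkeeping supported by Proposition \ref{prop-insert-prod} and Theorem \ref{th-estim-insert}.
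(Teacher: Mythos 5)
Your proof is correct and takes essentially the same approach as the paper: the key step in both is the symmetric double application of the resolvent identity, $(\Hl-z)^{-1} = (\Hl-i)^{-1} + (z-i)(\Hl-i)^{-2} + (z-i)^2(\Hl-i)^{-1}(\Hl-z)^{-1}(\Hl-i)^{-1}$, which absorbs both $\Phi_{k,\l,1}$ and $\Phi_{k,\l,2}$ into resolvents so that the remaining pieces lie in $\opinsert_N(A_\l)$ and Proposition \ref{prop-insert-prod} plus Theorem \ref{th-estim-insert} apply. Your write-up spells out the bookkeeping (the $3^n$-term expansion, the merging of adjacent bounded factors, and the case $m \leq n$) that the paper leaves implicit, and your version of the resolvent identity corrects a small typo in the paper's displayed formula (whose first term erroneously carries a factor $(z-i)$).
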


This new version of Theorem \ref{th-estim-insert} allows unbounded operators $\Phi_{k,\l}$. Moreover, even for inserted factors which belong to $\opinsert_N(A_\l)$, the estimate of Theorem \ref{th-estim-insert2} may be better than that of Theorem \ref{th-estim-insert}. In the application this will be useful to see the inserted factors of Proposition \ref{prop-der-R2} not only as bounded operators on $L^2$ but also as operators acting on suitable Sobolev spaces. This will be crucial at low frequency when these factors will be rescaled (see Proposition \ref{prop-estim-res-amort} below).

%This will be in particular the case for low frequencies 

\begin{proof}
Let $\l \in \L$, $z \in \C_{I,+}$ and $k \in \Ii 1 N$. According to the resolvent equality we have
\begin{align*}
R^\Phi_{k,\l}(z)
& = (z-i) \Phi_{k,\l,1} (\Hl-i)\inv \Phi_{k,\l,2} + (z-i)^2 \Phi_{k,\l,1} (\Hl-i)^{-2} \Phi_{k,\l,2} \\
& \qquad + (z-i)^2 \Phi_{k,\l,1} (\Hl-i)\inv(\Hl-z)\inv(\Hl-i)\inv \Phi_{k,\l,2}.
\end{align*}
Replacing each factor $R^\Phi_{k,\l}(z)$ by any of the term in the right hand side and applying Theorem \ref{th-estim-insert} with Proposition \ref{prop-insert-prod} we obtain the estimates of Theorem \ref{th-estim-insert2}.
\end{proof}

Theorems \ref{th-estim-insert} and \ref{th-estim-insert2} give estimates for products of resolvents with inserted factors. In particular we are now able to prove results for the derivatives of the ``resolvent'' $R$ defined in \eqref{def-Rz}. 
In the rest of this paper we use these two results to prove Theorems \ref{th-inter-freq}, \ref{th-low-freq}, \ref{th-low-freq-bis} and \ref{th-high-freq}.

\section{Intermediate frequency estimates}  \label{sec-inter-freq}

The main goal of this section is to prove Proposition \ref{prop-inter-freq} below. Theorem \ref{th-inter-freq} will be a direct consequence of this statement.\\

We denote by $A$ the (self-adjoint realization of the) generator of $L^2$ dilations, namely
\begin{equation} \label{def-A}
 A = - \frac i2 ( x \cdot \nabla + \nabla \cdot x) = - i \, (x \cdot \nabla) - \frac {id}2. 
\end{equation}
Let us record the properties of $A$ we need in this paper: 
\begin{proposition} \label{prop-A}
\begin{enumerate}[(i)]
\item For $\th \in \R$, $u \in \Sc(\R^d)$ and $x \in \R^d$ we have
\begin{equation*} %\label{expA}
 (e^{i\th A} u) (x) = e^{\frac {d\th}2} u ( e^\th x).
\end{equation*}
\item For $j \in \Ii 1 d$ and $\g \in C^\infty(\R^d)$ we have on $\Sc(\R^d)$:
\begin{equation*} %\label{comm-A}
 [\partial_j , i A] = \partial_j \quad \text{and} \quad [\g,iA] = -(x \cdot \nabla) \g
\end{equation*}
 \item For $p \in [1,+\infty]$, $\th \in \R$ and $u \in \Sc(\R^d)$ we have
\[
 \nr{e^{i\th A} u}_{L^p} = e^{\th \left( \frac d 2 - \frac d p\right)} \nr u_{L^p}.
\]
\end{enumerate}
\end{proposition}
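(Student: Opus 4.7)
The plan is to establish (i) first since (iii) is an immediate consequence via a change of variables, and (ii) is a direct Leibniz--rule computation. For (i), I would introduce the candidate one--parameter family
\[
U_\theta u(x) := e^{d\theta/2} u(e^\theta x), \qquad u \in \Sc(\R^d),\ \theta \in \R,
\]
and check two facts. First, $U_\theta$ extends to a strongly continuous unitary group on $L^2(\R^d)$: the change of variables $y = e^\theta x$ (Jacobian $e^{d\theta}\,dy$) shows that $\nr{U_\theta u}_{L^2} = \nr{u}_{L^2}$, the group law $U_{\theta_1} U_{\theta_2} = U_{\theta_1 + \theta_2}$ is clear, and strong continuity at $\theta=0$ follows by dominated convergence on $\Sc$ and density. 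Second, differentiating at $\theta=0$ on a Schwartz function yields
\[
\left.\tfrac{d}{d\theta}\right|_{\theta=0} U_\theta u(x) = \tfrac{d}{2} u(x) + x\cdot\nabla u(x) = iA\, u(x),
\]
using the alternative form $iA = x\cdot\nabla + d/2$ coming from $A = -i(x\cdot\nabla) - id/2$. Since $\Sc(\R^d)$ is a core for $A$, Stone's theorem identifies $U_\theta$ with $e^{i\theta A}$.

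For (ii), I would work on $\Sc(\R^d)$ using the same expression $iA = x\cdot\nabla + d/2$. The constant commutes with everything, and
\[
[\partial_j, x\cdot\nabla] = \sum_{k=1}^d [\partial_j, x_k \partial_k] = \sum_{k=1}^d \delta_{jk} \partial_k = \partial_j,
\]
giving the first identity. For the second, the multiplication operator $\gamma$ satisfies $[\gamma,\partial_k] = -\partial_k\gamma$ (as a multiplication operator), so
\[
[\gamma, iA] = [\gamma, x\cdot\nabla] = \sum_{k=1}^d x_k [\gamma,\partial_k] = -(x\cdot\nabla)\gamma.
\]

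For (iii), I would apply (i) and change variables $y = e^\theta x$:
\[
\nr{e^{i\theta A} u}_{L^p}^p = \int_{\R^d} e^{pd\theta/2} \abs{u(e^\theta x)}^p\, dx = e^{pd\theta/2} e^{-d\theta} \nr{u}_{L^p}^p = e^{d\theta(p/2 - 1)} \nr{u}_{L^p}^p,
\]
which, after extracting the $p$-th root, yields $e^{\theta(d/2 - d/p)}\nr{u}_{L^p}$. The limiting case $p=\infty$ is read off directly from the pointwise formula in (i), since $\nr{U_\theta u}_{L^\infty} = e^{d\theta/2}\nr{u}_{L^\infty}$ matches the exponent $d/2 - d/\infty = d/2$.

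There is no serious obstacle here; the only point requiring minor care is the identification $U_\theta = e^{i\theta A}$ in (i), which is routine once unitarity, strong continuity, and the infinitesimal generator on the core $\Sc(\R^d)$ are checked, and which is standard via Stone's theorem.
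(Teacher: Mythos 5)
Your proof is correct. The paper states Proposition \ref{prop-A} without a proof, treating these as standard facts about the generator of dilations; your argument via the unitary group $U_\theta$, Stone's theorem, the identity $iA = x\cdot\nabla + d/2$, and the change of variables is the canonical way to verify them and there is nothing to compare.
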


Let $(a_j)_{j\in\N^*}$ be a sequence of functions in $\symbor (\R^d)$. For $j,k \in \N$ with $j \leq k$ we set 
\begin{equation*} %\label{def-vraiRc}
\Rc_{j,k} (z) = R(z) a_{j+1}(x) R(z) a_{j+2}(x) \dots R(z) a_{k}(z)R(z).
\end{equation*}
Given $\a_{j},\dots , \a_{k} \in \N$ we also define
\[
 \Th_{j ; \a_{j},\dots , \a_{k} }(z) = \big( \Hz + 1\big)^{-\a_{j}}  a_{j+1}(x) \big(\Hz + 1\big)^{-\a_{j+1}} \dots {a_{k}(x)} \big(\Hz + 1\big)^{-\a_{k}}. 
\]
In the self-adjoint case the resolvent $(H_0 -z^2)\inv$ can be replaced by a spectrally localized version $(H_0 -z^2)\inv \h(H_0)$ with $\h \in C_0^\infty(\R)$, which is technically useful since $\h(H_0)$ is a smoothing operator. Here the non self-adjointness of $H_z$ prevents us from using such a localization. The operators $ \Th_{j ; \a_{j},\dots , \a_{k} }(z)$ will be a suitable replacement of $\h(H_0)$ in the end of the proof of Proposition \ref{prop-inter-freq}.

\begin{proposition} \label{prop-inter-freq}
Let $K$ be a compact subset of $\C\setminus \singl 0$, $n \in \N$, $\d > n + \frac 12$ and $\a \in \N^d$ with $\abs \a \leq 1$. Then there exists $C \geq 0$ such that for all $z \in  K \cap \C_{+}$ we have
\[
%\nr { \pppg x ^{-\d} \Rc_{0,n}(z) \pppg x ^{-\d}}_{\Lc(L^2)} +
\nr { \pppg x ^{-\d} D^\a \Rc_{0,n}(z) \pppg x ^{-\d}}_{\Lc(L^2)}\leq C.
\]
\end{proposition}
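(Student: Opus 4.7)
The plan is to derive Proposition \ref{prop-inter-freq} by applying Theorem \ref{th-estim-insert2} to the parameter-dependent family $(H_z)_{z\in K\cap\overline{\C_+}}$ with conjugate operator $A$, the generator of dilations from \eqref{def-A}; in the abstract framework the spectral parameter will be $\zeta = z^2$ and the parameter $\l$ will be $z$ itself. Using the symmetry $R(-\bar z)=R(z)^*$ of Proposition \ref{prop-R-diss}(ii) together with the self-adjointness of $D^\a$ (for $|\a|\leq 1$) and of $\pppg x^{-\d}$, I may restrict attention to $\Re z\geq 0$. The case $\Re z = 0$ is immediate since $H_z - z^2 = H_0 + |z|a + |z|^2\geq |z|^2$, and whenever $\Im z$ stays bounded below by a fixed positive constant the estimate follows directly from Proposition \ref{prop-R-diss}(iii) combined with elliptic regularity. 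It thus suffices to treat $z$ in a compact subset of $\{\Re z>0,\ \Im z\geq 0\}$ with $\Im z$ small; there $z^2\in\overline{\C_+}$ and $\Re(z^2)=(\Re z)^2-(\Im z)^2$ remains in a compact interval $I\subset(0,\infty)$.

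Next, I will verify uniformly in $z\in K$ the hypotheses of Definition \ref{defconjunifdissn} at any chosen order $N$. Writing $H_z = H_z^0 - iV_z$ with $H_z^0 = H_0 + \Im(z)a$ self-adjoint and $V_z = \Re(z)a\geq 0$, the key computation $[H_0,iA] = 2H_0 + R_\infty$ holds with $R_\infty$ a second-order differential operator whose coefficients are obtained by applying $(x\cdot\nabla)$ to $G_{jk}-\d_{jk}$ and are thus short range by \eqref{dec-metric-a}; a standard compactness argument then yields the Mourre estimate \eqref{hyp-mourre} for $H_z^0$ on any open $J\Supset I$ with $\overline{J}\subset(0,\infty)$, uniformly in $z$, with $\b_z=0$. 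Iterating $\ad_A$ produces only further short-range coefficients, giving the $N$-smoothness conditions at arbitrary order. I then write $D^\a\Rc_{0,n}(z)$ as a product of $n+1$ dressed resolvents $R^\Phi_{k,z}(z^2) = \Phi_{k,z,1}(H_z-z^2)^{-1}\Phi_{k,z,2}$ with
\[
\Phi_{1,z} = (D^\a,\,a_1), \qquad \Phi_{k,z} = (1,\,a_k) \ \ (2\leq k\leq n), \qquad \Phi_{n+1,z} = (1,\,1),
\]
and check that each pair lies in $\opinsert_N(H_z,A)$ in the sense of Definition \ref{def-415}, uniformly in $z\in K$. For factors built from $a_k\in\symbor$ this uses $[a_k,iA]=-(x\cdot\nabla)a_k\in\symbor$ and iteration; for the $D^\a$-factor it uses $[D^\a,iA]=|\a|D^\a$, the uniform boundedness of $D^\a(H_z+1)^{-1}$ by elliptic regularity, and the heuristic identity $[(H_z+1)^{-1},iA]=-(H_z+1)^{-1}[H_z,iA](H_z+1)^{-1}$, whose rigorous justification proceeds via a standard regularization argument.

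Theorem \ref{th-estim-insert2}(i) will then deliver, for $N$ large enough,
\[
\nr{\pppg{A}^{-\d}\, D^\a\,\Rc_{0,n}(z)\,\pppg{A}^{-\d}}_{\Lc(L^2)} \leq C \quad\text{whenever } \d>n+\tfrac{1}{2},
\]
and converting from $\pppg A$-weights to $\pppg x$-weights is routine: multiplication by $\pppg x^{-\d}$ lies in $\opinsert_N(A)$ for every $N$ since each $\ad_A^k(\pppg x^{-\d})$ is a bounded function of $x$, so $\pppg A^\d\pppg x^{-\d}$ and $\pppg x^{-\d}\pppg A^\d$ are bounded on $L^2$ by Proposition \ref{prop-phi-Ad}. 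The main obstacle is essentially bookkeeping: every quantitative bound appearing in Definitions \ref{defconjunifdissn} and \ref{def-415} has to be established uniformly in $z\in K$ in the presence of the non-self-adjoint perturbation $-iza(x)$. However, each commutator with $A$ either reproduces $H_0$ up to short-range corrections or replaces a short-range symbol by another one, so no essential obstruction arises beyond the careful accounting already developed in \cite{art-mourre} and Section \ref{sec-Mourre}.
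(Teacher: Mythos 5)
Your argument tracks the paper's proof closely through the Mourre verification and the application of Theorem \ref{th-estim-insert2} to get estimates with $\pppg A^{-\d}$ weights, but the final weight-conversion step contains a genuine gap. You claim that because $\pppg x^{-\d}\in\opinsert_N(A)$ for every $N$, Proposition \ref{prop-phi-Ad} gives boundedness of $\pppg x^{-\d}\pppg A^\d$ and $\pppg A^\d\pppg x^{-\d}$ on $L^2$. It does not: Proposition \ref{prop-phi-Ad} yields boundedness of the \emph{conjugated} operator $\pppg A^\d\,\pppg x^{-\d}\,\pppg A^{-\d}$, and you cannot simply drop the compensating $\pppg A^{-\d}$ on the right. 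In fact $\pppg x^{-\d}\pppg A^\d$ is genuinely unbounded on $L^2$: heuristically $\pppg A^\d\sim\pppg x^\d\pppg D^\d$, so $\pppg x^{-\d}\pppg A^\d$ behaves like $\pppg D^\d$, which costs derivatives that the weight $\pppg x^{-\d}$ cannot supply. The $\pppg A^{-\d}$-weighted estimate therefore does not transfer to the $\pppg x^{-\d}$-weighted one by a mere multiplier argument.

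The paper resolves this by an additional decomposition: writing $\Rc_{0,n}(z)$ (via iterating the identity $R(z)=(\Hz+1)^{-1}+(1+z^2)(\Hz+1)^{-2}+(1+z^2)^2(\Hz+1)^{-1}R(z)(\Hz+1)^{-1}$) as a sum of terms $\Th_{0;\a_0,\dots,\a_n}(z)$ and $\Th_{0;\a_0,\dots,\a_j}(z)\,\Rc_{j,k}(z)\,\Th_{k;\a_k,\dots,\a_n}(z)$, where the $\Th$-factors carry as many $(\Hz+1)^{-1}$'s as desired. The estimate \eqref{estim-ARc} is used only on the middle $\Rc_{j,k}$ block sandwiched between $\pppg A^{\mp\d}$, while the $\Th$-factors mediate between $\pppg x^{-\d}$ and $\pppg A^{\d}$: one shows $\pppg x^{-\d}D^\a\Th_{0;\a_0,\dots,\a_j}(z)\pppg A^\d$ (and its mirror) is uniformly bounded precisely because the $(\Hz+1)^{-\a_l}$'s supply the derivative gain that $\pppg A^\d$ costs, as made quantitative by factoring through $\pppg D^{2m-\abs\a}$ and $\pppg x^{-\d}\pppg D^{-2m+\abs\a}\pppg A^\d$ with $m$ large, the latter bounded by interpolation. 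To repair your proof you would need this (or an equivalent) smoothing step; the conversion from $\pppg A$-weights to $\pppg x$-weights is the substantive part of the argument, not a direct consequence of Proposition \ref{prop-phi-Ad}.
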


Note that according to Proposition \ref{prop-der-R2}, the derivatives $R^{(n)}(z)$ which appear in Theorem \ref{th-inter-freq} are linear combinations of terms of the form $\Rc_{0,m}(z)$ with $a_j$ equal to 1 or $a$.\\

Before giving a full proof of this proposition, we briefly outline its main ideas. We check that $A$ is conjugate to the dissipative operator $\Hz$ up to any order uniformly in $z \in K\cap \C_{+,+}$ and then we prove that multiplication by $a_k$ belongs to $\opinsert_N(A)$ for any $N \in \N$. According to Theorem \ref{th-estim-insert} there exists $C \geq 0$ such that for all $z \in K_{+,+}$ we have
\begin{equation} \label{estim-ARc}
 \nr{ \pppg A^{-\d} \Rc_{j,k}(z) \pppg {A}^{-\d}}_{\Lc(L^2)} \leq C.
\end{equation}
Finally, as usual, it remains to replace the weight $\pppg A ^{-\d}$ by $\pppg x ^{-\d}$.

\begin{proof}
\stepp 
Let $\m_0 > 0$. According to Proposition \ref{prop-R-diss} we can assume without loss of generality that $\Im z \leq \m_0$ for all $z \in K\cap \C_{+}$. We prove the result for $z \in K \cap \C_{+,+}$. Reversing the order of the inserted factors and taking the adjoint give the result for $z \in K \cap \C_{-,+}$. If $\m_0$ is small enough, there exists a compact subinterval $J$ of $\R_+^*$ such that $\Re (z^2) \in J$ for all $z \in K \cap \C_{+,+}$.
So let us prove that the operator $A$ is conjugate to $\Hz$ on $J$ and up to order $N$ uniformly in $z \in K\cap \C_{+,+}$. In fact it is enough to prove that for any $E > 0$ there exists a neighborhood $J_E$ of $E$ in $\R$ such that $A$ is conjugate to $\Hz$ on $J_E$.

\stepp We first check that the generator of dilations $A$ satisfies the first four assumptions ($a$)-($d$) to be uniformly conjugate to the family of dissipative operators $(\Hz)_{z \in K\cap \C_{+,+}}$. The real part of $\Hz$ is $\rehz = \Ho + \Im (z) a$. Its domain is $H^2(\R^d)$. For ($a$) we only have to remark that $\Sc(\R^d) \subset H^2(\R^d) \cap \Dom(A)$ is dense in $H^2(\R^d)$. Assumption ($b$) easily follows from Proposition \ref{prop-A}, and all the commutators involved in assumptions ($c_n$) and ($d_n$) can be computed explicitly using again Proposition \ref{prop-A}.

\stepp For the positive commutator assumption ($e$) we apply the usual trick. According to Proposition \ref{prop-A} we can write 
\[
\left[\rehz ,iA \right] = 2 \rehz + W 
\]
where
\[
W = -2 \Im (z)a - \Im (z) (x\cdot \nabla) a + \sum_{j,k= 1}^d \g_{j,k}(x) D_j D_k + \sum_{k= 1}^d b_{k}(x) D_k
\]
for some $\g_{j,k}, b_k \in \symb^{-\rho}(\R^d)$, $j,k \in\Ii 1 d$.
Let $E > 0$, $\s > 0$ and $E_\s = [E-\s,E+\s]$. Composing with the projection $\1 {E_\s}\big(\rehz\big)$ on both sides of the commutator and using, as usual for Mourre estimates, that $ {\mathds 1}_{E_{\sigma}} (B)B \geq (E-\sigma) {\mathds 1}_{E_{\sigma}}(B) $ for any self-adjoint operator $ B $, we get
\begin{multline}
\1 {E_\s}\big(\rehz\big) \left[\rehz , iA \right] \1 {E_\s}\big(\rehz\big)\\
\geq 2(E-\s)\1 {E_\s}\big(\rehz\big) + \h_{E,\s} \big(\rehz\big) Q\,  \h_{E,\s} \big(\rehz\big).
\end{multline}
Here $\h_{E,\s} \in C_0^\infty(\R, [0,1])$ is supported in $E_{2\s}$ and is equal to 1 on $E_\s$, and 
\[
 Q = \1 {E_\s}\big(\rehz\big)  W \1 {E_\s}\big(\rehz\big)
\]
is compact. According to the Helffer-Sj\"ostrand formula \eqref{helffer-sjostrand} and the resolvent equation between $\Re(\Hz)$ and $\Ho$ we have in $\Lc(L^2)$:
\begin{align*}
\h_{E,\s}\big(\rehz\big)
& = \h_{E,\s}(\Ho) + \frac {\Im (z)}{\pi}  \int_{\z = x+iy \in \C} \frac {\partial \tilde  \h_{E,\s}}{\partial \bar \z} (\z) (\Ho-\z)\inv a (\rehz -\z)\inv \, dx \, dy \\
& = \h_{E,\s}(\Ho) + \bigo {\Im z} 0 (\Im (z)).
\end{align*}
We know from \cite{kocht06} that $E$ is not an eigenvalue of $\Ho$. If we choose $\s > 0$ and $\m_0 > 0$ small enough, then for $z \in \C_{+,+} \cap K$ with $\Im z \leq \m_0$ we have
\[
\1 {E_\s}(\rehz) [\rehz , iA] \1 {E_\s}(\rehz) \geq (E-\s)\1 {E_\s}(\rehz).
\]
This follows from the usual trick that $ \chi (B) $ goes weakly to $ 0 $ as the support of $ \chi $ shrinks to a point which is not an eigenvalue of $B$ and also from the compactness of $Q$. This proves that assumption (e) holds on $E_\s$.

\stepp Since multiplication by $a$ belongs to $\opinsert_N(A)$ uniformly in $z \in K \cap \C_{+,+}$ for all $N \in \N$ we can apply Theorem \ref{th-estim-insert2} to obtain \eqref{estim-ARc} for $z \in K \cap \C_{+,+}$.% Considering the adjoint we obtain the same estimate for $z \in K \cap \C_{-,+}$ and hence for all $z \in K \cap \C_{+}$ since it is clear when $\Re z =0$.

\stepp
We prove by induction on $m \in \N$ that $\Rc_{0,n}$ can be written as a sum of terms either of the form 
\begin{equation*} %\label{terme1}
\big( 1 +  z ^2 \big)^\b   \Th_{0 ; \a_0,\dots,\a_n}(z) 
\end{equation*}
with $\b \in \N$, $\a_0,\dots, \a_n \in \N^*$, or 
\begin{equation} \label{terme2}
\big( 1 +  z ^2 \big)^\b  \Th_{0 ; \a_0,\dots,\a_j} (z) \Rc_{j,k}(z)\Th_{k ; \a_{k},\dots , \a_n}(z) 
\end{equation}
where $\b \in \N$,  $j,k\in\Ii 0 n$, $j\leq k$, $\a_0,\dots, \a_{j-1} , \a_{k+1}, \dots , \a_n \in \N^*$, $\a_j,\a_{k} \in\N$, 
\[
\sum_{l=0}^j \a_l \geq m \quad \text{and} \quad \sum_{l=k}^{n} \a_l \geq m. 
\]
It is clear when $m = 0$, and for the inductive step we only have to consider a term like \eqref{terme2}. If $j = k$ we only have to write 
\[
R(z) = \big( \Hz +1 \big) \inv + \big(1+  z ^2 \big) \big( \Hz +1 \big) ^{-2} + \big(1+  z ^2 \big)^2 \big( \Hz +1 \big) \inv  R(z)   \big( \Hz +1 \big) \inv.
\]
if $j+1=k$ we have 
\begin{align*}
\Rc_{j,k}(z)
& = (\Hz + 1)\inv  \left( 1 + (1 +  z ^2 ) R(z) \right) a_{k}(x)  \left( 1 + (1 + z ^2 ) R(z) \right) (\Hz + 1)\inv 
\end{align*}
and finally if $j + 2 \leq  k$:
\begin{align*}
 \Rc_{j,k}(z)
& =  (\Hz + 1)\inv  \left( 1 + (1 +  z ^2 ) R(z) \right) a_{j+1} (x)  \Rc _{j+1,k-1} (z) a_{k}(x)  \\
& \qquad \times \left( 1 + (1 + z ^2 ) R(z) \right) (\Hz + 1)\inv .
\end{align*}
For any $\a_0,\dots, \a_n \in \N^*$ it is clear that $\pppg x ^{-\d}  \Th_{0 ; \a_0,\dots,\a_n}(z) \pppg x ^{-\d} $ is bounded on $L^2$ uniformly in $z \in K \cap \C_{+,+}$. For a term of the form \eqref{terme2} we remark that for $m$ large enough the operators 
\[
%\pppg x ^{-\d}  \Th_{0 ; \a_0,\dots,\a_j} (z) \pppg A ^\d ,\quad  
\pppg x ^{-\d} D^\a \Th_{0 ; \a_0,\dots,\a_j} (z) \pppg A ^\d \quad \text{and} \quad  \pppg A ^\d \Th_{k ; \a_{k},\dots , \a_n}(z)  \pppg x ^{-\d}
\]
are bounded uniformly in $z \in K\cap \C_{+}$. For instance for the first one we use on one hand that $\pppg x ^{-\d} D^\a \Th_{0 ; \a_0,\dots,\a_j} (z) \pppg D^{2m - \abs \a} \pppg x ^{\d}$ is uniformly bounded, and on the other hand that $\pppg x ^{-\d} \pppg D^{-2m + \abs \a} \pppg A^\d$ is bounded, which follows from an interpolation argument. Then we conclude with \eqref{estim-ARc}.
\end{proof}

\section{Low frequency estimates} \label{sec-small}

In this section we prove Theorems \ref{th-low-freq} and \ref{th-low-freq-bis}. We will first consider a globally small perturbation of $-\D$, and then use a perturbation argument to deal with the general case.\\

%As explained in Appendix A of \cite{bouclet11} we can assume without loss of generality (maybe after a suitable change of coordinates which preserves the properties of the problem) that $\det g = 1$ outside a compact subset of $\R^d$.\\

Let $\h \in C_0^\infty(\R^d,[0,1])$. We set 
\[
\Poo = -\divg  \big(\h I_d + (1-\h) G \big) \nabla \quad \text{and} \quad  K_0 = \Ho - \Poo = -\divg \big(\h(G-I_d)\big) \nabla .
\]
% \[
% \Poo = \frac 1 {\sqrt{\det g}} \sum_{j=1}^d  D_j {\sqrt{\det g}} \sum_{k=1}^d  \left( \d_{j,k} + (1-\h)(g^{j,k} - \d_{j,k})\right) D_k
% \]
% and
% \[
% K_0 = \frac 1 {\sqrt{\det g}} \sum_{j=1}^d  D_j {\sqrt{\det g}} \sum_{k=1}^d  \h(g^{j,k} - \d_{j,k}) D_k.
% \]
% We have $H_0 = \Poo + K_0$. Moreover 
These two operators can be written
% 
% 
% 
% For $j,k\in\Ii 1 d$ we set
% \[
%  \g_{j,k} = \d_{j,k} +(g^{j,k}-\d_{j,k}) (1-\h) , \quad  b_k = (1-\h) \sum_{j=1}^d D_j g^{j,k} , \quad a_\iota = (1-\h) a
% \]
% and
% \[
% \g^0_{j,k} = (g^{j,k}-\d_{j,k}) \h, \quad b_k^0 = \h \sum_{j=1}^d D_j g^{j,k} + \frac 1 {\sqrt {\det g}} \sum_{j=1}^d g^{j,k} D_j \sqrt {\det g}, \quad a_0 = \h a.
% \]
% Then
\[
 \Poo = \sum_{j,k=1}^d  D_j \g_{j,k} D_k, %+ \sum_{k=1}^d b_k D_k , 
\qquad K_0 = \sum_{j,k=1}^d  D_j\g_{j,k}^0 D_k ,%+ \sum_{k=1}^d b_k^0 D_k,
\]
where for $j,k \in \Ii 1 d$ the coefficients $\g_{j,k}^0$ are compactly supported. Moreover $\h$ can be chosen in such a way that the coefficients $\g_{j,k}- \d_{j,k}$ are small in $\symb^{-\rho}(\R^d)$, in a sense to be made precise in Theorem \ref{th-low-freq-R0} below.\\

Let $a_\iota = (1-\h) a$, $a_0 = \h a$, and for $z \in \C_{+}$:
\begin{eqnarray}
\Pol = \Poo - iza_\iota ,\qquad K_z = K_0 -iz a_0. \label{defKz}
\end{eqnarray}
The operator $\Poo$ is self-adjoint and non-negative on $L^2$ with domain $H^2$. $\Pol$ is maximal dissipative on $H^2$ for all $z \in \C_{+,+}$, self-adjoint for $z \in i\R_+^*$ and $\Pol = P_{-\bar z}^*$ when $z \in \C_{-,+}$. Thus for $z \in \C_{+}$ we can define
\[
\RPz = \big( P_z -z^2\big) \inv \in \Lc(L^2).
\]
On the other hand $K_0$ is symmetric and $K_z$ is dissipative for $z \in \C_{+,+}$. These operators are at least defined on $H^2$. \\
%The coefficients $\g_{j,k}^0$ and $b_k^0$ for $j,k\in\Ii 1 d$ are always compactly supported, and $\h$ can be chosen in such a way that $\g_{j,k} - \d_{j,k}$ and $b_k$ are small in $\symb_{-\rho}(\R^d)$ and $\symb_{-1-\rho}(\R^d)$ respectively, in a sense to be made precise in Theorem \ref{th-low-freq-R0} below.\\

Let us fix an integer $\bar d$ greater than $\frac d 2$.
For $\n \in \big[0,\frac d 2 \big[$, $N\in\N$ and $\vf \in \symb^{-\n-\rho}(\R^d)$ we set
\begin{equation} \label{def-nr-nu-N}
\nr{\vf}_{\n,N} =  \sup_{\abs \a \leq \bar d} \, \sup_{0\leq m \leq N} \sup_{x \in \R^d}\abs {\pppg x ^{ \n + \rho +\abs \a} \big(\partial^\a (x\cdot \nabla)^m \vf \big)(x)}.
\end{equation}
Then for $N \in \N$ we put
\begin{equation} \label{def-Nc}
\Nc_N = \sum_{j,k=1}^d \nr{\g_{j,k}-\d_{j,k}}_{0,N}  +  \nr{a_\iota}_{1,N}.
\end{equation}

In order to obtain Theorem \ref{th-low-freq}, we first prove an analogous result for $\RP$:

\begin{theorem} \label{th-low-freq-R0}
Let $n \in \N$ and $\e > 0$. 
\begin{enumerate} [(i)]
\item Let $\d$ be greater than $n + \frac 12$ if $n \geq \frac d 2$ and greater than $n + 1$ otherwise. Then if $\Nc_1$ is small enough there exists $C \geq 0$ such that for all $\a \in \N^d$ with $\abs \a \leq 1$ and $z \in \C_{+}$ we have
%
% \begin{equation}\label{eq-low-freq1}
% \nr{\pppg x ^{-\d}  \RP^{(n)}(z) \pppg x ^{-\d} }_{\Lc(L^2)}  \leq C \left( 1 + \abs z ^{d-2- n -\e} \right)
% \end{equation}
% and
% if $d$ is odd or $m\neq d-2$ and if $d$ is even and $m = d-2$:
% \begin{equation}\label{eq-low-freq2}
% \nr{\pppg x ^{-\d}  \RP^{(m)}(z) \pppg x ^{-\d} }_{\Lc(L^2)}  \leq C  \abs z ^{-\e} .
% \end{equation}
% Moreover in any case we have
\begin{equation}\label{eq-low-freq3}
\nr{\pppg x ^{-\d}  D^\a \RP^{(n)}(z) \pppg x ^{-\d} }_{\Lc(L^2)}  \leq C \left( 1 + \abs z ^{d -2 + \abs \a - n - \e} \right).
\end{equation}
\item 
Assume that $d$ is odd or $n\neq d-2$. Let $\d_1,\d_2 > n + \frac 12$ be such that $\d_1 + \d_2 > \min(2(n+1), d)$. Then there exists $C \geq 0$ such that for all $z \in \C_ {+}$ we have
\[
\nr{\pppg x ^{-\d_1} \RP^{(n)}(z) (z) \pppg x ^{-\d_2}}_{\Lc(L^2)} \leq C \left( 1 + \abs{z}^{d-2-n}\right).
\]
\end{enumerate}
\end{theorem}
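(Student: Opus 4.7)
The plan is to follow the scaling strategy outlined in Section~\ref{sec-outline}, turning the low-frequency estimate for $R_\iota^{(n)}(z)$ into a high-frequency-type problem for a rescaled operator, then applying the abstract Mourre machinery of Section~\ref{sec-Mourre}. First I would observe that Proposition~\ref{prop-der-R2} applies verbatim to $R_\iota(z)=(P_z-z^2)^{-1}$, so $R_\iota^{(n)}(z)$ is a linear combination of products $z^k R_\iota(z) a_\iota^{j_1} R_\iota(z)\cdots a_\iota^{j_m} R_\iota(z)$ with $n+k+\sum j_l=2m$. Conjugating each such product by $e^{iA\ln|z|}$ and using $\Ptau = e^{-iA\ln|z|}(P_z/|z|^2)e^{iA\ln|z|}$, $\tRz=(\Ptau-\hat z^2)^{-1}$, and the rescaled damping $a_{|z|}=e^{-iA\ln|z|}a_\iota e^{iA\ln|z|}$, I reduce the problem to estimating
\[
\abs z^{k-2m-2}\pppg x^{-\d_1} e^{iA\ln|z|}\,\tRz\,a_{|z|}^{j_1}\tRz\cdots a_{|z|}^{j_m}\tRz\,e^{-iA\ln|z|}\pppg x^{-\d_2}
\]
in $\Lc(L^2)$, with an extra factor $|z|$ from the rescaling when a derivative $D^\a$ is present.

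Next I would apply the resolvent identity $\tRz=(\Ptau+1)^{-1}+(1+\hat z^2)(\Ptau+1)^{-1}\tRz$ repeatedly to split the inner product into two pieces: a purely \emph{elliptic} term $E_z$ of the form $(\Ptau+1)^{-\a_0}a_{|z|}^{j_1}\cdots a_{|z|}^{j_m}(\Ptau+1)^{-\a_{m+1}}$, and a \emph{Mourre} remainder $\Sigma_z$ where $\tRz^{m_s+1}$ is sandwiched between arbitrarily high powers of $(\Ptau+1)^{-1}$. The elliptic term is handled by the key estimate $\|(\Ptau+1)^{-1}\|_{H^{s-1}\to H^{s+1}}\lesssim 1$ for $s\in ]-d/2,d/2[$, valid because $\Ptau$ is a small perturbation of $-\D$ (this is where the smallness of $\Nc_1$ is used), combined with the Sobolev embedding $H^{m_e}\hookrightarrow L^q$ with $q=2d/(d-2m_e)$ and the crucial dilation estimate $\|e^{iA\ln|z|}\|_{L^q\to L^q}\lesssim|z|^{m_e}$ from Proposition~\ref{prop-A}(iii). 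Each inserted factor $a_{|z|}$ is handled by a Sobolev bound of the form $\|a_{|z|}\|_{H^{s+1}\to H^s}\lesssim|z|$ for $s\in]-d/2,d/2-1[$, which follows from the short range decay of $a_\iota$; it costs one derivative but gains one power of $|z|$, so the net exponent matches that of the case without inserted factors.

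For the Mourre remainder $\Sigma_z$, I would first check that the generator of dilations $A$ is uniformly conjugate to $\Ptau$ on a neighborhood of $\hat z^2=1$ up to arbitrary order, uniformly in $z$ with $\hat z$ on the unit circle; this reduces to a small perturbation of the standard computation $[-\D,iA]=-2\D$ and is straightforward because $\Nc_N$ is small. Theorem~\ref{th-estim-insert2} (with the inserted factors $a_{|z|}$ viewed as elements of $\opinsert_N(\Ptau,A)$ with norm $O(|z|)$ thanks to the Sobolev estimate above) then yields uniform boundedness of $\pppg A^{-\d}\tRz^{m_s+1}(a_{|z|})^{j_\cdot}\cdots\pppg A^{-\d}$ with the right power of $|z|$. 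It remains to convert the weight $\pppg A^{-\d}$ into the physical weight $\pppg x^{-\d}$ via the transition operator $W_z=\pppg x^{-\d}e^{iA\ln|z|}(\Ptau+1)^{-N_1}\pppg A^\d$. Using the splitting
\[
\pppg x^{-\d}e^{iA\ln|z|}(1+|x|^\d)=\pppg x^{-\d}e^{iA\ln|z|}+|z|^\d\pppg x^{-\d}|x|^\d e^{iA\ln|z|},
\]
the second summand is $O(|z|^\d)$ and the first is estimated, through Sobolev embedding and the dilation bound, by $|z|^s$ for any $s<d/2$.

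The main obstacle I foresee is calibrating all the parameters ($m_e$, $m_s$, the Sobolev index $s$, the weight $\d$, and the Sobolev exponents for the inserted factors $a_{|z|}$) so as to land exactly on the exponent $d-2-n-\e$, while staying strictly inside the forbidden critical range $s=\pm d/2$ dictated by the elliptic estimates on $\Ptau$. Once this is done, (i) falls out with an $\e$-loss coming from the borderline Sobolev exponent. For (ii), I would remove the $\e$ by refining the Sobolev bound on $a_{|z|}$ to $\|a_{|z|}\|_{H^{s+1+\s}\to H^s}\lesssim|z|^{1+\s}$ for some $\s<\rho$ (using the quantitative decay rate of $a_\iota$), which lets one step slightly inside the Sobolev range and recover uniform boundedness; this refinement fails only in the symmetric critical case $d$ even and $n=d-2$, where no derivative is available to trade, explaining the exclusion in the statement. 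The inequality $\d_1+\d_2>\min(2(n+1),d)$ provides exactly the budget needed to distribute weights asymmetrically on the two sides in this sharp regime. Finally, the gradient estimate in (i) follows by the same argument with $D^\a$ playing the role of a rescaled inserted factor (gaining one power of $|z|$ after conjugation by $e^{iA\ln|z|}$), but without any Sobolev range restriction, which explains the better exponent $d-1-n-\e$.
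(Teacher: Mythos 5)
Your proposal follows the paper's own strategy step for step: the decomposition via Proposition~\ref{prop-der-R2}, the rescaling by $e^{iA\ln|z|}$, the elliptic/Mourre splitting via the resolvent identity, the elliptic estimates $(\Ptau+1)^{-1}:H^{s-1}\to H^{s+1}$ combined with the dilation bound on $L^q$, treating $a_{|z|}$ as a fractional derivative of order $1+\s$ ($\s<\rho$) to shift off the critical Sobolev index, and the weight conversion $\pppg A^{-\d}\leftrightarrow\pppg x^{-\d}$ (carried out in the paper through Propositions~\ref{prop-Th}, \ref{prop-Th3} and \ref{prop-estim-res-amort}). This is essentially the paper's argument, including the correct identification of the excluded critical case for part~(ii) and the role of the constraint $\d_1+\d_2>\min(2(n+1),d)$.
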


The proof of Theorem \ref{th-low-freq-R0} will be given after Proposition \ref{prop-estim-res-amort}. For this, we are going to use a scaling argument. For $\vf \in C^\infty(\R^d)$, $\l > 0$ and $x \in\R^d$ we set
\begin{equation} \label{not-tau}
\vf_\newl (x) = \vf\left( \frac  x \l \right).
\end{equation}
For $z \in \C_{+}$ we set
\begin{align*} %\label{def-Ptau}
\Ptauo =  \frac 1 {\abs z^2} e^{-iA \ln \abs z} \Poo e^{iA \ln \abs z} = \sum_{j,k=1}^d  D_j \g_{j,k, \abs z} D_ k %+ \frac 1 {\abs z}  \sum_{k=1}^d  b_{k,\abs z} D_k
\end{align*}
and 
\begin{equation}\label{def-Ptau}
\Ptau =  \frac 1 {\abs z^2} e^{-iA \ln \abs z} \Pol e^{iA \ln \abs z} = \Ptauo - i \frac {\hat z} { \abs z} a_{\iota, \abs z}
\end{equation}
(where $\hat z$ stands for $z / \abs z$, we recall that $e^{iA\ln \abs z}u(x) = |z|^{d/2} u (\abs z x)$).
For $z \in \C_{+,+}$ the operator $\Ptau$ is maximal dissipative on $H^2$, and as before we can consider for all $z \in \C_{+}$
\begin{equation*} %\label{def-tRtau}
 \tRtau = \big(\Ptau- \hat z^2\big) \inv,
\end{equation*}
so that
\begin{equation*} %\label{res-Pl-Ptau}
\RPz  = \frac 1 { \abs z^2} e^{i A \ln \abs z} \tRtau e^{-i A \ln  \abs z}.
\end{equation*}
We are going to use the Mourre method to obtain uniform estimates of $\tRtau$ when $z \in \C_+$ is close to 1. To this end we first give some properties for operators of multiplication by functions of the form $\vf_\l$ when $\l > 0$ goes to 0.\\

Before going further, we introduce some notation we shall use in this section.
Let $\seq \n j \in \{0,1\}^\N$. For $j \in \N^*$ and $z \in \C_+$ we define the operator $\Phi_j$ as the multiplication by $a_\iota^{\n_j}$ and $\Phi_0$ is of the form $D^\a$ with $\abs \a = \n_0$. Then we set
\[
 \tilde \Phi_j(z) = e^{-i A \ln  \abs z} \Phi_j e^{i A \ln  \abs z} = \begin{cases} \abs z^{\n_0} D^{\a} & \text{if } j = 0 , \\ a_{\iota , \abs z}^{\n_j}& \text{if } j \in \N^* \end{cases} 
\]
(here again and everywhere in the sequel, the index $\abs z$ corresponds to \eqref{not-tau} with $\l = \abs z$).
These operators have a very particular form, but the only properties we need are that of Corollary \ref{cor-Phi-nu} below.
For $z \in \C_{+}$ and $j,k \in \N$ such that $j \leq k$ we set
\begin{equation*} 
\Rc_{j,k}^\iota (z) =  \RPz \Phi_{j+1}  \RPz  \dots  \Phi_k  \RPz
\end{equation*}
and
\begin{equation*} 
\tilde \Rc^\iota _{j,k}(z) =   \tRtau  \tilde \Phi_{j+1}(z) \tRtau \dots \tilde \Phi_{k}(z) \tRtau.
\end{equation*}
For $\a_{j},\dots , \a_{k} \in \N$ we also define
% \[
%  \Th_{j ; \a_{j},\dots , \a_{k} }^p(z) =  \Phi_j  \big( \Pol + 1\big)^{-\a_{j}}  \Phi_{j+1}  \big( \Pol + 1\big)^{-\a_{j+1}} \dots  \Phi_k  \big( \Pol + 1\big)^{-\a_{k}}. 
% \]
% and
\begin{equation} \label{def-Th}
 \tThiota_{j ; \a_{j},\dots , \a_{k} }(z) =  \big( \Ptau + 1\big)^{-\a_{j}}  \tilde \Phi_{j+1}(z)  \big( \Ptau + 1\big)^{-\a_{j+1}}  \dots\tilde \Phi_{k}(z)  \big( \Ptau + 1\big)^{-\a_{k}} 
\end{equation}
(this is well-defined since $\Ptau$ is maximal dissipative with non-negative real part). Finally, for all $j,k \in \N$ with $j\leq k$ we set 
\begin{equation}
\Vc_{j,k} = \sum_{l=j+1}^{k} \n_l. \label{matcalV}
\end{equation}
We recall that these notations are recorded in Appendix \ref{section-notations}.

\subsection{Some properties of the rescaled operators}  \label{sec-dec-functions}

 In this paragraph we derive some properties of the rescaled operators $\Ptau$ and $\tilde \Phi_j$. Most of them rely on the following proposition, in which we show that the spatial decay of \eqref{dec-metric-a} induces some differentiation-like properties for the rescaled coefficients $\g_{j,k,\abs z}$ and $a_{\iota, \abs z}$.

\begin{proposition} \label{prop-dec-sob}
Let $\n \in \big[ 0, \frac d 2 \big[$ and $s \in \big] -\frac d 2, \frac d 2\big[$ be such that $s -\n \in \big] -\frac d 2, \frac d 2\big[$. Then there exists $C \geq 0$ such that for $\vf \in \symb^{-\n-\rho}(\R^d)$, $u \in H^s$ and $\l > 0$ we have
\[
 \nr{\vf_\l u}_{\dot H^{s-\n}} \leq C \l ^\n \nr \vf_{\n,0} \nr u _{\dot H^s}
\]
and
\[
 \nr{\vf_\l u}_{H^{s-\n}} \leq C \l ^\n \nr \vf_{\n,0} \nr u _{H^s}.
\]
\end{proposition}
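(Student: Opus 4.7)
My plan is to prove the estimate in two steps: first reduce to the case $\lambda = 1$ by a unitary rescaling argument, then establish the resulting multiplication bound from Hardy's inequality and classical Sobolev multiplier estimates.

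For the reduction, I introduce the $L^2$-unitary dilation $T_\lambda u(x) = \lambda^{d/2} u(\lambda x)$ (which is essentially $e^{iA\ln\lambda}$ in the paper's notation). A direct computation yields the intertwining relation
\begin{equation*}
T_\lambda (\varphi_\lambda u) = \varphi \cdot (T_\lambda u),
\end{equation*}
while the standard rescaling of the homogeneous Sobolev norm gives $\|T_\lambda u\|_{\dot H^s} = \lambda^s \|u\|_{\dot H^s}$ for every $s \in \R$. Combining these two facts,
\begin{equation*}
\|\varphi_\lambda u\|_{\dot H^{s-\nu}} = \lambda^{\nu - s} \|\varphi \cdot T_\lambda u\|_{\dot H^{s-\nu}}.
\end{equation*}
Therefore, assuming the $\lambda = 1$ estimate $\|\varphi v\|_{\dot H^{s-\nu}} \leq C \|\varphi\|_{\nu,0} \|v\|_{\dot H^s}$, application with $v = T_\lambda u$ produces exactly the factor $\lambda^\nu$ claimed in the proposition.

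For the multiplication estimate at $\lambda = 1$, the key ingredient is the Hardy--Sobolev inequality $\||x|^{-\nu} u\|_{L^2} \lesssim \|u\|_{\dot H^\nu}$, valid precisely for $0 \leq \nu < d/2$. Combined with the pointwise bound $|\varphi(x)| \leq \|\varphi\|_{\nu,0}\langle x\rangle^{-\nu-\rho} \lesssim \|\varphi\|_{\nu,0} |x|^{-\nu}$ (using $\rho > 0$), this immediately yields the borderline case $s = \nu$, $s - \nu = 0$. I then propagate this to other values of $s$ by a Leibniz argument: for integer $s \geq 0$ in the admissible range, $\partial^\alpha(\varphi u)$ expands into terms of the form $(\partial^\beta\varphi)(\partial^{\alpha-\beta} u)$, where each $\partial^\beta \varphi$ picks up an extra $\langle x\rangle^{-|\beta|}$ of decay (still controlled by $\|\varphi\|_{\nu,0}$ as long as $|\beta| \leq \bar d$, which is the reason $\bar d > d/2$ is imposed in \eqref{def-nr-nu-N}), and Hardy is applied to the appropriate fractional Riesz potential of $u$. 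Negative integer values of $s$ follow by duality with $\bar\varphi$, and the general non-integer case is then obtained by complex interpolation between adjacent integers.

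The main technical issue is bookkeeping the constraints $s, s - \nu \in (-d/2, d/2)$ throughout the interpolation and duality steps so as to stay within the range where Hardy's inequality applies, while using only derivatives of $\varphi$ of order at most $\bar d$. The $H^{s-\nu}$ version is then a simple consequence of the homogeneous one: low-frequency contributions are controlled by the trivial bound $\|\varphi_\lambda u\|_{L^2} \leq \|\varphi\|_\infty \|u\|_{L^2} \lesssim \|\varphi\|_{\nu,0} \|u\|_{L^2}$, and the homogeneous estimate takes care of high frequencies.
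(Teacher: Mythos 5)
Your reduction to $\lambda = 1$ by the $L^2$-unitary dilation is correct and cleanly isolates the $\lambda^\nu$ factor; the paper obtains the same scaling more implicitly, via $\|\varphi_\lambda\|_{\dot H^{d/2-\nu}} = \lambda^\nu\|\varphi\|_{\dot H^{d/2-\nu}}$ (resp.\ the $L^{d/\nu}$ scaling) in its two cases. The Hardy-based argument for the borderline case $s = \nu$ (i.e.\ $\dot H^\nu \to L^2$) is also correct. However, the extension to the full range of $s$ by Leibniz, duality and interpolation has a genuine gap. The Leibniz rule only applies when $s - \nu = m \in \N$, so that $\|\varphi u\|_{\dot H^m}$ decomposes into the $L^2$ norms of $\partial^\alpha(\varphi u)$ with $|\alpha|=m$; Hardy applied to the resulting terms $(\partial^\beta\varphi)(\partial^{\alpha-\beta}u)$ requires $\nu + |\beta| < d/2$ for every $|\beta| \leq m$, hence $s = \nu + m < d/2$. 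This reaches only $s \in \{\nu, \nu + 1, \ldots, \nu + M_0\}$ with $M_0$ the largest integer satisfying $\nu + M_0 < d/2$; duality (which sends $s \mapsto \nu - s$, not $s \mapsto -s$ as you write) reaches $\{-M_0, \ldots, 0\}$, and interpolation then fills $[-M_0, \nu + M_0]$. This interval is always strictly contained in the required $(\nu - d/2,\, d/2)$: the two end gaps have length $d/2 - \nu - M_0 \in\, ]0,1]$ and are never empty. Concretely, for $d = 3$ and $\nu = 1$ you reach only $s \in [0,1]$, not $s \in\, ]-1/2, 3/2[$. A secondary issue: the trivial bound $\|\varphi_\lambda u\|_{L^2} \leq \|\varphi\|_\infty\|u\|_{L^2}$ in your last paragraph loses the required $\lambda^\nu$ factor; the correct argument is to reapply the homogeneous estimate with $s = \nu$.

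The paper avoids both problems by invoking the Sobolev multiplier lemma (Lemma \ref{lem-multipliers}, from Lemari\'e-Rieusset): any $\varphi \in \dot H^{d/2-\nu}$ multiplies $\dot H^s \to \dot H^{s-\nu}$ for the \emph{full} range $s, s-\nu \in\, ]-d/2, d/2[$, with operator norm $\lesssim \|\varphi\|_{\dot H^{d/2-\nu}}$; a dyadic decomposition then bounds $\|\varphi\|_{\dot H^{d/2-\nu}} \lesssim \|\varphi\|_{\nu,0}$. The range $s-\nu \leq 0 \leq s$ is handled separately by Sobolev embedding and H\"older. Your Hardy/Leibniz ansatz is a natural first attempt, but reaching the full interval requires essentially the content of this multiplier theorem; a finite Leibniz expansion cannot bridge the remaining fractional gap near the endpoints.
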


\begin{corollary} \label{cor-dec-sob}
Let $s \in \big] -\frac d 2, \frac d 2\big[$. Then there exists $C \geq 0$ such that for $\vf \in C^\infty(\R^d)$ which satisfies $\vf-1 \in \symb^{-\rho}(\R^d)$, $u \in H^s$ and $\l > 0$ we have
\[
 \nr{\vf_\l u}_{H^{s}} \leq C \left(1+\nr {\vf-1} _{0,0}\right) \nr u _{H^s}.
\]
\end{corollary}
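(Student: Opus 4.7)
The plan is to reduce the corollary directly to Proposition \ref{prop-dec-sob} by splitting off the constant part of $\vf$. Write
\[
\vf_\l u = u + (\vf - 1)_\l u,
\]
so that by the triangle inequality
\[
\nr{\vf_\l u}_{H^s} \leq \nr{u}_{H^s} + \nr{(\vf-1)_\l u}_{H^s}.
\]

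Next I would apply Proposition \ref{prop-dec-sob} to the second term with the choice $\n = 0$ and with $\vf$ replaced by $\vf - 1 \in \symb^{-\rho}(\R^d)$. The hypotheses of that proposition are satisfied since $s \in \big]-\tfrac d2, \tfrac d2\big[$ and $s - 0 = s$ still lies in this interval, which is precisely the assumption of the corollary. This yields a constant $C \geq 0$ (independent of $\vf$ and $\l$) such that
\[
\nr{(\vf - 1)_\l u}_{H^{s}} \leq C \, \l^0 \, \nr{\vf-1}_{0,0} \, \nr u _{H^s} = C \, \nr{\vf-1}_{0,0} \, \nr u _{H^s}.
\]

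Finally, combining the two estimates gives
\[
\nr{\vf_\l u}_{H^s} \leq \bigl(1 + C \nr{\vf-1}_{0,0}\bigr) \nr u_{H^s} \leq \max(1,C)\,\bigl(1 + \nr{\vf-1}_{0,0}\bigr) \nr u_{H^s},
\]
which is the claimed inequality after relabelling the constant. There is no real obstacle here: the only thing to notice is that the $\l^\n$ factor of Proposition \ref{prop-dec-sob} becomes trivially equal to $1$ when $\n = 0$, which is exactly why no power of $\l$ appears in the corollary. Note also that we do not need the homogeneous estimate of the proposition, only the inhomogeneous one, so there is no issue with $u$ being merely in $H^s$ rather than $\dot H^s$.
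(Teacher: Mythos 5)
Your proof is correct and follows exactly the same route as the paper: split $\vf_\l u = u + (\vf-1)_\l u$ and apply Proposition \ref{prop-dec-sob} with $\n = 0$ to the second term. The paper's argument is precisely this decomposition in one line; your write-up just fills in the verification that the hypotheses of the proposition hold when $\n = 0$.
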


\begin{proof}
For $u \in H^s$ and $\l > 0$ we have 
\[
  \nr{\vf_\l u}_{H^{s}} \leq   \nr{(\vf-1)_\l u}_{H^{s}} +   \nr{ u}_{H^{s}},
\]
so we only have to apply Proposition \ref{prop-dec-sob} to $\vf-1$.
\end{proof}

Proposition \ref{prop-dec-sob} mostly relies on the following result (see for instance \cite{lemarieg06}, Lemmas 1 and 5):

\begin{lemma} \label{lem-multipliers}
\begin{enumerate}[(i)]
\item Let $s \in \big] -\frac d 2, \frac d 2\big[$ and $\n \in \big]0, \frac d 2\big[$ be such that $s -\n \in \big] -\frac d 2, \frac d 2\big[$. Then there exists $C \geq 0$ such that for all $\vf \in \dot H^{\frac d 2 - \n}$ and $u \in \dot H^s$ we have
\[
 \nr{\vf u}_{\dot H^{s-\n}} \leq C \nr{\vf}_{\dot H^{\frac d 2 - \n}} \nr {u}_{\dot H^s}.
\]
\item Let $s \in \big]0, \frac d 2\big[$. Then there exists $C \geq 0$ such that for all $\vf \in \dot H^{\frac d 2} \cap L^\infty$ and $u \in \dot H^s$ we have
\[
 \nr{\vf u}_{\dot H^{s}} \leq C \left(\nr{\vf}_{\dot H^{\frac d 2 }} + \nr \vf _{L^\infty} \right) \nr {u}_{\dot H^s}.
\]
\end{enumerate}
\end{lemma}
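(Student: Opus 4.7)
Both estimates are classical multiplier estimates in homogeneous Sobolev spaces, and my plan is to prove them via Bony's paraproduct decomposition
\[
\vf u = T_\vf u + T_u \vf + R(\vf, u),
\]
where $T_a b = \sum_j (S_{j-2} a)\, \D_j b$ is the low-high paraproduct, $R(a, b) = \sum_{\abs{j-k} \leq 1} (\D_j a)(\D_k b)$ is the high-high remainder, and $\D_j$, $S_j$ denote the standard homogeneous Littlewood-Paley blocks. Each of the three pieces will be controlled using the characterization $\nr{v}_{\dot H^\s}^2 \simeq \sum_j 2^{2j\s}\nr{\D_j v}_{L^2}^2$ (valid on $\s \in \,]-d/2, d/2[$), the Bernstein inequality $\nr{\D_j f}_{L^q} \lesssim 2^{jd(1/p - 1/q)}\nr{\D_j f}_{L^p}$ for $p \leq q$, and the Sobolev embedding $\dot H^\s \hookrightarrow L^{2d/(d-2\s)}$ for $\s \in \,]0, d/2[$.

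For part (i), I would estimate $T_\vf u$ by applying H\"older with exponents $d/\n$ and $2d/(d-2\n)$, then invoking the embedding $\dot H^{d/2-\n} \hookrightarrow L^{d/\n}$ (valid since $\n \in \,]0, d/2[$) to bound $\nr{S_{j-2}\vf}_{L^{d/\n}} \lesssim \nr{\vf}_{\dot H^{d/2-\n}}$, and Bernstein to bound $\nr{\D_j u}_{L^{2d/(d-2\n)}} \lesssim 2^{j\n}\nr{\D_j u}_{L^2}$; the weight $2^{j(s-\n)}$ combines with $2^{j\n}$ into $2^{js}$ and $\ell^2$-summation returns $\nr{\vf}_{\dot H^{d/2-\n}}\nr{u}_{\dot H^s}$. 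The symmetric piece $T_u\vf$ is treated by swapping the roles, using Sobolev on $S_{j-2}u$ and Bernstein on $\D_j\vf$ to produce $2^{j(d/2-\n)}\nr{\D_j\vf}_{L^2}$ (the range $s \leq 0$ is reduced to $s > 0$ by duality between $\dot H^{s-\n}$ and $\dot H^{\n-s}$). The remainder $R(\vf, u)$ only contains high-high interactions and is estimated block-by-block via H\"older; the hypothesis $s-\n > -d/2$ guarantees $\ell^2$-summability after projecting onto $\dot H^{s-\n}$.

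For part (ii) the decisive new input is that $T_\vf u$ is controlled directly by $\nr{\vf}_{L^\infty}$: since $\nr{S_{j-2}\vf}_{L^\infty} \leq \nr{\vf}_{L^\infty}$ and $\D_j u$ stays in $L^2$, one immediately obtains $\nr{T_\vf u}_{\dot H^s} \lesssim \nr{\vf}_{L^\infty}\nr{u}_{\dot H^s}$. For $T_u \vf$ and $R(\vf, u)$, Sobolev $\dot H^s \hookrightarrow L^{2d/(d-2s)}$ applied to $u$ combined with Bernstein $\nr{\D_j \vf}_{L^{d/s}} \lesssim 2^{j(d/2-s)}\nr{\D_j \vf}_{L^2}$ applied to $\vf$ produces exactly the $\nr{\vf}_{\dot H^{d/2}}$ contribution after the weight $2^{2js}$ and summation in $\ell^2$.

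The main obstacle is the critical character of $\dot H^{d/2}$ in part (ii): this space fails to embed in $L^\infty$, which is precisely why the $L^\infty$ norm of $\vf$ must enter the bound additively. Bony's decomposition is what isolates this borderline issue into the single piece $T_\vf u$, where $\nr{\vf}_{L^\infty}$ is the natural quantity to use, while the two remaining pieces only rely on subcritical Sobolev embeddings that hold throughout the range $s \in \,]0, d/2[$. Beyond this point, the proof is tedious but mechanical bookkeeping of H\"older and Bernstein exponents, together with the range conditions $s, s-\n \in \,]-d/2, d/2[$ which ensure that every Sobolev space that appears is a genuine Banach space of distributions.
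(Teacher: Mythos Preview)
Your paraproduct argument is correct and is the standard route to these multiplier estimates. Note, however, that the paper does not actually prove this lemma: it is stated with a citation to \cite{lemarieg06} (Lemmas~1 and~5) and then used as a black box in the proof of Proposition~\ref{prop-dec-sob}. Your Bony decomposition proof is essentially what one finds in such references, so there is nothing to compare beyond observing that you have supplied a self-contained argument where the paper relies on the literature.
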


\begin{proof}[Proof of Proposition \ref{prop-dec-sob}]
\stepp First assume that $s - \n \leq 0 \leq s $. Then according to Sobolev embeddings and H\"older inequality we have
\begin{align*}
 \nr{\vf_\l u} _{H^{s-\nu}}
%& 
\lesssim \nr {\vf_\l u} _{L^{\frac {2d}{d + 2(\n-s)}}} \lesssim  \nr {\vf_\l} _{L^{\frac d \n}} \nr u _ {L^{\frac {2d}{d - 2s}}} \lesssim  \l^\n \nr \vf _{\n,0} \nr u_{H^s}.
% & \leq C \l^\n \nr {\pppg x ^{\n + \rho }\vf} _{L^{\infty}} \nr u_{H^s}.
\end{align*}
We proceed similarly in homogeneous Sobolev spaces.

\stepp Now we assume that $s  \geq s-\n \geq 0$. The case $s-\n \leq s \leq 0$ will follow by taking the adjoint.
We first remark that for all $\l > 0$ we have
\begin{equation} \label{scale-Hd2}
\nr{\vf_\l}_{\dot H^{\frac d 2 - \n}} = \l ^\n \nr{\vf}_{\dot H^{\frac d 2 - \n}}.
\end{equation}
Let $\h_0 \in C_0^\infty(\R^d,[0,1])$ and $\h \in C_0^\infty(\R^d\setminus \singl 0,[0,1])$ be such that $\h_0 + \sum_{k=1}^\infty \h_k = 1$, where for $x \in \R^d$ and $k\in\N^*$ we have set $\h_k(x) = \h\big( 2^{-k}x\big)$.
According to \eqref{scale-Hd2} applied to $\h_k \vf$ we have for all $k \in \N^*$
\[
\nr{ \h_k \vf}_{\dot H^{\frac d 2 -\n}} = 2^{k\n} \nr{ \h \vf (2^k \cdot)}_{\dot H^{\frac d 2 -\n}} \leq 2^{k\n} \nr{ \h \vf(2^k \cdot)}_{H^{\bar d}}
\]
(we recall that $\bar d$ was fixed before \eqref{def-nr-nu-N}). And since $\h$ is compactly supported we have
\begin{align*}
\nr{ \h \vf(2^k \cdot)}_{H^{\bar d}} 
& \lesssim \nr{ \vf(2^k \cdot)}_{H^{\bar d}(\supp \h)} \lesssim \sum_{\abs \a \leq \bar d} 2^{k \abs \a} \nr{(\partial ^\a \vf)(2^k \cdot)}_{L^2(\supp \h)} \\
& \lesssim \sum_{\abs \a \leq \bar d} 2^{k \abs \a} \nr{(\partial ^\a \vf)(2^k \cdot)}_{L^\infty(\supp \h)} \lesssim 2^{-k(\n+\rho)} \nr{\vf}_{\n,0}.
\end{align*}
And finally:
\[
 \nr { \vf}_{\dot H^{\frac d 2 -\n}} \leq  \nr {\h_0 \vf}_{\dot H^{\frac d 2 -\n}} + \sum_{k=1}^\infty \nr {\h_k \vf}_{\dot H^{\frac d 2 -\n}} \lesssim \nr{\vf}_{\n,0}.
\]
Moreover the norm of $\vf_\l$ in $L^\infty$ does not depend on $\l > 0$, so we get the first assertion according to Lemma \ref{lem-multipliers}.

\stepp To prove the second assertion, we only have to estimate the $L^2$-norm of $\vf_\l u$. The first statement of the proposition applied with $s = \n$ gives
\[
 \nr{\vf_\l u}_{L^2}  \lesssim\l^\n \nr{\vf}_{\n,0}\nr{ u}_{\dot H^\n} .
\]
Since $\nr u_{\dot H^\n} \leq \nr{u}_{H^s}$, this concludes the proof. 
\end{proof}

Given $\m \in \N^d$, we set 
\[
\ad_x^\m := \ad_{x_{1}}^{\m_1} \dots \ad_{x_{d}}^{\m_d}.
\]

%Given $\m \in \N^d$, we write $\ad_x^\m$ for $\ad_{x_{1}}^{\m_1} \dots \ad_{x_{d}}^{\m_d}$.

As a direct consequence of Proposition \ref{prop-dec-sob} we obtain the following properties on the operators $\tilde \Phi_j(z)$:

\begin{corollary} \label{cor-Phi-nu} Let $\m \in \N^d$ and $n \in \N$.
\begin{enumerate}[(i)]
\item There exists $C\geq 0$ such that for all $s \in \big]- \frac d 2 ,\frac d 2 - 1 \big[$, $j \in \N^*$ and $z \in \C_{+}$ we have
\[
 \nr{\ad_x^\m \ad_{iA}^n \tilde \Phi_j(z)}_{\Lc(H^{s+1},H^s)} \leq C \abs z ^{\n_j}.
\]
\item There exists $C\geq 0$ such that for all $s \in \big]- \frac d 2 ,\frac d 2\big[$ and $z \in \C_{+}$ we have
\[
 \nr{\ad_x^\m \ad_{iA}^n  \tilde \Phi_0(z)}_{\Lc(H^{s+1},H^s)} \leq C \abs z ^{\n_0}.
\]
\end{enumerate}
\end{corollary}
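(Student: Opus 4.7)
The first observation is that when $\nu_j = 0$ (in either item), $\tilde\Phi_j(z)$ is the identity, so every commutator with $x$ or $iA$ of positive order vanishes, and for $(\mu,n) = (0,0)$ the identity maps $H^{s+1}$ continuously into $H^s$ with constant $1 = |z|^0$. So I may assume $\nu_j = 1$ throughout, and it suffices to compute $\ad_x^\mu \ad_{iA}^n$ applied to $a_{\iota,|z|}$ in case (i) and to $|z|D^\alpha$ with $|\alpha|=1$ in case (ii), and then invoke Proposition \ref{prop-dec-sob} together with standard boundedness of differential operators between Sobolev spaces.

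For (i), since $a_{\iota,|z|}$ acts as a multiplication operator, $[a_{\iota,|z|}, x_l] = 0$, hence $\ad_x^\mu(a_{\iota,|z|}) = 0$ whenever $\mu \neq 0$, and the bound is trivial in that case. For $\mu = 0$, I use the commutator rule $[\varphi, iA] = -x\cdot\nabla\varphi$ from Proposition \ref{prop-A}(ii), iterated, together with the scaling identity $(x\cdot\nabla)(\varphi_{|z|}) = ((x\cdot\nabla)\varphi)_{|z|}$, which yields
\[
\ad_{iA}^n\bigl(a_{\iota,|z|}\bigr) = \bigl((-x\cdot\nabla)^n a_\iota\bigr)_{|z|}.
\]
Since $a \in S^{-1-\rho}$ and $a_\iota = (1-\eta)a$, the function $(-x\cdot\nabla)^n a_\iota$ also lies in $S^{-1-\rho}$ (each application of $x\cdot\nabla$ preserves the decay rate). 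Applying Proposition \ref{prop-dec-sob} with $\nu = 1$ and $\lambda = |z|$, the operator of multiplication by $\bigl((-x\cdot\nabla)^n a_\iota\bigr)_{|z|}$ is bounded from $H^{s+1}$ into $H^s$ with norm $\lesssim |z|$ provided $s+1$ and $s$ both lie in $\,]-d/2, d/2[$, i.e.\ exactly when $s \in\,]-d/2, d/2-1[$. This is the announced estimate in (i).

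For (ii), write $\tilde\Phi_0(z) = |z|D_k$ for some fixed $k$. Using $[D_k, x_l] = -i\delta_{kl}$ (a scalar), a single commutator with $x$ reduces the operator to a constant multiple of $|z|\,\Id$, and any further commutator with $x$ vanishes, so $\ad_x^\mu(|z|D_k)$ is either $|z|D_k$ (if $\mu = 0$), or $-i\delta_{kl}|z|\,\Id$ (if $\mu = e_l$), or $0$. On the other hand, Proposition \ref{prop-A}(ii) gives $[D_k, iA] = D_k$, hence $\ad_{iA}^n(D_k) = D_k$, and constants are annihilated by $\ad_{iA}$. Consequently $\ad_x^\mu\ad_{iA}^n(|z|D_k)$ is always a bounded multiple of either $|z|D_k$ or $|z|\,\Id$, both of which map $H^{s+1}$ into $H^s$ with norm $\lesssim 1$ for every $s \in \R$, in particular on the stated range. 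This yields the bound $|z|^{\nu_0}$ in (ii).

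The main point to be careful with is case (i): one must check that the iterated commutator with $iA$ really corresponds to a rescaled symbol in the same class $S^{-1-\rho}$, which is what permits the use of Proposition \ref{prop-dec-sob} with the sharp gain $|z|$. Beyond that, all steps are routine algebra combined with the scaling-invariance built into Proposition \ref{prop-dec-sob}.
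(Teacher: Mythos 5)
Your argument is correct and follows exactly the route the paper indicates (the corollary is stated as a direct consequence of Proposition~\ref{prop-dec-sob}): in case (i) you iterate the identity $[\g, iA] = -(x\cdot\nabla)\g$ from Proposition~\ref{prop-A}(ii), commute with the scaling to get $\ad_{iA}^n(a_{\iota,|z|}) = \big((-x\cdot\nabla)^n a_\iota\big)_{|z|}$, and apply Proposition~\ref{prop-dec-sob} with $\nu=1$, which forces exactly the stated restriction $s \in \big]-\tfrac d2, \tfrac d2 -1\big[$; in case (ii) the commutators with $iA$ reproduce $D_k$ and the commutators with $x$ reduce to scalars, so the estimate holds on the wider range with no constraint beyond $s \in \big]-\tfrac d2,\tfrac d2\big[$.
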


This corollary records all the properties we need about the operators $ \tilde{\Phi}_j(z) $. From now on, we will only use that they behave
 like differentiations, keeping  in mind the restriction on the Sobolev index $s$. Notice that this restriction is slightly weaker on $\tilde \Phi_0$. Here this simply comes from the fact that $\tilde \Phi_0$ is really a derivative and maps $H^{s+1}$ to $H^s$ for any $s \in \R$. The good behavior of the commutators with  $A$ and $x$ will be useful to apply Mourre theory and in the proof of Proposition \ref{prop-Th3} below.\\

 In the following proposition we estimate the difference between the commutators of $\Ptau$ and $-\D$ with $A$ and $x$. We will see that it only depends on the semi-norms $\Nc$ introduced in \eqref{def-Nc}. This will be used to apply Mourre theory and in the proof of Proposition \ref{prop-Ptau}.

\begin{proposition} \label{prop-Ptau-comm-Hs}
Let $ n \in \N$, $\m \in \N^d$ and $s \in \big]-\frac d 2, \frac d 2 \big[$. Then there exists $C \geq 0$ such that for all $z \in \C_{+}$ we have
\[
\nr{\ad_x^\m \, \ad_{iA}^n \big(\Ptauo+\D\big)}_{\Lc(H^{s+1},H^{s-1})} 
+ \nr{\ad_x^\m \ad_{iA}^n (\hat z \abs z\inv a_{\iota,\abs z})}_{\Lc(H^{s+1},H^{s-1})} 
\leq C \Nc_n.
\]
\end{proposition}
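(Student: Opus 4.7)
The plan is to expand the iterated commutators $\ad_x^\mu \ad_{iA}^n$ explicitly on each of the two operators and then reduce both estimates to applications of Proposition~\ref{prop-dec-sob}. The basic scaling identity I shall use is that for any smooth $\varphi$,
\[
(x\cdot\nabla)(\varphi_{|z|}) = \big((x\cdot\nabla)\varphi\big)_{|z|},
\]
so iterated $(x\cdot\nabla)$-derivatives of a rescaled function produce no additional powers of $|z|$. Combined with $\ad_{iA}(D_j) = D_j$ and $\ad_{iA}(\varphi) = -(x\cdot\nabla)\varphi$ for a multiplication operator $\varphi$ (Proposition~\ref{prop-A}(ii)), the Leibniz rule for $\ad_{iA}$ yields, for $b_{jk} := \gamma_{jk}-\delta_{jk} \in \symb^{-\rho}$,
\[
\ad_{iA}^n\big(D_j b_{jk,|z|} D_k\big) = \sum_{m=0}^n c_{n,m}\, D_j \big((-x\cdot\nabla)^m b_{jk}\big)_{|z|} D_k,
\]
and similarly $\ad_{iA}^n(a_{\iota,|z|}) = \big((-x\cdot\nabla)^n a_\iota\big)_{|z|}$. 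Since multiplication operators commute with $x$, $\ad_x^\mu$ vanishes on the second operator whenever $|\mu|\geq 1$; on the first it only acts on the flanking derivatives via $[x_l,D_p]=i\delta_{lp}$, producing a finite linear combination of operators of the form $D_j^a \psi_{|z|} D_k^c$ with $a,c \in \{0,1\}$, $a+c = 2-|\mu|$ (and zero when $|\mu|\geq 3$). By the definition \eqref{def-nr-nu-N}, the resulting symbols $\psi$ satisfy $\|\psi\|_{0,0} \leq C\Nc_n$ in the first case and $\|\psi\|_{1,0} \leq C\Nc_n$ in the second.

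For the first term it remains to bound $D_j^a \psi_{|z|} D_k^c$ as an operator $H^{s+1} \to H^{s-1}$ uniformly in $|z|$, for every $s \in (-d/2, d/2)$, $a,c \in \{0,1\}$ and $\psi \in \symb^{-\rho}$. I would factor it as
\[
H^{s+1} \xrightarrow{D_k^c} H^{s+1-c} \xrightarrow{\psi_{|z|}} H^{s+1-c} \hookrightarrow H^{s-1+a} \xrightarrow{D_j^a} H^{s-1},
\]
where the Sobolev embedding uses $a+c\leq 2$, and apply Proposition~\ref{prop-dec-sob} with $\nu=0$ at the intermediate index $t = s \in (-d/2, d/2)$ (after a trivial extra embedding if needed) to get $\|\psi_{|z|}\|_{\Lc(H^t)} \leq C\|\psi\|_{0,0}$ uniformly in $|z|$. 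This yields the desired bound $\leq C\Nc_n$ for every $s \in (-d/2,d/2)$.

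For the second term only $\mu = 0$ contributes, and the task reduces to proving
\[
\big\||z|^{-1}\psi_{|z|}\big\|_{\Lc(H^{s+1},H^{s-1})} \leq C\|\psi\|_{1,0}
\]
for $\psi \in \symb^{-1-\rho}$. Proposition~\ref{prop-dec-sob} with $\nu = 1$ supplies the crucial factor of $|z|$, but only at indices $t$ satisfying $t,\,t-1 \in (-d/2, d/2)$, i.e.\ $t \in (1-d/2,\,d/2)$. To cover the entire range $s \in (-d/2, d/2)$ I plan to split into two overlapping subcases: for $s \in (-d/2,\, d/2-1)$ apply Proposition~\ref{prop-dec-sob} with $t = s+1$, obtaining $\|\psi_{|z|}u\|_{H^s} \leq C|z|\|\psi\|_{1,0}\|u\|_{H^{s+1}}$, then use $H^s \hookrightarrow H^{s-1}$; for $s \in (1-d/2,\, d/2)$ apply it with $t = s$ directly to get $\|\psi_{|z|}u\|_{H^{s-1}} \leq C|z|\|\psi\|_{1,0}\|u\|_{H^s}$, and embed $H^{s+1}\hookrightarrow H^s$. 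The two ranges cover $(-d/2,\, d/2)$ precisely because the paper's standing hypothesis $d\geq 3$ gives $1-d/2 < d/2-1$.

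The main obstacle is this Sobolev-index bookkeeping in the second term, which forces the two-case argument and uses $d\geq 3$ in an essential way; the commutator algebra itself is elementary, since $\ad_{iA}$ simply reproduces $D_j$ and converts multiplication by $\varphi$ into multiplication by $-(x\cdot\nabla)\varphi$, and $\ad_x$ only produces Kronecker deltas via $[x_l, D_p] = i\delta_{lp}$.
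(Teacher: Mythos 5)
Your proposal is correct and follows essentially the same route as the paper: expand the iterated commutators using $\ad_{iA}(D_j)=D_j$, $\ad_{iA}(\varphi)=-(x\cdot\nabla)\varphi$, and $[x_l,D_p]=i\delta_{lp}$, observe that $(x\cdot\nabla)$ and rescaling commute, then reduce everything to Proposition~\ref{prop-dec-sob} with $\nu=0$ for the metric part and $\nu=1$ for $a_\iota$, using the same two overlapping Sobolev-index cases to cover all of $s\in(-d/2,d/2)$. The paper phrases the metric expansion as $(2-x\cdot\nabla)^n\gamma_{jk}-2^n\delta_{jk}$ rather than acting directly on $\gamma_{jk}-\delta_{jk}$, which is equivalent; it also handles the $\ad_x^\mu$ terms more tersely ("can be estimated as $\ad_{iA}^n(\Ptauo+\Delta)$"), while you make the simpler factorizations explicit, and you usefully note that the covering of $(-d/2,d/2)$ by the two subcases requires $d\geq 3$.
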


\begin{proof}
For $j,k \in \Ii 1 d$ and $n \in \N$ we set 
\begin{equation}\label{gamma-m}
\g_{j,k}^{(n)} = (2 - x \cdot \nabla)^n \g_{j,k}
%, \quad b_{k}^{(n)} = (1 - x \cdot \nabla)^n b_{k} 
\quad  \text{and} \quad a^{(n)}_\iota = (- x \cdot \nabla)^n a_\iota.
\end{equation}
Using Proposition \ref{prop-A} we can check by induction on $n \in \N$ that for all $z \in \C_{+}$ we have
\[
 \ad_{iA}^n (\Ptauo) = \sum_{j,k}   D_j \g_{j,k,\abs z}^{(n)} D_k %+ \frac 1 {\abs z}\sum_k  b_{k, \abs z}^{(m)} D_k
\quad 
\text{and}
\quad
 \ad_{iA}^n \left(a_{\iota, \abs z}\right) =   a_{ \iota , \abs z}^{(n)}.
\] 
According to Proposition \ref{prop-dec-sob}, for $s \in \big]-\frac d 2 ,\frac d 2[$, $z \in\C_{+}$ and $u \in \Sc(\R^d)$ we have
\[
\nr{\ad_{iA}^n(\Ptauo+\D) u }_{H^{s-1}} \lesssim \sum_{j,k} \nr{\big(\g_{j,k}^{(n)}- 2^n \d_{j,k} \big)D_k u}_{H^s} \lesssim \sum_k \Nc_{n} \nr{D_k u}_{H^{s}}  \lesssim \Nc_{n} \nr{u}_{H^{s+1}}
\]
(note that $\g_{j,k}^{(n)}- 2^n \d_{j,k}=(2 - x \cdot \nabla)^n (\g_{j,k}-\d_{j,k})$). On the other hand, if $s \in \left]-\frac d 2 + 1,\frac d 2\right[$ we have 
\[
\nr{ \ad^n_{iA} (\hat z \abs z\inv a_\iota) u}_{H^{s-1}} \lesssim \Nc_{n} \nr{u}_{H^{s}}\leq \Nc_{n} \nr{u}_{H^{s+1}},
\]
and if $s \in \left]-\frac d 2 ,\frac d 2-1\right[$: 
\[
\nr{ \ad^n_{iA} (\hat z \abs z\inv a_\iota) u}_{H^{s-1}} \leq \nr{ \ad^n_{iA} (\hat z \abs z\inv a_\iota) u}_{H^{s}}\lesssim \Nc_{n} \nr{u}_{H^{s+1}}.
\]
% . 
This proves the proposition when $\m=0$. Now let $l,p \in \Ii 1 d$. We have 
\[
\big[\ad_{iA}^n \big(\Ptauo + \D \big) ,x_l\big] = -i \sum_{j} D_j \g^{(n)}_{j,l,\abs z}  -i \sum_{k} \g^{(n)}_{l,k,\abs z} D_k + i 2^{n+1}  D_l,% - \frac i {\abs z} b_{m,\abs z},
\]
\[
\big[\big[\ad_{iA}^n \big(\Ptauo + \D\big), x_l\big],x_p\big] = -2 \g^{(n)}_{l,p,\abs z},% + 2^{m+1},
\]
and hence $\ad_x^\m \ad_{iA}^n (\Ptauo) = 0$ if $\abs \m \geq 3$. These operators can be estimated as $\ad_{iA}^n (\Ptauo + \D)$. Since $\ad_x^\m \ad^n_{iA} (\hat z \abs z\inv a_\iota) = 0$ if $\abs \m \geq 1$, this concludes the proof.
\end{proof}

\begin{remark} \label{rem-estim-tPtau}
Using Corollary \ref{cor-dec-sob} we can similarly prove that there exists $C \geq 0$ such that for all $z \in \C_{+}$ we have
\[
\nr{\ad_x^\m \, \ad_{iA}^n \Ptauo}_{\Lc(H^{s+1},H^{s-1})} \leq C \Nc_n.
\]
\end{remark}

\begin{remark} \label{rem-lap-PO}
With the same proof we can prove that if $\Nc_0$ is small enough then for all $u \in \Sc(\R^d)$ we have
\[
\nr{(\Poo + \D)u} \leq \frac 12 || \Delta u ||_{L^2} 
\]
and in particular 
\[
|| \Delta u||_{L^2} \lesssim \nr{\Poo u}_{L^2}.
\]

\end{remark}

\begin{proposition} \label{prop-Ptau}
Let $n \in \N$, $\m \in \N^d$ and $s \in \big]-\frac d 2 ,\frac d 2[$. If $\Nc_0$ is small enough there exists $C \geq 0$ such that for all $z \in \C_{+}$ and $u \in \Sc(\R^d)$ we have
\[
 \nr{\ad_x^\m \ad_{iA}^n \left( \big( \Ptau +1 \big) \inv\right) u}_{H^{s+1}} \leq C \nr{u}_{H^{s-1}}.
\]
\end{proposition}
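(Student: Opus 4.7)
\medskip

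\noindent\textbf{Proof plan.} The plan is to first handle the base case $n=|\m|=0$ and then reduce the general statement to it by an algebraic commutator expansion.

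For the base case, I would prove that, for $\Nc_0$ small enough, the operator $\Ptau + 1$ is an isomorphism $H^{s+1}\to H^{s-1}$, uniformly in $z \in \C_+$. Factor
\[
\Ptau + 1 = (-\D + 1)\bigl(I + (-\D + 1)\inv (\Ptau-(-\D))\bigr),
\]
viewed as a product of operators acting on the scale $H^{s+1}\to H^{s-1}$. Since $-\D + 1$ realizes an isomorphism $H^{s+1}\to H^{s-1}$ with uniformly bounded inverse, it suffices to invert the second factor in $\Lc(H^{s+1})$. Writing $\Ptau - (-\D) = (\Ptauo + \D) - i \hat z \abs z\inv a_{\iota,\abs z}$, Proposition \ref{prop-Ptau-comm-Hs} applied with $n=0$, $\m=0$ gives
\[
\nr{\Ptau - (-\D)}_{\Lc(H^{s+1},H^{s-1})} \leq C \Nc_0,
\]
so for $\Nc_0$ small enough, $I + (-\D+1)\inv(\Ptau-(-\D))$ is invertible on $H^{s+1}$ by a Neumann series, with norm bounded uniformly in $z$. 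The resulting inverse $H^{s-1}\to H^{s+1}$ coincides, when restricted to $\Sc(\R^d)$, with the operator $(\Ptau+1)\inv$ defined on $L^2$: elliptic regularity (the coefficients of $\Ptau$ are smooth) ensures that $(\Ptau+1)\inv u \in H^{s+1}$ when $u \in \Sc(\R^d)$.

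For general $n$ and $\m$, I would expand $\ad_x^\m \ad_{iA}^n (\Ptau+1)\inv$ into a finite sum of alternating products of resolvents and commutators of $\Ptau$. The elementary identity
\[
\ad_B (\Ptau+1)\inv = -(\Ptau+1)\inv (\ad_B \Ptau)(\Ptau+1)\inv,
\]
valid on $\Sc(\R^d)$ for $B$ either $iA$ or some $x_k$, combined with the Leibniz rule for iterated commutators, shows by induction that $\ad_x^\m \ad_{iA}^n (\Ptau+1)\inv$ is a linear combination of operators of the form
\[
(\Ptau+1)\inv \, C_1 \, (\Ptau+1)\inv \, C_2 \cdots C_p \, (\Ptau+1)\inv,
\]
where each $C_\ell = \ad_x^{\m_\ell} \ad_{iA}^{n_\ell} \Ptau$ with $\sum \m_\ell = \m$ and $\sum n_\ell \leq n$. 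For the fixed $s \in ]-d/2,d/2[$, Remark \ref{rem-estim-tPtau} together with Proposition \ref{prop-Ptau-comm-Hs} applied to $\ad_x^{\m_\ell}\ad_{iA}^{n_\ell} \Ptauo$ and to $\ad_x^{\m_\ell}\ad_{iA}^{n_\ell}(\hat z \abs z\inv a_{\iota,\abs z})$ yields $\nr{C_\ell}_{\Lc(H^{s+1},H^{s-1})} \lesssim \Nc_{n_\ell}$, uniformly in $z \in \C_+$. Chaining these bounds with the base case, which gives $\nr{(\Ptau+1)\inv}_{\Lc(H^{s-1},H^{s+1})} \lesssim 1$, the full product is bounded $H^{s-1}\to H^{s+1}$ uniformly in $z$, which is the desired estimate.

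The main obstacle is really the base case: one has to check that the perturbation $\Ptau - (-\D)$, which contains the a priori singular prefactor $\hat z \abs z\inv$ in front of $a_{\iota,\abs z}$, is nevertheless uniformly small as a map $H^{s+1}\to H^{s-1}$. This is precisely the gain of one derivative encoded in Proposition \ref{prop-dec-sob} (through Proposition \ref{prop-Ptau-comm-Hs}), which converts the spatial decay of $a_\iota$ into a factor $\abs z$ that cancels $\abs z\inv$. Once this uniformity is achieved, the upper-order estimates follow algebraically, the only constraint being that all intermediate Sobolev indices in the product stay in the range $]-d/2,d/2[$ where Proposition \ref{prop-Ptau-comm-Hs} applies, which is automatic since only $s\pm 1$ appear.
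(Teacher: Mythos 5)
Your proposal is correct and follows essentially the same route as the paper: show that $\Ptau+1$ is uniformly close to $-\D+1$ in $\Lc(H^{s+1},H^{s-1})$ via Proposition \ref{prop-Ptau-comm-Hs} (so its inverse is uniformly bounded $H^{s-1}\to H^{s+1}$), then expand $\ad_x^\m\ad_{iA}^n(\Ptau+1)\inv$ by the resolvent-commutator identity into a chain of resolvents separated by $\ad_x^{\m_\ell}\ad_{iA}^{n_\ell}(\Ptau)$ and bound each factor using Proposition \ref{prop-Ptau-comm-Hs} and Remark \ref{rem-estim-tPtau}. Your base-case treatment (Neumann series and the consistency check that the $\Lc(H^{s-1},H^{s+1})$-inverse agrees with the $L^2$-resolvent) is merely a more detailed spelling-out of the paper's one-line argument, and the minor imprecision $\sum n_\ell\leq n$ in place of $\sum n_\ell=n$ is harmless.
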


This statement means that if $\Ptau$ is close to $-\D$ then the resolvent $(\Ptau + 1)\inv$ has (uniformly) the same elliptic property as $(-\D + 1)\inv$. Of course this holds with the same restriction on the Sobolev index $s$ as in Proposition \ref{prop-Ptau-comm-Hs}.

\begin{proof}
According to Proposition \ref{prop-Ptau-comm-Hs}, if $\Nc_{0}$ is small enough then $(\Ptau +1)$ is close to $(-\D+1)$ in $\Lc (H^{s+1},H^{s-1})$ uniformly in $z \in \C_{+}$, which gives the result if $\m = 0$ and $n = 0$. Now the operator $\ad_x^\m \ad_{iA}^n \left( \big( \Ptau +1 \big) \inv\right)$ can be written as a linear combination of terms of the form
\[
 \big( \Ptau +1 \big) \inv\ad_x^{\m_1} \ad_{iA}^{n_1}  \big( \Ptau \big)   \big( \Ptau +1 \big) \inv \dots  \ad_x^{\m_k} \ad_{iA}^{n_k} \big( \Ptau \big)   \big( \Ptau +1 \big) \inv,
\]
where $k \in \N$, $n_1,\dots,n_k \in \N$ and $\m_1,\dots ,\m_k \in \N^d$ are such that $n_1 + \dots + n_k = n$ and $\m_1 + \dots + \m_k = \m$. We have proved that $\big( \Ptau +1 \big) \inv$ is uniformly bounded in $\Lc(H^{s-1},H^{s+1})$ and according to Remark \ref{rem-estim-tPtau} the operator $\ad_x^{\m_l} \ad_{iA}^{n_l}  \big( \Ptau \big) $ is uniformly bounded in $\Lc(H^{s+1},H^{s-1})$ for all $l \in \Ii 1 k$.
\end{proof}

We now give uniform estimates for the operators $\tilde \Th_{j;\alpha_j, \ldots , \alpha_k}(z)$ defined in \eqref{def-Th}. Because of the restrictions in the Sobolev spaces in Corollary \ref{cor-Phi-nu} and Proposition \ref{prop-Ptau}, this is not a simple count of gains and losses of regularity. However we can take advantage of the fact that we have at least one resolvent between each differentiation $\tilde \Phi_j$.

\begin{proposition} \label{prop-Th}
Let $\rho>0$ be given by \eqref{dec-metric-a}. Let $s \in \big[0,\frac d 2+1[$, $s^* \in \big]-\frac d 2-1,0\big]$, $n\in\N$ and $\m \in \N^d$.  Let $j,k \in \N$ be such that $j \leq k$, and $\a_j,\dots , \a_{k} \in \N^*$.
Then there exist $\s_0 \in ]0,\min(1,\rho)/2[$ and $C \geq 0$ such that for $\s \in [0,\s_0]$ which satisfies $2 \sum_{l=j}^k \a_{l} - (1+\s) \Vc_{j,k}  \geq (s-s^*)$ and $z \in \C_{+}$ we have
\[
 \nr{\ad_x^\m \ad_{A}^n   \tThiota_{j ; \a_{j},\dots , \a_{k} } (z) }_{\Lc(H^{s^*}, H^{s})} \leq C \abs z ^{(1+\s) \Vc_{j,k}}.
\]
Moreover this also holds when $\a_j = 0$ if $s < \frac d 2 -1$, and when $\a_k = 0$ if $s^* > -\frac d 2 +1$.
\end{proposition}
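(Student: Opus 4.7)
The strategy is to read the product as a Sobolev regularity budget. Each resolvent factor $(\Ptau+1)^{-1}$ gains two derivatives (Proposition \ref{prop-Ptau}), and each nontrivial inserted factor $\tilde\Phi_l(z) = a_{\iota,|z|}$ (i.e.\ when $\n_l = 1$) is to be read as a fractional derivative of order $1+\s$ with prefactor $|z|^{1+\s}$: since $a_\iota \in \symb^{-1-\rho}(\R^d) \subset \symb^{-(1+\s)-(\rho-\s)}(\R^d)$ for $\s \in [0,\rho[$, Proposition \ref{prop-dec-sob} applied with $\n=1+\s$ gives
\[
\nr{a_{\iota,|z|} u}_{H^{s'-1-\s}} \lesssim |z|^{1+\s} \, \nr{u}_{H^{s'}}
\]
whenever $s'$ and $s'-1-\s$ both lie in $]-d/2,d/2[$. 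The total net regularity gain along the product is thus $2\sum_l \a_l - (1+\s)\Vc_{j,k}$, and the hypothesis that this exceeds $s-s^*$ is exactly what is needed to connect $H^{s^*}$ to $H^s$ with a prefactor $|z|^{(1+\s)\Vc_{j,k}}$.

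A first step is the reduction to $\m=0$, $n=0$. Writing each block $(\Ptau+1)^{-\a_l}$ as a product of $\a_l$ copies of $(\Ptau+1)^{-1}$ and expanding $\ad_x^\m \ad_A^n$ by Leibniz, the full commutator is a linear combination of products of the same shape in which some factors have been replaced by $\ad_x^{\m'}\ad_A^{n'}((\Ptau+1)^{-1})$ or $\ad_x^{\m'}\ad_A^{n'}(\tilde\Phi_l(z))$. By Proposition \ref{prop-Ptau} and Corollary \ref{cor-Phi-nu}, each such commutator satisfies the same mapping estimate (and the same power of $|z|$) as the original factor, uniformly in $z \in \C_+$.

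The main step is then to schedule the intermediate Sobolev indices along the product
\[
(\Ptau+1)^{-\a_j}\,\tilde\Phi_{j+1}(z)\,(\Ptau+1)^{-\a_{j+1}}\cdots\tilde\Phi_k(z)\,(\Ptau+1)^{-\a_k}
\]
from $H^{s^*}$ to $H^s$ so that each elementary step lies within its allowed Sobolev range. I would proceed greedily from right to left, using resolvents as upward jumps of $2$ (valid as long as the running source index stays in $]-d/2-1,\, d/2-1[$) and $a_{\iota,|z|}$ factors as downward jumps of $1+\s$ (valid as long as the current and next indices both lie in $]-d/2, d/2[$). The budget inequality ensures the bookkeeping closes, and one can absorb the possible slack $2\sum_l \a_l - (1+\s)\Vc_{j,k} - (s-s^*) \geq 0$ in whichever resolvent block has room.

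The main obstacle is avoiding the critical indices $\pm d/2$. The requirement $\s_0 < \min(1,\rho)/2$ buys a uniform margin: with $s \in [0, d/2+1[$ and $s^* \in ]-d/2-1, 0]$, every intermediate index can be chosen strictly inside an open subinterval of length bounded below, allowing a routine induction on $k-j$. In the special cases $\a_j = 0$ or $\a_k = 0$ the leftmost (resp.\ rightmost) resolvent block is missing, so the target $H^s$ (resp.\ source $H^{s^*}$) cannot reach the boundary of the reachable range: the assumptions $s < d/2-1$ when $\a_j = 0$ and $s^* > -d/2+1$ when $\a_k = 0$ are precisely the compensation needed for the scheduling to remain admissible.
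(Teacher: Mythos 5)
Your proposal takes essentially the same approach as the paper: it reduces $\m,n$ to zero by Leibniz using Proposition~\ref{prop-Ptau} and Corollary~\ref{cor-Phi-nu}, views each $a_{\iota,|z|}$ as a derivative of order $1+\s$ with prefactor $|z|^{1+\s}$ via Proposition~\ref{prop-dec-sob}, and schedules intermediate Sobolev indices from right to left inside $]-d/2,d/2[$. The paper realizes the ``scheduling'' by explicitly defining $s_p = \min(S, \tilde s_p)$ with $\tilde s_p = s^* + 2\sum_{l=p}^k \a_l - (1+\s)\Vc_{p,k}$ and a ceiling $S = \max(0, s - 2\d_{0,\a_j} + 1 + \s_0) < d/2$, which makes the slack-absorption and the range constraints you describe informally completely explicit; otherwise the arguments are the same.
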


\begin{proof}
We have $s - 2 \d_{0,\a_j} < \frac d 2 - 1$ and $s^* + 2 \d_{0,\a_k} > - \frac d 2 +1$, so we can consider $\s_0 \in ]0,\min(1,\rho)/2[$ such that 
\[
S : = \max\left( 0, s - 2 \d_{0,\a_j} + 1 + \s_0 \right) < \frac d 2 \quad \text{and} \quad s^* + 2 \d_{0,\a_k} - 1 - \s_0 > - \frac d 2 .
\]
Let $\s \in [0,\s_0]$. For $p \in \Ii j k$ we set
\[
\tilde s_p = s^* + 2 \sum_{l=p}^k \a_l - (1+\s) \Vc_{p,k} \quad \text{and} \quad s_p = \min(S,\tilde s_p).
\]
Note that the sequence $(s_p)_{j\leq p \leq k}$ is non-increasing and $s_p > -\frac d 2 + (1+\s_0)$ for all $p \in \Ii j k$. The operator $\ad_x^\m \ad_{A}^n \tThiota_{j ; \a_{j},\dots , \a_{k} } (z)$ can be written as a linear combination of terms of the form
\[
 \prod_{l=j}^{k-1}  \left(\ad_x^{ \m_l} \ad_{A}^{ n_l} \big(\Ptau+1\big)^{-\a_l}  \ad_x^{\tilde \m_l} \ad_{A}^{\tilde n_l} \tilde \Phi_{l+1}(z) \right) \times \ad_x^{ \m_k} \ad_{A}^{ n_k} \big(\Ptau+1\big)^{-\a_k}
\]
where $\m_k + \sum_{l=j}^{k-1} (\tilde \m_l+\m_l) = \m$ and $n_k + \sum_{l=j}^{k-1} (\tilde n_l+n_l) = n$. 
According to Proposition \ref{prop-Ptau} we have
\[
\nr{ \ad_x^{ \m_k} \ad_{A}^{n_k} \big(\Ptau+1\big)^{-\a_k}}_{\Lc(H^{s^*},H^{s_k})} \lesssim 1.
\]
Let $p \in \Ii {j+1}{k-1}$. Since $s_{p+1}$ and $s_{p+1} - (1+\s)\n_{p+1}$ belong to $\big]-\frac d 2, \frac d 2\big[$, we have according to Corollary \ref{cor-Phi-nu} and Proposition \ref{prop-Ptau}
\begin{eqnarray*}
\lefteqn{\nr{\ad_x^{ \m_p} \ad_{A}^{ n_p} \big(\Ptau+1\big)^{-\a_p}\ad_x^{\tilde \m_{p}} \ad_{A}^{\tilde n_{p}} \tilde \Phi_{p+1}(z)}_{\Lc(H^{s_{p+1}},H^{s_{p}})}}\\ 
&& \leq 
\nr{\ad_x^{ \m_p} \ad_{A}^{ n_p} \big(\Ptau+1\big)^{-\a_p}}_{\Lc(H^{s_{p+1} - (1+\s)\n_{p+1}},H^{s_{p}})}
\nr{\ad_x^{\tilde \m_{p}} \ad_{A}^{\tilde n_{p}} \tilde \Phi_{p+1}(z)}_{\Lc(H^{s_{p+1}},H^{s_{p+1} - (1+\s)\n_{p+1}})}\\
&& 
\lesssim \abs z^{(1+\s) \n_{p+1}}.
\end{eqnarray*}
Similarly we have 
\[
\nr{\ad_x^{ \m_j} \ad_{A}^{ n_j} \big(\Ptau+1\big)^{-\a_j}\ad_x^{\tilde \m_{j}} \ad_{A}^{\tilde n_{j}} \tilde \Phi_{j+1}(z)}_{\Lc(H^{s_{j+1}},H^{s})} \lesssim \abs z^{(1+\s) \n_{j+1}}.
\]
Here we have used the assumption that $s_j \geq s$.
\end{proof}

Proposition \ref{prop-Th} will be used to estimate the terms that we called $E_z$ in Section \ref{sec-outline}. Now we go slightly further and estimate the terms called $W_z$. The main difference with the previous proposition is that we now have to add a factor $\pppg A^\d$. This is necessary to compensate the weight $\pppg A ^{-\d}$ which we need to use Mourre theory.

\begin{proposition} \label{prop-Th3}
 Let $s \in \big[\n_0,\frac d 2 +1 \big[$ be such that $s-\n_0 \neq \frac d 2$ and $\d > s$. Let $j,k\in\Ii 0 N$ and $\a_j,\dots,\a_k \in \N$ be such that $m:=\sum_{l=j}^k \a_l$ is large enough (say $m \geq \d + 2 + s$).
\begin{enumerate} [(i)]
\item If $\a_j,\dots,\a_{k-1} \in \N^*$ there exists $C \geq 0$ such that for all $z \in \C_{+}$ we have 
\[
 \nr{\pppg x ^{-\d} e^{iA \ln  \abs z} \tilde \Phi_0(z)  \tThiota_{j ; \a_j,\dots,\a_k} (z) \pppg { A}^{\d}}_{\Lc(L^2)} \leq C \abs z^{\min \left(s - \n_0 , \frac d 2\right)  + \n_0 + \Vc_{j,k}}.
\]
\item If $\a_{j+1} ,\dots,\a_{k} \in \N^*$ there exists $C \geq 0$ such that for all $z \in \C_{+}$ we have 
\[
 \nr{\pppg { A}^{\d}  \tThiota_{j ; \a_j,\dots,\a_k} (z) e^{-iA \ln  \abs z} \pppg x ^{-\d} }_{\Lc(L^2)} \leq C \abs z^{\min \left(s , \frac d 2\right) +  \Vc_{j,k}}.
\]
\end{enumerate}
\end{proposition}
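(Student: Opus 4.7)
Part (ii) is obtained from (the analogue of) part (i) with $\tilde\Phi_0(z)$ replaced by the identity (i.e. $\n_0=0$), read from right to left, by taking the $L^2$-adjoint: this swaps the extremities of the chain $\a_j,\dots,\a_k$, turning the condition $\a_j,\dots,\a_{k-1}\in\N^*$ into $\a_{j+1},\dots,\a_k\in\N^*$, and replaces $z$ by $-\bar z$ throughout. Since all relevant quantities depend only on $|z|$ (in particular $\Ptau^*=\tilde P_{-\bar z}$ has the same structure as $\Ptau$), the resulting bound is identical. We therefore focus on (i) and split the operator as $L_z\circ R_z$ with
\[
L_z=\pppg x^{-\d}\,e^{iA\ln|z|}(1+|x|^\d),\qquad R_z=(1+|x|^\d)^{-1}\,\tilde\Phi_0(z)\,\tThiota_{j;\a_j,\dots,\a_k}(z)\,\pppg A^\d,
\]
aiming to prove $\|L_z\|_{H^t\to L^2}\lesssim |z|^{\min(s-\n_0,\,d/2)}$ and $\|R_z\|_{L^2\to H^t}\lesssim |z|^{\n_0+\Vc_{j,k}}$ for $t=s-\n_0$; the product then gives the claim.

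For $L_z$, the scaling identity $e^{iA\ln|z|}|x|^\d=|z|^\d|x|^\d e^{iA\ln|z|}$ (immediate from Proposition~\ref{prop-A}(i)) yields
\[
L_z=\pppg x^{-\d}\,e^{iA\ln|z|}+|z|^\d\,\bigl(\pppg x^{-\d}|x|^\d\bigr)\,e^{iA\ln|z|}.
\]
The second summand has $L^2\to L^2$ norm $\lesssim |z|^\d$, which is dominated by $|z|^{\min(s-\n_0,d/2)}$ for $|z|\leq 1$ since $\d>s$. For the first summand, when $t<d/2$ I will choose $q=2d/(d-2t)\in(2,\infty)$: by Sobolev $H^t\hookrightarrow L^q$, Proposition~\ref{prop-A}(iii) (giving $\|e^{iA\ln|z|}\|_{L^q\to L^q}=|z|^{d/2-d/q}=|z|^t$), and H\"older's inequality (valid because $\d>t$), I obtain $\|\pppg x^{-\d}\,e^{iA\ln|z|}\|_{H^t\to L^2}\lesssim |z|^t$. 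The remaining case $t>d/2$, which by the hypothesis $s-\n_0\neq d/2$ and $s<d/2+1$ occurs only for $\n_0=0$, $s\in(d/2,d/2+1)$, is handled via the endpoint embedding $H^t\hookrightarrow L^\infty$, the $L^\infty$-scaling $\|e^{iA\ln|z|}\|_{L^\infty\to L^\infty}=|z|^{d/2}$, and $\|\pppg x^{-\d}\|_{L^2}<\infty$ (since $\d>d/2$), giving $\lesssim |z|^{d/2}$. Altogether $\|L_z\|_{H^t\to L^2}\lesssim |z|^{\min(s-\n_0,\,d/2)}$.

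For $R_z$, since $\tilde\Phi_0(z)=|z|^{\n_0}D^\a$ with $|\a|=\n_0$ maps $H^s\to H^{s-\n_0}=H^t$ with norm $|z|^{\n_0}$, the target reduces to
\[
\|(1+|x|^\d)^{-1}\,\tThiota_{j;\a_j,\dots,\a_k}(z)\,\pppg A^\d\|_{L^2\to H^s}\lesssim |z|^{\Vc_{j,k}}.
\]
I will expand $\pppg A^\d$ through the Helffer-Sj\"ostrand formula \eqref{helffer-sjostrand} applied to an almost analytic extension of $\l\mapsto\pppg\l^\d$, then commute each resolvent $(A-\z)^{-1}$ through $\tThiota_{j;\a_j,\dots,\a_k}(z)$. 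Each commutator contribution is, up to $x^{\a'}D^{\b'}$ factors of bounded total order, an operator of the form $\ad_x^\m\,\ad_A^n\,\tThiota_{j;\a_j,\dots,\a_k}(z)$, which is controlled by Proposition~\ref{prop-Th} with the same gain $|z|^{(1+\s)\Vc_{j,k}}\geq |z|^{\Vc_{j,k}}$ uniformly in $z\in\C_+$. The $x$-powers produced (of total order $\leq\d$) are absorbed by the prefactor $(1+|x|^\d)^{-1}$, and the $D$-powers (of total order $\leq\d$) by the smoothing of $\tThiota$ into $H^s$, which remains valid thanks to the hypothesis $m=\sum_l\a_l\geq \d+2+s$. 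Meanwhile, the hypothesis $\a_j,\dots,\a_{k-1}\in\N^*$ ensures that Proposition~\ref{prop-Th} is applicable with output index $s\in[\n_0,d/2+1)$.

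The main obstacle is the bookkeeping of Sobolev indices in the commutator expansion: Proposition~\ref{prop-Th} becomes inapplicable at the threshold $d/2$, so every intermediate index must stay strictly inside the admissible range $[0,d/2+1)\times(-d/2-1,0]$. This is exactly what the assumptions $s<d/2+1$, $s-\n_0\neq d/2$, and the generous lower bound $m\geq\d+2+s$ guarantee: the $\d+2$ ``reserve'' of elliptic gain absorbs the $\d$ derivatives introduced by $\pppg A^\d$ plus the additional $s+2$ needed to land in $H^s$ without crossing the forbidden threshold, while $s-\n_0\neq d/2$ ensures that the use of either Sobolev embedding into $L^q$ ($t<d/2$) or into $L^\infty$ ($t>d/2$) in the estimate of $L_z$ is clean and produces the correct exponent $\min(s-\n_0,d/2)$.
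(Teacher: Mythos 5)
Your decomposition of the operator into $L_z\circ R_z$ (inserting $(1+|x|^\d)(1+|x|^\d)^{-1}$) is exactly the paper's, your estimate of $L_z$ via the scaling identity, Sobolev embedding and H\"older is correct and matches the paper's first step (including the $t>d/2$ case via $H^t\hookrightarrow L^\infty$), and your reduction of (ii) to (i) by taking adjoints and reversing the chain is a valid formulation of what the paper dismisses with ``the second is proved similarly.'' The real work is in the estimate of $R_z$, and there your argument has two genuine gaps.

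First, you propose to expand $\pppg A^\d$ ``through the Helffer--Sj\"ostrand formula \eqref{helffer-sjostrand} applied to an almost analytic extension of $\l\mapsto\pppg\l^\d$.'' But the formula \eqref{helffer-sjostrand} in the paper is stated and valid only for $f\in\symb^{-\rho}(\R)$ with $\rho>0$: the decay is essential for the integral to converge (see the bound \eqref{estim-tildef}). The function $\l\mapsto\pppg\l^\d$ \emph{grows}, so \eqref{helffer-sjostrand} does not apply to it. You could try to patch this by writing $\pppg A^\d=(A^2+1)^{p}\pppg A^{\d-2p}$ for an integer $p$ with $2p\geq\d$ and treating the polynomial factor algebraically, but at that point you are forced back to exactly the kind of integer-exponent algebraic expansion and induction that the paper carries out; the shortcut you envision is not available.

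Second, even granting a version of the functional calculus, your commutator bookkeeping is too sketchy to be a proof. You assert that ``each commutator contribution is, up to $x^{\a'}D^{\b'}$ factors of bounded total order, an operator of the form $\ad_x^\m\,\ad_A^n\,\tThiota(z)$'' and that the $x$-powers are ``absorbed by $(1+|x|^\d)^{-1}$'' and the derivatives by ``the smoothing of $\tThiota$.'' This is the heart of the matter, not a remark: because $A=-i(x\cdot\nabla+d/2)$ mixes $x$ and $D$, commuting $\pppg A^\d$ past $\tThiota$ generates products of lower-order $A$-powers with $\ad_A$- and $\ad_x$-commutators, and one must track both simultaneously while keeping every intermediate Sobolev index inside $\big]-\frac d2,\frac d2\big[$ (the range where the elliptic bound of Proposition~\ref{prop-Ptau} is available). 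The paper does this by a double induction: first it reduces to integer $\d$ by a complex interpolation argument (Hadamard three lines combined with a Calder\'on--Vaillancourt bound to control the imaginary powers $\pppg x^{-i\Im\d}$), then for integer $\d$ it proves the refined statement \eqref{stepp-Th3-mu} by induction on $\d$, peeling off one factor of $\tThiota$ at a time via $\tThiota^b\,A^\d=\sum_l C_\d^l\,\tThiota^b_{j_1,\dots,j_{m-1}}A^{\d-l}\ad_A^l\big((\Ptau+1)^{-1}b_{j_m,\abs z}\big)$ and separately handling the term $A^{\d-1}x_qD_q$ by commuting $x_q$ to the left. None of this structure appears in your proposal, and without it the assertion that the indices ``stay strictly inside the admissible range'' is not substantiated. (A small additional slip: you write that the gain from Proposition~\ref{prop-Th} is ``$|z|^{(1+\s)\Vc_{j,k}}\geq|z|^{\Vc_{j,k}}$,'' which is the wrong inequality for $|z|\leq 1$; you should simply invoke Proposition~\ref{prop-Th} with $\s=0$.)
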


\begin{proof}
\stepp 
We prove the first estimate. The second is proved similarly. Let $\tilde s =\min \left(s - \n_0 , \frac d 2\right)$. Since $\d > \tilde s$, multiplication by $\pppg x ^{-\d}$ is bounded from $L^{\frac {2d}{d-2\tilde s}}$ to $L^2$. According to Sobolev embeddings and Proposition \ref{prop-A} we have
\begin{equation} \label{eq-xAx}
\begin{aligned}
\nr{\pppg x ^{-\d} e^{iA \ln \abs z} \big(1+ \abs x^\d \big)}_{\Lc(H^{s-\n_0}, L^2)}
& \leq \nr{\pppg x ^{-\d} e^{iA \ln \abs z}}_{\Lc\big(L^{\frac {2d}{d-2\tilde s}}, L^2 \big)} +  \nr{\pppg x ^{-\d}  e^{iA \ln \abs z } \abs x^\d }_{\Lc(L^2)}\\
& \lesssim  \nr{ e^{iA \ln \abs z}}_{\Lc\big(L^{\frac {2d}{d-2\tilde s}}\big)} +  \abs z ^\d  \nr{\pppg x ^{-\d}  \abs x^\d e^{iA \ln \abs z }  }_{\Lc(L^2)}\\
& \lesssim \abs z ^{\tilde s}.
\end{aligned}
\end{equation}

\stepp For $\tilde \Phi_0(z)$ we have
\[
\nr{\pppg x^{-\d} \tilde \Phi_0(z) \pppg x ^\d}_{\Lc(H^s,H^{s-\n_0})} \lesssim \abs z ^{\n_0}.
\]

\stepp Now we prove that for all $\d \geq 0$ (in this analysis we no longer use the assumption $\d > s$):
\begin{equation} \label{stepp-Th3}
\nr{\pppg x^{-\d}   \tThiota_{j ; \a_j,\dots,\a_k} (z) \pppg { A}^{\d}}_{\Lc(L^2,H^{s})} \lesssim \abs z^{\Vc_{j,k}}.
\end{equation}
It is enough to prove this when $\d$ is an integer, and then the general case will follow by interpolation, using the following argument: if $ p $ is a fixed integer, the estimate (\ref{stepp-Th3}) implies that for some $ N \geq 0 $, we have
\begin{multline}
\nr{ \big( \langle D \rangle^{s - \nu_0} \langle x \rangle^{- i \rm{Im}(\d)}  \langle D \rangle^{ \nu_0 - s} \big)
 \langle D \rangle^{s - \nu_0} \langle x \rangle^{-  \rm{Re}(\d)}     \tThiota_{j ; \a_j,\dots,\a_k} (z) \pppg { A}^{\d}}_{\Lc(L^2)}\\
 \lesssim  (1 + |{\rm Im}(\d)|)^{N} |z|^{\Vc_{j,k}},
\end{multline}
when $ {\rm Re}(\d) = p $. Indeed, by the Calder\`on-Vaillancourt Theorem
$$ \nr{  \langle D \rangle^{s - \nu_0} \langle x \rangle^{- i \rm{Im}(\d)}  \langle D \rangle^{\nu_0-s}  }_{{\mathcal L}(L^2)} \lesssim (1 + |{\rm Im}(\d)|)^N $$
since $  \langle x \rangle^{- i \rm{Im}(\d)} $ is a zero order symbol with seminorms growing polynomially in $ {\rm Im}(\delta) $.
  Then, the result will follow from a routine argument using the Hadamard three lines theorem and the estimate (\ref{stepp-Th3}) when $ \d $ is an integer which we assume from now on.  

\stepp The operator $ \tThiota_{j ; \a_j,\dots,\a_k} (z)$ can be rewritten as
\[
\tThiota^b_{ j_1,\dots,j_m} (z) : =  (\Ptau +1)\inv  b_{j_1,\abs z}(x)  (\Ptau +1)\inv  b_{j_2,\abs z}(x) \dots  (\Ptau +1)\inv  b_{j_m,\abs z}(x) 
\]
where $j_1,\dots,j_m \in \{0,1\}$, $b_0(x) = 1$, $b_1(x) = a_\iota (x)$ and $\sum_{l=1}^m j_l = \Vc_{j,k}$. We prove by induction on $\d \in \N$ that if $m \geq \d + 1 + s$ then for all $\m \in \N^d$ we have 
\begin{equation} \label{stepp-Th3-mu}
\nr{\pppg x^{-\d} \ad_x^\m \left(  \tThiota_{j ; \a_j,\dots,\a_k} (z) \pppg { A}^{\d} \right)}_{\Lc(L^2,H^{s-\n_0})} \lesssim \abs z^{ j_1+\dots + j_m}.
\end{equation}
Note that \eqref{stepp-Th3-mu} gives \eqref{stepp-Th3} when $\m = 0$. 
% 
% 
% 
% 
% 
% then $\tilde \Phi_0(z) \tThiota^b_{j_1,\dots,j_m} (z) A^{\d} $ can be written as a linear combination of terms of the form $x^\b \Th_\b(z)$ where $\b \in \N^d$ is such that $\abs \b \leq \d$ and for all $n\in\N$ and $\m \in \N^d$ we have
% \begin{equation} \label{estim-Thb}
%  \nr{\ad_x^\m \ad^n_{A} \Th_\b(z)} _{\Lc(L^2,H^{s-\n_0})} \lesssim \abs{z}^{\n_0 + \Vc_{j,k}}.
% \end{equation}
Statement \eqref{stepp-Th3-mu} is a consequence of Corollary \ref{cor-Phi-nu} and Proposition \ref{prop-Th} when $\d = 0$. Let us consider the general case.
We have
\[
\tThiota^b_{ j_1,\dots,j_m} (z) A^{\d} = \sum_{l=0}^{{\d}} C_{{\d}}^l \tThiota^b_{ j_1,\dots,j_{m-1}} A^{{\d}-l} \ad_{A}^l  \big( (\Ptau +1)\inv b_{j_m,\abs z}(x) \big).
\]
When $l\neq 0$ we can apply the inductive assumption to $ \tThiota^b_{ j_1,\dots,j_{m-1}} A^{{\d}-l}$. With Proposition \ref{prop-Th} this proves that the corresponding term in the right-hand side satisfies estimate \eqref{stepp-Th3-mu}.
%Since $\ad_{A}^l  \big( (\Ptau +1)\inv b_{j_p}(x) \big)$ is uniformly bounded as an operator on $L^2$ for any $l$, we can conclude with the inductive assumption when $l \neq 0$. 
We now consider the term corresponding to $l = 0$. More precisely it is enough to consider
\begin{equation} \label{stepp-Th3-xq}
\tThiota^b_{ j_1,\dots,j_{m-1}} A^{{\d}-1} x_q D_q  (\Ptau +1)\inv b_{j_m,\abs z}(x) 
\end{equation}
for some $q \in \Ii 1 d$. The operator $ D_q (\Ptau +1)\inv b_{j_m}(x) $ and its commutators with $x$ are of size $\abs z^{j_m}$ as operators on $L^2$. On the other hand
\[
\tThiota^b_{ j_1,\dots,j_{m-1}} A^{{\d}-1} x_q = x_q  \tThiota^b_{ j_1,\dots,j_{m-1}} A^{{\d}-1} + \ad_{x_q} \left( \tThiota^b_{ j_1,\dots,j_{m-1}} A^{{\d}-1} \right)
\]
and hence, according to the inductive assumption, the term \eqref{stepp-Th3-xq} also satisfies \eqref{stepp-Th3-mu}.
% 
% 
%  while $$ can be written as a sum of terms of the form $x^\b \Th_b(z) x_l$ where $\abs \b \leq \d - 1$ and $\Th_b$ satisfies \eqref{estim-Thb}. Since
% \[
% x^\b \Th_b(z) x_l = x^\b x_l \Th_\b  - x^\b \ad_{x_l} (\Th_b(z)),
% \]
This gives the inductive step, proves \eqref{stepp-Th3-mu}, and concludes the proof of the proposition.
\end{proof}

Concerning the weight we need in Proposition \ref{prop-Th3} and hence in Theorem \ref{th-low-freq}, we emphasize that in one case we need a weight $\pppg x ^{-\d}$ to map $L^{\frac {2d}{d - 2\tilde s}}$ to $L^2$ and in the second case we need it to compensate the factor $\abs x ^\d$ which comes from $A^\d$, but we do not have to do both at the same time.

\subsection{Low frequency estimates for a small perturbation of the Laplacian}  \label{sec-low-freq}

After the preliminary work of Section \ref{sec-dec-functions}, we are able to apply the general theory of Section \ref{sec-Mourre} to the rescaled resolvent $\tRtau$ and prove the estimates of Theorem \ref{th-low-freq-R0}.

\begin{proposition} \label{prop-estim-res-amort}
Let $n\in\N$ and $\d > n + \frac 12$. If $\Nc_1$ is small enough there exists $C \geq 0$ such that for all $j,k \in \N$ with $j \leq k$ and $k-j \leq n+1$ we have
\[
\forall z \in \C_+, \quad  \nr{\pppg A^{-\d} \tilde \Phi_0(z) \tilde \Rc^\iota_{j,k}(z) \pppg A^{-\d}}_{\Lc(L^2)} \leq C \abs z ^{\n_0 + \Vc_{j,k}}.
\]
% and if $j = 0$:
% \[
% \forall z \in \C_+, \quad  \nr{\pppg A^{-\d}   \tilde \Rc^\iota_{0,k}(z) \pppg A^{-\d}} \leq C \abs z ^{\Vc_{0,k}},
% \]
\end{proposition}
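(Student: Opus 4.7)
\emph{Plan.} The plan is to derive the estimate from Theorem~\ref{th-estim-insert2} applied to the rescaled family $(\Ptau)_{z}$, with the generator of dilations $A$ as a $z$-independent conjugate operator and the $\tilde \Phi_l(z)$ as inserted factors. Using $\RP(-\bar z) = \RP(z)^*$ and the compatible behaviour of the $\Phi_l$ under the adjoint, we first reduce to $z \in \C_{+,+} \cup i\R_+^*$. When $\hat z^2$ stays away from $1$ (i.e.\ $\arg z$ is bounded below away from $0$ and $\pi$), the operator $\Ptau - \hat z^2$ will be uniformly invertible between suitable Sobolev spaces and the bound will follow directly from Propositions~\ref{prop-Ptau} and \ref{prop-Th}; the critical regime is thus $\hat z^2$ close to $1$.

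\emph{Mourre framework.} On a small compact neighbourhood $I$ of $1$, contained in a larger open $J$, I would verify the conditions of Definition~\ref{defconjunif} for the family $(\Ptau)$ and the operator $A$, with uniform lower bound $\a_z = \a > 0$. The computation $[-\D, iA] = -2\D$ together with Proposition~\ref{prop-Ptau-comm-Hs} gives
\[
[\Re \Ptau, iA] = 2\, \Re \Ptau + O(\Nc_1) \quad \text{in } \Lc(H^1, H^{-1}),
\]
and sandwiching with $\1{I}(\Re \Ptau)$, the smallness of $\Nc_1$ yields the positive commutator bound \eqref{hyp-mourre} with uniform $\a$. The non-negative dissipative part $V_z = \Re(\hat z / |z|)\, a_{\iota, |z|}$ is uniformly relatively bounded by $\Re \Ptau$ thanks to Proposition~\ref{prop-dec-sob}, and Proposition~\ref{prop-Ptau-comm-Hs} provides the higher commutator estimates needed for $N$-smoothness in the sense of Definition~\ref{defconjunifdissn}, for $N$ as large as wanted.

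\emph{Inserted factors and conclusion.} For each $l$ I would check that $\tilde \Phi_l(z) \in \opinsert_N(\Ptau, A)$ with norm $\lesssim |z|^{\n_l}$: the estimates on $\tilde \Phi_l(z)(\Ptau - i)^{-1}$, $(\Ptau - i)^{-1}\tilde \Phi_l(z)$ and their iterated $\ad_A^m$-commutators reduce, via Corollary~\ref{cor-Phi-nu}, to bounds of the form $\nr{\ad_A^m \tilde \Phi_l(z)}_{\Lc(H^{s+1}, H^s)} \lesssim |z|^{\n_l}$ on a suitable range of $s$, which when combined with the elliptic gain of $(\Ptau - i)^{-1}$ from Proposition~\ref{prop-Ptau} yield the required $L^2$-control. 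Applying Theorem~\ref{th-estim-insert2} to the product $\tilde \Phi_0(z)\, \tRtau\, \tilde \Phi_{j+1}(z)\, \tRtau \cdots \tilde \Phi_k(z)\, \tRtau$ will then produce the announced bound. The hard part will be the bookkeeping of Sobolev indices: each rescaled $\tilde \Phi_l(z)$ only acts as a $|z|^{\n_l}$-sized derivation on a restricted range of Sobolev spaces, and the elliptic gain of $(\Ptau + 1)^{-1}$ is likewise restricted, so each inserted factor must be carefully absorbed by a neighbouring resolvent so as to remain within the admissible range, while producing exactly the total weight $|z|^{\n_0 + \Vc_{j,k}}$ and matching the constraint $k - j \leq n + 1$ with the threshold $\d > n + \tfrac 12$ of Theorem~\ref{th-estim-insert2}.
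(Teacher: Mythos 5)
Your proof takes essentially the same approach as the paper: verify the uniform dissipative Mourre assumptions for the rescaled family $(\Ptau)$ with the generator of dilations $A$ (using the smallness of $\Nc_1$ and Proposition~\ref{prop-Ptau-comm-Hs} for the commutator hypotheses and positivity), identify each $\tilde \Phi_l(z)$ as an element of $\opinsert_N(\Ptau, A)$ of size $|z|^{\n_l}$ via Corollary~\ref{cor-Phi-nu} and Proposition~\ref{prop-Ptau}, and conclude with Theorem~\ref{th-estim-insert2}, reducing to $\Re z \geq 0$ by adjointness. The only cosmetic difference is that you make the case $\hat z^2$ away from $1$ explicit at the outset, whereas the paper leaves it implicit.
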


\begin{proof} 
Let $J\subset \R_+^*$ be a compact neighborhood of 1. We have to prove that $A$ is uniformly conjugate to $\Ptau$ on $J$ with constant lower bound, that $\Ptau$ is uniformly $(n+1)$-smooth with respect to $A$ and that for all $k \in \N$ there exists $C\geq 0$ such that for all $z \in \C_{+}$ the pair $(\tilde \Phi_{k}(z),\Id)$ belongs to $\opinsert_{n+1}(\Ptau , A)$ with
\[
\nr{\tilde \Phi_k(z)}_{\Ptau , A,n+1} \leq C \abs z ^{\n_k}
\]
(see Definition \ref{def-415}). This last statement is a direct consequence of Corollary \ref{cor-Phi-nu} and Proposition \ref{prop-Ptau}.
Assumptions ($a$) and ($b$) of  Definition \ref{defconjunif} can be checked as for intermediate frequencies, and assumptions $(c_N)$ and $(d_N)$ of Definition \ref{defconjunifdissn} are consequences of Proposition \ref{prop-Ptau-comm-Hs}.
Let us now check the main assumption of Definition \ref{defconjunif}. Let $\Ptaur = \Ptauo + \Im(\hat z) \abs z \inv a_\iota$ be the real part of $\Ptau$. 
Using notation \eqref{gamma-m} we have for $m \in \{0,1\}$ and $u \in \Sc(\R^d)$
\begin{align*}
\innp{ \ad_{iA}^m\big(\Ptauo\big)u}{u}_{L^2}- 2^m\nr{\nabla u}_{L^2}^2 =  \sum_{j,k} \innp{\big(\g_{j,k, \abs z}^{(m)}- 2^m \d_{j,k}\big) D_k u}{D_j u}  \gtrsim  - \Nc_m  \nr{\nabla u}^2.
\end{align*}
This proves that if $\Nc_1$ is small enough we have
\begin{align*}
 \ad_{iA}(\Ptauo)  \geq -\D \geq \frac 12 \Ptauo
\end{align*}
in the sense of quadratic forms on $\Sc(\R^d)$. It remains true on $H^2$ which is the domain of the closures of these operators.
On the other hand for $m \in \{0,1\}$ we have
$
\Im(\hat z) \abs z \inv \nr{a^{(m)}_{\iota, \abs z} u}_{L^2} \leq C \Nc_m \nr u_{H^1},
$
so if $\Nc_1$ is small enough we obtain
\begin{align*}
\1 J \big(\Ptaur\big) \ad_{iA}\big(\Ptaur\big) \1 J \big(\Ptaur\big)
& \geq \frac 14 \1 J \big(\Ptaur\big)  \Ptauo  \1 J \big(\Ptaur\big) \\
& \geq \frac 18 \1 J \big(\Ptaur\big)  \Ptaur \1 J \big(\Ptaur\big) \\
& \geq \frac {\inf J} 8 \1 J \big(\Ptaur\big).
\end{align*}
Thus we can apply Theorem \ref{th-estim-insert2} to obtain the result for $z \in \C_{+,+}$. We can similarly prove an analogous result with inserted factors in reversed order and take the adjoint to get the result for $z \in \C_{-,+}$. It only remains to remark that the result is clear when $\Re (z) = 0$.
\end{proof}

\begin{proposition} \label{prop-estim-tildeRc2}
Let $n \in \N$ and $\e > 0$. 
\begin{enumerate} [(i)]
\item Let $\d$ be greater than $n + \frac 12$ if $n \geq \frac d 2$ and greater than $n + 1$ otherwise. Then there exists $C \geq 0$ such that for all $z \in \C_ {+}$ we have
\[
\nr{\pppg x ^{-\d}   \Phi_0\Rc^\iota_{0,n} (z) \pppg x ^{-\d}}_{\Lc(L^2)} \leq C \left( 1 + \abs{z}^{d-\e-2(n+1) + \n_0 + \Vc_{0,n}}\right). 
\]
\item Assume that $2n \neq d-2$ or $\Vc_{0,n} \neq 0$. Let $\d_1,\d_2 > n + \frac 12$ be such that $\d_1 + \d_2 > \min(2(n+1), d)$. Then there exists $C \geq 0$ such that for all $z \in \C_ {+}$ we have
\[
\nr{\pppg x ^{-\d_1} \Rc^\iota_{0,n} (z) \pppg x ^{-\d_2}}_{\Lc(L^2)} \leq C \left( 1 + \abs{z}^{d-2(n+1) + \Vc_{0,n}}\right).
\]
\end{enumerate}
\end{proposition}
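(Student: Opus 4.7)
The plan is to exploit the scaling relation
\[
 \Phi_0 \Rc^\iota_{0,n}(z) = \abs z^{-2(n+1)}\, e^{iA\ln\abs z}\, \tilde\Phi_0(z)\, \tilde\Rc^\iota_{0,n}(z)\, e^{-iA\ln\abs z},
\]
which follows from $\RPz = \abs z^{-2} e^{iA\ln\abs z} \tRtau e^{-iA\ln\abs z}$ together with the conjugation rules for each $\Phi_j$. It is thus enough to bound the sandwich $\pppg x^{-\d} e^{iA\ln\abs z} \tilde\Phi_0(z) \tilde\Rc^\iota_{0,n}(z) e^{-iA\ln\abs z} \pppg x^{-\d}$ by $C(\abs z^{2(n+1)} + \abs z^{d-\e+\n_0+\Vc_{0,n}})$. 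To produce such a bound, I would iterate the identity $\tRtau = (\Ptau+1)\inv + (1+\hat z^2)(\Ptau+1)\inv \tRtau$ (and its symmetric version) on each of the $n+1$ factors $\tRtau$ appearing in $\tilde\Rc^\iota_{0,n}(z)$. This produces a finite linear combination, with coefficients bounded powers of $1+\hat z^2$, of two families of terms: (a)~purely elliptic terms $\tThiota_{0;\a_0,\ldots,\a_n}(z)$, in which the total order $m_e:=\sum_l \a_l$ can be taken arbitrarily large, and (b)~Mourre-type terms $\tThiota_{0;\a_0,\ldots,\a_j}(z)\, \tilde\Rc^\iota_{j,k}(z)\, \tThiota_{k;\a_k,\ldots,\a_n}(z)$ with $j\leq k$ and with the outer powers $\a_0,\a_n$ also arbitrarily large.

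For the elliptic terms, I would combine Proposition~\ref{prop-Th} applied to $\tilde\Phi_0(z)\tThiota_{0;\a_0,\ldots,\a_n}(z)$ (viewed as a bounded operator $H^{s^*}\to H^s$ with norm $\lesssim \abs z^{\n_0+(1+\s)\Vc_{0,n}}$ as soon as $m_e$ is large enough) with the Sobolev-dispersive estimates
\[
 \nr{\pppg x^{-\d}\, e^{iA\ln\abs z}}_{\Lc(H^s,L^2)} \lesssim \abs z^s, \qquad \nr{e^{-iA\ln\abs z}\, \pppg x^{-\d}}_{\Lc(L^2,H^{s^*})} \lesssim \abs z^{-s^*},
\]
which will come from Proposition~\ref{prop-A}(iii), the Sobolev embedding $H^s\subset L^{2d/(d-2s)}$ and H\"older's inequality, and are valid whenever $\d > \max(s,-s^*)$. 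Pushing $s$ and $-s^*$ as close to $d/2$ as the hypothesis on $\d$ allows produces the desired $\abs z^{d-\e+\n_0+\Vc_{0,n}}$ contribution; when the Sobolev embedding saturates first one obtains instead the $\abs z^{2(n+1)}$ contribution.

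For the Mourre terms, I would insert $\pppg A^{-\d'}\pppg A^{\d'}$ on each side of $\tilde\Rc^\iota_{j,k}(z)$. The inner piece $\pppg A^{-\d'}\,\tilde\Rc^\iota_{j,k}(z)\,\pppg A^{-\d'}$ will be controlled by $\abs z^{\Vc_{j,k}}$ thanks to Proposition~\ref{prop-estim-res-amort} (using the smallness of $\Nc_1$), while the two outer pieces
\[
 \pppg x^{-\d}\, e^{iA\ln\abs z}\, \tilde\Phi_0(z)\, \tThiota_{0;\a_0,\ldots,\a_j}(z)\, \pppg A^{\d'}\quad\text{and}\quad \pppg A^{\d'}\, \tThiota_{k;\a_k,\ldots,\a_n}(z)\, e^{-iA\ln\abs z}\, \pppg x^{-\d}
\]
will be handled respectively by parts (i) and (ii) of Proposition~\ref{prop-Th3} (choosing $\a_0,\a_n$ large enough to satisfy the hypotheses there). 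Multiplying the three factors yields an $\abs z^{d-2\e+\n_0+\Vc_{0,n}}$ bound, which closes part~(i) after relabelling $\e$.

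For part~(ii) the same decomposition applies to $\Rc^\iota_{0,n}(z)$, now with the asymmetric weight pair $\d_1,\d_2$ satisfying $\d_1+\d_2 > \min(2(n+1),d)$. The key to removing the $\e$ loss will be the parameter $\s\in\,]0,\min(1,\rho)/2[$ in Proposition~\ref{prop-Th}, which allows one to view each inserted $a_{\iota,\abs z}^{\n_l}$ as a fractional differentiation of order $1+\s$, paying an extra $\abs z^{\s\n_l}$ but reducing the required Sobolev exponent by the same amount, and thereby avoiding the critical value $s=d/2$. I expect the hard part to be precisely the configuration in which this $\s$-trick fails: when $\Vc_{0,n}=0$, one of the purely elliptic contributions equals $\abs z^{-2(n+1)}(\Ptau+1)^{-(n+1)}$ and sits exactly at the borderline Sobolev index $s=d/2$, which can be escaped only when $2n\neq d-2$. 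This is precisely the assumption of Theorem~\ref{th-low-freq-bis} and of~(ii); outside this critical configuration, the combined flexibility of the $\a_l$'s, of $\s$ and of the unbalanced weights delivers the sharp bound $1+\abs z^{d-2(n+1)+\Vc_{0,n}}$.
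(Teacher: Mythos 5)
Your proposal follows essentially the same approach as the paper: rescale by $e^{\pm iA\ln|z|}$, decompose $\tilde\Phi_0\tilde{\mathcal R}^\iota_{0,n}$ by iterating the identity $\tilde R = (\tilde P_z+1)^{-1} + (1+\hat z^2)(\tilde P_z+1)^{-1}\tilde R$ into purely elliptic terms $\tilde\Theta_{0;\alpha_0,\ldots,\alpha_n}$ and Mourre-type terms with $\tilde{\mathcal R}^\iota_{j,k}$ sandwiched between $\tilde\Theta$ factors, then estimate the former via Proposition~\ref{prop-Th} together with the Sobolev-dispersive bounds for $e^{iA\ln|z|}$ and the latter via Proposition~\ref{prop-estim-res-amort} sandwiched between the weighted mapping bounds of Proposition~\ref{prop-Th3}; finally tune $s_1,s_2$ and, for part (ii), the fractional shift $\sigma$ to avoid the critical Sobolev index $d/2$. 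Your identification of $\Vc_{0,n}=0$, $2n=d-2$ as exactly the configuration where the $\sigma$-trick is unavailable matches the paper's case analysis; the only cosmetic slips (the elliptic term is not literally $(\tilde P_z+1)^{-(n+1)}$ after iterating, and your $|z|^{d-2\e}$ vs.\ the paper's $|z|^{d-\e}$) are harmless.
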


% 
% \begin{proposition} \label{prop-estim-tildeRc2}
% Let $n \in \N$ and $\e > 0$. Let $\d_1,\d_2 > n + \frac 12$ be such that $\d_1 + \d_2 > \min(2(n+1), d)$. Then there exists $C \geq 0$ such that for all $z \in \C_ {+}$ we have
% \[
% \nr{\pppg x ^{-\d_1} \Rc^\iota_{0,n} (z) \pppg x ^{-\d_2}}_{\Lc(L^2)} \leq C \left( 1 + \abs{z}^{d-2(n+1) + \Vc_{1,n}}\right)
% \]
% if $2n \neq d-2$ or $\Vc_{1,n} \neq 0$, and
% \[
% \nr{\pppg x ^{-\d_1}  \Rc^\iota_{0,n} (z) \pppg x ^{-\d_2}}_{\Lc(L^2)} \leq C  \abs{z}^{-\e}
% \]
% if $2n = d-2$ and $\Vc_{1,n} = 0$. Moreover we have
% \[
% \nr{\pppg x ^{-\d_1} \Phi_0 \Rc^\iota_{0,n} (z) \pppg x ^{-\d_2}}_{\Lc(L^2)} \leq C \left( 1 + \abs{z}^{d-2(n+1) + \n_0 + \Vc_{1,n} - \e }\right).
% \]
% \end{proposition}

\begin{proof}
We prove both statements at the same time, using the notation $\d_1 = \d_2 = \d$ for the first case. We have $\n_0 = 0$ and $\Phi_0 = 1$ in the second case.
We recall that for all $z \in \C_{+}$ we have
\begin{equation} \label{Rc-tRc}
\pppg x ^{-\d_1}  \Phi_0 \Rc^\iota_{0,n} (z)\pppg x ^{-\d_2} = \abs z^{-2(n+1)} \pppg x ^{-\d_1} e^{iA \ln \abs z}  \tilde \Phi_0(z) \tilde \Rc^\iota_{0,n} (z) e^{-iA \ln \abs z}  \pppg x ^{-\d_2}.
\end{equation}
 As we did for $\Rc_{0,n}(z)$ in the proof of Proposition \ref{prop-inter-freq}, we can check by induction on $m \in \N$ that $  \tilde \Phi_0(z)\tilde \Rc^\iota_{0,n}(z)$ can be written as a sum of terms either of the form $\big( 1 + \hat z ^2 \big)^\b  \tilde \Phi_0(z)  \tThiota_{0 ; \a_0,\dots,\a_n}(z)$ with $\b \in \N$, $\a_0,\dots, \a_n \in \N^*$, or 
\begin{equation} \label{terme2p}
\big( 1 + \hat z ^2 \big)^\b  \tilde \Phi_0(z)  \tThiota_{0 ; \a_0,\dots,\a_j} (z)  \tilde \Rc^\iota_{j,k}(z)  \tThiota_{k ; \a_{k},\dots , \a_n}(z) 
\end{equation}
where $\b \in \N$,  $j,k\in\Ii 0 n$, $j\leq k$, $\a_0,\dots, \a_{j-1} , \a_{k+1}, \dots , \a_n \in \N^*$, $\a_j,\a_{k} \in\N$, 
$
\sum_{l=0}^j \a_l \geq m$ and $\sum_{l=k}^{n} \a_l \geq m. 
$

\stepp
Let $s_1 \in \big[0, \frac d 2 + 1 \big[ \setminus \big\{ \frac d 2 \big\}$, $\tilde s_1 = \min \big( s_1-\n_0 , \frac d 2 \big)$, $s_2 = \tilde s_2 \in \big[0, \frac d 2 \big[$, and assume that $\d_1 > \tilde s_1$ and $\d_2 > \tilde s_2$. We have $H^{s_1-\n_0} \subset L^{\frac {2d}{d-2\tilde s_1}}$ and $L^{\frac {2d}{d+2\tilde s_2}} \subset H^{- s_2}$ with continuous embeddings. Moreover multiplication by $\pppg x ^{-\d_1}$ is bounded from $L^{\frac{2d}{d-2 \tilde s_1}}$ to $L^2$ and multiplication by $\pppg x ^{-\d_2}$ is bounded from $L^2$ to $L^{\frac {2d}{d+2 \tilde s_2}}$.
Let $\s_0$ be given by Proposition \ref{prop-Th} and $\s \in [0,\s_0]$ (we take $\s = 0$ if $\n_0 \neq 0$). Assume that $s_1+s_2 \leq 2(n+1) - (1+\s) \Vc_{0,n}$. According to Proposition \ref{prop-A} and Proposition \ref{prop-Th} we have
\begin{eqnarray*}
\lefteqn{ \nr{\pppg x ^{-\d_1} e^{iA \ln \abs z}  \tilde \Phi_0(z)  \tThiota_{0 ; \a_0,\dots,\a_n}(z) e^{-iA \ln  \abs z}  \pppg x ^{-\d_2}}_{\Lc(L^2)} }\\
&& \leq 
\nr{e^{iA \ln \abs z}}_{\Lc\big(L^{\frac {2d}{d-2 \tilde s_1}}\big)} 
\nr{ \tilde \Phi_0(z)}_{\Lc(H^{s_1},H^{s_1-\n_0})} \nr{\tThiota_{0 ; \a_0,\dots,\a_n}(z)}_{\Lc(H^{-s_2},H^{s_1})} 
\nr{ e^{-iA \ln \abs z}}_{\Lc\big(L^{\frac {2d}{d+2 \tilde s_2}}\big)} 
\\
&& \leq C \abs z ^{ \tilde s_1 + \tilde s_2  + \n_0 + (1+\s) \Vc_{0,n}}.
\end{eqnarray*}

\stepp 
We now consider a term of the form \eqref{terme2p} with $m$ large enough. 
According to Proposition \ref{prop-estim-res-amort} we have
\[
 \nr{\pppg A ^{-\d_1}  \tilde \Rc^\iota_{j,k}(z)   \pppg A ^{-\d_2}}_{\Lc(L^2)} \lesssim  \abs z ^{\Vc_{j,k}}.
\]
Then we apply Proposition \ref{prop-Th3} and obtain 
\begin{multline*}
{ \nr{\pppg x ^{-\d_1} e^{iA \ln  \abs z}  \tilde \Phi_0(z)  \tThiota_{0 ; \a_0,\dots,\a_j} (z)  \tilde \Rc^\iota_{j,k}(z)  \tThiota_{k ; \a_{k},\dots , \a_n}(z) e^{-iA \ln  \abs z}  \pppg x ^{-\d_2}}_{\Lc(L^2)}}\\
 \lesssim\abs z^{\tilde s_1  +\n_0 + \Vc_{0,j}} \abs z ^{\Vc_{j,k}} \abs z ^{\tilde s_2  + \Vc_{k,n}} \lesssim \abs {z}^{\tilde s_1 + \tilde s_2 + \n_0 +  \Vc_{0,n}}.
\end{multline*}

\stepp
Let us now choose $\s$, $s_1$ and $s_2$ more precisely. In case (i) we set $\s = 0$,
\[
s_2 = \min \left(n +1 - \frac {\n_0 + \Vc_{0,n} }2, \frac {d-\e} 2 \right), \quad \text{and} \quad s_1 = s_2 + \n_0.
\]
Then the conditions on $s_1$ and $s_2$ are satisfied and we have 
\[
\tilde s_1 + \tilde s_2 + \n_0 +  \Vc_{0,n} = \min\big(2(n+1), d-\e +\n_0 +\Vc_{0,n}\big)
\]
which, together with \eqref{Rc-tRc}, gives the first statement of the proposition.

\stepp
In case (ii) we set $\s = 0$ if $2n+2 - \Vc_{0,n} \neq d$ and we choose any $\s \in ] 0, \s_0]$ otherwise, so that $2n+2 - (1+\s) \Vc_{0,n} \neq  d$ in any case. Then, if $2n+2 - (1+\s) \Vc_{0,n} < d$ we choose $s_1 \in \big[0,\min\big(\d_1, \frac d 2 \big) \big[$ and $s_2 \in \big[0,\min\big(\d_2, \frac d 2 \big) \big[$ such that $s_1 + s_2 = 2n+2 - (1+\s) \Vc_{0,n}$. If $2n+2 - (1+\s) \Vc_{0,n} > d$ we consider 
\[
s_1 = s_2 \in \left] \frac d 2 , \min \left( \d_1,\d_2, \frac d 2 +1 , \frac {2n + 2 - (1+\s) \Vc_{0,n}} 2 \right) \right[.
\]
Thus we have $\tilde s_1 + \tilde s_2 + (1+\s) \Vc_{0,n} = \min\big(2(n+1), d +\Vc_{0,n}\big)$ and statement (ii) is proved.
\end{proof}

Finally we can prove Theorem \ref{th-low-freq-R0}:

\begin{proof} [Proof of Theorem \ref{th-low-freq-R0}]
Let us write $\Phi_0 R^{(m)}(z)$ as a linear combination of terms as given by Proposition \ref{prop-der-R2}. We consider such a term $\Phi_0  T(z)$ and use the notation of Proposition \ref{prop-der-R2}. $\Phi_0 T(z)$ is then of the form $z^k \Phi_0 \Rc_{0,n}(z)$ with $\Vc_{0,n} = j_1+\dots+j_n$. According to Proposition \ref{prop-estim-tildeRc2} we have
\begin{align*}
\nr{ \pppg x ^{-\d}  \Phi_0 T(z) \pppg x ^{-\d} }_{\Lc(L^2)}  \lesssim \abs z^k \left( 1 + \abs z^{d-\e - 2(n+1) +\n_0 + \Vc_{0,n}} \right) \lesssim 1 + \abs z^{m-2+\n_0-\e}.
\end{align*}
For the second part we have $\n_0 = 0$. If $2(n+1) \neq  d$ or $\Vc_{0,n} \neq 0$, then we apply the second part of Proposition \ref{prop-estim-tildeRc2} to conclude. If $2(n+1) = d$ and $\Vc_{0,n} = 0$ then $m \neq d-2$ (since $d$ is even) and hence $k > 0$. Then as above
\begin{align*}
\nr{ \pppg x ^{-\d}  T(z) \pppg x ^{-\d} }_{\Lc(L^2)}  \lesssim \abs z^k \left( 1 + \abs z^{d-\e - 2(n+1)} \right) \lesssim \abs z^k \left( 1 + \abs z^{-\e} \right) \lesssim \abs z^{k-\e} \lesssim 1.
\end{align*}
% 
% 
%  and $2(n+1)+1$.
% If $2(n+1) = d$ and $\Vc_{0,n} = 0$ then $d$ is even and $m = 2n - k  = d-2 -k$. So assume that $d$ is odd or $m \neq d-2$.
% 
% If $m < d-2$ then $k > 0$ and as above 
% \begin{align*}
% \nr{ \pppg x ^{-\d}  \Phi_0 T(z) \pppg x ^{-\d} }_{\Lc(L^2)}  \lesssim \abs z^k \left( 1 + \abs z^{d-\e - 2(n+1) +\n_0 + \Vc_{0,n}} \right) \lesssim .
% \end{align*}
% 
% 
% 
% 
% 
% 
% Assume that $2n = d-2 + \n_0$ and $\Vc_{1,n} = 0$. Then $d+\n_0$ is even and $m = 2n - k = d-2 + \n_0 -k \leq d -2 + \n_0$ and according to Proposition \ref{prop-estim-tildeRc2} we have
% %
% \begin{align*}
% \nr{ \pppg x ^{-\d}  \Phi_0 T(z) \pppg x ^{-\d} }_{\Lc(L^2)}  \lesssim \abs z^{k-\e } 
% \end{align*}
% This gives \eqref{eq-low-freq2}/\eqref{eq-low-freq4} if $m=d-2 + \n_0$ and \eqref{eq-low-freq1}/\eqref{eq-low-freq3} otherwise. If $2n\neq d-2 + \n_0$ or $\Vc_{1,n} \neq 0$ then Proposition \ref{prop-estim-tildeRc2} gives
% \[
% \nr{ \pppg x ^{-\d} T \pppg x ^{-\d} }_{\Lc(L^2)}  \lesssim \abs z^{d-2 - 2n +\Vc_{0,n} +k} \lesssim \abs z^{ d-2-m + \n_0},
% \]
% which is \eqref{eq-low-freq1}/\eqref{eq-low-freq3}.
% %
\end{proof}

\subsection{General long-range perturbations} \label{sec-non-small}

We now use Theorem \ref{th-low-freq-R0} to prove Theorems \ref{th-low-freq} and \ref{th-low-freq-bis}. Here we basically follow the same strategy as in \cite{bouclet11} but this approach has to be modified since we are dealing with non self-adjoint operators.\\

For $z \in \C_{+}$ and $\p \in C_0^\infty(\R)$ we set
\[
%S_\p (z) = K \RPz \p(\Ho ), \quad \tilde 
S_\p (z) =  K_z \RPz \p(\Ho ).
\]
According to the resolvent equation 
\[
R(z) = R_\iota(z)  - R(z) K_z R_\iota(z)
\]
we have
\begin{equation} \label{res-eq-S}
R(z) \p(\Ho) = \RPz \p(\Ho) - R(z) S_\p(z).
\end{equation}

\begin{proposition} \label{prop-BS}
Let $\p \in C_0^\infty(\R,[0,1])$, $n \in \N$ and $\e > 0$.
\begin{enumerate}[(i)]
\item
Let $\d$ be greater than $n + \frac 12$ if $n \geq \frac d 2$ and greater than $n + 1$ otherwise, and $M \geq 0$. Then there exists $C \geq 0$ such that for all $z \in  \C_{+}$ with $\abs z \leq 1$ we have
\begin{eqnarray}
 \nr{ \pppg x ^M S_\p^{(n)}(z) \pppg x ^{-\d}} _{\Lc(L^2,H^{-1})} \leq C \left( 1 + \abs z ^{d-1-n-\e} \right). \label{borne-importante}
\end{eqnarray}
\item
Assume that $d$ is odd or $n\neq d-2$. Let $\d_2 > n + \frac 12$. Then there exists $C \geq 0$ such that for all $z \in \C_ {+}$ with $\abs z \leq 1$ we have
\[
\nr{ \pppg x ^M S_\p^{(n)}(z) \pppg x ^{-\d_2}} _{\Lc(L^2,H^{-1})} \leq C \left( 1 + \abs{z}^{d-2-n}\right).
\]
\end{enumerate}
\end{proposition}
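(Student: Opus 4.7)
My strategy is to reduce the estimate to the weighted bounds on $R_\iota^{(n)}(z)$ supplied by Theorem \ref{th-low-freq-R0}, exploiting the compact support of the coefficients of $K_z$ to absorb the weight $\pppg x^M$. Since $K_z = K_0 - iz a_0$ is affine in $z$, Leibniz's rule yields
\[
S_\p^{(n)}(z) = K_z R_\iota^{(n)}(z) \p(H_0) - in\, a_0\, R_\iota^{(n-1)}(z) \p(H_0),
\]
the second term being absent for $n = 0$. Picking $\chi \in C_0^\infty(\R^d)$ equal to $1$ on a neighborhood of the (fixed compact) support of the coefficients of $K_z$, the outputs $K_z v$ and $a_0 w$ are supported where $\chi = 1$, so $\pppg x^M$ contributes only a constant, and writing $K_z = \sum_{j,k} D_j \g^0_{j,k} D_k - iz a_0$ puts at most one derivative in the $H^{-1}$-duality:
\[
\nr{\pppg x^M K_z v}_{H^{-1}} \lesssim_M \nr{\chi \nabla v}_{L^2} + \abs z \nr{\chi v}_{L^2}, \quad \nr{\pppg x^M a_0 w}_{H^{-1}} \lesssim_M \nr{\chi w}_{L^2}.
\]

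A standard Helffer--Sj\"ostrand / pseudodifferential argument using the ellipticity of $H_0$ shows that $\pppg x^s \p(H_0) \pppg x^{-s} \in \Lc(L^2)$ for every $s \in \R$, so $\p(H_0) \pppg x^{-\d} = \pppg x^{-\d} \Psi_\d$ with $\Psi_\d$ bounded on $L^2$. Using the compactness of $\supp\chi$ to replace $\chi$ by $\pppg x^{-\d'}$ with $\d' \geq 0$ freely chosen, the problem reduces to uniform bounds on $\nr{\pppg x^{-\d'} R_\iota^{(l)}(z) \pppg x^{-\d}}_{\Lc(L^2)}$ for $l \in \{n-1, n\}$, with and without a gradient. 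For part (i), Theorem \ref{th-low-freq-R0}~(i) applied symmetrically at weight $\d$ produces $1 + \abs z^{d-1-n-\e}$ from both the gradient bound on $R_\iota^{(n)}$ and the $L^2$ bound on $R_\iota^{(n-1)}$ (whose exponent rewrites as $d-2-(n-1)-\e = d-1-n-\e$), giving the announced inequality.

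For part (ii), the non-gradient $L^2$-pieces are handled by Theorem \ref{th-low-freq-R0}: for $R_\iota^{(n)}$ part~(ii) applies (choosing $\d_1 := \d'$ large enough that $\d_1 + \d_2 > \min(2(n+1), d)$, the assumption on $d, n$ being exactly that of the present statement), while for $R_\iota^{(n-1)}$ either part~(ii) or part~(i) with any $\e < 1$ yields an exponent dominating $\abs z^{d-2-n}$ on $\abs z \leq 1$. The genuinely new difficulty is the gradient contribution $\nr{\chi \nabla v}_{L^2}$ with $v = R_\iota^{(n)}(z) \p(H_0) \pppg x^{-\d_2} u$, as Theorem~(ii) has no gradient counterpart. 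I plan to circumvent it by interior elliptic regularity for $P_z$: since $P_z$ is a small long-range perturbation of $-\D$ (with an additional bounded $-iza_\iota$), one has
\[
\nr{\chi \nabla v}_{L^2} \lesssim \nr{\tilde \chi v}_{L^2} + \nr{\tilde \chi P_z v}_{L^2}
\]
for a slightly larger cutoff $\tilde \chi$. Differentiating $P_z R_\iota(z) = 1 + z^2 R_\iota(z)$ in $z$ expresses $P_z R_\iota^{(n)}(z)$ as a linear combination of terms $z^k R_\iota^{(l)}(z)$ and $z^k a_\iota R_\iota^{(l)}(z)$ with $l \leq n$: the $l = n$ term is covered by Theorem~(ii), while the $l < n$ terms admit an $\e < n - l$ loss and are handled by Theorem~(i) (whose weight requirement $\d_2 > l + 1$ is automatic since $\d_2 > n + \tfrac 12 > l + 1$).

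The main obstacle, as just explained, is the absence of a gradient version of Theorem \ref{th-low-freq-R0}~(ii); the proposed workaround through interior ellipticity of $P_z$, combined with Theorem~(ii) for the $l = n$ contribution and Theorem~(i) for the $l < n$ ones, is what allows reaching the sharp power $\abs z^{d-2-n}$ in part (ii) without any $\e$-loss. Its success relies crucially on the smallness of $P_z - (-\D)$ (controlled by $\Nc_0$) to invoke standard uniform interior $H^2$-estimates on $\supp\tilde\chi$.
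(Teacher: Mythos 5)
Your treatment of part (i) matches what the paper (in its two-sentence proof) has in mind: decompose $S_\p^{(n)}$ by Leibniz (using that $K_z=K_0-iza_0$ is affine in $z$), absorb $\pppg x^M$ with the compact support of $\g^0_{j,k}$ and $a_0$, place one of the two derivatives from $K_0=\sum D_j\g^0_{j,k}D_k$ into the $H^{-1}$ duality, and bound the remaining $D_kR_\iota^{(n)}$ via Theorem~\ref{th-low-freq-R0}(i) with $|\a|=1$, giving the exponent $d-1-n-\e$; the $-iza_0R_\iota^{(n)}$ term carries a compensating factor of $z$ and the $na_0R_\iota^{(n-1)}$ term the shift $n\to n-1$. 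The commutation of $\p(H_0)$ with polynomial weights is the paper's ``boundedness of $\p(H_0)$ on $L^{2,\delta}$''.

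For part (ii) you go further than the paper makes explicit, and this is the valuable part of your proposal. The paper only says the result ``follows from Theorem~\ref{th-low-freq-R0}'', but the sharp estimate in part (ii) of that theorem has no gradient counterpart, so the piece $\g^0_{j,k}D_kR_\iota^{(n)}(z)\p(H_0)\pppg x^{-\d_2}$ is not covered by a direct invocation. You correctly identify this as the crux and supply a valid fix: a uniform interior $H^1$-estimate for $P_z$ (uniform ellipticity of $P_0$ gives $\nr{\h\nabla v}^2\lesssim\nr{\tilde\h P_0 v}\,\nr{\tilde\h v}+\nr{\tilde\h v}^2$, hence by Young the inequality you wrote, and $P_0v=P_zv+iza_\iota v$ with $|z|\leq 1$ turns $P_0$ into $P_z$ up to $\nr{\tilde\h v}$), combined with $n$-fold differentiation of $P_zR_\iota(z)=I+z^2R_\iota(z)$, which converts $\tilde\h P_zR_\iota^{(n)}(z)$ into a linear combination of $\tilde\h z^kR_\iota^{(l)}(z)$ and $\tilde\h a_\iota R_\iota^{(n-1)}(z)$ with $l\leq n$. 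Your bookkeeping is right: the $l=n$ term carries a prefactor $z^2$ and is handled by the sharp Theorem~\ref{th-low-freq-R0}(ii) with $\d_1$ large, while the $l<n$ terms only need Theorem~\ref{th-low-freq-R0}(i) with $\e<n-l$, whose weight requirement $\d_2>l+1$ is automatic from $\d_2>n+\tfrac12$; thus the parity condition on $(d,n)$ is only invoked at order $n$, exactly as the hypothesis requires, and the prefactors $|z|^k$ (with $|z|\leq 1$) leave the exponent $d-2-n$ intact. So your route is correct and substantially more explicit than the paper's terse justification, which is genuinely useful here. A minor notational point: rename your auxiliary cutoff $\h$, since $\h$ is already fixed by the paper to define $\g^0_{j,k}$ and $a_0$.
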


We will see later the importance in (\ref{borne-importante}) of having the estimates  in term of $ \abs z ^{d-1-n-\e} $ rather than $ \abs z ^{d-2-n-\e} $. The basic reason why we have such bounds is that, in the expression of $ K_z $ defined in (\ref{defKz}), one term carries an additional power of $z$ and the other one carries derivatives $ D_k $ which allows to use the estimates \eqref{eq-low-freq3} with $\abs \a = 1$.% rather than (\ref{eq-low-freq1}).

% \bigskip

\begin{proof}
%For $z \in \C_{+,+}$ we have 
%\[
% K_z \RPz = K_z \big(P_z - z^2 -i\big)\inv  \left( 1 - i \RPz \right).
%\]
%Since the coefficients of $K_z$ are compactly supported, the operator
%\[
% \pppg x ^{M} K_z \big( P_z - z^2 -i\big)\inv \pppg x ^{M},
%\]
%at least defined on $\Sc(\R^d)$, extends to a bounded operator on $L^2$ for any $M \in \R$ which is uniformly bounded in $\abs z \leq 1$. 
The result follows from Theorem \ref{th-low-freq-R0} and the boundedness of $\p(\Ho)$ as an operator on $L^{2,\d}$ for all $\d \in \R$. The reason why we obtain estimates in the $ {\mathcal L} (L^2,H^{-1}) $ topology is due to the fact we see $ \langle x \rangle^{M} D_j \gamma_{j,k}^0 D_k R_{\iota} (z) \psi (H_0) $ as bounded from $ L^{2,-\delta} $ to $ H^{-1} $ (rather than $ L^2 $) because of the derivatives $ D_j $ in the expression of $ K_0 $ (see (\ref{defKz})).
%We can proceed similarly when $\Re (z) < 0$ or $\Re (z) = 0$.
\end{proof}

Then we prove that if $\p$ is well-chosen then $S_\p(z)$ is in fact uniformly small (for $\abs z$ small) in some suitable sense:

\begin{proposition} \label{prop-b589}
Let $\e_1 > 0$, $\d > 2$ and $M \geq 0$. There exist a bounded neighborhood $\Uc$ of $0$ in $\C$ and $\p  \in C_0^\infty(\R)$ equal to 1 in a neighborhood of $0$ such that for all $z \in \Uc\cap \C_{+} $ we have
\[
\nr{ \pppg x ^M  S_\p (z) \pppg x ^{-\d}}_{\Lc(L^2)} \leq \e_1.
\]
\end{proposition}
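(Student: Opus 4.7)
The proof rests on two ingredients: continuity of $z \mapsto S_\p(z)$ at $z=0$ in a suitable operator topology, and smallness of $S_\p(0)$ through concentration of $\p$ near zero. Throughout, the key structural observation is that all coefficients of $K_z = -\divg(\h(G-I_d)\nabla) - iz\h a$ are supported in $\supp\h$, so $K_z u$ is itself supported in $\supp\h$ for every $u$. This localizes the output: since $\pppg x^M$ is bounded on $\supp\h$, we have
\[
\nr{\pppg x^M S_\p(z)\pppg x^{-\d}}_{\Lc(L^2)} \leq C_M \nr{S_\p(z)\pppg x^{-\d}}_{\Lc(L^2)},
\]
so it suffices to treat the case $M=0$.

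The plan is then to estimate via elliptic regularity for $\Pol$. Using $\Pol \RPz = I + z^2 \RPz$ and that $K_z$ is a second-order differential operator with compactly supported coefficients, one obtains, for a neighborhood $B'$ of $\supp\h$,
\[
\nr{S_\p(z)\pppg x^{-\d} v}_{L^2}
\lesssim \nr{\p(\Ho)\pppg x^{-\d} v}_{L^2(B')}
+ (1+\abs{z}^2)\nr{\RPz\p(\Ho)\pppg x^{-\d} v}_{L^2(B')}.
\]
The second term is uniformly bounded in $z$: writing $\chi_{B'} \lesssim \pppg x^{-\d_1}$ and using Theorem \ref{th-low-freq-R0}(ii) (applicable since $\d>2$ allows the choice $\d_1=1$, $\d_2=2$ with $\d_1+\d_2 > 2$), together with the boundedness of $\p(\Ho)$ on the weighted space $L^{2,-\d_2}$ (a standard consequence of Helffer--Sjöstrand), the term is controlled by $\nr{v}_{L^2}$. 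The Hölder continuity in $z$ of $\pppg x^{-\d_1}\RPz\pppg x^{-\d_2}$, inherited from the Mourre analysis underlying Theorem \ref{th-low-freq-R0}, combined with the affine $z$-dependence of $\Pol$ and $K_z$, upgrades this bound to Hölder continuity of the full composition up to $z=0$.

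It remains to make $\nr{\p(\Ho)\pppg x^{-\d}}_{L^2 \to L^2(B')}$ arbitrarily small by concentrating $\p$ near $0$. This is a compactness argument exploiting that $0$ is not an eigenvalue of $\Ho$: factor
\[
\chi_{B'}\p_\eta(\Ho)\pppg x^{-\d}
= \bigl(\chi_{B'}(1+\Ho)^{-N}\pppg x^{-\d}\bigr) \cdot \bigl((1+\Ho)^{N}\p_\eta(\Ho)\bigr),
\]
where the first factor is Hilbert--Schmidt (its kernel is given by a Bessel-type function of $x-y$ times $\pppg y^{-\d}$, with $\chi_{B'}$ truncating $x$ to a compact set) hence compact, while the second is uniformly bounded in $\eta$ and, as $\eta \to 0$, tends to $0$ in the strong operator topology (functional calculus plus absence of $0$-eigenvalue for $\Ho$). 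Since strong convergence composed with a fixed compact operator (on the appropriate side; use self-adjointness of $(1+\Ho)^N \p_\eta(\Ho)$ to take adjoints) yields operator-norm convergence, we conclude $\nr{\chi_{B'}\p_\eta(\Ho)\pppg x^{-\d}}_{\Lc(L^2)} \to 0$. Given $\e_1>0$, I would first pick $\p$ so that this term, together with the $z=0$ value of the second term, is at most $\e_1/2$, and then shrink $\Uc$ using the Hölder continuity to control the variation in $z$ by $\e_1/2$.

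The main obstacle is the interplay between the unboundedness of $\RPz$ (which forces the analysis into weighted spaces), the continuity in $z$, and the smallness from concentration of $\p$: one must verify that the constants in the weighted estimates and in the boundedness of $\p(\Ho)$ on weighted spaces can be controlled uniformly in $z \in \Uc$ and in $\p$, so that the two smallness conditions can indeed be combined sequentially.
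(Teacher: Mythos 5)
Your overall shape (continuity in $z$ near $0$ plus smallness at $z=0$ by concentrating $\p$) matches the paper's, and the reduction to $M=0$ via the compact support of the coefficients of $K_z$ is a clean simplification. The gap is in the elliptic-regularity step and what it leaves behind.

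You write
\[
\nr{S_\p(z)\pppg x^{-\d} v}_{L^2}
\lesssim \nr{\p(\Ho)\pppg x^{-\d} v}_{L^2(B')}
+ (1+\abs{z}^2)\nr{\RPz\p(\Ho)\pppg x^{-\d} v}_{L^2(B')},
\]
and then only argue that the second term is \emph{uniformly bounded}. But the conclusion of the proposition is $\nr{\pppg x^M S_\p(z)\pppg x^{-\d}}\leq\e_1$, so the second term must also be made small. In your last paragraph you assert that the ``$z=0$ value of the second term'' can be made small by shrinking $\supp\p$, but you never give an argument, and it is not a consequence of the compactness/strong-convergence argument you use for the first term: there $\p(\Ho)$ sits to the \emph{right} of the compact factor, whereas in $\chi_{B'}\RPz\p(\Ho)\pppg x^{-\d}$ the functional-calculus cutoff is sandwiched \emph{between} $\RPz$ and the weight, and $\RPz$ does not commute with $\Ho$. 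Moreover the factorization you propose, $[\chi_{B'}\RPz\pppg x^{-\d_2}][\pppg x^{\d_2}\p(\Ho)\pppg x^{-\d}]$, requires the second bracket to go to zero (not merely stay bounded) as $\supp\p\to\{0\}$, and that is exactly as hard as the original question; you also do not address whether its norm is even uniform as $\p$ concentrates, which you yourself flag as ``the main obstacle'' without resolving it.

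The paper avoids this entirely by choosing the ``wrong'' (singular) operator to invert against: instead of $(\Pol+1)^{-1}$, which is boundedly invertible but leaves a residual $\RPz$, it writes $\pppg x^M K_0 u\lesssim\nr{u}_{\dot H^2}\lesssim\nr{\Poo u}$ (Hardy plus Remark \ref{rem-lap-PO}) and then uses the exact identity
\[
\Poo\,\RP(i\m) \;=\; 1 \;-\; \m\,a_\iota\RP(i\m)\;-\;\m^2\,\RP(i\m),
\]
so that the ``$1$'' kills the resolvent outright, giving exactly the compactness-friendly term $\p(\Ho)\pppg x^{-\d}$, while the two remaining terms carry explicit factors of $\m$ and $\m^2$. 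The $\m$-term is then handled by the dissipative quadratic estimate (Proposition \ref{prop-res-diss-am}), giving $\m\nr{\sqrt{a_\iota}\RP(i\m)\pppg x^{-\d}}\lesssim\sqrt\m$, and the $\m^2$-term is absorbed using $\m^2\nr{\RP(i\m)}\leq 1$. The remaining difference $\RPz-\RP(i\m)$ is controlled by integrating Theorem \ref{th-low-freq-R0} in $\tau$. So the essential idea you are missing is the cancellation via the non-invertible $\Poo$ together with the quadratic estimate for the dissipative term; without it, you are left needing to prove smallness of a weighted resolvent term for which no mechanism is available.
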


\begin{proof} 
\stepp
Let $\t \in \R$, $\m > 0$ and $z = \t + i \m$. We can write
\begin{equation} \label{decomp-S}
\begin{aligned}
\pppg x ^M  S_\p (z) \pppg x ^{-\d}
& = \pppg x ^M  K_0 \RP(i\m)  \p(\Ho )  \pppg x ^{-\d}
   +\pppg x ^M  K_0 \big( \RPz - \RP( i\m)  \big)\p(\Ho ) \pppg x ^{-\d} \\
& \quad -i z \pppg x ^M  a_0 \RPz  \p(\Ho )  \pppg x ^{-\d}
\end{aligned}
\end{equation}
and estimate each term of the right-hand side.
% 
% 
% \stepp 

\stepp
Let us estimate the first term. According to the Hardy inequality we have for $u \in \Sc(\R^d)$
\[
\nr{\pppg x ^M K_0 u }_{L^2} \lesssim \sum_{j,k=1}^d \nr{\pppg x ^M (D_j \g^0_{j,k}) D_k u}_{L^2} + \nr{\pppg x ^M \g^0_{j,k} D_j  D_k u}_{L^2} \lesssim \nr{u }_{\dot H ^2}.  %\lesssim \nr{\D u }_{L^2}. %\leq C \nr{\Poo u}_{L^2}
\]
Thanks to Remark \ref{rem-lap-PO} we get
\[
\nr{\pppg x ^M K_0 u }_{L^2} \lesssim  \nr{ \Poo u}_{L^2}.
\]
Since for all $\m \in ]0,1]$ we also have
\[
 \Poo \RP(i\m) = 1 - \m a_\iota  \RP(i\m) - \m^2 \RP(i\m)
\]
we obtain
\begin{eqnarray*}
 \lefteqn{ \nr{\pppg x ^M K_0 \RP(i\m)  \p(\Ho) \pppg x ^{-\d}}_{\Lc(L^2)}  \lesssim \nr{\Poo \RP(i\m)  \p(\Ho) \pppg x ^{-\d}} }\\
&& \lesssim \nr{\p(\Ho) \pppg x ^{-\d}} +   \m  \nr{ a_\iota \RP(i\m)  \p(\Ho) \pppg x ^{-\d}} +  \m^2 \nr{ \RP(i\m)  \p(\Ho) \pppg x ^{-\d}}\\
&& \lesssim \nr{\p(\Ho) \pppg x ^{-\d}} \left( 1  + \m^2 \nr{\RP(i\m)} \right) + \m  \nr{ \sqrt {a_\iota} \RP(i\m) \pppg x^{-\d}}
%
%\\
% && \leq C \nr{\p(\Ho) \pppg x ^{-\d} } _{\Lc(L^2)} + C \sqrt \m \nr{\pppg x ^{-\d}  \RP(i\m)  \pppg x ^{-\d}}_{\Lc(L^2)}.
\end{eqnarray*}
We have
\[
 \m^2 \nr{\RP(i\m)} \leq 1,
\]
and according to Proposition \ref{prop-res-diss-am} applied to $\RP$ and Theorem \ref{th-low-freq-R0}:
\[
 \m \nr{ \sqrt {a_\iota} \RP(i\m) \pppg x ^{-\d}} \leq  \sqrt \m \nr{\pppg x ^{-\d}\RP(i\m)\pppg x ^{-\d}}^{\frac 12} \lesssim \sqrt \m,
\]
so
\[
  \nr{\pppg x ^M K_0 \RP(i\m)  \p(\Ho) \pppg x ^{-\d}}_{\Lc(L^2)} \lesssim \nr{\p(\Ho) \pppg x ^{-\d}} _{\Lc(L^2)} + \sqrt \m.
\]
Since $0$ is not an eigenvalue of $\Ho$, the right-hand side goes to 0 if the support of $\p$ shrinks to $\singl {0}$, and hence the left-hand side is less than $\e_1 / 3$ if $\p$ is well-chosen and $\m$ is small enough.

\stepp According to Theorem \ref{th-low-freq-R0}, and since $a_0$ is compactly supported, the third term can indeed be made as small as we wish if $\abs z$ is chosen small.

\stepp
It remains to estimate the second term of \eqref{decomp-S}. Since the operators $\pppg x ^{\d} \p(\Ho ) \pppg x ^{-\d}$ and $\pppg x ^M K_0 (P_{i\m} + 1)\inv \pppg x ^\d$ have bounded closures on $L^2$ (whose norms are uniform in $\m \in ]0,1]$), we only have to estimate
\[
\pppg x ^{-\d} (P_{i\m} +1) \big( \RPz - \RP(i\m)  \big) \pppg x^{-\d}
\]
in $\Lc(L^2)$ to conclude. We have
\begin{eqnarray*}
\lefteqn{(P_{i\m} +1) \big( \RPz - \RP(i\m)  \big)}\\
& \hspace {1cm} & = 1  + (1+z^2) \RPz +  i\t  a_\iota  \RPz - 1 - (1 - \m^2 ) \RP(i\m) \\
&& = (1 + z^2) \int _0^\t \frac d {ds }  \RP(s+i\m) \, ds +   (z^2 + \m^2) \RP(i\m) + i \t a_\iota   \RPz ,
\end{eqnarray*}
and hence we can conclude with Theorem \ref{th-low-freq-R0}.
\end{proof}

Now we can prove low frequency resolvent estimates in the general setting:

\begin{proof} [Proof of Theorems \ref{th-low-freq} and \ref{th-low-freq-bis}]
For $z \in \C_{+}$ we set $B(z) = \big( \Ho - z^2\big)\inv (1-\p)(\Ho)$. For any $\s \in \R$, the function $B$ and all its derivatives are bounded on $L^{2,\s}(\R^d)$ uniformly in $z \in \C_{+}$ close enough to 0.
Let $n\in\N$. If $n >0$ we assume that the first estimate of Theorem \ref{th-low-freq} is proved for all $m \in \Ii 0 {n-1}$ and we proceed by induction. Let $\s > n+ \frac 12$ if $n \geq \frac d 2$ and $\s > \max \big(n+  1 ,2 \big)$ otherwise. Let $\e_1 >0$. Let $\Uc$ and $\p$ be given by Proposition \ref{prop-b589} applied with $M= \s$. We can assume that $\bar \Uc \cap \supp (1-\p) = \emptyset$. 
According to \eqref{res-eq-S} and the resolvent identity between $\Ho$ and $\Hz$ we have
\begin{equation} \label{eq-R-RPz}
\begin{aligned}
R^{(n)}(z) 
& = \frac {d^n} {dz^n}  \big( \RPz \p(\Ho) - R(z) S_\p(z)  + B(z)+ iz R(z) a B(z) \big) \\
& = R_\iota ^{(n)}(z) \p(\Ho) - R^{(n)}(z) S_\p(z) + iz R^{(n)}(z) a B(z) + B^{(n)}(z) \\
& \quad + \sum_{k=0}^{n-1} C_n^k R^{(k)}(z) \frac {d^{n-k}}{dz^{n-k}} \big( S_\p(z) + izaB(z)\big)
\end{aligned}
\end{equation}
Here we observe that bounds on 
$$  R^{(k)}(z) \frac {d^{n-k}}{dz^{n-k}}  S_\p(z) =  R^{(k)}(z) \langle x \rangle^{-\delta} \langle x \rangle^{\delta} \frac {d^{n-k}}{dz^{n-k}}  S_\p(z) $$
 will rest on Proposition \ref{prop-BS} after the simple observation that $ L^2 \rightarrow L^2 $ estimates on $  R^{(k)}(z) \langle x \rangle^{-\delta} $, as $ z \rightarrow 0 $, can be easily be converted in $ H^{-1} \rightarrow L^2 $ estimates by using the resolvent identity to write
$$ R(z) \langle x \rangle^{-\delta} = R(i) \langle x \rangle^{-\delta} + R(z) \langle x \rangle^{-\delta} (1+a+z^2 + i z a) \left( \langle  x \rangle^{\delta} R(i) \langle x  \rangle^{-\delta} \right) $$
where the first term and the last bracket in the right hand side are bounded from $ H^{-1} $ to $ L^2 $ (and even $ H^1 $) by standard elliptic regularity.
Then, according to Theorem \ref{th-low-freq-R0}, Proposition \ref{prop-b589}, Proposition \ref{prop-BS} and the inductive assumption, there exists $C \geq 0$ such that for all $z \in \Uc \cap \C_{+}$ we have
\[
\nr{\pppg x ^{-\d}  R^{(n)}(z) \pppg x ^{-\s} } \leq C \left( 1 + \abs z ^{d-2- n- \e} \right) + C (\e_1 + \abs z) \nr{\pppg x ^{-\d}  R^{(n)}(z) \pppg x ^{-\s} }  . 
\]
If $\e_1$ and $\abs z$ are small enough we get
\[
\nr{\pppg x ^{-\d}  R^{(n)}(z) \pppg x ^{-\s}} \leq  C\left( 1 + \abs z ^{d-2- n- \e} \right) . 
\]
Then, according to \eqref{res-eq-S} and Proposition \ref{prop-BS}:
\begin{eqnarray*}
\lefteqn{\nr{\pppg x ^{-\d} R^{(n)}(z) \pppg x ^{-\d}} (1 - C \abs z) }\\
&& \leq C \left( 1 + \abs z ^{d-2- n- \e} \right) + C \nr{\pppg x ^{-\d}  R^{(n)}(z) \pppg x ^{-\s}} \nr{\pppg x^{\s} S_\p(z) \pppg x ^{-\d}}\\
&& \leq C\left( 1 + \abs z ^{d-2- n- \e} \right),
\end{eqnarray*}
which gives the result for $\abs z$ small enough. Theorem \ref{th-low-freq-bis} and the second estimate of Theorem \ref{th-low-freq} are proved similarly by using crucially in the latter case the bounds (\ref{borne-importante}).
\end{proof}

\section{High frequency estimates} \label{sec-high-freq}

Let us now discuss high frequency estimates. For $h > 0$ and $\s \in \C_{+}$ we set 
\begin{equation} \label{def-Hhs}
\hhs = h^2 \Ho -ih \s a(x) .
\end{equation}
For $z \in \C_+$, $h = \abs z \inv$ and $\s = hz$ we have
\begin{equation} \label{R-hhs}
 R(z) = h^2 (\hhs - \s^2)\inv.
\end{equation}

% 
% 
% , it is then a direct consequence of the following result:
% 
% 
% 
% 
% \begin{theorem} \label{th-high-freq-h}
% Assume that every bounded geodesic goes through the damping region. Let $n \in \N$ and $\d > n + \frac 12$. Let $a_1,\dots , a_n \in \symb_0(\R^d)$. Let $\g \in ]0,1]$. Then there exist $h_0 > 0$ and $c\geq 0$ such that for $h \in ]0,h_0]$ and $\z , \s \in \C_{+,+}$ with $\Re \z, \Re \s \in [\g,\g\inv]$ and $\Im \s \in [0,\g\inv]$ we have
% \[
% \nr{ \pppg x ^{-\d} (\hhs - \z)\inv a_1(x) (\hhs- \z)\inv \dots a_n(x) (\hhs- \z)\inv  \pppg x ^{-\d } }_{\Lc(L^2)} \leq \frac c {h^{n+1}}.
% \]
% \end{theorem}

To prove Theorem \ref{th-high-freq} we use again the uniform and dissipative version of Mourre's commutators method developed in Section \ref{sec-Mourre}. We are now in a semiclassical setting, and the proof relies on semiclassical pseudo-differential calculus. We recall that for a suitable symbol $q$ on the phase space $\R^{2d} \simeq T^* \R^d$ the pseudo-differential operator $\Opw(q)$ is defined for $u \in \Sc(\R^d)$ and $x \in \R^d$ by
\[
\Opw(q) u (x) = \frac 1 {(2 \pi h)^d} \int_{\R^d} \int_{\R^d} e^{-\frac ih \innp{x-y}\x} q\left( \frac {x+y} 2, \x \right) u(y) \, dy\,d\x.
\]
In particular the semiclassical generator of dilations
\[
 A_h = -\frac {i h}2 \big(x \cdot \nabla + \nabla \cdot x\big)
\]
is the quantization of the symbol $f_0 : (x,\x) \mapsto  \innp x  \x _{\R^d}$ and the principal symbol of $h^2 \Ho$ is
\[
 p : (x,\x) \mapsto \innp{G(x) \x } \x_{\R^d}.
\]
We refer to \cite{zworski, robert, martinez,dimassis} for detailed presentations of semiclassical analysis.\\

According to Proposition \ref{prop-R-diss} it is sufficient to prove Theorem \ref{th-high-freq} for $\Im z \in ]0,1]$, and as before (see the proof of Proposition \ref{prop-inter-freq}) it suffices to consider the case where $z$ (and hence $\s$) belongs to $\C_{+,+}$. According to Proposition \ref{prop-der-R2}, it will be a consequence of Theorem \ref{th-estim-insert2} if we prove that $A_h$ is uniformly conjugated to $\hhs$ on a neighborhood of 1 with lower bound of size $c_0 h$ for some $c_0 > 0$, if $\hhs$ is uniformly $N$-smooth with respect to $A_h$ for any $N\in\N$, and if moreover the multiplication by $a$ is in $\opinsert_N(A_h)$ uniformly in $h$ (see Definition \ref{def-415}).\\

For $w \in \R^{2d}$ we denote by $\vf^t(w) = \big(X (t,w), \Xi (t,w)\big)$, $t\in\R$, the solution of the Hamiltonian equations generated by the symbol $p$ with initial condition $w$:
\[
 \vf^0(w) = w, \quad \partial_t X(t,w) =  \nabla_\x p \big(\vf^t(w)\big)  \quad \text{and} \quad \partial_t \Xi(t,w) = - \nabla_x p \big(\vf^t(w)\big).
\]
In this particular case $\singl{\vf^t(w), t\in\R}$ is also the geodesic for the metric $G(x)\inv$ starting from $w$. We also recall that $p$ is preserved by the flow. Moreover for any $q \in C^\infty(\R^{2d})$ the Poisson bracket $\{p,q\} = \nabla_\x p\cdot \nabla_x q - \nabla_x p \cdot \nabla _\x q$ is the derivative $\restr{\partial_t (q \circ \vf^t)}{t=0}$ of $q$ along the flow.\\

Let us also introduce the forward and backward trapped sets    
\[
 \O_b^\pm = \singl{ w \in \R^{2d} \st \sup_{t \geq 0} \abs{X(\pm t,w)} < \infty    }
\]
and the forward and backward non-trapped sets
\[
\O_\infty^\pm = \singl{ w \in \R^{2d} \st  \abs{ X(\pm t,w)} \limt {t} {+\infty} +\infty}.
\]
Then the trapped and non-trapped sets are respectively defined by
\[
 \O_b =\O_b^- \cap \O_b^+ \quad \text{and} \quad \O_\infty =\O_\infty^- \cap \O_\infty^+.
\]
For $I \subset \R$ we also define $\O_b^\pm(I) = \O_b^\pm \cap p\inv(I)$. The sets $\O_b(I), \O_\infty^\pm(I)$ and $\O_\infty(I)$ are defined similarly. Although it is not clear from the definitions, it turns out that a classical trajectory is either trapped (bounded) or non-trapped in the future. The same holds for negative times. This will be the meaning of Proposition \ref{prop-flow} below.\\

 We recall that the geodesic flow is said to be non-trapping if $\O_b(\R_+^*) = \emptyset$, and we say that every bounded geodesic goes through the damping region (or that we have geometric control) if
\begin{equation} \label{hyp-amort}
\forall w \in \O_b (\R_+^*) , \exists t \in \R, \quad a (X (t,w)) > 0.
\end{equation}

\bigskip

The idea to prove Theorem \ref{th-high-freq} is close to that of Theorem 4.2 in \cite{art-mourre}. We review the proof since we consider here a geodesic flow, everything has to be uniform in $\s$ and there are the factors $a(x)$ between the resolvents. We also correct a mistake of the first proof (about trajectories in $\O_b^\pm \setminus \O_b$, see the proof of Lemma \ref{lem-esc-non-trapped}). \\

As in \cite{gerardm88} the proof of Theorem \ref{th-high-freq} relies on the construction of an escape function, whose quantization provides a conjugate operator for the Schr\"odinger operator. For high frequencies we really use the generalized version of Mourre estimate \eqref{hyp-mourre}: we only need a symbol which is increasing along the flow outside the damping region.

\begin{proposition}[Construction of an escape function] \label{prop-escape}
Let $I$ be a compact subset of $\R_+^*$. Then there exist $c_0 > 0$, $f_c \in C_0^\infty(\R^{2d})$ and $\b \geq 0$ such that 
\[
 \{ p, f_0 + f_c \} + \b a \geq 4 c_0 \quad \text{on }p\inv(I).
\]
\end{proposition}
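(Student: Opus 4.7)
The plan is to reduce the proposition to a compact problem in configuration space and then combine a Gérard--Martinez-type escape construction for non-trapped trajectories with the damping condition \eqref{hyp-amort} for trapped ones.

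First I would compute
\[
\{p,f_0\}(x,\xi) = 2 p(x,\xi) - \langle (x\cdot\nabla G)(x)\xi,\xi\rangle.
\]
Since $G-I$ satisfies \eqref{dec-metric-a}, the remainder is $O(\langle x\rangle^{-\rho})|\xi|^2$. On the energy shell $p^{-1}(I)$ the momentum $|\xi|$ is bounded (ellipticity of $G$) and $p\geq \inf I>0$, so there is $R_0>0$ and $c_1>0$ such that
\[
\{p,f_0\}\geq c_1 \quad\text{on}\quad p^{-1}(I)\cap\{|x|\geq R_0\}.
\]
Consequently the full estimate already holds outside the compact set
$K := p^{-1}(I)\cap\{|x|\leq R_0\}$,
and it suffices to find $f_c\in C_0^\infty(\R^{2d})$ and $\beta\geq 0$ so that $\{p,f_c\}+\beta a$ is large enough on $K$, while $\{p,f_c\}\geq -c_1/2$ globally on $p^{-1}(I)$. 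One then picks $4c_0$ smaller than both $c_1/2$ and the lower bound produced on $K$ minus $\sup_K|\{p,f_0\}|$.

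For the construction on $K$, I would distinguish two types of points using the dichotomy of Proposition \ref{prop-flow}: every $w\in\R^{2d}$ lies in $\Omega_b^+\cup\Omega_\infty^+$ (and likewise for backward time).

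\emph{Non-trapped case.} If $w_0\in K\cap\Omega_\infty^+$, the forward trajectory $\{\varphi^t(w_0)\}_{t\geq 0}$ leaves any fixed compact set, in particular enters $\{|x|>R_0\}$ after some time $T_{w_0}$. I can then construct a local escape function $g_{w_0}\in C_0^\infty(\R^{2d})$ by transporting a non-negative bump along the flow and multiplying by a decreasing time factor, so that $\{p,g_{w_0}\}\geq 1$ on a neighborhood $U_{w_0}$ of $w_0$ while remaining non-negative and compactly supported. This is the standard Gérard--Martinez construction \cite{gerardm88}. (If $w_0\in\Omega_\infty^-$ one uses the backward flow.)

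\emph{Trapped case.} If $w_0\in K\cap\Omega_b^+$, since the trajectory stays bounded we cannot use a transport-to-infinity argument; instead the assumption \eqref{hyp-amort} gives a time $t_{w_0}\in\R$ with $a(X(t_{w_0},w_0))>0$. By continuity of the flow and of $a$, on a small neighborhood $U_{w_0}$ of $w_0$ the composed function $w\mapsto a(X(t_{w_0},w))$ remains bounded below by a positive constant. A compactly supported cut-off $\chi_{w_0}$ essentially constant along the flow on $[0,t_{w_0}]$ near $w_0$, together with the Hamiltonian identity
\[
\int_0^{t_{w_0}} (\{p,\cdot\}\text{-propagated}) a \, dt
\]
produces a symbol $g_{w_0}$ such that $\{p,g_{w_0}\}+\beta_{w_0} a \geq 1$ on $U_{w_0}$ for $\beta_{w_0}$ large enough.

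Finally I would extract a finite cover $(U_j)$ of the compact set $K$ from these neighborhoods, take a subordinate partition of unity, and define $f_c = M\sum_j \chi_j g_j$ and $\beta = M\max_j \beta_j$ for a large constant $M$. The main obstacle is controlling the cross-terms $\{p,\chi_j\} g_j$ that appear when one glues local constructions through a partition of unity: these terms are uniformly bounded, so multiplying the local $g_j$ (and the coupling constant $\beta$) by a sufficiently large $M$ lets the positive diagonal contributions dominate. Cutting off $f_c$ away from the energy shell $p^{-1}(I)$ by a factor $\theta\in C_0^\infty(\R)$ equal to $1$ on $I$ keeps it compactly supported globally and controls $\{p,f_c\}$ on the rest of phase space, yielding the announced inequality with some $c_0>0$.
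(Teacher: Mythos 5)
Your strategy is in the spirit of the paper's (reduce to a compact set via $\{p,f_0\}$, then transport bumps along the flow towards either $\{|x|\gg 1\}$ or $\{a>0\}$), but the gluing step has a genuine gap. You set $f_c = M\sum_j\chi_j g_j$ with a partition of unity subordinate to the finite cover and claim that ``multiplying by a sufficiently large $M$ lets the positive diagonal contributions dominate'' the cross-terms $g_j\{p,\chi_j\}$. This cannot work: in $\{p,f_c\} = M\sum_j\bigl[\chi_j\{p,g_j\} + g_j\{p,\chi_j\}\bigr]$ the diagonal and the cross-terms scale linearly in $M$ at the same rate, so enlarging $M$ leaves their ratio unchanged, and if the cross-terms are not already dominated for $M=1$ they never will be. The paper's proof avoids this difficulty by never introducing a partition of unity. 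In Lemma~\ref{lem-esc-trapped}, each local piece $f_w = \int_0^{t_w} g_w\circ\vf^{-t}\,dt$ is built so that $\{p,f_w\} + \b_w a \geq 0$ \emph{globally on $\R^{2d}$} (the single negative bump $-g_w\circ\vf^{-t_w}$ produced by finite-time transport lands inside $\{a>0\}$ and is absorbed by $\b_w a$) and $\geq 1$ on a neighborhood $\Uc_w$ of $w$. These compactly supported functions are then simply \emph{added}, with no cutoffs, so no cross-terms ever appear and the global non-negativity of each summand yields the lower bound on $\bigcup_j\Uc_{w_j}$. For the non-trapped region (Lemma~\ref{lem-esc-non-trapped}) a single global function $\tilde f_\infty = -\int_0^{+\infty} g_\infty\circ\vf^t\,dt$ is used, whose bracket $\{p,\tilde f_\infty\} = g_\infty \geq 0$ carries no negative bump at all; the only loss is a small $-\e$ from the final spatial cutoff.

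Two more points. First, for $w_0\in\O_b^+$ you invoke \eqref{hyp-amort} directly, but that assumption is stated only on the fully trapped set $\O_b = \O_b^+\cap\O_b^-$; extending it to the semi-trapped sets $\O_b^\pm$ requires Proposition~\ref{prop-semi-amort} (an $\omega$-limit argument), which you use silently. Second, your dichotomy is $\O_\infty^+$ versus $\O_b^+$, whereas the paper's is $\O_\infty = \O_\infty^+\cap\O_\infty^-$ versus $\O_b^+\cup\O_b^-$, and this distinction is the content of a mistake the paper explicitly says it corrected: the infinite-time G\'erard--Martinez integral needs the bump to be supported in $\O_\infty$ (both forward \emph{and} backward non-trapped) for Proposition~\ref{prop-flow}(iii) to deliver the uniform escape time that makes the integral smooth. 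Applying it near a point of $\O_\infty^+\cap\O_b^-$ is precisely the problematic case. Your finite-time transport does avoid that smoothness problem, but then the tail of the transport tube carries a negative bump of the same order as the positive one, and it must be absorbed by the fixed quantity $\{p,f_0\}$ rather than by a free parameter like $\b$; quantifying that absorption is exactly what the partition-of-unity/$M$ argument fails to do.
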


We recall that $f_0$ is the symbol of the generator of dilations. As in \cite{gerardm88} we can check that $f_0$ is an escape function far from the origin. We can also use the idea of Ch. G\'erard and A. Martinez to construct a symbol which is an escape function on any compact subset of $\O_\infty$. However we may have problems at the boundary of $\O_\infty$, where some non-trapped trajectories may escape very slowly. We circumvent this difficulty by constructing a generalized escape function on a neighborhood of any compact subset of $\O_b^+ \cup \O_b^-$. For this we use Proposition \ref{prop-semi-amort}. More precisely for any $w \in \O_b^+ \cup \O_b^-$ we construct a function which is increasing along the flow around $w$ and which is non-decreasing outside the damping region. Adding a suitable multiple of $a$ we obtain the required positivity. %The proposition and the lemmas corresponding to these three steps are proved in Appendix \ref{app-proof}.
The proof of Proposition \ref{prop-escape} is based on several lemmas.\\

% This comes from the fact that the metric $g$ is asymptotically Euclidean, and hence the flow $\vf^t$ given by $g$ is close to the free one at infinity. In particular, far from the origin a trajectory going outward never comes back and escape at infinity:

\begin{lemma} \label{lem-flow}
There exists $\Rc_G > 0$ such that if
\begin{equation}\label{hyp-flow}
 \abs{X (\pm t_0,x,\x)} \geq \max \big(\Rc_G , \abs{x} + \g \big)
\end{equation}
for some $(x,\x) \in \R^{2d}$, $\g > 0$ and $t_0 > 0$, then this holds for $t_0$ replaced by any $t \geq t_0$ and moreover
\[
\abs{X(\pm t,x,\x)} \limt t {+\infty} +\infty.
\]
\end{lemma}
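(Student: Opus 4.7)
The plan is a virial-type monotonicity argument, exploiting the fact that far from the origin $p(x,\xi) = \langle G(x)\xi,\xi\rangle$ is a small long-range perturbation of $|\xi|^2$. Along a trajectory $\varphi^t(w) = (X(t,w),\Xi(t,w))$ the relevant quantity will be the virial $\phi(t) := X(t,w)\cdot\Xi(t,w) = f_0\circ\varphi^t(w)$.

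First I would compute the Poisson bracket
\begin{equation*}
\{p,f_0\}(x,\xi) \;=\; 2p(x,\xi) - x\cdot\nabla_x p(x,\xi).
\end{equation*}
Writing $G = I + \tilde G$ with $|\partial^\alpha \tilde G(x)| \leq C_\alpha\langle x\rangle^{-\rho-|\alpha|}$ gives $|x\cdot\nabla_x p(x,\xi)| \leq C\langle x\rangle^{-\rho}|\xi|^2$, and combined with uniform ellipticity $\langle G(x)\xi,\xi\rangle \geq c|\xi|^2$ this yields
\begin{equation*}
\{p,f_0\}(x,\xi)\;\geq\; c|\xi|^2 \;\geq\; c'\,E_0 \qquad \text{for all } |x|\geq \Rc_G,
\end{equation*}
once $\Rc_G$ is chosen large enough, where $E_0 := p(w)$ is the conserved energy. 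Necessarily $E_0>0$: if $E_0 = 0$ then $\Xi \equiv 0$, the trajectory is constant, and the hypothesis $|X(\pm t_0,w)|\geq |x|+\gamma$ with $\gamma > 0$ cannot hold.

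Treating the $+$ case (the $-$ case follows from the identity $X(-t,x,\xi) = X(t,x,-\xi)$, since $p$ is even in $\xi$), I would then argue by contradiction that $|X(t,w)| \geq \max(\Rc_G,|x|+\gamma)$ for every $t\geq t_0$. If not, let $t_1 := \inf\{t > t_0 : |X(t,w)|<\max(\Rc_G,|x|+\gamma)\}$; then $t_1<\infty$, $|X(t_1,w)| = \max(\Rc_G,|x|+\gamma)$ and $\tfrac{d}{dt}|X|^2(t_1)\leq 0$. On $[t_0,t_1]$ the trajectory stays above $\Rc_G$, so the virial grows at rate $\dot\phi \geq c'E_0$; combining this with the identity
\begin{equation*}
\tfrac12\tfrac{d}{dt}|X(t,w)|^2 \;=\; \phi(t) + X(t,w)\cdot\tilde G(X(t,w))\,\Xi(t,w),
\end{equation*}
the non-positivity of the left-hand side at $t_1$ imposes an upper bound on $\phi(t_1)$ of order $|X(t_1,w)|^{1-\rho}\sqrt{E_0}$, while $\phi(t_1) \geq \phi(t_0) + c'E_0(t_1-t_0)$ and $|\phi(t_0)|\leq|X(t_0,w)|\sqrt{E_0/c}$. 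Together with the finite-speed bound $|\dot X|\leq C\sqrt{E_0}$ — which forces the trajectory to spend a definite time interval between its starting level $|X(t_0,w)|$ and the threshold, thanks precisely to $\gamma > 0$ — this produces a contradiction for $\Rc_G$ large enough. Once monotonicity is established, $\phi(t)$ grows linearly forever, $\tfrac{d}{dt}|X|^2 \to +\infty$, and hence $|X(t,w)| \to +\infty$.

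The main obstacle is the contradiction step: the long-range correction $X\cdot\tilde G(X)\Xi$ can behave like $|X|^{1-\rho}\sqrt{E_0}$ and is therefore not dominated by $\phi$ in a naive way when $\rho$ is small. The strict displacement assumption $|X(t_0,w)|\geq |x|+\gamma$ with $\gamma > 0$ is the essential ingredient that makes the argument work: combined with the speed bound, it forces the would-be return to the threshold to take a genuine interval of time, during which the virial accumulates enough linear growth to overwhelm the correction term as soon as $\Rc_G$ is taken sufficiently large (independently of $w$ and $\gamma$).
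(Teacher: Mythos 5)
Your proof routes through the \emph{straight} virial $\phi(t) = X(t,w)\cdot\Xi(t,w) = f_0(\varphi^t(w))$ (so that $\dot\phi = \{p,f_0\}\circ\varphi^t$) and attempts a contradiction starting at $t_0$, whereas the paper works with $\partial_t^2|X(\pm t,w)|^2$ directly, i.e. with the \emph{twisted} virial $X\cdot G(X)\Xi = \tfrac14\partial_t|X|^2$, starting from a well-chosen $\theta\in[0,t_0]$. Your contradiction step has a genuine gap, which you in fact flagged yourself (``the long-range correction \dots is not dominated by $\phi$''): the ingredients you then invoke do not close it. First, $\gamma>0$ does \emph{not} force $t_1-t_0$ to be positive, let alone bounded below. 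The hypothesis allows $|X(t_0,w)| = \max(\Rc_G,|x|+\gamma)$ exactly, in which case $t_1 = t_0$ and no virial growth accumulates; the displacement $\gamma$ separates $|X(0,w)|=|x|$ from $|X(t_0,w)|$, not $|X(t_0,w)|$ from the return level, and the bound $\dot\phi\geq c E_0$ is not available on $[0,t_0]$ since the trajectory may dip below $\Rc_G$ there. Second, even granting the speed bound $t_1-t_0 \gtrsim (|X(t_0,w)|-|X(t_1,w)|)/\sqrt{E_0}$, feeding it into the chain
\[
-\,|X(t_0,w)|\sqrt{c_1 E_0}\;+\;cE_0\,\frac{|X(t_0,w)|-|X(t_1,w)|}{C\sqrt{E_0}}\;\leq\;\phi(t_1)\;\leq\;C'\,|X(t_1,w)|^{1-\rho}\sqrt{E_0}
\]
yields a coefficient of $|X(t_0,w)|$ on the left equal to $\sqrt{E_0}\big(c/C - \sqrt{c_1}\big)$, which is $\leq 0$ whenever the constants are chosen consistently (it vanishes as $G\to I$ and is negative otherwise) — so there is no contradiction. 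Finally, ``for $\Rc_G$ large enough'' goes the wrong way: enlarging $\Rc_G$ enlarges $|X(t_1,w)|=\max(\Rc_G,|x|+\gamma)$ and therefore \emph{loosens} your upper bound on $\phi(t_1)$.

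The fix is to change the virial. Since $\partial_t|X|^2 = 4X\cdot G(X)\Xi = 4\phi + 4X\cdot(G-I)\Xi$, the sign of $\phi$ controls the sign of $\partial_t|X|^2$ only up to a term of size $|X|^{1-\rho}\sqrt{E_0}$, which is unbounded for $\rho<1$. Working instead with $\tilde\phi := X\cdot G(X)\Xi = \tfrac14\partial_t|X|^2$ eliminates the correction entirely: one computes $\dot{\tilde\phi} = \tfrac14\partial_t^2|X|^2 \geq 2|G(X)\Xi|^2 - C\langle X\rangle^{-\rho}|\Xi|^2 \geq c_0 E_0 > 0$ for $|X|\geq \Rc_G$ (using $E_0>0$ and uniform ellipticity), and $\tilde\phi>0$ is \emph{by definition} $\partial_t|X|^2>0$. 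This is precisely the paper's route. The hypothesis then supplies — and this is where $\gamma>0$ actually enters, via $|X(t_0,w)| > |X(0,w)|=|x|$ — some $\theta\in[0,t_0]$ with $|X(\theta,w)|>\Rc_G$ and $\partial_s|X(\cdot,w)|^2\big|_{s=\theta}>0$; the strict convexity of $s\mapsto|X(s,w)|^2$ on the region $\{|X|\geq\Rc_G\}$ then forces this function to increase for all $s\geq\theta$, remain above $\Rc_G^2$, and tend to infinity, which gives both conclusions.
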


\begin{proof} 
Since $G(x)$ is close to $\Id$ for large $\abs x$ (in the sense of \eqref{dec-metric-a}), we can check that there exists $C \geq 0$ such that for $w \in \R^{2d}$ and $t \in \R$ we have
\begin{align*}
\frac {\partial^2} {\partial t^2} \abs{X(\pm t,w)}^2
& = 8 \abs{G(X(\pm t,w) ) \Xi (\pm t,w)}^2  - C \pppg {X(\pm t,w)}^{-\rho} \abs {\Xi(\pm t,w) }^2,
\end{align*}
where $\rho > 0$ is given by \eqref{dec-metric-a}. This is greater than $c_0 \abs{\Xi(\pm t,w)}^2$ for some $c_0 > 0$ if $\abs{X (\pm t,w)}^2 \geq \Rc_g^2$ with $\Rc_g$ large enough. Now let $\Rc_g$ be fixed. The assumption implies that there exists $\th \in [0,t]$ such that $\abs{ X (\pm \th, w)}^2 > \Rc_g^2$ and $\restr{\partial_s \abs{ X (\pm s,w)}^2}{s= \th} >0$ (and in particular $ \abs { \Xi (\pm \th,w) } > 0$). With the property on the second derivative we obtain that $s \mapsto \abs{ X (\pm s ,w)}^2$ is increasing for $s \geq \th$ and goes to infinity when $s$ goes to infinity.  
\end{proof}

Together with continuity of the flow, Lemma \ref{lem-flow} has important consequences which we shall use in the proof of Theorem \ref{th-high-freq}:

\begin{proposition} \label{prop-flow}
\begin{enumerate}[(i)]
\item We have 
\[
 \R^{2d} = \O_b^+ \sqcup \O_\infty^+ = \O_b^- \sqcup \O_\infty^- ,
\]
and in particular 
\[
 \R^{2d}  = \O_b^+ \cup \O_b^- \cup \O_\infty.
\]

\item $\O_\infty^+$, $\O_\infty^-$ and $\O_\infty$ are open in $\R^{2d}$, and $\O_b^+$, $\O_b^-$ and $\O_b$ are closed.
\item If $K$ is a compact subset of $\O_\infty$, then for all $R \geq 0$ we can find $T \geq 0$ such that $\abs{X (t,v)} \geq R$ for all $\abs t \geq T$ and $v \in K$. Moreover $\bigcup_{t\in\R} \vf^{-t}(K)$ is closed in $\R^{2d}$.
\end{enumerate}
\end{proposition}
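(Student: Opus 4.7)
The proof hinges on Lemma \ref{lem-flow}, which provides a sharp dichotomy: once $|X(\pm t,w)|$ exceeds the threshold $\max(\Rc_G,|w_x|+\gamma)$ at any $t_0>0$, the bound is preserved for all later times and $|X(\pm t,w)|\to\infty$. My plan for (i) is to apply this directly. If $w\notin\O_b^+$ then $\sup_{t\geq 0}|X(t,w)|=+\infty$, so there exists $t_0>0$ with $|X(t_0,w)|\geq \max(\Rc_G, |w_x|+1)$, and Lemma \ref{lem-flow} forces $w\in\O_\infty^+$. This gives the partition $\R^{2d}=\O_b^+\sqcup\O_\infty^+$, and the analogous one for the backward direction. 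A point outside $\O_b^+\cup\O_b^-$ then lies in $\O_\infty^+\cap\O_\infty^-=\O_\infty$, which delivers the displayed union.

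For (ii), I will establish openness of $\O_\infty^+$; the others follow by symmetry, intersection, and complement. Fix $w\in\O_\infty^+$ and pick $t_0>0$ with $|X(t_0,w)|>\max(\Rc_G,|w_x|+2)$. Continuity of the flow $\vf^{t_0}$ gives a neighborhood of $w$ on which $|X(t_0,v)|\geq\max(\Rc_G,|v_x|+1)$ still holds, and Lemma \ref{lem-flow} then forces every such $v$ into $\O_\infty^+$. The set $\O_\infty=\O_\infty^+\cap\O_\infty^-$ is open as an intersection of opens, and $\O_b^\pm$ and $\O_b$ are closed as complements.

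For the first assertion of (iii), I argue by contradiction. Suppose there is $R>0$ and sequences $v_n\in K$, $|t_n|\to+\infty$ with $|X(t_n,v_n)|<R$; by extracting and reversing time if needed, $t_n\to+\infty$ and $v_n\to v_\infty\in K\subset\O_\infty$. Since $v_\infty\in\O_\infty^+$, I pick $T>0$ with $|X(T,v_\infty)|$ as large as I like, and continuity of $\vf^{T}$ gives $|X(T,v_n)|\geq\max(\Rc_G,|v_{n,x}|+R+1)$ for $n$ large. Then Lemma \ref{lem-flow} propagates this bound: $|X(t,v_n)|\geq R+1$ for every $t\geq T$, contradicting $|X(t_n,v_n)|<R$ for $n$ large. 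The delicate step here, which I flag as the main obstacle, is the calibration of constants so that the propagated inequality $\max(\Rc_G,|v_{n,x}|+\gamma)$ strictly exceeds the prescribed $R$; Lemma \ref{lem-flow} does not give monotonicity of $|X(t,w)|$ per se, only propagation of a thresholded inequality, and the choice $\gamma=R+1$ (after harmlessly enlarging $R$ to ensure $R\geq\Rc_G$) bridges the gap cleanly.

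Finally, for the closedness of $\bigcup_{t\in\R}\vf^{-t}(K)$, consider $v_n=\vf^{-t_n}(w_n)$ with $w_n\in K$ and $v_n\to v$. By compactness of $K$, I may assume $w_n\to w_\infty\in K$. If $(t_n)$ has a bounded subsequence with limit $t_\infty$, then continuity of $\vf^{-t_\infty}$ yields $v=\vf^{-t_\infty}(w_\infty)\in\vf^{-t_\infty}(K)$, so $v$ lies in the union. The remaining case $|t_n|\to+\infty$ is ruled out by the first assertion of (iii) applied to $K$ with target $R=|v_x|+1$: it provides $T>0$ such that $|X(-t,w)|\geq |v_x|+1$ for every $|t|\geq T$ and $w\in K$; for $n$ large this forces $|v_{n,x}|=|X(-t_n,w_n)|\geq|v_x|+1$, contradicting $v_n\to v$.
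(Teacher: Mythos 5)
Your proof is correct and follows essentially the same route as the paper: all three parts hinge on Lemma \ref{lem-flow} combined with continuity of the flow, and on compactness of $K$ in (iii). The only cosmetic difference is that in (iii) you argue by contradiction via subsequence extraction, where the paper argues directly with a finite subcover and then shows $\bigcup_{t\in\R}\vf^t(K)\cap\{|x|\leq R\}$ is compact for each $R$; both are the same underlying idea.
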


\begin{proof} 

Let $\Rc_g$ be given by Lemma \ref{lem-flow}.

\noindent
(i) \quad  If $w \in \R^{2d}$ does not belong to $\O_b^\pm$ then for any $\g > 0$ there exists $t_0$ such that \eqref{hyp-flow} holds, and hence $w \in \O_\infty^\pm$ according to Lemma \ref{lem-flow}. The second statement easily follows.

\noindent 
(ii) \quad Let $w \in \O_\infty^\pm$. We can find $t_0 \geq 0$ such that Assumption ${\eqref{hyp-flow}}_\pm$ holds with $\g = 2$. By continuity of the flow there exists a neighborhood $\Vc$ of $w$ in $\R^{2d}$ such that it holds for all $v \in \Vc$ with the same $t_0$ and $\g = 1$, and hence $\Vc \subset \O_\infty^\pm$. This proves that $\O_\infty^\pm$ is open in $\R^{2d}$. Then we use (i) to conclude.

\noindent 
(iii) \quad Let $R \geq 0$. We can assume without loss of generality that $R \geq \Rc_g$ and $K \subset \singl{\abs x < R}$. For all $w \in K$ we can find $t_w \geq 0$ and a neighborhood $\Vc_w$ of $w$ such that $\abs{ X (\pm t_w,v)} > R$ for all $v \in \Vc_w$. According to Lemma \ref{lem-flow} this holds for any $t\geq t_w$. Since $K$ is compact we can find $T \geq 0$ such that for all $w \in K$ we have $\abs{X(\pm t,w)} > R$ for $t = T$ and hence for any $t \geq T$.
This proves the first claim. In particular for any $R \geq 0$ there exists $T \geq 0$ such that $\bigcup_{t\in\R} \vf^{t}(K) \cap \singl{\abs x \leq R} = \bigcup_{t\in [-T,T]} \vf^{t}(K) \cap\singl{\abs x \leq R}$. By continuity of the flow this set is compact for all $R \geq 0$, which implies that $\bigcup_{t\in\R} \vf^{t}(K)$ is closed.
\end{proof}

The damping condition \eqref{hyp-amort} has been stated for trapped trajectories. We now claim that it automatically holds for semi-trapped trajectories:

\begin{proposition} \label{prop-semi-amort}
 If \eqref{hyp-amort} holds then
\[
 \forall w \in \O_b^\pm(\R_+^*), \exists t \geq 0, \quad a (X (\pm t,w)) > 0.
\]
\end{proposition}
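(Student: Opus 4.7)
The plan is to exploit the $\omega$-limit set of the half-orbit through $w$ (and the $\alpha$-limit set in the $-$ case), which I will argue is non-empty and contained in $\O_b(\R_+^*)$, and then transfer the damping information from this limit set back to $w$ via continuity and the hypothesis \eqref{hyp-amort}.

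Concretely, consider $w \in \O_b^+(\R_+^*)$; the minus case will be symmetric. First I would verify that the positive-time orbit $\singl{\vf^t(w) \st t \geq 0}$ is relatively compact in $\R^{2d}$: its projection on $x$-space is bounded by assumption, and the conservation law $p(\vf^t(w)) = p(w)$ together with the uniform ellipticity of $G$ on any compact set in $x$ bounds $\abs{\Xi(t,w)}$ uniformly in $t\geq 0$. The limit set
\[
\L(w) = \bigcap_{T\geq 0} \overline{\singl{\vf^t(w) \st t \geq T}}
\]
will then be non-empty, compact, and invariant under $\vf^t$ for every $t \in \R$. Combined with its boundedness in $\R^{2d}$, this invariance will force $\L(w) \subset \O_b$, while continuity of $p$ and $p(w)>0$ will give $\L(w) \subset \O_b(\R_+^*)$.

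Next I would pick any $w_\infty \in \L(w)$ and apply \eqref{hyp-amort} to obtain $s \in \R$ with $a(X(s,w_\infty)) > 0$. By continuity of $v \mapsto a(X(s,v))$, there is an open neighborhood $\Vc$ of $w_\infty$ on which this map remains positive. By the definition of $\L(w)$, there exists a sequence $t_n \to +\infty$ with $\vf^{t_n}(w) \to w_\infty$, hence $\vf^{t_n}(w) \in \Vc$ for $n$ large. Choosing $n$ large enough so that $t_n + s \geq 0$ and setting $t = t_n + s$ will yield
\[
a(X(t,w)) = a\bigl(X(s,\vf^{t_n}(w))\bigr) > 0.
\]
The $-$ case follows the same lines with the $\alpha$-limit set in place of $\L(w)$, a sequence $t_n \to -\infty$, and $t := -(t_n+s) \geq 0$ for $n$ large enough so that $t_n + s \leq 0$.

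The only mildly technical step will be the first one, namely promoting boundedness of $\singl{X(t,w) \st t \geq 0}$ in position to relative compactness of $\singl{\vf^t(w) \st t \geq 0}$ in the full phase space, so as to ensure $\L(w) \neq \emptyset$ and that its points belong to $\O_b$. Once this is in hand, the flow invariance of limit sets is standard, and the continuity argument transferring positivity of $a$ from $w_\infty$ back to a point of the orbit of $w$ is immediate.
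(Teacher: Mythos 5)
Your proposal is correct and follows essentially the same route as the paper's proof: both pass to a limit point $w_\infty$ of $\vf^{\pm t_n}(w)$ along some $t_n \to +\infty$, verify $w_\infty \in \O_b(\R_+^*)$, apply \eqref{hyp-amort} at $w_\infty$, and transfer the strict positivity of $a$ back to the orbit of $w$ by continuity. The only difference is that you make explicit the relative compactness of the half-orbit (via conservation of $p$ and ellipticity of $G$, cf.\ \eqref{abs-p}), a step the paper leaves implicit when it extracts the convergent subsequence.
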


\begin{proof} 
  Let $w \in \O_b^\pm (\R_+^*)$ and $R = \sup \singl{\abs {X (\pm t,w)}, t \geq 0}$. We can find $w_\infty \in \R^{2d}$ and a sequence $\seq t n$ such that $t_n \to +\infty$ and $\vf^{\pm t_n}(w) \to w_\infty$. Let $t \in \R$. Since $t \pm t_n \geq 0$ for $n$ large enough we have
\[
 \abs{X(t,w_\infty)} = \lim_{n \to \infty}  \abs{X (t \pm t_n,w)} \leq R ,
\]
and hence $w_\infty \in \O_b(\R_+^*)$. Then, according to \eqref{hyp-amort}, there exists $t_\infty \in \R$ such that $a(X( \pm t_\infty ,w_\infty)) > 0$. Finally, since $a$ is continuous, we can find $n \in \N$ such that $t_n + t_\infty \geq 0$ and ${a ( X (\pm (t_n + t_\infty),w))} > 0$.
\end{proof}

\begin{lemma}[Escape function at infinity] \label{lem-esc-inf}
There exist $\Rc > 0$ and $C \geq 0$ such that we have on $\R^{2d}$: 
\[
\{ p ,f_0\}  \geq p \left( 1 -  C  \1 {\singl{\abs x \leq \Rc}}\right).
\]
\end{lemma}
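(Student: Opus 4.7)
The plan is to compute $\{p,f_0\}$ explicitly and extract the estimate from the long-range assumption on $G$ combined with uniform ellipticity.

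First I would compute
\[
\{p,f_0\}(x,\xi) = \nabla_\xi p \cdot \nabla_x f_0 - \nabla_x p \cdot \nabla_\xi f_0 = 2p(x,\xi) - B(x,\xi),
\]
where $B(x,\xi) := \sum_{j,k} (x\cdot\nabla G_{j,k})(x)\,\xi_j\xi_k$. This is the standard computation for the generator of dilations acting on a quadratic form, using $\partial_{\xi_l}p = 2(G(x)\xi)_l$ and $\partial_{x_l}p = \sum_{j,k}(\partial_l G_{j,k})(x)\xi_j\xi_k$.

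Next I would use the long-range decay \eqref{dec-metric-a} applied to $\partial_l G_{j,k}$: one has $|x_l\,\partial_l G_{j,k}(x)| \leq c\,\abs x \pppg x^{-\rho-1} \leq c\,\pppg x^{-\rho}$, hence
\[
|B(x,\xi)| \leq C\,\pppg x^{-\rho}\,\abs\xi^2.
\]
Since $G(x)-I$ is bounded and decays at infinity and $G(x)$ is positive definite at every point, there exists $c_0>0$ with $p(x,\xi) \geq c_0\abs\xi^2$ for all $(x,\xi)\in\R^{2d}$ (uniform ellipticity). Then choose $\Rc>0$ large enough so that $C\pppg x^{-\rho} \leq c_0$ for $\abs x\geq \Rc$; this yields $|B(x,\xi)| \leq p(x,\xi)$ on $\{\abs x\geq \Rc\}$, and therefore
\[
\{p,f_0\}(x,\xi) \geq 2p(x,\xi) - p(x,\xi) = p(x,\xi) \quad\text{for } \abs x\geq \Rc.
\]

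It only remains to handle $\{\abs x\leq \Rc\}$. On this compact set in $x$, the coefficients $x\cdot\nabla G_{j,k}$ are uniformly bounded, so $|B(x,\xi)| \leq C_\Rc\abs\xi^2 \leq (C_\Rc/c_0)\,p(x,\xi)$. Setting $C := C_\Rc/c_0 + 1$ gives
\[
\{p,f_0\}(x,\xi) = 2p(x,\xi) - B(x,\xi) \geq (2 - C_\Rc/c_0)\,p(x,\xi) = (1-C)\,p(x,\xi)
\]
on $\{\abs x\leq \Rc\}$, which combined with the outer estimate yields the claimed inequality on all of $\R^{2d}$. There is no genuine obstacle here — the statement is essentially the observation that $f_0$ is already an escape function for the free Laplacian (where $B\equiv 0$ and $\{p,f_0\}=2p$) and the long-range perturbation introduces only an error that is relatively bounded by $p$ and small at infinity.
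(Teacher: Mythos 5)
Your proof is correct and follows essentially the same route as the paper: compute $\{p,f_0\}=2p-\innp{(x\cdot\nabla G)\xi}{\xi}$, use uniform ellipticity $p\gtrsim|\xi|^2$ together with the long-range decay of $x\cdot\nabla G$ to get $\{p,f_0\}\geq p$ outside a large ball, and absorb the bounded-in-$x$ error on the remaining compact set into the constant $C$. The only (harmless) slip is writing $(2-C_\Rc/c_0)\,p=(1-C)\,p$ with your choice $C=C_\Rc/c_0+1$; this should be an inequality $\geq$, which holds since $p\geq 0$.
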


\begin{proof} 
For any $(x,\x) \in \R^{2d}$ we have
 \[
 \{ p , f_0 \} (x,\x)  = 2 p(x,\x) - \innp { (x \cdot \nabla_x G(x)) \x }\x.
\]
Moreover there exists $c_1 > 0$ such that for all $(x,\x) \in \R^{2d}$ we have
\begin{equation} \label{abs-p}
 c_1 \inv \abs \x ^2 \leq p(x,\x) \leq c_1 \abs \x ^2.
\end{equation}
Thus for all $(x,\x) \in \R^{2d}$ we have
\[
 \{ p , f_0 \} (x,\x) \geq 2p(x,\x) - \nr{(x \cdot \nabla_x) G(x)} \abs \x^2 \geq (2-\nr{(x \cdot \nabla_x) G(x)} c_1)p(x,\x),
\]
and since $\nr{(x \cdot \nabla_x) G(x)}$ goes to 0 as $\abs x$ goes to $+\infty$, we have  $\{ p , f_0 \} \geq p$ if $\abs x$ is large enough.
\end{proof}

The following lemma uses Assumption \eqref{hyp-amort}:

\begin{lemma} [Escape function on semi-bounded geodesics] \label{lem-esc-trapped}
Let $I$ be a compact subset of $\R_+^*$, $C_b > 0$, $\Rc > 0$ and 
\[
 K_b = \big(\O_b^+(I) \cup \O_b^-(I)\big) \cap \singl{\abs x \leq \Rc}.
\]
Then $K_b$ is compact and there exist $f_b \in C_0^\infty(\R^{2d})$, $\b \geq 0$ and an open neighborhood $\Uc_b$ of $K_b$ such that on $\R^{2d}$:
\[
 \{ p,f_b\} + \b a \geq C_b \1 {\Uc_b}.
\]
\end{lemma}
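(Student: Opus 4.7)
The compactness of $K_b$ follows from three ingredients: $K_b \subset \{\abs x \leq \Rc\}$ by definition; the ellipticity \eqref{abs-p} forces $\abs \x^2 \leq c_1 \sup I$ on $p\inv(I)$; and $\O_b^+ \cup \O_b^-$ is closed by Proposition \ref{prop-flow}(ii). Being closed and bounded in $\R^{2d}$, $K_b$ is compact.

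To construct $f_b$ I shall work locally near each $w_0 \in K_b$ and then patch by a finite cover. Fix $w_0 \in K_b$; by symmetry via time reversal I may assume $w_0 \in \O_b^+(I)$. Proposition \ref{prop-semi-amort} furnishes $T \geq 0$ with $a(\vf^T(w_0)) > 0$. Since $p(w_0) \in I \subset \R_+^*$ the differential $dp(w_0)$ is non-zero, so the Hamiltonian vector field $X_p$ is nonvanishing; moreover, if the orbit through $w_0$ is periodic with period $\Pi$, I first reduce $T$ modulo $\Pi$ so that $T < \Pi$. For some $\e > 0$ the arc $\G := \{\vf^s(w_0) : s \in [-\e, T + \e]\}$ is then an embedded curve, and the Hamiltonian flow-box theorem (Darboux normal form with $p$ chosen as a momentum) supplies a symplectic chart $\k$ on an open neighborhood $W \supset \G$, with coordinates $(t, \theta) \in \R \times \R^{2d-1}$, such that $\vf^s(w_0) \leftrightarrow (s, 0)$ and $\{p,\cdot\}$ becomes $\partial_t$. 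By continuity of $a$ I then select $\d, a_0 > 0$ with $a \circ \k\inv \geq a_0$ on $(T - \d, T + \d) \times \{\abs \theta < \d\}$, pick $h \in C_0^\infty((-\e, T + \e))$ satisfying $h'(0) > C_b$, $h' \geq 0$ on $(-\e, T - \d) \cup (T + \d, T + \e)$, and $\{h' < 0\} \subset (T - \d, T + \d)$, together with $\chi \in C_0^\infty(\{\abs \theta < \d\}, [0,1])$ equal to $1$ near $0$. Setting $f_{w_0} := (h(t) \chi(\theta)) \circ \k$ (extended by zero off $W$), the bracket equals $\{p, f_{w_0}\}(v) = h'(t) \chi(\theta)$ on $W$ and vanishes elsewhere; choosing $\b_{w_0} := \nr{h'}_\infty / a_0$ ensures $\{p, f_{w_0}\} + \b_{w_0} a \geq 0$ on $\R^{2d}$, while $\{p, f_{w_0}\}(w_0) = h'(0) > C_b$ together with continuity yields an open neighborhood $\Uc_{w_0}$ of $w_0$ on which $\{p, f_{w_0}\} \geq C_b$.

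Extracting a finite subcover $\Uc_{w_1}, \ldots, \Uc_{w_N}$ of $K_b$, I set $f_b := \sum_i f_{w_i}$, $\b := \sum_i \b_{w_i}$, and $\Uc_b := \bigcup_i \Uc_{w_i}$. Each summand $\{p, f_{w_i}\} + \b_{w_i} a$ is non-negative on $\R^{2d}$, and for any $v \in \Uc_b$ lying in some $\Uc_{w_i}$ the $i$-th summand alone is at least $C_b$, so the whole sum dominates $C_b \1{\Uc_b}$ as required. The main obstacle in this scheme is the local step: one must set up a flow-box chart along an embedded arc (which is why periodic orbits force the reduction $T < \Pi$) and arrange the shape of $h$ so that its unique decreasing region lies inside the damping region — this balance is precisely what Proposition \ref{prop-semi-amort} enables.
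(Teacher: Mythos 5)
Your proposal is correct and reaches the same conclusion, but the local construction near each $w_0 \in K_b$ is genuinely different from the paper's. Where you set up a flow-box chart $\k$ along the embedded arc $\{\vf^s(w_0):s\in[-\e,T+\e]\}$ and hand-craft a bump $h(t)\chi(\theta)$, the paper instead takes a cutoff $g_w$ near $w$ and defines
\[
f_w \;=\; \int_0^{t_w} g_w\circ\vf^{-t}\,dt ,
\]
for which the telescoping identity $\{p,f_w\}=g_w - g_w\circ\vf^{-t_w}$ gives directly a term $\geq 1$ near $w$ and a negative contribution supported in $\vf^{t_w}(\supp g_w)\subset\{a>0\}$, absorbed by $\b_w a$. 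This flow-averaging construction is shorter and avoids exactly the technical care you had to expend: no reduction of $T$ modulo a period, no embedded-arc check, no long flow-box. It is worth noting in your write-up that the symplectic (Darboux) version of the flow-box is unnecessary — since $\{p,f\}=X_p f$, the ordinary flow-box for the nonvanishing vector field $X_p$ already gives $\{p,\cdot\}=\partial_t$ in your coordinates, which is all you use. Both proofs rely on the same two pillars (Proposition \ref{prop-semi-amort} to locate a time at which the damping is active, and compactness of $K_b$ to patch finitely many local constructions), so the difference is purely in the local step; the paper's is the more economical, while yours makes the geometric picture of ``increasing along the flow except inside the damping region'' explicit.
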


\begin{proof} 
The set $K_b$ is bounded according to \eqref{abs-p} and closed according to Proposition \ref{prop-flow} (ii). Let $w \in K_b$. According to Proposition \ref{prop-semi-amort} there exists $t_w \in \R$ such that $a(X (t_w,w)) > 0$. Since $\vf^{t_w}$ is continuous we can find a neighborhood $\Vc_w$ of $w$ such that $a(X (t_w,v)) > 0$ for all $v \in \Vc_w$. Let $g_w \in C_0^\infty(\R^{2d},[0,1])$ be supported in $\Vc_w$ and equal to 1 on a neighborhood $\Uc_w$ of $w$, and consider
\[
 f_w = \int_0^{t_w} g_w \circ \vf^{-t}\, dt.
\]
The symbol $f_w$ is compactly supported and
\[
\{ p , f_w\} = g_w - g_w \circ \vf^{-t_w}.
\]
Since $g_w \circ \vf^{-t_w}$ is compactly supported in $\singl{(x,\x) \st a(x) > 0}$, there exists $\b_w \geq 0$ such that $\{p,f_w\} + \b_w a$ is non-negative on $\R^{2d}$ and at least equal to $1$ on $\Uc_w$.
Since $K_b$ is compact we can find $n\in\N^*$ and $w_1, \dots ,w_n \in K_b$ such that $K_b \subset \Uc_b := \bigcup_{j=1}^n \Uc_{w_j}$. Setting $f_b = C_b \sum_{j=1}^n f_{w_j}$ and $\b = C_b \sum_{j=1}^n \b_{w_j}$ we obtain a compactly supported symbol $f_b$ such that $\{ p,f_b \} + \b a$ is non-negative and at least equal to $C_b$ on the neighborhood $\Uc_b$ of $K_b$.
\end{proof}

\begin{lemma} [Escape function on a bounded set of non-trapped geodesics] \label{lem-esc-non-trapped}
 Let $K_\infty$ be a compact subset of $\O_\infty$, $C_\infty \geq 0$ and $\e > 0$. Then there exists $f_\infty \in C_0^\infty(\R^{2d})$ such that we have on $\R^{2d}$:
\[
\{ p , f_\infty \} \geq C_\infty \1 {K_\infty} - \e.
\]
\end{lemma}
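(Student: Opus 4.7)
The plan is to construct $f_\infty$ as a weighted time-average along the Hamiltonian flow of a bump supported near $K_\infty$. First, since $\O_\infty$ is open (Proposition~\ref{prop-flow}(ii)) and $K_\infty$ is compact, I would fix a compact neighborhood $K'$ of $K_\infty$ with $K' \subset \O_\infty$, and apply Proposition~\ref{prop-flow}(iii) to obtain $R_0 > 0$ with $K' \subset \singl{\abs x < R_0}$ and $T_0 \geq 0$ such that $\abs{X(t,w)} \geq R_0$ whenever $w \in K'$ and $\abs t \geq T_0$. The key consequence is a uniform sojourn-time bound: for every $v \in \R^{2d}$,
\[
\abs{\singl{t \in \R \st \vf^t(v) \in K'}} \leq T_0.
\]
Indeed, if $t_1 < t_2$ both lay in this set with $t_2 - t_1 \geq T_0$, applying the escape property to the point $\vf^{t_1}(v) \in K'$ at time $t_2 - t_1$ would force $\abs{X(t_2,v)} \geq R_0$, contradicting $\vf^{t_2}(v) \in K' \subset \singl{\abs x < R_0}$.

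Next, I would fix $g \in C_0^\infty(\R^{2d}, [0, C_\infty])$ with $g = C_\infty$ on $K_\infty$ and $\supp g \subset K'$, and for a parameter $T > 0$ to be chosen, set
\[
f_T(v) = -\int_0^T \left(1 - \frac t T\right) g(\vf^t(v)) \, dt.
\]
Smoothness follows from the smoothness of the integrand in $(t,v)$, and the support of $f_T$ lies in the compact set $\bigcup_{t \in [0,T]} \vf^{-t}(\supp g)$, so $f_T \in C_0^\infty(\R^{2d})$. A single integration by parts in $t$, using that the weight $1 - t/T$ equals $1$ at $t = 0$ and vanishes at $t = T$, then yields the clean identity
\[
\{p, f_T\}(v) = g(v) - \frac 1 T \int_0^T g(\vf^t(v)) \, dt.
\]

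From the sojourn-time bound and the fact that $0 \leq g \leq C_\infty \1_{K'}$, the integral on the right-hand side is bounded by $C_\infty T_0$ uniformly in $v$; combined with the pointwise lower bound $g(v) \geq C_\infty \1_{K_\infty}(v)$, this gives
\[
\{p, f_T\}(v) \geq C_\infty \1_{K_\infty}(v) - \frac{C_\infty T_0}{T}.
\]
Choosing $T \geq C_\infty T_0 / \e$ and setting $f_\infty = f_T$ will conclude the proof. No step is genuinely hard once the uniform sojourn-time bound is identified; it is precisely what rescues the naive construction $\tilde f_T(v) = -\int_0^T g(\vf^t(v))\,dt$, whose Poisson bracket $g(v) - g(\vf^T(v))$ can be as negative as $-C_\infty$ on $\vf^{-T}(K_\infty)$. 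The triangular weight $1 - t/T$ replaces the single boundary term $g(\vf^T(v))$ by a time-average over $[0,T]$, which the sojourn bound then controls at size $O(1/T)$.
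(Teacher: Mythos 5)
Your proof is correct, and it takes a genuinely different route from the paper's. The paper sets $\tilde f_\infty = -\int_0^{+\infty} g_\infty\circ\vf^t\,dt$, which gives the exact identity $\{p,\tilde f_\infty\}=g_\infty$ with no error, but is not compactly supported; it then multiplies by a spatial cutoff $\z(\nu x)$ and absorbs the extra term $\tilde f_\infty\{p,\z\}$ into the $\e$ (implicitly after restricting to a compact energy shell, since $\{p,\z\}$ grows linearly in $\x$). You instead use a finite-time average with the triangular weight $1-t/T$, which is automatically compactly supported, at the cost of the extra averaged term $\frac1T\int_0^T g\circ\vf^t\,dt$; the uniform sojourn-time bound $|\{t:\vf^t(v)\in K'\}|\leq T_0$ then controls this error by $C_\infty T_0/T$, uniformly over all of $\R^{2d}$. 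Both approaches rest on essentially the same finiteness of time-in-$K'$ (the paper establishes it locally via the sets $\Wc_w$ to justify convergence of the improper integral; you make it a clean global bound). Your version dispenses with the cutoff argument entirely, and in particular avoids the $\x$-growth issue of $\tilde f_\infty\{p,\z\}$, so the conclusion genuinely holds on all of $\R^{2d}$ rather than only on a compact energy shell; in that respect it is a bit cleaner. What it trades away is the exact equality $\{p,\tilde f_\infty\}=g_\infty$, which is not needed here since the statement already allows an $\e$ deficit.
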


\begin{proof} 
Since $\O_\infty$ is open (see Proposition \ref{prop-flow}), we can consider $g_\infty \in C_0^\infty(\R^{2d},[0,1])$ supported in $\O_\infty$ and equal to 1 on $K_\infty$. Let $\Vc$ be an open bounded neighborhood of $\supp g_\infty$ such that $ \bar \Vc \subset \O_\infty$, and let $T$ be given by Proposition \ref{prop-flow}(iii) applied with $K = \bar \Vc$ and $R$ so large that $\supp g_\infty \subset \{\abs x < R \}$.
We claim that for any $w \in \R^{2d}$ there exists a neighborhood $\Wc_w$ of $w$ and $\t_w \geq 0$ such that 
\begin{equation} \label{tauw-ginf}
 \forall v \in \Wc_w, \forall t \in \R_+ \setminus [\t_w, \t_w + T], \quad g_\infty\big( \vf^t(v)\big) = 0.
\end{equation}
It is clear if $w$ does not belong to $\Tc = \bigcup_{t \in \R} \vf^{-t} (\supp (g_\infty))$, which is closed, or if $w \in \supp (g_\infty)$ (with $\Wc_w = \Vc$ and $\t_w = 0$). Finally, let $w \in \Tc \setminus \supp(g_\infty)$ and $\t_w \geq 0$ such that $\vf^{\t_w}(w) \in \Vc$ but $\vf^{t}(w) \notin \supp (g_\infty)$ if $t \in [0,\t_w]$. Then there exists a neighborhood $\Wc_w$ of $w$ such that for $v \in \Wc_w$ we have $\vf^{\t_w} (v) \in \Vc$ but $\vf^{t}(v) \notin \supp (g_\infty)$ if $t \in [0,\t_w]$, and \eqref{tauw-ginf} holds true. As a consequence the function 
\[
\tilde  f_\infty = - \int_0^{+\infty} g_\infty \circ \vf^t \, dt 
\]
is well-defined and belongs to $\symbor(\R^{2d})$. Moreover $\{ p ,\tilde  f_\infty\}  = g_\infty$ is non-negative and equal to 1 on $K_\infty$. However $\tilde f_\infty$ is not compactly supported.
So let $\z \in C_0^\infty(\R^d,[0,1])$ equal to 1 on $\{\abs{x}< R\}$. Maybe after replacing $\z$ by $x \mapsto \z (\n x )$ for $\n > 0$ small enough, we can assume that $\abs{(C_\infty \tilde f_\infty \{ p, \z \})(w)} \leq \e$ for all $w \in p\inv(J_2)$. Then $f_\infty = C_\infty \tilde f_\infty \z \in C_0^\infty(\R^{2d})$ satisfies the conditions of the proposition. 
\end{proof}

Now we can prove Proposition \ref{prop-escape}:

\begin{proof} [Proof of Proposition \ref{prop-escape}]
We write $I = [E_1,E_2]$ with $E_2 \geq E_1 > 0$. Let $\Rc$ and $C$ be given by Lemma \ref{lem-esc-inf}. We apply Lemma \ref{lem-esc-trapped} with $C_b = C E_2  + E_1$. Let $f_b$, $\b$ and $\Uc_b$ be given by this lemma. Then we apply Lemma \ref{lem-esc-non-trapped} with $C_\infty = C E_2  + E_1$, $\e = E_1 / 2$ and $K_\infty = \O_\infty(I) \cap \singl {\abs x \leq \Rc} \setminus \Uc_b$. Setting $f_c = f_b + f_\infty$ and $c_0 = E_1 / 8$ finally gives the result.
\end{proof}

With Proposition \ref{prop-escape} and Theorem \ref{th-estim-insert2} we can finally prove Theorem \ref{th-high-freq}:

\begin{proof} [Proof of Theorem \ref{th-high-freq}]
\stepp
As already mentioned, we only have to prove the result for $\Im z \in ]0,1]$ and $\Re z \gg 1$. Thus we have to prove estimates on powers of the resolvent $(\hhs-\s^2)\inv$ (with inserted multiplications by $a$) for $h > 0$ small enough and $\s \in \C_{+,+}$ close to 1.
Let $I \subset \R_+^*$ be a compact neighborhood of $1$. Let $f_c \in C_0^\infty(\R^{2d})$, $\b \geq 0$ and $c_0 > 0$ be given by Proposition \ref{prop-escape}. We check that the self-adjoint operator
\begin{equation} \label{op-Fh}
F_h := \Opw(f_0 + f_c)
\end{equation}
satisfies for all $n \in \N$ the assumptions of Definitions \ref{defconjunif} and \ref{defconjunifdissn} (with $\a_h = c_0 h$ in \eqref{hyp-mourre}) to be conjugate to $\hhs$ uniformly in $\l = (h,\s)$ where $h \in ]0,h_0]$ for some $h_0 > 0$ and 
\[
\s \in [\g\inv,\g] + i]0,1]
\]
for some $\g>1$ close to 1.
For ($a$) we remark as usual that $\Sc(\R^d) \subset \Dom(\Re (\hhs)) \cap \Dom(F_h)$. We know that $A_h$ satisfies assumption ($b$), and for $\f \in H^2(\R^d)$ we can write 
\begin{align*}
\sup_{\abs t \leq 1} \nr{e^{-itA_h} \f- e^{-itF_h}\f}_{H^2}
& = \sup_{\abs t \leq 1}  \nr{\int_0^t e^{-is A_h} \Opw(f_c) e^{-i(t-s)F_h}\f\,ds}_{H^2} \\ 
%& \leq \sup_{\abs s \leq 1} \nr{ \Opw(r) e^{-is F_h}\f}_{H^2} \\
& \leq \nr{\Opw(f_c)}_{\Lc(L^2,H^2)} \nr \f _{L^2} < \infty.
\end{align*}
Properties about commutators can be checked using pseudo-differential calculus.

\stepp
Let $\h \in C_0^\infty(\R, [0,1])$ be supported in $I$ and equal to 1 on a neighborhood $J$ of $1$. According to Proposition \ref{prop-escape} and the (easy) G\aa rding inequality (see Theorem 4.26 in \cite{zworski}) we have
\[
 \Opw \left( (\h \circ p)^2 (\{ p, f_0 + f_c\} + \b a ) + 4c_0  ((1-\h) \circ p)^2 \right) \geq 2c_0
\]
for $h > 0$ small enough, so there exists $C \geq 0$ such that 
\[
 \frac 1 h \h(\Re (\hhs)) \big(\left[ \Re (\hhs) , i F_h \right]  +  {\b} {\g} h \Re (\s) a(x) \big)  \h(\Re (\hhs)) + 4c_0 \left((1 -\h(\Re (\hhs))\right)^2  \geq 2 c_0 - C h,
\]
uniformly in $\s$ with $\Re \s \geq \g\inv $ and $\abs \s$ bounded. Multiplication by $h$ and composition by $\1 J (\Re (\hhs))$ on both sides gives
\[
\1 J (\Re (\hhs)) \big( \left[ \Re (\hhs) , i F_h \right] + \b \g  h  \Re (\s) a(x) \big) \1 J (\Re (\hhs))  \geq c_0 h \1 J (\Re (\hhs))
\]
for $h > 0$ small enough, which is exactly assumption ($e$) on $J$.

\stepp 
It is easy to see that multiplication by $a$ or a derivation $D^\a$ with $\abs \a \leq 1$ belong to $\opinsert_N(\hhs, A_h)$ for any $N \in \N$ (see Definition \ref{def-415}). Moreover there exists $C \geq 0$ such that for all $h \in ]0,1]$ and $\s $ close to 1 we have
\[
 \nr{a}_{F_h,N} \leq C \quad \text{and} \quad  \nr{D^\a}_{\hhs, F_h,N} \leq \frac C {h^{\abs \a}}. 
\]
Now we can apply Theorem \ref{th-estim-insert2}. For all $n \in \N$, $\d \geq n + \frac 12$ and $\n_0,\dots \n_n \in \{0,1\}$ there exists $C \geq 0$ such that for all $h \in ]0,h_0]$ and $\s \in \C_{+,+}$ close to 1 we have
\[
\nr{\pppg {F_h}^{-\d} D^\a (\hhs - \s^2)\inv a(x)^{\n_1} (\hhs - \s^2)\inv \dots a(x)^{\n_n} (\hhs - \s^2)\inv \pppg {F_h}^{-\d}} \leq \frac C {h ^{1+\abs\a}}.
\]

\stepp
It remains to see that we can replace the weights $\pppg {F_h}^{-\d}$ given by the Mourre theory by weights $\pppg x ^{-\d}$. For this we can proceed as for intermediate frequencies (see also \cite{art-mourre}). Finally we use \eqref{R-hhs} and Proposition \ref{prop-der-R2} to conclude.
\end{proof}

\section{The case of a Laplace-Beltrami operator} \label{sec-laplacien}
\setcounter{equation}{0}
In this section, we explain how to prove the following analogue of Theorem \ref{th-loc-decay} when $ H_0 $ is replaced by a Laplace-Beltrami operator
$$  - \Delta_g = - \sum_{j,k=1}^d |g(x)|^{-1} \partial_j \big( |g(x)| g^{jk}(x) \partial_k \big), \qquad \text{where } |g (x)| = \mbox{det} (g_{jk}(x))^{1/2} , $$
for any metric $ g $ which is a long range perturbation of the flat one, in the sense that
$$  \big| \partial^{\alpha} \big(g_{jk}(x) - \delta_{jk} \big) \big| \leq C_{\alpha} \langle{ x \rangle}^{-\rho-|\alpha|} , $$
with $ \rho > 0 $. As usual, we have set $ (g^{jk}(x)) = (g_{jk}(x))^{-1} $. The analogue of Theorem \ref{th-loc-decay} is the following theorem.
\begin{theorem} \label{wave-Laplacien} Assume that every   bounded geodesic of $g$ goes through the damping region. Let $ \delta > d + \frac{1}{2} $ and $ \e > 0 $. Then there exists $ C > 0 $ such that, for all $ (u_0,u_1) \in H^{2,\delta} \times H^{1,\delta} $, the solution $u$ to
$$ \partial_t^2 u - \Delta_g u + a (x)  \partial_t u = 0, \qquad u_{|t=0} = u_0, \ \ \partial_{t}u_{|t=0} = u_1 , $$
satisfies
$$ || \nabla u (t) ||_{L^{2,-\delta}} + || \partial_t u (t) ||_{L^{2,-\delta}} \leq C \langle t \rangle^{- (d - \e)} \big( || u_0 ||_{H^{2,\delta}} + || u_1 ||_{H^{1,\delta}} \big) $$
for all $ t \geq 0 $.
\end{theorem}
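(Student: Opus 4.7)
The plan is to reduce Theorem \ref{wave-Laplacien} to Theorem \ref{th-loc-decay} by means of a unitary conjugation combined with a perturbation argument. Setting $w(x) = \abs{g(x)}$, I introduce the unitary map $U : L^2(\R^d, w\, dx) \to L^2(\R^d, dx)$ defined by $Uu = w^{1/2} u$. A direct computation, in which the cross first-order terms cancel thanks to the symmetry of $g^{jk}$, yields
\[
U(-\D_g) U\inv = H_0^g + V,
\]
where $H_0^g = -\divg(G\nabla)$ with $G(x) = (g^{jk}(x))$ is an operator of the divergence form treated in the paper, and $V$ is a smooth, real valued potential satisfying $\partial^\a V \in \symb^{-2-\rho-\abs\a}(\R^d)$ for every $\a$. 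Since $a(x)$ commutes with $U$, the damped wave equation for $-\D_g$ is unitarily equivalent to
\[
\partial_t^2 v + (H_0^g + V) v + a(x) \partial_t v = 0, \qquad v = w^{1/2} u,
\]
and, as $w$ and $1/w$ are smooth and bounded, the weighted Sobolev norms of $(u_0,u_1)$ and of $(\nabla u, \partial_t u)$ appearing in the statement are equivalent to the corresponding norms for $(v_0,v_1)$ and $(\nabla v, \partial_t v)$, up to lower order terms absorbable into the right hand side.

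It therefore suffices to establish the local energy decay for the operator $H_0^g + V$ with damping $a$. I follow the scheme of Section \ref{sec-time}, which reduces the problem to the resolvent estimates of Theorems \ref{th-inter-freq}, \ref{th-low-freq}, \ref{th-low-freq-bis} and \ref{th-high-freq} for the new ``resolvent'' $\big( H_0^g + V - iza - z^2 \big)\inv$. Since $H_0^g$ is itself of the form considered in the paper, the only new task is to verify that the short range potential $V$ does not disturb the Mourre analyses. At intermediate frequencies this is immediate: $V$ is bounded and relatively compact, $\ad_{iA}(V) = -(x\cdot \nabla)V \in \symb^{-2-\rho}$, and the commutator identity used in Proposition \ref{prop-inter-freq} becomes $\ad_{iA}(H_0^g+V) = 2(H_0^g + V) + W$ with $W$ short range and compact after spectral localization; the Mourre estimate on any positive energy subinterval follows from the absence of positive eigenvalues for $H_0^g + V$, still ensured by \cite{kocht06} in this setting. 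At high frequencies, $h^2 V = O(h^2)$ is of lower semiclassical order, the principal symbol of $h^2(H_0^g + V)$ remains $p(x,\x) = \innp{G(x)\x}\x$, the associated Hamiltonian flow is the geodesic flow of $g$ (so the geometric control hypothesis coincides with the one in Theorem \ref{wave-Laplacien}), and the construction of the escape function in Proposition \ref{prop-escape} together with the application of Theorem \ref{th-estim-insert2} carry over unchanged.

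The most delicate point, and the main technical obstacle, is the low frequency analysis of Section \ref{sec-small}. After the rescaling analogous to $\Ptau$ but built on $H_0^g$, the potential $V$ contributes the extra term $\abs z^{-2} V_{\abs z}$ with $V_{\abs z}(x) = V(x/\abs z)$. Since $V \in \symb^{-2-\rho}$, Proposition \ref{prop-dec-sob} yields $\nr{\abs z^{-2} V_{\abs z}}_{\Lc(H^{s+1},H^{s-1})} \lesssim \abs z^\rho$ for a suitable range of $s$, so this additional term is a vanishing perturbation (as $\abs z \to 0$) of the rescaled operator in the norms controlled by Proposition \ref{prop-Ptau}. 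All the commutator and elliptic estimates of Section \ref{sec-dec-functions} therefore extend unchanged, and the smallness hypothesis needed for Theorem \ref{th-low-freq-R0} is met once $V$ is split into a compactly supported piece, absorbed into the compact part of the perturbation, and a remainder whose seminorms $\Nc_N$ can be made arbitrarily small. The non-small perturbation step of Section \ref{sec-non-small}, via Propositions \ref{prop-BS} and \ref{prop-b589}, proceeds exactly as before, and assembling these resolvent estimates in the Fourier representation of Section \ref{sec-time} yields Theorem \ref{wave-Laplacien}.
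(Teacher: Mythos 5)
Your idea is to use the unitary map $Uu = |g|^{1/2}u$ to convert $-\Delta_g$ into a divergence-form operator $H_0^g$ plus a scalar potential $V\in \symb^{-2-\rho}$, and then to rerun the entire resolvent analysis of Sections \ref{sec-Mourre}--\ref{sec-high-freq} for $H_0^g + V - iza$. This is a genuinely different route from the paper's. The paper instead chooses coordinates so that $|g|\equiv 1$ outside a compact set, writes $-\Delta_g = H_0 + W$ with $W = \sum b_j(x)D_j$ first-order with \emph{compactly supported} coefficients, and then derives the resolvent estimates for $R_g(z)$ directly from those already proved for $R(z)$ by a Jensen--Kato perturbation scheme (Propositions \ref{continuite} and \ref{inversibilite}), the key inputs being a resolvent identity and a Fredholm argument using the absence of a zero resonance for $-\Delta_g$. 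That route reuses Theorem \ref{th-low-freq} as a black box and avoids redoing any Mourre analysis; yours would require extending the whole low-frequency machinery to allow a potential term.

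And there lies a genuine gap. Your crucial claim is that $\|\,|z|^{-2} V_{|z|}\,\|_{\Lc(H^{s+1},H^{s-1})}\lesssim |z|^{\rho}$ follows from Proposition \ref{prop-dec-sob}. That proposition is only stated for $\nu\in[0,d/2)$ (this restriction comes from Lemma \ref{lem-multipliers}, where the multiplier must lie in $\dot H^{d/2-\nu}$ with $d/2-\nu>0$). A potential $V\in\symb^{-2-\rho}$ requires $\nu\geq 2$, so the hypothesis $\nu<d/2$ fails for $d=3,4$ --- precisely the most delicate low dimensions. Even when $\nu=2$ is admissible, Proposition \ref{prop-dec-sob} would yield a bound $\lesssim |z|^{2}$ for $\|V_{|z|}\|_{H^{s+1}\to H^{s-1}}$, so $|z|^{-2}V_{|z|}$ is at best uniformly bounded, not $O(|z|^\rho)$-small; getting a $|z|^{\rho}$ gain would need $\nu=2+\rho<d/2$, i.e.\ $d>4+2\rho$. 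Consequently, for $d=3,4$ the rescaled potential cannot be absorbed as a ``small'' perturbation in the sense of \eqref{def-Nc} (the relevant seminorm $\nr{\cdot}_{2,N}$ is not even within the range of the definition), so Proposition \ref{prop-Ptau-comm-Hs}, Proposition \ref{prop-Ptau}, and their consequences do not ``extend unchanged'' as asserted. Related to this, you also leave unaddressed what replaces the Fredholm/no-resonance input of Proposition \ref{inversibilite} in your rerun of Section \ref{sec-non-small}. A cleaner option, closer to the paper's, would be to work directly with the first-order perturbation $W$ (without the unitary conjugation), since the short-range potential it produces behaves worse under the low-frequency scaling than a first-order term with compactly supported coefficients.
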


The proof of this theorem follows exactly the same line as the one discribed in Section \ref{sec-time}, \ie it is a consequence of resolvent estimates analogue to those obtained for $H_0 $ in Theorems \ref{th-inter-freq}, \ref{th-low-freq} and \ref{th-high-freq}. The main issue is to obtain estimates at low frequencies, \ie when $ z \rightarrow 0 $. The estimates at intermediate and high frequencies do not depend on the precise structure of $ H$ (or $ H_0 $) to which we could add (symmetric) first order long range perturbations, so we do not consider this part. The relevant low frequency estimates used to prove Theorem \ref{wave-Laplacien} are given in Theorem \ref{low-freq-Laplacien} below.\\

We let 
$$ H = \mbox{ self-adjoint realization of }  - \Delta_g  \ \mbox{ on }  L^2_g= L^2 (\R^d , d {\rm vol}_g) , $$
where $ d {\rm vol}_g = |g(x)| dx  $. 
 By the general arguments given in Section \ref{sec-diss}, the resolvent
$$ R_g (z) = \big( H - i z a - z^2 \big)^{-1}, \qquad z \in {\mathbb C}_+ , $$
is well defined and analytic with respect to $ z \in {\mathbb C}_+ $. Since $ |g(x)| $ is bounded from above and from below, we have $  L^2 (\R^d , d {\rm vol}_g) =  L^2 (\R^d , dx)  $ and their norms are equivalent, so we can see $ R_g (z) $ (as well as its derivatives in $z$) as an operator on the standard $ L^2 = L^2 (\R^d,dx) $ space. The only difference, which will be irrelevant here, is that bounds of the form $ || R_{g} (z) ||_{{\mathcal L}(L^2_g)} \leq (\mbox{Im} z)^{-1}  $ are replaced by $ || R_{g} (z) ||_{{\mathcal L}(L^2)} \leq C (\mbox{Im} z)^{-1} $.\\

We next recall that we can choose coordinates on $ \R^d $ such that $ |g(x)| = 1 $ outside a compact set (see \cite{bouclet11}) which we assume from now on.
The interest of this remark is that
\begin{eqnarray} 
 H = H_0 + W , \qquad W = \sum_{j=1}^d b_j (x) D_j  , \label{pourW}
\end{eqnarray}
where $ b_1 , \ldots , b_n  \in C_0^{\infty} (\R^d) $ and $H_0$ is as in \eqref{wave-lap} with $ G (x) = (g^{jk}(x)) $. 
 We also let as before
$$ R (z) = (H_0 - i z a - z^2)^{-1}, \qquad z \in {\mathbb C}_+ . $$
Our strategy is to take advantage of the estimates on $ R (z) $ proved in Theorem \ref{th-inter-freq} and to use a perturbative argument {\it \`a la} Jensen-Kato \cite{jensenk79} to derive estimates on $ R_g (z) $. If we set as before $ z = \tau + i \mu $, we can write
\begin{eqnarray}
 R_g (z) & = & R (z) - R_g (z) W R (z) \\
         & = & R (z) - R_g (z) W R (i \mu) - R_g (z) W \big( R (z) - R (i \mu) \big) . \label{Rg2}
\end{eqnarray}
We record the main technical results of this section in  the next two propositions.
\begin{proposition} \label{continuite}  There exists an operator denoted by $ R (0) : \dot{H}^{-1} \rightarrow {\dot H}^1  $ such that, for all $ \delta > 1 $ and $ N \geq 0 $, 
$$  
\nr{ \pppg x ^N W \big(  R (i \mu) -  R (0) \big) \pppg x ^{-\delta}}_{\Lc(L^2)} \limt \m {0^+} 0. 
$$
\end{proposition}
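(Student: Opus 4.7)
The proof proceeds in three steps.

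\emph{Definition of $R(0)$.} Since $G - I$ has short-range decay and $G$ is positive, $G$ is uniformly elliptic. Hence the bilinear form $(u, v) \mapsto \innp{G \nabla u}{\nabla v}_{L^2}$ is continuous and coercive on $\dot H^1 \times \dot H^1$ (using $d \geq 3$). By Lax-Milgram, $H_0 : \dot H^1 \to \dot H^{-1}$ is an isomorphism, and I set $R(0) := H_0^{-1}$.

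\emph{Reduction to a local estimate.} Since the coefficients $b_j$ of $W$ lie in $C_0^\infty(\R^d)$, fix $\chi \in C_0^\infty$ with $\chi \equiv 1$ on a neighborhood of $\bigcup_j \supp(b_j)$; then $\nr{\pppg x ^N W u}_{L^2} \lesssim \nr{\chi \nabla u}_{L^2}$ (with constant depending on $N$) for all $u \in \dot H^1$. Given $g \in L^2$, set $u_\mu := (R(i\mu) - R(0)) \pppg x ^{-\delta} g$, which satisfies $H_0 u_\mu = -\mu(a + \mu) R(i\mu) \pppg x ^{-\delta} g$. Interior elliptic regularity then yields, for a cutoff $\chi' \in C_0^\infty$ with $\chi' \equiv 1$ on $\supp \chi$,
\begin{equation*}
\nr{\chi \nabla u_\mu}_{L^2} \lesssim \nr{\chi' u_\mu}_{L^2} + \mu \nr{\chi'(a + \mu) R(i\mu) \pppg x ^{-\delta} g}_{L^2}.
\end{equation*}
The second term is $O(\mu) \nr{g}_{L^2}$ by Theorem \ref{th-low-freq} with $n = 0$, so the proof reduces to showing $\nr{\chi' (R(i\mu) - R(0)) \pppg x ^{-\delta}}_{\Lc(L^2)} \to 0$.

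\emph{Convergence.} I write the finite difference as an integral along the imaginary axis,
\begin{equation*}
\chi'(R(i\mu) - R(i\mu')) \pppg x ^{-\delta} = i \int_{\mu'}^{\mu} \chi' R'(is) \pppg x ^{-\delta} \, ds,
\end{equation*}
and exploit the compact support of $\chi'$ to absorb an arbitrarily large left weight: for any $\delta_1 \geq 0$, $\nr{\chi' R'(is) \pppg x ^{-\delta}}_{\Lc(L^2)} \lesssim \nr{\pppg x ^{-\delta_1} R'(is) \pppg x ^{-\delta}}_{\Lc(L^2)}$, with constant depending on $\chi'$ and $\delta_1$. Theorem \ref{th-low-freq-bis} for $n = 1$ applies ($d$ is odd when $d = 3$, and $n = 1 \neq d - 2$ when $d \geq 4$); choosing $\delta_1$ so that $\delta_1 + \delta > \min(4, d)$ gives the uniform bound $C(1 + s^{d-3})$, which is integrable on $[0,1]$ for all $d \geq 3$. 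Hence $\chi' R(i\mu) \pppg x ^{-\delta}$ is Cauchy in $\Lc(L^2)$, and its limit is identified as $\chi' R(0) \pppg x ^{-\delta}$ by passing to the limit in the equation $(H_0 + \mu a + \mu^2) R(i\mu) \pppg x ^{-\delta} g = \pppg x ^{-\delta} g$ tested against compactly supported functions, invoking Lax-Milgram uniqueness.

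\emph{Main obstacle.} Theorem \ref{th-low-freq-bis} for $n = 1$ formally requires $\delta > n + \frac 12 = \frac 32$, whereas the statement only assumes $\delta > 1$. For $\delta \in (1, \frac 32]$ I would supplement the above with the uniform bound from Theorem \ref{th-low-freq} ($n = 0$), a strong $\dot H^1$-convergence $R(i\mu) g \to R(0) g$ obtained via a standard Lax-Milgram energy argument, and the Rellich-Kondrakov compactness of $\chi' R(0) \pppg x ^{-\delta}: L^2 \to L^2$, to upgrade pointwise strong convergence to convergence in operator norm.
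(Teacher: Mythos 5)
Your construction of $R(0)$ via Lax--Milgram coincides with the paper's, and your reduction to a local estimate (via the Caccioppoli inequality and the identity $H_0 u_\mu = -\mu(a+\mu)R(i\mu)\pppg{x}^{-\d}g$) followed by the integral bound for $\chi'\big(R(i\mu)-R(i\mu')\big)\pppg{x}^{-\d}$ using Theorem~\ref{th-low-freq-bis} with $n=1$ is a genuinely different, more quantitative route than the paper's. The paper instead decomposes $W(H_0+\mu^2)^{-1}$ with spectral cutoffs $\chi(H_0)$ and writes the resolvent as $\big(\pppg{x}^N W \chi(H_0)\pppg{x}\big)\cdot \pppg{x}^{-1}(H_0+\mu^2)^{-1}\pppg{x}^{-1}\cdot\big(\pppg{x}\chi(H_0)\pppg{x}^{-\d}\big)$, a sandwich of a \emph{weakly} convergent middle factor between \emph{two} compact operators; it then removes the dissipation $\mu a$ by a separate $\dot H^{-1}\to\dot H^1$ estimate. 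Your approach buys an explicit rate in $\mu$ but only covers $\d>3/2$ (Theorem~\ref{th-low-freq-bis} demands $\d_2>n+\tfrac12=\tfrac32$), whereas the paper's compactness argument covers the whole range $\d>1$ with a single uniform mechanism.

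The gap is in your ``Main obstacle'' paragraph, and it is real. First, the claimed strong $\dot H^1$-convergence $R(i\mu)g\to R(0)g$ is not what the paper proves (it only establishes \emph{weak} convergence), and the natural energy estimate yields $\nr{R(i\mu)f-R(0)f}_{\dot H^1}\lesssim \mu\nr{(a+\mu)R(0)f}_{\dot H^{-1}}$, where membership of $aR(0)f$ in $\dot H^{-1}$ is not clear for small $\rho$ (one only gets $aR(0)f\in L^2$ by Hardy, and $L^2\not\subset\dot H^{-1}$ in dimension $d\geq 3$). Second, even granting strong convergence, ``uniform bound + strong convergence + compactness of the limit $\Rightarrow$ norm convergence'' is false: take $T_n g=\innp{g}{e_n}e_1$, which converges to $0$ strongly, has compact limit, and uniform bound $1$, yet $\nr{T_n}=1$. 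To conclude in operator norm one needs either strong convergence of the adjoints together with collective compactness of the family $\{\chi'R(i\mu)\pppg{x}^{-\d}\}$, or, as the paper does, a sandwich between \emph{two} compact operators with weak operator convergence in the middle — neither of which your sketch supplies. Finally, note that if the compactness argument did work as sketched for $\d\in(1,3/2]$, it would equally cover $\d>3/2$, rendering the integral argument redundant, which suggests the two halves of the proof have not been reconciled.
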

Notice that $ \langle x \rangle^{-\delta} $ maps $ L^2 $ into $ \dot{H}^{-1} $ and that, since $ W $ is a first order differential operator with (smooth and) compactly supported coefficients, $ \langle x \rangle^N  W $ maps $ \dot{H}^1 $ into $ L^2 $.
\begin{proposition} \label{inversibilite} For all $ \delta > 1 $, the operator $ I + \langle x \rangle^\delta W R (0)  \langle x \rangle^{-\delta} $ is invertible on $ L^2 $.
\end{proposition}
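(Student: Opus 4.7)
The plan is to apply the Fredholm alternative. First I would show that $K := \pppg x ^\d W R(0) \pppg x^{-\d}$ is a compact operator on $L^2$; then I would show that $I+K$ is injective; compactness plus injectivity will yield invertibility.

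For compactness, I would argue as follows. Since $\d > 1$, the Hardy inequality implies that $\pppg x^{-\d}$ maps $L^2$ continuously into $\dot H^{-1}$. By Proposition \ref{continuite} (and the very definition of $R(0)$ as the zero-energy resolvent of $H_0$), $R(0)$ maps $\dot H^{-1}$ continuously into $\dot H^1$, and it enjoys the usual elliptic regularity: if $f \in L^2$ with compact support then $R(0)f \in H^2_{\loc}$. Given a bounded sequence $(u_n)$ in $L^2$, set $v_n = R(0) \pppg x^{-\d} u_n$. Then $v_n$ is bounded in $\dot H^1$ and $H_0 v_n = \pppg x^{-\d} u_n$ is bounded in $L^2$, so elliptic regularity gives that $v_n$ is bounded in $H^2_{\loc}$. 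By the Rellich--Kondrachov theorem a subsequence converges in $H^1_{\loc}$. Since the coefficients of $W$ are smooth and compactly supported, $W v_n$ then converges in $L^2$, and $\pppg x^\d W v_n$ converges in $L^2$ because $W v_n$ is supported in a fixed compact set. Hence $K$ is compact.

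For injectivity, suppose $u \in L^2$ satisfies $u + \pppg x^\d W R(0) \pppg x^{-\d} u = 0$. Setting $v = R(0) \pppg x^{-\d} u \in \dot H^1$, this rewrites as $u = -\pppg x^\d W v$, i.e.\ $\pppg x^{-\d} u = -Wv$. Applying $H_0$ (which inverts $R(0)$ on $\dot H^{-1}$) to the definition of $v$ gives $H_0 v = \pppg x^{-\d} u = -Wv$, so that $Hv = (H_0 + W)v = 0$ in $\Sc'(\R^d)$, using the decomposition \eqref{pourW}. Pairing $Hv = -\D_g v = 0$ with $v \in \dot H^1$ (which is justified because the quadratic form $v \mapsto \int g^{jk} \partial_j v \, \overline{\partial_k v}\, |g|\,dx$ extends continuously to $\dot H^1$, and $v$ can be approximated by compactly supported functions) yields
\[
 0 = \innp{Hv}{v} = \int_{\R^d} g^{jk}(x) \partial_j v \,\overline{\partial_k v}\, |g(x)|\,dx \gtrsim \nr{\nabla v}_{L^2}^2,
\]
by uniform ellipticity of $(g^{jk})$. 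Hence $\nabla v \equiv 0$, so $v$ is constant, and the embedding $\dot H^1 \hookrightarrow L^{2d/(d-2)}$ forces $v = 0$. But then $\pppg x^{-\d} u = -Wv = 0$, so $u = 0$.

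Combining the two steps, $I+K$ is a compact perturbation of the identity which is injective, hence bijective on $L^2$ by the Fredholm alternative. The main obstacle I anticipate is the justification of the integration by parts $\innp{Hv}{v} = \int g^{jk}\partial_j v \,\overline{\partial_k v}|g|\,dx$ for a general $v \in \dot H^1$ satisfying $Hv = 0$ distributionally; this requires a cutoff/approximation argument using the fact that $Wv$ is compactly supported so that $v$ is harmonic (for $H$) outside a compact set, and thus decays in a controlled way at infinity within $\dot H^1$.
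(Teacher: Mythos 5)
Your proof is correct and follows the same overall Fredholm-alternative strategy as the paper, but the two compactness arguments differ. The paper proves compactness of $\langle x \rangle^\delta W R(0)\langle x \rangle^{-\delta}$ by re-using the decomposition established in the proof of Proposition~\ref{continuite}: it writes the operator as $\bigl(\langle x\rangle^\delta W\chi(H_0)\langle x\rangle\bigr)\langle x\rangle^{-1}R(0)\langle x\rangle^{-1}\bigl(\langle x\rangle\chi(H_0)\langle x\rangle^{-\delta}\bigr)$ plus a remainder $\langle x\rangle^\delta W\frac{1-\chi^2(H_0)}{H_0}\langle x\rangle^{-\delta}$, the compactness of the first term resting on the compactness of the two outer brackets, already isolated in the proof of Proposition~\ref{continuite}. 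Your argument is more self-contained: you derive boundedness of $v_n=R(0)\langle x\rangle^{-\delta}u_n$ in $\dot H^1\cap H^2_{\loc}$ from the mapping property $R(0):\dot H^{-1}\to\dot H^1$ plus local elliptic regularity, extract a subsequence by Rellich--Kondrachov, and use that $W$ has compactly supported coefficients to obtain $L^2$ convergence of $\langle x\rangle^\delta Wv_n$. This avoids going back to the proof of Proposition~\ref{continuite}, at the cost of invoking elliptic regularity. For injectivity, the arguments are essentially the same: both reduce to showing that $v\in\dot H^1$ with $\Delta_g v=0$ (in distributions) implies $v=0$ by testing the quadratic form against $v$ itself. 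The paper implements this concretely via the cutoff $\chi(\e x)$, with error terms controlled by Hardy's inequality and the short-range decay of the coefficients; you instead appeal to the density of compactly supported functions in $\dot H^1$ (which you correctly flag as the point needing care). Both conclude $\nabla v\equiv 0$, so that $v$ is constant and hence zero by the Sobolev embedding $\dot H^1\hookrightarrow L^{2d/(d-2)}$.
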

This proposition is essentially a consequence of the fact that $ 0 $ is neither an eigenvalue nor a resonance for $ H $, \ie that there are no non trivial solution to $ \Delta_g u = 0 $  in $ \dot{H}^1 $. 

We postpone the proofs of Propositions \ref{continuite} and \ref{inversibilite} to the end of the present section and explain first how to use them to get resolvent estimates.
\begin{theorem} \label{low-freq-Laplacien} Let $ \e > 0 $ and $ n \leq d $. Let $ \delta $ be greater than $ n + \frac{1}{2} $ if $ n \geq \frac{d}{2} $ and greater than $ n + 1 $ otherwise. Then there exists a neighborhood $ {\mathcal U} $ of $ 0 $ in $ {\mathbb C} $ and $ C \geq 0 $ such that for all $ z \in {\mathcal U} \cap {\mathbb C}_+ $ we have
$$ \left| \left| \langle x \rangle^{- \delta} R_g^{(n)}(z) \langle x \rangle^{- \delta} \right| \right|_{{\mathcal L}(L^2)} \leq C \left( 1 + |z|^{d-2-n- \e} \right) $$
and 
$$  \left| \left| \langle x \rangle^{- \delta} \nabla R_g^{(n)}(z) \langle x \rangle^{- \delta} \right| \right|_{{\mathcal L}(L^2)} \leq C \left( 1 + |z|^{d-1-n- \e} \right) . $$
\end{theorem}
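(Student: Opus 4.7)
The plan is to deduce the resolvent estimates for $R_g(z)$ from those already proved for $R(z)$ in Theorems \ref{th-low-freq} and \ref{th-low-freq-bis}, using the perturbative decomposition \eqref{pourW} and the two preparatory propositions. The algebraic backbone is the identity
\begin{equation*}
R_g(z)\bigl(I + WR(z)\bigr) = R(z), \qquad z \in \C_+,
\end{equation*}
which is an immediate consequence of $R_g(z)^{-1} - R(z)^{-1} = W$. Conjugating by the weights $\pppg x^{\pm \delta}$ gives, whenever meaningful,
\begin{equation*}
\pppg x^{-\delta} R_g(z) \pppg x^{-\delta} = \pppg x^{-\delta} R(z) \pppg x^{-\delta} \cdot \bigl[I + \pppg x^{\delta} WR(z) \pppg x^{-\delta}\bigr]^{-1}.
\end{equation*}

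I would first establish the case $n = 0$. Using the splitting in \eqref{Rg2} with $\mu = \Im z$, the bracket on the right rewrites as $I + \pppg x^{\delta} W R(i\mu) \pppg x^{-\delta} + \pppg x^{\delta} W(R(z) - R(i\mu)) \pppg x^{-\delta}$. By Proposition \ref{continuite}, the first $W$-term converges in operator norm on $L^2$ to $\pppg x^{\delta} W R(0) \pppg x^{-\delta}$ as $\mu \to 0^+$; the difference $\pppg x^{\delta} W(R(z) - R(i\mu)) \pppg x^{-\delta}$ is controlled in norm by the Hölder continuity of $R(\cdot)$ in suitably weighted spaces (coming from Theorems \ref{th-inter-freq} and \ref{th-low-freq-bis}, via the abstract Mourre framework of Section \ref{sec-Mourre}; in dimension $d \geq 3$ this amounts to integrating the bound on $R'(\xi)$ along a short path), and hence tends to $0$ as $|z| \to 0$. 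Since Proposition \ref{inversibilite} asserts that $I + \pppg x^{\delta} W R(0) \pppg x^{-\delta}$ is invertible on $L^2$, a small perturbation argument shows that the bracket is uniformly invertible on $L^2$ for $z$ in some neighborhood $\Uc \cap \C_+$ of $0$. Combined with the bound on $\pppg x^{-\delta} R(z) \pppg x^{-\delta}$ from Theorem \ref{th-low-freq}, this yields the $n=0$ estimate. The gradient estimate at $n = 0$ is obtained by applying $\nabla$ on the left of the same factorisation and invoking the second part of Theorem \ref{th-low-freq}.

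For $n \geq 1$, I would proceed by induction on $n$. Differentiating the identity $n$ times via Leibniz gives
\begin{equation*}
R_g^{(n)}(z)\bigl(I + WR(z)\bigr) = R^{(n)}(z) - \sum_{k=1}^n \binom{n}{k} R_g^{(n-k)}(z)\, W R^{(k)}(z),
\end{equation*}
so that, after conjugation and inversion of the bracket (again uniformly bounded for $z \in \Uc \cap \C_+$ by the base-case argument with $\delta$ replaced by $\delta_n$),
\begin{equation*}
\pppg x^{-\delta_n} R_g^{(n)}(z) \pppg x^{-\delta_n} = \pppg x^{-\delta_n}\Bigl[R^{(n)}(z) - \sum_{k=1}^n \binom{n}{k} R_g^{(n-k)}(z)\, W R^{(k)}(z)\Bigr] \pppg x^{-\delta_n} \cdot B_n(z)^{-1}.
\end{equation*}
For each summand, I would insert $\pppg x^{-\delta_{n-k}} \pppg x^{\delta_{n-k}}$ between the two resolvent factors and exploit that $W$ has compactly supported coefficients (so that $\pppg x^{\delta_{n-k}} W$ equals $\chi \pppg x^{\delta_{n-k}} W$ for some compactly supported $\chi$) to bound
\begin{equation*}
\bigl\|\pppg x^{-\delta_n} R_g^{(n-k)}(z)\, W R^{(k)}(z) \pppg x^{-\delta_n}\bigr\| \lesssim \bigl\|\pppg x^{-\delta_{n-k}} R_g^{(n-k)}(z) \pppg x^{-\delta_{n-k}}\bigr\| \cdot \bigl\|\pppg x^{-\delta_n} \nabla R^{(k)}(z) \pppg x^{-\delta_n}\bigr\|.
\end{equation*}
The first factor is controlled by $1 + |z|^{d-2-(n-k)-\e}$ by induction, and the second by $1 + |z|^{d-1-k-\e}$ by Theorem \ref{th-low-freq}. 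A direct case analysis on the signs of the exponents, crucially using $d \geq 3$, shows that the product is dominated by $1 + |z|^{d-2-n-\e}$, which closes the induction for the first estimate. The gradient estimate is obtained identically, applying $\nabla$ on the left throughout and using the gradient part of the inductive hypothesis on $R_g^{(n-k)}(z)$.

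The main obstacle is not the inductive bookkeeping, which is essentially automatic once the base case is in hand, but rather the two preparatory statements: Proposition \ref{continuite}, which relies on the fine structure of $R(i\mu)$ near $\mu = 0$ and on the existence of the limiting object $R(0): \dot H^{-1} \to \dot H^1$ (extracted from the $n=0$ part of Theorem \ref{th-low-freq-bis}, together with the fact that the compactly supported first-order operator $W$ effectively removes the singularity of $R(0)$ at spatial infinity), and Proposition \ref{inversibilite}, which is a zero-energy Fredholm statement encoding the absence of eigenvalue and resonance at zero for $-\Delta_g$, proved by a standard limiting absorption argument exploiting the positivity of $H$ and the long-range structure of $g$.
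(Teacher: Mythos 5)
Your proof is correct and follows the same perturbative strategy as the paper's (same algebraic identity $R_g(z)(I+WR(z))=R(z)$, same use of Propositions \ref{continuite} and \ref{inversibilite} to invert the bracket near $z=0$, same way of treating $R(z)-R(i\mu)$ by integrating $R'$), but the way the derivatives are organized is genuinely different. You differentiate $R_g(z)(I+WR(z))=R(z)$, which produces terms $R_g^{(n-k)}(z)\,W R^{(k)}(z)$ and therefore forces an induction on $n$; you then have to thread a family of weights $\delta_{n-k}$ between the two resolvent factors. The paper instead proves the estimates non-inductively, in two steps. First it fixes one \emph{large} weight $N$, writes $\pppg x^{-N} R_g^{(n)}(z) \pppg x^{-N}$ as a Leibniz sum of $\pppg x^{-N}R^{(m)}(z)\pppg x^{-N}\,\partial_z^{n-m}(I+A(z))^{-1}$, and directly bounds $\partial_z^{n-m}(I+A(z))^{-1}$ by $1+|z|^{d-1-(n-m)-\e}$ through the Fa\`a di Bruno formula — this only involves derivatives of $R$, never of $R_g$, hence no induction. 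Then it recovers the stated weight $\delta$ via the iterated resolvent identity $R_g = R - RWR + RW R_g W R$, which sandwiches $R_g$ between two compactly supported operators so that the weight on $R_g$ becomes irrelevant. What your version buys is a single uniform argument without the auxiliary large weight; what the paper's version buys is avoiding the weight bookkeeping $\delta_{n-k}\leq\delta_n$ and avoiding the induction altogether. Both lead to the same exponent count (the case analysis you sketched, needing $d-1-\e\geq 0$, is correct). One small inaccuracy worth flagging: the object $R(0):\dot H^{-1}\to\dot H^1$ is not extracted from Theorem \ref{th-low-freq-bis}; in the paper it is defined via Lax--Milgram (using $\|\,|D|u\|\sim\|H_0^{1/2}u\|$), and Proposition \ref{continuite} is proved by a separate weak-compactness argument, not by a limiting-absorption reading of the earlier resolvent bounds. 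Since you treat the two propositions as black boxes, this does not affect the correctness of your argument.
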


Here we consider derivatives of order $ n \leq d $ for this is only what we need to prove the result on the energy decay.

\begin{proof}
We first show that the estimates hold with $ \delta $ replaced by some $ N $ large enough. Using (\ref{Rg2}), we have
$$ R_g (z) \langle{ x \rangle}^{-N} \big( I + A (z)  \big)  =  R (z) \langle{ x \rangle}^{-N} , $$
where
$$ A (z) = \langle{ x \rangle}^N W R (0) \langle{ x \rangle}^{-N} + \langle{ x \rangle}^N W \big( R (i \mu) - R (0) + R (z) - R (i \mu)  \big) \langle{ x \rangle}^{-N} . $$
By Propositions \ref{continuite} and \ref{inversibilite}, $ I +  \langle{ x \rangle}^N W R (0) \langle{ x \rangle}^{-N} + \langle{ x \rangle}^N W \big( R (i \mu) - R (0) \big) \langle x \rangle^{-N} $ is invertible if $ \mu $ is small enough. Furthermore, by using Theorem \ref{th-low-freq} and proceeding as in the end of the proof of Proposition \ref{prop-b589}, we see that
 $  \langle{ x \rangle}^N W  \big( R (z) - R (i \mu)  \big) \langle{ x \rangle}^{-N} $ is small when $ z $ is close to zero (by writing the difference as an integral).
Therefore $ I + A (z) $ is invertible if $ z $ is small enough, so that we can write (for $ j = 0 $ or $1$)
\begin{eqnarray}
 \langle{ x \rangle}^{-N} \nabla^j R_g (z) \langle{ x \rangle}^{-N} = \langle{x \rangle}^{-N} \nabla^j R (z) \langle{ x \rangle}^{-N} \big( I + A (z) \big)^{-1} . \label{recurrence} 
\end{eqnarray} 
Using the form of $ W $ in (\ref{pourW}) and the estimates of Theorem \ref{th-low-freq}, it is not hard to check (using in particular that $ N  $ is large enough) that
$$ \left| \left| \partial_z^n \big( I + A (z) \big)^{-1} \right|\right|_{{\mathcal L}(L^2)} \lesssim \left( 1 + |z|^{d-1-n - \e} \right) .  $$
The form of $ W $ ensures that it suffices to use estimates  on $ \langle{ x \rangle}^{-\delta} \nabla R^{(n)}(z) \langle{ x \rangle}^{- \delta} $, which are better than those on $  \langle{ x \rangle}^{-\delta} R^{(n)}(z) \langle{ x \rangle}^{- \delta}  $.
These estimates, Theorem \ref{th-low-freq} and (\ref{recurrence}) yield easily the expected estimates with $ N $ instead of $ \delta $. To obtain the result with  $ \delta $, it then suffices to write
$$ R_g (z) = R (z) - R (z) W R (z) + R (z) W R_g (z) W R (z) , $$
and use the previous estimates on $ \langle{ x \rangle}^{-N} \nabla^j R^{(n)}_g (z) \langle{ x \rangle}^{-N} $ (since $ W $ has compactly supported coefficients) combined with those  given by Theorem \ref{th-low-freq}.
% Notice that this uses the fact that, schematically, $ W = \chi \nabla  $ with $ \chi \in %C_0^{\infty} $. 
\end{proof} 

%and
%$$ T (z) = R (z) - R_g (z) \scal{x}^{-N} \big( \scal{x}^N W (R (i \mu) - R (0) ) \big) $$

%\finpreuve

% \bigskip

The rest of the section is devoted to  the proofs of Propositions \ref{continuite} and \ref{inversibilite}.

\begin{lemma} The operator $ H_0 $, acting in the distributions sense, is an isomorphism from $ \dot{H}^1 $ to $ \dot{H}^{-1} $.  We denote by $R(0)$ its inverse. Then for all $ f \in \dot{H}^{-1} $, $ u:= R (0)f $ is the unique solution in $\dot H^1$ to $ H_0 u = f $ in the temperate distributions sense. 
\end{lemma}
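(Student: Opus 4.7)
The plan is to realize $H_0$ as the Riesz isomorphism associated with the coercive sesquilinear form
\[
a(u,v) = \int_{\R^d} \langle G(x) \nabla u, \nabla v \rangle \, dx, \qquad u,v \in \dot H^1,
\]
and to verify the two assertions separately: first the isomorphism property, second the uniqueness in the sense of tempered distributions. I work throughout in dimension $d \geq 3$, so that, as recalled just after Proposition \ref{proprieteH0}, $\dot H^1$ is a Hilbert space continuously embedded in $L^{2d/(d-2)}$ with inner product $\innp{\nabla u}{\nabla v}_{L^2}$.

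The first step is to observe that, because $G(x) - I_d$ is a bounded smooth matrix valued function decaying at infinity in the sense of \eqref{dec-metric-a}, and because $G(x)$ is pointwise positive definite, the matrix $G(x)$ is in fact uniformly positive definite: there exist $0 < c_0 \leq C_0 < \infty$ such that $c_0 \, I_d \leq G(x) \leq C_0 \, I_d$ for every $x \in \R^d$ (the upper bound is immediate; the lower bound follows by combining positivity on the compact region $\{|x|\leq R\}$ with the fact that $G(x) \to I_d$ at infinity). Consequently $a$ is continuous on $\dot H^1 \times \dot H^1$ with $|a(u,v)| \leq C_0 \nr{u}_{\dot H^1} \nr{v}_{\dot H^1}$, and coercive: $a(u,u) \geq c_0 \nr{u}_{\dot H^1}^2$. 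The Lax--Milgram theorem then provides, for every $f \in (\dot H^1)^*$, a unique $u \in \dot H^1$ with $a(u,v) = \langle f, v \rangle$ for all $v \in \dot H^1$; the resulting solution operator is bounded from $(\dot H^1)^*$ to $\dot H^1$.

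The second step is to identify $(\dot H^1)^*$ with $\dot H^{-1}$ and to check that the weak formulation $a(u,v) = \langle f,v\rangle$ is equivalent to $H_0 u = f$ in the distributions sense. Since $\Sc(\R^d)$ is dense in $\dot H^1$, it suffices, after taking $v \in \Sc(\R^d)$ and integrating by parts, to note that $a(u,v) = \langle -\divg(G\nabla u), \bar v\rangle_{\Sc',\Sc} = \langle H_0 u, \bar v\rangle$, so that the equation $a(u,\cdot) = \langle f,\cdot\rangle$ on a dense subset of Schwartz functions is exactly the distributional identity $H_0 u = f$. Defining $R(0) f$ as the $u$ produced by Lax--Milgram thus gives a bounded inverse $\dot H^{-1} \to \dot H^1$ of $H_0$.

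For uniqueness in the tempered distributions sense, suppose $u \in \dot H^1$ satisfies $H_0 u = 0$ in $\Sc'(\R^d)$. Testing against an arbitrary $v \in \Sc(\R^d)$ and integrating by parts (justified because $\nabla u \in L^2$, $G$ is bounded and $v,\nabla v$ are Schwartz), we obtain $a(u,v) = 0$, and the density of $\Sc(\R^d)$ in $\dot H^1$ together with continuity of $a$ upgrades this to $a(u,u) = 0$, whence $\nabla u = 0$ and $u = 0$ as an element of $\dot H^1$ (which is a space of equivalence classes modulo constants only through the embedding in $L^{2d/(d-2)}$ for $d\geq 3$, so $u=0$ unambiguously). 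The only mildly delicate point in this plan, and the one requiring some care, is the identification $(\dot H^1)^* = \dot H^{-1}$ and the density of $\Sc(\R^d)$ in $\dot H^1$; both are standard in dimension $d \geq 3$ (the latter by convolution with mollifiers followed by multiplication by a smooth cutoff and use of the Sobolev embedding $\dot H^1 \hookrightarrow L^{2d/(d-2)}$, which ensures the cutoff does not alter the norm significantly), so no real obstacle is expected.
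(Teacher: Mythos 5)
Your proof is correct and takes the same route as the paper: both arguments reduce to coercivity and boundedness of the form $a(u,v)=\innp{G\nabla u}{\nabla v}_{L^2}$ on $\dot H^1$ (the paper phrases this as the two-sided bound $\nr{|D|u}_{L^2}\lesssim\nr{H_0^{1/2}u}_{L^2}\lesssim\nr{|D|u}_{L^2}$) followed by Lax--Milgram. You simply spell out the steps the paper compresses into one sentence — uniform ellipticity of $G$, the duality $(\dot H^1)^*=\dot H^{-1}$, density of $\Sc(\R^d)$, and the passage from the weak to the distributional formulation.
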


\begin{proof}
This lemma is basically a consequence of the fact that
\begin{eqnarray}
\big| \big| |D| u \big| \big|_{L^2} \lesssim \big| \big| H_0^{1/2} u \big| \big|_{L^2} \lesssim \big| \big| |D| u \big| \big|_{L^2}, \qquad u \in H^1 , \label{borneequivalence}
\end{eqnarray}
and the standard Lax-Milgram argument.
\end{proof}

%\bigskip

% \bigskip

%\begin{defi} We define $ R (0) $ as the inverse of the isomorphism $ H_0 : \dot{H}^1 \rightarrow \dot{H}^{-1} $.
%\end{defi}

\begin{lemma} \label{equivalenceHmoinsun} 
The space $ L^2 \cap \dot{H}^{-1} $ is contained in the domain $\Dom\big(H_0^{-1/2}\big)$ of $H_0^{-1/2}$ and there exists $ C > 1 $ such that
\begin{eqnarray}
 C^{-1} || \psi ||_{\dot{H}^{-1}} \leq \big| \big| H_0^{-1/2} \psi \big| \big|_{L^2} \leq C || \psi ||_{\dot{H}^{-1}} , \label{equivalence}
\end{eqnarray} 
for all $ \psi \in L^2 \cap \dot{H}^{-1} $.
\end{lemma}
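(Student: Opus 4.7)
The plan is to proceed by a regularization argument. I would relate $\| (H_0+\varepsilon)^{-1/2}\psi \|_{L^2}$ to the $\dot H^{-1}$-norm of $\psi$ via the Dirichlet form of $H_0$ together with the equivalence \eqref{borneequivalence}, and then pass to the limit $\varepsilon \searrow 0$ using the spectral theorem and monotone convergence.

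Fix $\psi \in L^2 \cap \dot H^{-1}$ and, for every $\varepsilon > 0$, set $u_\varepsilon = (H_0+\varepsilon)^{-1}\psi \in \Dom(H_0) \subset H^1 \subset \dot H^1$ (the last embedding being valid for $d \geq 3$). Testing $(H_0+\varepsilon)u_\varepsilon = \psi$ against $u_\varepsilon$ and using the form representation of $H_0$ yields the identity
\[
\| (H_0+\varepsilon)^{-1/2}\psi \|_{L^2}^2 = \langle u_\varepsilon, \psi \rangle_{L^2} = \int_{\R^d} G(x) \nabla u_\varepsilon \cdot \overline{\nabla u_\varepsilon}\, dx + \varepsilon\|u_\varepsilon\|_{L^2}^2.
\]
Bounding $\langle u_\varepsilon, \psi \rangle$ from above by $\|\psi\|_{\dot H^{-1}}\, \| |D| u_\varepsilon \|_{L^2}$ via the $\dot H^1$--$\dot H^{-1}$ duality, and the right-hand side from below by $C^{-1} \| |D| u_\varepsilon \|_{L^2}^2$ via \eqref{borneequivalence}, I would conclude that $\| (H_0+\varepsilon)^{-1/2}\psi \|_{L^2} \leq C \|\psi\|_{\dot H^{-1}}$ uniformly in $\varepsilon > 0$. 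By the spectral theorem $\| (H_0+\varepsilon)^{-1/2}\psi \|_{L^2}^2 = \int_0^\infty (\lambda+\varepsilon)^{-1}\, d\mu_\psi(\lambda)$, where $\mu_\psi$ is the spectral measure of $\psi$ for $H_0$; monotone convergence as $\varepsilon \searrow 0$ then gives $\psi \in \Dom(H_0^{-1/2})$ together with the upper bound $\| H_0^{-1/2}\psi \|_{L^2} \leq C \|\psi\|_{\dot H^{-1}}$.

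For the reverse estimate, I would reuse the family $(u_\varepsilon)$. For $v \in C_0^\infty(\R^d)$, testing $(H_0+\varepsilon)u_\varepsilon = \psi$ against $v$ yields
\[
\langle \psi, v \rangle_{L^2} = \int_{\R^d} G \nabla u_\varepsilon \cdot \nabla \bar v\, dx + \varepsilon \langle u_\varepsilon, v \rangle_{L^2}.
\]
Comparing the spectral integrals $\int (\lambda+\varepsilon)^{-1}\, d\mu_\psi \leq \int \lambda^{-1}\, d\mu_\psi = \|H_0^{-1/2}\psi\|_{L^2}^2$ and invoking \eqref{borneequivalence} once more, I would obtain $\| |D| u_\varepsilon \|_{L^2} \leq C \| H_0^{-1/2}\psi \|_{L^2}$ and $\sqrt{\varepsilon}\|u_\varepsilon\|_{L^2} \leq \|H_0^{-1/2}\psi\|_{L^2}$. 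The first term in the display is then bounded by $C \| H_0^{-1/2}\psi \|_{L^2}\, \| |D| v \|_{L^2}$ and the second by $\sqrt{\varepsilon}\, \|H_0^{-1/2}\psi\|_{L^2}\, \|v\|_{L^2}$, which vanishes as $\varepsilon \to 0$. Taking the supremum over $v$ in the unit sphere of $C_0^\infty(\R^d)$ for the $\| |D|\cdot\|_{L^2}$-norm and using the density of $C_0^\infty$ in $\dot H^1$ (valid for $d \geq 3$) yields the lower bound $\|\psi\|_{\dot H^{-1}} \leq C \|H_0^{-1/2}\psi\|_{L^2}$. The main subtle point in this plan is the careful handling of the duality pairing between $L^2 \cap \dot H^{-1}$ and $H^1 \subset \dot H^1$, which must agree with the standard $L^2$ inner product on its natural domain; this follows from the density of $C_0^\infty(\R^d)$ in $\dot H^1$ and the very definition of the $\dot H^{-1}$-norm.
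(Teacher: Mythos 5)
Your proposal is correct, but it takes a genuinely different route from the paper's proof. Both arguments have the same single essential input — the two-sided bound $\| |D| u \|_{L^2} \sim \| H_0^{1/2} u\|_{L^2}$ from \eqref{borneequivalence} together with $\Dom(H_0^{1/2}) = H^1$ — but they deploy it differently. The paper works directly with $H_0^{\pm 1/2}$ in a pure duality argument: for the upper bound it tests $\psi$ against $H_0^{-1/2}v$ with $v \in \Dom(H_0^{-1/2})$, rewrites $\psi = |D|\,|D|^{-1}\psi$, and transfers $|D|$ onto $H_0^{-1/2}v \in H^1$; for the lower bound it tests $|D|^{-1}\psi$ against $\phi \in \Sc$ and transfers $|D|^{-1}$ onto $\phi$, then inserts $H_0^{-1/2}H_0^{1/2}$. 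No regularization is involved, and membership of $\psi$ in $\Dom(H_0^{-1/2})$ drops out from the boundedness of the functional $v \mapsto (\psi, H_0^{-1/2}v)$. Your proof instead regularizes with $u_\e = (H_0+\e)^{-1}\psi$, exploits the exact Dirichlet-form identity $\|(H_0+\e)^{-1/2}\psi\|^2 = \innp{G\nabla u_\e}{\nabla u_\e} + \e\|u_\e\|^2$, and derives the $\e$-uniform bound from coercivity plus the $\dot H^1$--$\dot H^{-1}$ duality before passing to the limit by monotone convergence. What each buys: the paper's route is shorter and avoids the limit argument entirely, since it only ever touches bounded quantities; your route is more hands-on and variational, replaces the unbounded operator $H_0^{-1/2}$ by the bounded family $(H_0+\e)^{-1/2}$ throughout the estimates, and would adapt more readily to settings where only a quadratic-form description of $H_0$ is convenient. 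One small remark on bookkeeping: your lower bound already uses that $\psi \in \Dom(H_0^{-1/2})$ (to give $\int \l^{-1}\,d\m_\psi < \infty$ a meaning and to dominate $\int(\l+\e)^{-1}\,d\m_\psi$), so the two halves of your argument are genuinely sequential — worth stating explicitly if this were written out in full.
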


\begin{proof} %We first prove that $ {\mathcal S}(\R^n) \subset \mbox{Dom}(H_0^{-1/2}) $ and that (\ref{equivalence}) holds for Schwartz functions. 
Let $ \psi \in L^2 \cap \dot{H}^{-1} $ and $ v \in \Dom \big(H_0^{-1/2}\big) $. Then
$$ \big( \psi , H_0^{-1/2} v \big) = \big( |D| |D|^{-1} \psi , H_0^{-1/2} v \big) = \big(  |D|^{-1} \psi , |D| H_0^{-1/2} v \big) ,
 $$
since $ |D|^{-1} \psi $ belongs to $ L^2 $ hence to $ H^1 $ and similarly $ H_0^{-1/2} v $ belongs to $ \mbox{Dom}(H_0^{1/2}) = H^1 $. Since $H_0^{1/2} H_0^{-1/2} v = v$ we obtain
 $$ \big| \big( \psi , H_0^{-1/2} v \big) \big| \leq \big| \big| |D|^{-1} \psi \big| \big|_{L^2} \big| \big| |D| H_0^{-1/2} v \big| \big|_{L^2} \lesssim || \psi ||_{\dot{H}^{-1}} || v ||_{L^2}  $$
 using (\ref{borneequivalence}). All this shows that $ \psi  $ belongs to $ \mbox{Dom}(H_0^{-1/2}) $ and that $ \big| \big| H_0^{-1/2} \psi \big| \big|_{L^2} \lesssim || \psi ||_{\dot{H}^{-1}}  $. To prove the reverse inequality, we observe that for all $ \phi \in {\mathcal S} ({\mathbb R}^d) $,
$$ \big( |D|^{-1} \psi ,  \phi \big) =  \big( \psi , |D|^{-1} \phi \big) = \big( H_0^{1/2} H_0^{-1/2} \psi , |D|^{-1} \phi \big) = \big(  H_0^{-1/2} \psi ,  H_0^{1/2} |D|^{-1} \phi \big) , $$
using, in the third equality that $ |D|^{-1} \phi $ belongs to $ H^1 $. Using (\ref{borneequivalence}) again, this implies that
$$ \big| \big( |D|^{-1} \psi ,  \phi \big) \big| \lesssim || H_0^{-1/2} \psi ||_{L^2} || \phi ||_{L^2} , $$
which yields $ \big| \big | |D|^{-1} \psi \big| \big|_{L^2} \lesssim || H_0^{-1/2} \psi ||_{L^2} $. This completes the proof.
% of (\ref{equivalence}) when $ \psi \in {\mathcal S}(\R^n) $. Assume now that $ \psi \in L^2 \cap \dot{H}^{-1} $ and let $ (\psi_k)_{k\in \Na} $ be a sequence of %Schwartz functions approaching $ \psi $ in $ \dot{H}^{-1} $. Let $ v \in \mbox{Dom}(H_0^{-1/2}) $. Then
%$$ \big( H_0^{-1/2} v , \psi \big) =  $$
\end{proof}

%\bigskip

%\begin{lemma} Soit $ f \in L^2 \cap \dot{H}^{-1} $ and $ \chi \in C_0^{\infty} (\R) $ such that $ \chi \equiv 1 $ near $ 0 $. Then, for all $ \epsilon > 0 $, $ %(1-\chi)(H/\epsilon) f $ belongs to $ L^2 \cap \dot{H}^{-1} $ and $ (1-\chi)(H_0/\epsilon) f \rightarrow f $ in $ L^2 \cap \dot{H}^{-1} $ as $ \epsilon \rightarrow 0 $.
%\end{lemma}

%\noindent {\it Proof.} Obviously $ (1-\chi)(H_0/\epsilon) f \in L^2 $. Furthermore, let $ (f_k)_{k \in \Na} $ be a sequence of Schwartz functions approaching
%$ f $ in $ \dot{H}^{-1} $. Then, $ \big( (1-\chi)(H/\epsilon) f_k \big)_{k \in \Na} $ is also a sequence of Schwartz functions and, by Lemma \ref{equivalenceHmoinsun},
%it is a Cauchy sequence in $ \dot{H}^{-1} $ hence it converges in $ \dot{H}^{-1} $. To identify the limit, we compute the limit in the distributions sense. If $ \psi \in %{\mathcal S}(\R^n) $, then
%$$ \big( \psi , (1-\chi)(H_0/\epsilon) f_k  \big) = \big( (1-\chi)(H_0/\epsilon) \psi , f_k \big) \rightarrow \big( (1-\chi)(H_0/\epsilon) \psi , f \big)   $$
%since $ (1-\chi)(H_0/\epsilon) \psi \in {\mathcal S}(\R^n) $. This completes the proof that $ (1-\chi)(H_0/\epsilon) f $ belongs to $ L^2 \cap \dot{H}^{-1}  $.
%Then, using that $ 0 $ is not an eigenvalue of $ H_0 $, we have $ (1-\chi)(H_0/\epsilon) f \rightarrow f $ in $ L^2 $ as $ \epsilon \rightarrow 0 $.

% \finpreuve

% \bigskip

\begin{lemma} Let $ f \in L^2 \cap \dot{H}^{-1} $, then
\begin{eqnarray}
 ( H_0 + \mu^2 )^{-1} f \rightharpoonup R (0) f \qquad \mbox{in} \ \dot{H}^1 , \label{convergencefaible}
\end{eqnarray}
as $ \mu \rightarrow 0 $.
\end{lemma}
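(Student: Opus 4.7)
The strategy is a standard uniform energy estimate followed by a weak compactness / uniqueness of limit argument. Set $u_\mu = (H_0 + \mu^2)^{-1} f$, which is well defined in $H^2 \subset \dot H^1$ since $f \in L^2$ and $H_0 + \mu^2$ is positive and self-adjoint on $L^2$. The first goal is to show that the family $\{u_\mu\}_{\mu \in (0,1]}$ is bounded in $\dot H^1$ uniformly in $\mu$. Pairing the identity $(H_0 + \mu^2) u_\mu = f$ with $u_\mu$ in $L^2$ gives
\[
\nr{H_0^{1/2} u_\mu}_{L^2}^2 + \mu^2 \nr{u_\mu}_{L^2}^2 = \innp{f}{u_\mu}_{L^2}.
\]
Since $u_\mu \in H^2 \subset \Dom(H_0^{1/2})$ and, by Lemma~\ref{equivalenceHmoinsun}, $f \in L^2 \cap \dot H^{-1} \subset \Dom(H_0^{-1/2})$ with $\nr{H_0^{-1/2} f}_{L^2} \lesssim \nr{f}_{\dot H^{-1}}$, the right-hand side can be rewritten as $\innp{H_0^{1/2} u_\mu}{H_0^{-1/2} f}_{L^2}$, which is estimated by $\nr{H_0^{1/2} u_\mu}_{L^2} \nr{f}_{\dot H^{-1}}$. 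Dropping the non-negative $\mu^2 \nr{u_\mu}^2_{L^2}$ term and using \eqref{borneequivalence} (equivalence of $\nr{H_0^{1/2} \cdot}_{L^2}$ and $\nr{|D|\,\cdot}_{L^2}$ on $H^1$) yields
\[
\nr{u_\mu}_{\dot H^1} \lesssim \nr{f}_{\dot H^{-1}} \qquad \text{uniformly in } \mu \in (0,1].
\]

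By the Banach-Alaoglu theorem, any sequence $\mu_n \searrow 0$ admits a subsequence along which $u_{\mu_n}$ converges weakly in $\dot H^1$ to some limit $u_*$. The next step is to identify $u_*$. Let $\vf \in \Sc(\R^d)$. Applying the identity $(H_0 + \mu_n^2) u_{\mu_n} = f$ against $\vf$ (as an equality in the tempered distributions sense, justified since $u_{\mu_n} \in H^2$) gives
\[
\innp{u_{\mu_n}}{H_0 \vf}_{L^2} + \mu_n^2 \innp{u_{\mu_n}}{\vf}_{L^2} = \innp{f}{\vf}_{L^2}.
\]
The first term passes to the limit: writing $\innp{u_{\mu_n}}{H_0 \vf}_{L^2} = \innp{|D| u_{\mu_n}}{|D|^{-1} H_0 \vf}_{L^2}$, this is a duality pairing between the weakly convergent sequence $u_{\mu_n} \rightharpoonup u_*$ in $\dot H^1$ and the fixed element $H_0 \vf \in \dot H^{-1}$, so it converges to $\innp{u_*}{H_0\vf}$. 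For the second term, Sobolev embedding $\dot H^1 \hookrightarrow L^{\frac{2d}{d-2}}$ (recall $d \geq 3$) combined with the uniform bound yields $\abs{\innp{u_{\mu_n}}{\vf}_{L^2}} \lesssim \nr{f}_{\dot H^{-1}} \nr \vf_{L^{\frac{2d}{d+2}}}$, which is bounded, so $\mu_n^2 \innp{u_{\mu_n}}{\vf}_{L^2} \to 0$.

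Passing to the limit we obtain $H_0 u_* = f$ in the tempered distributions sense, with $u_* \in \dot H^1$. By the uniqueness statement of the preceding lemma (characterizing $R(0) f$ as the unique $\dot H^1$ solution of this equation), $u_* = R(0) f$. The limit being independent of the extracted subsequence, a standard argument forces the whole family $u_\mu$ to converge weakly to $R(0) f$ in $\dot H^1$ as $\mu \searrow 0$, which is \eqref{convergencefaible}. The only mildly delicate point is the rigorous justification of the pairings mixing $L^2$, $\dot H^{\pm 1}$, and $\Dom(H_0^{\pm 1/2})$; all of these follow from Lemma~\ref{equivalenceHmoinsun} and \eqref{borneequivalence} applied to elements of $H^2$ (which belong to all spaces simultaneously).
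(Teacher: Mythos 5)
Your proof is correct, but it follows a genuinely different route from the paper. You run a uniform energy bound (which is the paper's \eqref{borneuniforme}) followed by a weak-compactness and uniqueness-of-limit argument: extract a weakly convergent subsequence, pass to the limit in the equation tested against $\Sc(\R^d)$, identify the limit as the unique $\dot H^1$ solution of $H_0 u = f$, and conclude by the subsequence principle. The paper instead proceeds by density: after the same uniform bound, it reduces to $f$ of the form $(1-\h)(H_0/\e)f$ (spectrally cut off away from $0$, dense in $L^2\cap\dot H^{-1}$ since $0$ is not an eigenvalue of $H_0$), and for such $f$ it proves \emph{strong} convergence in $\dot H^1$ directly from the Spectral Theorem, then identifies the limit with $R(0)f$. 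Your approach is more elementary and self-contained (no spectral cutoffs, no density argument), at the cost of invoking weak compactness and the Sobolev embedding $\dot H^1\hookrightarrow L^{2d/(d-2)}$ to dispatch the $\mu^2\innp{u_{\mu}}{\vf}$ term; the paper's approach is more spectral-theoretic and in fact yields strong $\dot H^1$ convergence as a byproduct (though it only states weak convergence, which is all that is used in Proposition \ref{continuite}). Both are complete proofs of the stated statement.
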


\begin{proof} Observe first that $$ 
|| ( H_0 + \mu^2 )^{-1} f ||_{\dot{H}^1} = \big| \big| |D| ( H_0 + \mu^2 )^{-1} f \big| \big|_{L^2} \lesssim || H_0^{1/2} (H_0 + \mu^2)^{-1} f ||_{L^2} . $$ 
Since $ f  $ belongs to $  L^2 \cap \dot{H}^{-1} $, $ |D|^{-1} f $ belongs to $ H^1 $ and, by writing for all $ \psi \in L^2 $,
$$  \big(  H_0^{1/2} (H_0 + \mu^2)^{-1} f , \psi \big) = \big(   |D|^{-1} f , |D| (H_0 + \mu^2)^{-1}H_0^{1/2}  \psi \big) , $$
we see that
$$ || H_0^{1/2} (H_0 + \mu^2)^{-1} f ||_{L^2} \lesssim || H_0^{1/2} (H_0 + \mu^2)^{-1} H_0^{1/2} ||_{{\mathcal L}(L^2)} \big| \big| |D|^{-1} f  \big| \big|_{L^2} \lesssim || f ||_{\dot{H}^{-1}} , $$
and thus
\begin{eqnarray}
|| ( H_0 + \mu^2 )^{-1} f ||_{\dot{H}^1} \lesssim || f ||_{\dot{H}^{-1}} , \qquad \mu > 0 , \ f \in L^2 \cap \dot{H}^{-1} . \label{borneuniforme}
\end{eqnarray}
The interest of the uniform bound (\ref{borneuniforme}) is that it suffices to prove (\ref{convergencefaible}) for $f$ in a dense subset of $ L^2 \cap \dot{H}^{-1} $. Let $ \chi \in C_0^{\infty} ({\mathbb R}) $ such that $ \chi \equiv 1 $ near $ 0 $. Then $ (1-\chi)(H_0 / \e) f \rightarrow f $ in $ L^2 \cap \dot{H}^{-1} $ as $ \e \rightarrow 0 $, by the Spectral Theorem\footnote{and the fact that $ 0 $ is not an eigenvalue of $ H_0 $} and (\ref{equivalence}). Therefore, it suffices to prove the result when $ f $ is replaced by $ (1-\chi)(H_0 / \e) f $. In this case, using the Spectral Theorem again
$$ \left| \left|  ( H_0 + \mu^2 )^{-1} (1-\chi)(H_0 / \e) f - \frac{(1-\chi)(H_0 / \e)}{H_0} f \right| \right|_{\dot{H}^1}   \rightarrow 0, \qquad \mu \rightarrow 0 , $$
since the left hand side is not greater than
$$  C \left| \left|  H_0^{1/2} ( H_0 + \mu^2 )^{-1} (1-\chi)(H_0 / \e) f - H_0^{1/2} \frac{(1-\chi)(H_0 / \e)}{H_0} f \right| \right|_{L^2} . $$
It remains to observe that 
$$ \frac{(1-\chi)(H_0 / \e)}{H_0} f = R (0) (1-\chi)(H_0 / \e) f $$
since the left hand side belongs to $ H^2 \subset \dot{H}^1 $ and solves $ H_0 u = (1-\chi)(H_0 / \e) f $ in the distributions sense.
\end{proof}

\bigskip

\begin{proof}[Proof of Proposition \ref{continuite}.]
 We prove first that
\begin{eqnarray}
  \left| \left| \langle x \rangle^N W \big(  (H_0 + \mu^2)^{-1} -  R (0) \big) \langle x \rangle^{-\delta} \right| \right|_{{\mathcal L}(L^2)} \rightarrow 0, \qquad \mu \rightarrow 0^+ . \label{sousconvergence} 
\end{eqnarray}
Let $ \chi \in C_0^{\infty}(\R) $ be equal to $1$ near $ 0  $. Write
$$   W (H_0 + \mu^2)^{-1} =  W \chi (H_0) (H_0 + \mu^2)^{-1} \chi (H_0) +  W (1 - \chi^2(H_0)) (H_0 + \mu^2)^{-1} . $$
The convergence of the second term of the right hand side is easy. The contribution of the first term is obtained by writing
$$  \left( \langle{ x \rangle}^N W \chi (H_0) \langle{ x \rangle} \right) \langle{ x \rangle}^{-1} (H_0 + \mu^2)^{-1} \langle{ x \rangle}^{-1} \big( \langle{ x \rangle} \chi (H_0) \langle{ x \rangle}^{-\delta} \big) $$
where  $\left( \langle{ x \rangle}^N W \chi (H_0) \langle{ x \rangle} \right)$ and  $\left( \langle{ x \rangle} \chi (H_0) \langle{ x \rangle}^{-\delta} \right)$ are compact
%\footnote{the condition $ s>1$ is important for the second operator} 
on $ L^2  $ and $ \langle{ x \rangle}^{-1} (H_0 + \mu^2)^{-1} \langle{ x \rangle}^{-1} $ converges weakly to  $  \langle{ x \rangle}^{-1} R (0) \langle{ x \rangle}^{-1}$ on $ L^2  $ by Lemma \ref{convergencefaible}
% Theorem \ref{theo-limite} 
and the fact that $ \langle{ x \rangle}^{-1} $ is bounded from $ L^2  $ to $ \dot{H}^{-1} \cap L^2 $. Then, to complete the proof \ie replace $ (H_0 + \mu^2)^{-1} $ by $ R (i \mu) = ( H_0 + \mu a + \mu^2)^{-1} $ in (\ref{sousconvergence}), it suffices to prove that \begin{eqnarray}
 || (H_0 + \mu a + \mu^2)^{-1} - (H_0  + \mu^2)^{-1} ||_{\Lc(\dot{H}^{-1}, \dot{H}^1)} \rightarrow 0, \qquad \mu \rightarrow 0 . \label{convergenceoperateur} 
\end{eqnarray} 
Using the resolvent identity,  (\ref{convergenceoperateur}) follows from the fact that 
$$  \big| \big| |D| (H_0 + \mu a + \mu^2)^{-1} a \mu (H_0  + \mu^2)^{-1} |D| \big| \big|_{{\mathcal L}( L^2 )} \lesssim \mu^{\rho} . $$
Using the fact that $ H_0 + \mu a + \mu^2 \geq H_0 \geq c |D|^2 $ (when tested on $L^2 $ against any $ H^2 $ function), the above estimate is reduced to
$$  \big| \big|  (H_0 + \mu a + \mu^2)^{-1/2} a \mu (H_0  + \mu^2)^{-1/2}  \big| \big|_{\Lc(L^2)} \lesssim \mu^{\rho} . $$
By the Hardy inequality $ \langle{ x \rangle}^{-1} (H_0+\mu^2)^{-1/2} $ is uniformly bounded, therefore it suffices to consider
\begin{eqnarray*}
\lefteqn{ \big| \big| \mu  a  \langle{ x \rangle} (H_0 + \mu a + \mu^2)^{-1/2}  \big| \big|_{{\mathcal L}(L^2)}} \\
&& \lesssim  \mu \big| \big| \langle{ x \rangle}^{-\rho} (H_0 + \mu a + \mu^2)^{-1/2}  \big| \big|_{{\mathcal L}(L^2)} \\
&& \lesssim  \mu \big| \big| \langle{ x \rangle}^{-1} (H_0 + \mu a + \mu^2)^{-1/2}  \big| \big|_{{\mathcal L}(L^2)}^{\rho} \big| \big|  (H_0 + \mu a + \mu^2)^{-1/2}  \big| \big|_{{\mathcal L}( L^2 )}^{1 - \rho} \\
&& \lesssim  \mu^{\rho} ,
\end{eqnarray*}
 using a (simple) interpolation argument in the second line. The result follows. 
\end{proof}
 
 \bigskip

\begin{proof}[Proof of Proposition \ref{inversibilite}.] The proof of Proposition \ref{continuite} shows that $\langle{ x \rangle}^{\delta} W R (0) \langle{ x \rangle}^{-\delta}$ can be written as
$$\left( \langle{ x \rangle}^{\delta} W \chi (H_0) \langle{ x \rangle} \right) \langle{ x \rangle}^{-1}R(0) \langle{ x \rangle}^{-1} \left( \langle{ x \rangle} \chi (H_0) \langle{ x \rangle}^{-\delta} \right) +  \langle{ x \rangle}^{\delta} W \frac{1 - \chi^2(H_0)}{H_0} \langle{ x \rangle}^{-\delta} , $$ 
which implies that this is a compact operator on $ L^2 $. Therefore, $ I + \langle{ x \rangle}^{\delta} W R (0) \langle{ x \rangle}^{-\delta}  $ is a Fredholm operator of index $ 0 $ and thus is bijective if and only if it is injective. Assume that $ v \in L^2 $ satisfies
\begin{eqnarray}
 v + \langle{ x \rangle}^{\delta} W R (0) \langle{ x \rangle}^{-\delta} v = 0 . \label{valeurpropre}
\end{eqnarray}
Then, $ \langle{ x \rangle}^{-\delta} v $ belongs to $ \dot{H}^{-1} $ hence 
$$u := R (0) \langle{ x \rangle}^{-\delta} v \in \dot{H}^1 , $$
satisfies  $ \big( H_0 + W \big) u = - \Delta_g u = 0 $.
%by (\ref{valeurpropre}). 
This implies that $ u = 0 $ and therefore $v = 0$ which completes the proof. We simply record that the claim $u = 0$ is obtained by showing
$$ || \nabla_g u ||_{L^2_g}^2 = -\lim_{\e \rightarrow 0} \big( \Delta_g  (\chi (\e x) u ), \chi (\e x) u \big)_{L^2_g} =0. $$
This in turn is proved by routine arguments, using on one hand
$$  \nabla_g \big( (\chi (\e) x) u \big) = \chi (\e x) \nabla_g u + {\mathcal O} \big( \e |x| (\nabla \chi) (\e x) \big) |x|^{-1} u \rightarrow \nabla_g u \qquad \mbox{in} \ L^2 , $$
and on the other hand
$$ \Delta_g  (\chi (\e x) u ) =  u \Delta_g  (\chi (\e x) ) + 2 g \big(\nabla_g u , \nabla_g \big( \chi (\e x) \big)\big) , $$
combined with the fact that $ |x|^2 \Delta_g  (\chi (\e x) )  $ and $ |x| \nabla_g \big( \chi (\e x) \big) $ are families of operators on $ L^2 $ going to zero in the strong sense (for the first family one uses that $ \partial_j (|g(x)| g^{jk}(x)) $ is a short range symbol).
\end{proof}

\appendix

\section{Notation} \label{section-notations}
{\noindent \bf General notation}
\begin{itemize}
\item{Sets of integers
$$ {\mathbb N} = \mbox{set of non negative integers}, \qquad \Ii j k = [j,k] \cap {\mathbb N} $$}
\item{Sets of complex numbers
\begin{eqnarray*}
 {\mathbb C}_+ & = & \{ z \in {\mathbb C} \ : \ {\rm Im} (z) > 0 \} \\
  {\mathbb C}_{\pm,+} &  = & \{z \in {\mathbb C} \ : \  {\rm Im} ( z) > 0, \ \pm {\rm Re} ( z ) > 0 \} \\
  {\mathbb C}_{I,+} & = & \{ z \in {\mathbb C} \ : \ {\rm Re}(z) \in I, \ {\rm Im}(z)> 0  \} 
\end{eqnarray*}  }
\item{ Commutators 
\begin{eqnarray*}
\mbox{ad}_C (B) & = & [B,C] = BC- CB \\
\mbox{ad}_C^k(B)& = & \mbox{ad}_C \big( \mbox{ad}_C^{k-1}(B) \big), \qquad k \geq 2 
\end{eqnarray*}}
\item{$ L^2 $ dilations and their generator
\begin{eqnarray}
e^{i \theta A} u (x) = e^{\theta \frac{d}{2}} u (e^{\theta}x), \qquad A = \frac{x\cdot \nabla + \nabla \cdot x}{2 i}
\label{dilatation-notation}
\end{eqnarray}}
\end{itemize}
{\noindent \bf Operators}
\begin{itemize}
\item{Differential operators
\begin{eqnarray*}
H_0 & = & - \mbox{div} \big( G(x) \nabla \big)  \\
H_z & = & H_0 - i z a (x), \qquad z \in {\mathbb C}_+  \\
{\mathcal A} & = & \left( \begin{matrix} 0 & I \\ H_0 & - i a \end{matrix} \right)
\end{eqnarray*}}
\item{ Resolvents
\begin{eqnarray*}
R (z) & = & \big(H_z - z^2 \big)^{-1}  \\
R_g (z) & = & \big(-\Delta_g - i z a - z^2 \big)^{-1} \qquad \mbox{(in Section \ref{sec-laplacien} only)}
\end{eqnarray*}}
\item{More technical definitions (to study resolvents with inserted factors)
\begin{eqnarray*}
{\mathcal R}_{j,k} (z) & = & R (z) a_{j+1} R (z) \cdots R (z) a_k R (z)  \\
\Theta_{j;\alpha_j,\ldots,\alpha_k} & = & (H_z + 1)^{-\alpha_j} a_{j+1} (H_z + 1)^{- \alpha_{j+1}} \cdots a_k (H_z+1)^{-\alpha_k} 
\end{eqnarray*}
where $ a_{j+1}, \ldots , a_k \in C^{\infty}_b ({\mathbb R}^d) $ (often 1 or derivatives of $a$ in practice).}
\end{itemize}

{\noindent \bf Specific notation for low frequency analysis}
\begin{itemize}
\item{ $ \iota  $ : index refering to perturbations "at infinity", {\it i.e.} after the removal of a 
  compactly supported part.}
\item{$ \tilde{} \  $  refers to rescaled operators. }
\item{ $ \hat{z} = z / |z| $}
\item{$ b_{|z|} (x)  =  b (x/|z|) $}
\item{Operators
\begin{eqnarray*}
P_0 & = & - \mbox{div} \big( \chi I_d + (1-\chi) G \big) \nabla \\
a_{\iota} & = & (1-\chi)a \\
K_0 & = & H_0 - P_0 \\
a_0 & = & \chi a \\ 
P_z & =& P_0 - i z a_{\iota} \\
K_z & =& K_0 - i za_0 \\
\tilde{P}_z^0 & = &  \frac{1}{|z|^2} e^{-iA \ln|z|} P_0 e^{i A \ln |z|} \\
\tilde{P}_z & = & \frac{1}{|z|^2} e^{-iA \ln|z|} P_z e^{i A \ln |z|} \\
\end{eqnarray*}}
More explicitly
\begin{eqnarray*}
\tilde{P}_z^0 & = & - \mbox{div} \big( \big( \chi (x/|z|) I_d + (1-\chi)(x/|z|) G(x/|z|) \big) \nabla \big) \\
\tilde{P}_z & = & \tilde{P}_z^0 - i \frac{z}{|z|^2} a_{\iota}(x/|z|)
\end{eqnarray*}
\item{ Resolvents
\begin{eqnarray*}
R_{\iota} (z) & = & (P_z - z^2)^{-1} \\
\tilde{R}_{\iota} (z) & = & ( \tilde{P}_z - \hat{z}^2 )^{-1}, \qquad |z| \ll 1
\end{eqnarray*}}
\item{More technical definitions
\begin{eqnarray*}
\Phi_0 & = & D^{\alpha} \ \ \mbox{with} \ |\alpha| = \nu_0 \in \{0,1\}   \\
\Phi_j & = & a_{\iota}^{\nu_j} \ \ \mbox{with} \ \nu_j \in \{0,1\}, \ j \geq 1  \\
{\mathcal R}^{\iota}_{j,k}(z) & = & R_{\iota} (z) \Phi_{j+1} R_{\iota}(z) \cdots R_{\iota}(z) \Phi_k R_{\iota} (z)
\end{eqnarray*}
and rescaled versions
\begin{eqnarray*}
\tilde{\Phi}_0 (z) & = & |z|^{\nu_0} D^{\alpha} \\
\tilde{\Phi}_j (z) & = & a_{\iota} (x/|z|)^{\nu_j}, \ \ j \geq 1  \\
\tilde{\mathcal R}^{\iota}_{j,k}(z) & = & \tilde{R}_{\iota} (z) \tilde{\Phi}_{j+1} (z) \tilde{R}_{\iota}(z) \cdots \tilde{R}_{\iota}(z) \tilde{\Phi}_k (z) \tilde{R}_{\iota} (z) \\
\tilde{\Theta}_{j;\alpha_j,\ldots,\alpha_k} & = & (\tilde{P}_z + 1)^{-\alpha_j} \tilde{\Phi}_{j+1} (z) (\tilde{P}_z + 1)^{- \alpha_{j+1}} \cdots \tilde{\Phi}_k(z) (\tilde{P}_z+1)^{-\alpha_k}  \\
\tThiota^b_{ j_1,\dots,j_m} (z) & = &  (\Ptau +1)\inv  b_{j_1,\abs z}(x)  (\Ptau +1)\inv  b_{j_2,\abs z}(x) \dots  (\Ptau +1)\inv  b_{j_m,\abs z}(x) 
\end{eqnarray*}
where, in the last line, $ b_{j,|z|}(x) $ is either $1$ or  $ a_{\iota}(x/|z|) $. 
We also consider the following quantity (see (\ref{matcalV})) which we use to count powers of $|z|$ 
$$ {\mathcal V}_{j,k} = \sum_{l=j+1}^k \nu_j . $$}
\end{itemize}

{\noindent \bf Specific notation for high frequency analysis}
\begin{itemize}
\item{Spectral cutoffs (in Section \ref{sec-time}): for any bounded $ \chi \in C^{\infty} ({\mathbb R}) $ such that $ 0 \notin \mbox{supp}(\chi) $
$$ \chi_j (H_0^{1/2}) = \chi (2^{1-j} H_0^{1/2})  $$}
\item{ Semi-classical operator (in Section \ref{sec-high-freq})
\begin{eqnarray*}
H_h^{\sigma} = h^2 H_0 - i h \sigma a (x), \qquad h = |z|^{-1}, \ \ \sigma = z / |z| 
\end{eqnarray*}}
\item{Conjugate operator
$$ F_h = \mbox{Op}_h^w (f_0 + f_c) $$
with $ f_0(x,\xi) = \langle x , \xi \rangle_{{\mathbb R}^d} = x \cdot \xi $  the usual escape function and  $ f_c \in C_0^{\infty}({\mathbb R}^{2d}) $ chosen according to Proposition \ref{prop-escape}.}
\end{itemize}

\section{Dissipative Mourre estimates: an example} \label{AppB}
In this short section, we study the simple case where
$ H_0 = - \Delta$ is the flat Laplacian and describe which type of parameters are considered in Definition \ref{defconjunif} to handle respectively the high, low and medium frequency regimes in the proofs of Theorems \ref{th-high-freq}, \ref{th-low-freq} and \ref{th-inter-freq}. In each case, we also check the positive commutator estimate which is the main assumption in Definition \ref{defconjunif}.

\begin{itemize}
\item{{\bf High energy regime $ |z| \gg 1 $.} Here we let $ H_{\lambda} = |z|^{-2} H_z $ that is
$$ H_{\lambda} = - h^2 \Delta - i h \sigma a , \qquad h = |z|^{-1}, \ \sigma = z / |z| $$
and consider
$$ J = (1/2,3/2), \qquad A_{\lambda} =  A, \qquad \alpha_{\lambda} = 1/2, \qquad \beta_{\lambda} = 0 , $$
with $A$ the generator of dilations (see (\ref{dilatation-notation})) as everywhere below.
Then, $ \mbox{Re} (H_{\lambda}) = -h^2 \Delta + h \mbox{Im}(\sigma) a $ so that
$$ i \big[ \mbox{Re}(H_{\lambda}) ,  A_{\lambda}  \big] = 2 \mbox{Re}(H_{\lambda}) - h \mbox{Im}(\sigma) \big(2 a + x \cdot \nabla a \big) . $$
The Spectral Theorem and the choice of $ J $ imply that
\begin{eqnarray}
 {\mathds 1}_J (\mbox{Re}(H_{\lambda})) \mbox{Re} ( H_{\lambda} ) {\mathds 1}_J (\mbox{Re}(H_{\lambda})) \geq \frac{1}{2}  {\mathds 1}_J (\mbox{Re}(H_{\lambda})) . \label{commutateurgeneralexemple}
\end{eqnarray} 
On the other hand, we have
$$ \big| \big| h \mbox{Im}(\sigma) ( 2 a + x \cdot \nabla a ) \big| \big|_{L^2 \rightarrow L^2} \leq C h , $$
which follows from the boundedness of $a$ and $ x \cdot \nabla a $. Therefore
\begin{eqnarray*}
 {\mathds 1}_J (\mbox{Re}(H_{\lambda}))  i \big[ \mbox{Re}(H_{\lambda}) ,  A_{\lambda}  \big]
  {\mathds 1}_J (\mbox{Re}(H_{\lambda}))  & \geq & ( 1 - C h ) {\mathds 1}_J (\mbox{Re}(H_{\lambda}))  \\
& \geq & \frac{1}{2} {\mathds 1}_J (\mbox{Re}(H_{\lambda})) , \qquad h \ll 1.
\end{eqnarray*} 
Let us comment that $ J = (1/2,3/2) $ and $ \alpha_{\lambda} = 1/2 $ are concrete examples which could be replaced respectively by $ (1- \varepsilon, 1+  \varepsilon ) $ and $ 2 - 3 \varepsilon $ for any fixed $ 0 < \varepsilon < 1 $. We also emphasize that we can take $ \beta = 0 $ and $ A_{\lambda}$ to be the usual generator of dilations since the principal symbol of the operator is $ |\xi|^2 $ which satisfies the non trapping condition (which is stronger than the geometric control condition).}
\item{{\bf Low frequency regime $ |z| \ll 1 $.} To study this regime, we start by substracting a compact part to the dissipation, {\it i.e.} replace $ a $ by $ (1-\chi) a $ with $ \chi \in C_0^{\infty} ({\mathbb R},[0,1]) $ equal to $1$ on a large enough compact set and apply the Mourre theory to
\begin{eqnarray*}
 H_{\lambda} & = & \frac{1}{|z|^2} e^{-iA \ln|z|} \big( - \Delta - i z (1-\chi)a \big) e^{i A \ln |z|} \\
 & = & - \Delta - i \frac{z}{|z|^2} a_{\iota} (x/|z|)
\end{eqnarray*} 
with $ a_{\iota} = (1-\chi)a $. We  consider again
$$ J = (1/2,3/2), \qquad A_{\lambda} =  A, \qquad \alpha_{\lambda} = 1/2, \qquad \beta_{\lambda} = 0 . $$
Then $ \mbox{Re} (H_{\lambda}) = - \Delta + \frac{{\rm Im} (z)}{|z|^2} a_{\iota} (x/z) $ and
$$ i \big[ \mbox{Re}(H_{\lambda}) ,  A_{\lambda}  \big] = 2 \mbox{Re}(H_{\lambda}) - \frac{{\rm Im}(z)}{|z|^2} \mbox{Im}(\sigma) b (x/|z|) , $$
with $ b = 2 a_{\iota} + x \cdot \nabla a_{\iota} $. To get a positive commutator estimate, we want the contribution of $b$ to be small. Using the Hardy inequality, {\it i.e.} the fact that $ |x|^{-1} $ maps $ H^1 $ to $ L^2 $ (in dimension at least 3), we have 
\begin{eqnarray}
 \left| \left| \frac{1}{|z|} b (\cdot / |z|) \right| \right|_{H^{1} \rightarrow L^2}   \leq C    \big| \big| | x|  b \big| \big|_{L^{\infty}} ,  
 \label{bpetit}
\end{eqnarray} 
whose right hand side is small if the support of the cutoff $ \chi $ is large enough since $ a $ and its derivative decay as  $ |x|^{-1-\rho} $ at infinity. 
Using a similar estimate for $ |z|^{-1} a_{\iota}(\cdot/|z|) $ and a routine perturbation argument (viewing $ \mbox{Re}(H_{\lambda}) $ as a perturbation of $ -2 \Delta $), we can show that
% that $ \mbox{Re}(H_{\lambda}) (-\Delta + 1)^{-1} $ is bounded on $ L^2 $ uniformly with respect to $z$ (this follows for instance from an %estimate similar to (\ref{bpetit}) for $ a_{\iota} $), we can show that
$$ || {\mathds 1}_J (\mbox{Re}(H_{\lambda})) ||_{L^2 \rightarrow H^{1} } \leq C , \qquad |z| \ll 1 . $$
Then, choosing $ \chi \equiv 1 $ on a large enough compact set to make the right hand side of (\ref{bpetit}) small, we get
$$ \left| \left| {\mathds 1}_J (\mbox{Re}(H_{\lambda})) \left( \frac{{\rm Im}(z)}{|z|^2}  b (x/|z|) \right)  {\mathds 1}_J (\mbox{Re}(H_{\lambda})) \right| \right|_{L^2 \rightarrow L^2} \leq \frac{1}{2} , $$ 
and thus, using (\ref{commutateurgeneralexemple}), 
$$ {\mathds 1}_J (\mbox{Re}(H_{\lambda})) i \big[ \mbox{Re}(H_{\lambda}) ,  A_{\lambda}  \big] {\mathds 1}_J (\mbox{Re}(H_{\lambda})) 
 \geq  \frac{1}{2} {\mathds 1}_J (\mbox{Re}(H_{\lambda})) , \qquad |z| \ll 1. $$ }
\item{{\bf Intermediate frequency regime $ |z| \in I \Subset ]0,+\infty[ $.} In this part, we consider
$$ H_{\lambda} = - \Delta - i z a ,  $$
and
$$  A_{\lambda} =  A, \qquad \alpha_{\lambda} = 1/2, \qquad \beta_{\lambda} = 0 . $$
Then $ \mbox{Re} (H_{\lambda}) = - \Delta + \mbox{Im}(z) a $ and
\begin{eqnarray}
 i [\mbox{Re}(H_{\lambda}) , A ] = 2 \mbox{Re}(H_{\lambda}) - \mbox{Im}(z) \big(2 a + x \cdot \nabla a \big) .   \label{Hmedium}
\end{eqnarray} 
Since we want to work near a compact subset of $ ]0,+\infty[ $, it suffices to show that for any $ E > 0 $ that there exists $ \epsilon > 0 $ such that if $ | z \mp E^{1/2} | \leq \epsilon $, we have
\begin{eqnarray}
 {\mathds 1}_J (\mbox{Re}(H_{\lambda}))  i [\mbox{Re}(H_{\lambda}) , A ] {\mathds 1}_J (\mbox{Re}(H_{\lambda})) \geq \frac{E}{2} {\mathds 1}_J (\mbox{Re}(H_{\lambda})) , \nonumber
\end{eqnarray} 
with 
$$ J = [E-\epsilon,E+\epsilon] . $$ Indeed, if $ 0 < \mbox{Im}(z) \leq \epsilon $ is small enough, we can ensure that
$$ \left| \left| \mbox{Im}(z) \big(2 a + x \cdot \nabla a \big) \right| \right|_{L^2 \rightarrow L^2} \leq E $$
so that, using (\ref{Hmedium}) and once more (\ref{commutateurgeneralexemple}) (with $ 1/2 $ replaced by $ E - \epsilon $)
\begin{eqnarray*}
 {\mathds 1}_J (\mbox{Re}(H_{\lambda}))  i [\mbox{Re}(H_{\lambda}) , A ] {\mathds 1}_J (\mbox{Re}(H_{\lambda})) & \geq & ( 2E - 2 \epsilon - E ) {\mathds 1}_J (\mbox{Re}(H_{\lambda})) , \\
& \geq & \frac{E}{2} {\mathds 1}_J (\mbox{Re}(H_{\lambda})) ,
\end{eqnarray*} 
which yields the estimate.
%Note that the situation is particularly simple here (in particular we do not use any compactness argument) since the unperturbed self-adjoint %part  is $ - \Delta $.
}
%Using that $ E $ is not an eigenvalue of $ - \Delta $,
%$$ || \chi (-\Delta) (2 a + x \cdot \nabla a) ||_{L^2 \rightarrow L^2} \rightarrow 0, \qquad \mbox{supp} (\chi) \ \mbox{shrinks to} \ \{E\} . $$
%Thus, by choosing first the support of  $ \chi $ close enough to $ E $ and then $ |\mbox{Im}(z)| $ small enough}
\end{itemize}

\bigskip 

\bigskip 

\bibliographystyle{alpha}
\bibliography{bibliotex}

\end{document}